\newcounter{dummy}
\newcommand\myitem[1][]{\item[#1]\refstepcounter{dummy}\def\@currentlabel{#1}}
\theoremstyle{plain}
\newtheorem{thm}{Theorem}[section]
\newtheorem{cor}[thm]{Corollary}
\newtheorem{lem}[thm]{Lemma}
\newtheorem*{lem*}{Lemma}
\newtheorem*{cor*}{Corollary}
\newtheorem*{thm*}{Theorem}
\newtheorem{prop}[thm]{Proposition}
  \theoremstyle{definition}
\newtheorem{defn}[thm]{Definition}
\newtheorem{notn}[thm]{Notation}
\newtheorem{cnvntn}[thm]{Convention}
\theoremstyle{remark}
\newtheorem{rem}[thm]{Remark}
\let\origmaketitle\maketitle
\def\maketitle{
  \begingroup
  \def\uppercasenonmath##1{} 
  \let\MakeUppercase\relax 
  \origmaketitle
  \endgroup
}
\newcommand{\defeq}{ {\kern 0.2em}:{\kern -0.5em}={\kern 0.2 em} }  
\newcommand{\eqdef}{ {\kern 0.2em}={\kern -0.5em}:{\kern 0.2 em} }  
\newcommand{\setof}[2]{\left\{ #1 \;\middle|\; #2 \right\}}  
\newcommand{\cc}[1]{\overline{#1}}  
\newcommand{\Arr}[3]{{#2}\colon {#1} \rightarrow {#3}} 
\newcommand{\Arrtop}[3]{{#1} \xrightarrow{{#2}} {#3}} 
\newcommand{\Type}[2]{\colon {#1} \rightarrow {#2}}    
\newcommand{\Id}{\mathrm{id}}  
\DeclareMathOperator{\rng}{rng}  
\DeclareMathOperator{\cl}{\mathrm{cl}}
\newcommand{\Nat}{{\mathbb N}}     
\newcommand{\Real}{{\mathbb R}}    
\newcommand{\Cmplx}{{\mathbb C}}   
\newcommand{\dom}{\mathrm{dom}\,}
\newcommand{\subdiff}
{\underline{D}}
\newcommand{\supdiff}
{\overline{D}}
\newcommand{\pair}[2]{\left\langle {#1} \,,\, {#2} \right\rangle}
\DeclareMathOperator{\re}{\mathrm{Re}}
\DeclareMathOperator{\res}{\mathrm{res}}
\DeclareMathOperator{\spec}{\mathrm{spec}}
\newcommand{\ilinpr}[2]{\langle {#1} | {#2} \rangle}    
\newcommand{\inpr}[2]{\left\langle {#1} \middle| {#2} \right\rangle}    
\newcommand{\outpr}[2]{\left| {#1} \middle\rangle\middle\langle {#2} \right|}   
\newcommand{\Dbraket}[3]{\left\langle {#1} \middle| {#2} \middle| {#3} \right\rangle}
\DeclareMathOperator{\Tr}{\mathrm{Tr}}
\newcommand{\slct}[1]
{\left[ {#1} \right] }
\newcommand{\V}
{B'}
\newcommand{\VC}
{{B_\Cmplx}'}
\newcommand{\frm}[1]{\mathsf{#1}}
\newcommand{\Lin}{{\mathcal L}}
\newcommand{\Linv}{\Lin_{\mathrm{iso}}}
\newcommand{\Lincl}{\Lin_{\mathrm{cl}}}
\newcommand{\sK}{\mathscr K}
\newcommand{\sKc}{\widetilde{\sK}}
\newcommand{\sH}{\mathscr H}
\newcommand{\sHz}{\sH}
\newcommand{\sHp}{\sH_{+}}
\newcommand{\sHm}{\sH_{-}}
\newcommand{\sX}{\mathscr X}
\newcommand{\sY}{\mathscr Y}
\newcommand{\sZ}{\mathscr Z}
\newcommand{\calC}{\mathcal C}
\newcommand{\calW}{\mathcal W}
\newcommand{\calU}{{\mathcal U}}
\newcommand{\calV}{{\mathcal V}}
\newcommand{\Op}{\mathrm{Op}}
\newcommand{\Opp}[3]{{ [ {#3} ]_{\scriptscriptstyle{#1}}^{ \scriptscriptstyle{#2}} }}
\DeclareMathOperator{\Num}{\mathrm{num }}
\DeclareMathOperator{\rank}{\mathrm{rank }}
\newcommand{\SF}{\mathsf{SF}}
\newcommand{\OF}[1]{{\mathcal B}({#1})}
\newcommand{\sqf}{s-form}
\newcommand{\RSF}{regular sectorial family}
\newcommand{\ktiny}{\prec{\kern -0.8em}\prec}
\newcommand{\cSec}[2]{\overline{\mathrm{Sctr}}\left({#1},{#2}\right)}
\newcommand{\oSec}[2]{\mathrm{Sctr}\left({#1},{#2}\right)}
\newcommand{\dist}{\mathrm{dist}}
\newcommand{\sct}[1]{{#1}^\triangleleft}
\newcommand{\relbd}{\preccurlyeq}
\newcommand{\subsub}[1]{_{_{#1}}}
\newcommand{\kfrm}[1]{\frm{k}_{#1}}
\newcommand{\ufrm}[1]{\frm{u}_{#1}}
\newcommand{\vfrm}[1]{\frm{v}_{#1}}
\newcommand{\Rmap}{{\mathcal R}}
\newcommand{\Emap}{{\mathcal E}}
\newcommand{\CmplxRt}{\Cmplx_+}
\begin{document}

\title{Infinite-dimensional analyticity in quantum physics}

\author{Paul~E.~Lammert}
\email{lammert@psu.edu}
\affiliation{Department of Physics, 104B Davey Lab \\ 
Pennsylvania State University \\ University Park, PA 16802-6300}
\begin{abstract}
  A study is made, of families of Hamiltonians parameterized over open subsets of
  Banach spaces in a way which renders many interesting properties of
  eigenstates and thermal states
  analytic functions of the parameter. Examples of such properties are charge/current
  densities. The apparatus can be considered a generalization of Kato's theory of
  analytic families of type B insofar as the parameterizing spaces are infinite dimensional.
  It is based on the general theory of holomorphy in Banach spaces and an identification
  of suitable classes of sesquilinear forms with operator spaces associated with
  Hilbert riggings. The conditions of lower-boundedness and reality appropriate to
  proper Hamiltonians is thus relaxed to sectoriality, so that holomorphy can be used.
  Convenient criteria are given to show that a parameterization
$x \mapsto {\mathsf{h}}_x$ of sesquilinear forms is of the required
sort ({\it regular sectorial families}).
The key maps \hbox{${\mathcal R}(\zeta,x) = (\zeta - H_x)^{-1}$} and
${\mathcal E}(\beta,x) = e^{-\beta H_x}$, where $H_x$ is the closed sectorial operator
associated to $\frm{h}_x$, are shown to be analytic. These mediate analyticity of
the variety of state properties mentioned above.
A detailed study is made of nonrelativistic quantum mechanical Hamiltonians
parameterized by scalar- and vector-potential fields and two-body interactions.
\end{abstract}
\date{Aug. 22, 2021}

\maketitle
\tableofcontents

\newpage
\section{Introduction}

\subsection{Motivation}

The mathematical concept of analyticity is ubiquitous in physics. Here is a short list of examples.
It is in the background whenever we approximate a function by a few terms of its Taylor series.
The question of whether perturbation series converge or not is of interest in many contexts.
Kramers-Kr\"{o}nig relations are a manifestation of analyticity in
complex half-planes. In thermodynamics, phase transitions are identified
with the locus of points in a phase diagram at which free energy fails to
be analytic. In quantum mechanics, analyticity of a resolvent operator in the
spectral parameter is important. 
Those examples, and most other applications, consider regularity with respect to a few variables,
often just one.
This paper is concerned with analyticity when both domain and codomain are infinite-dimensional.
Functional Taylor expansions, so-called, are used in the physics literature,
but in a purely formal way so that one is hard-pressed to say anything about their
existence or what convergence would even amount to.

The original motivation for this investigation emerged from density functional
theory\cite{Koch+Holthausen,Capelle06,Dreizler+Gross,Parr+Yang,Burke-12} (DFT),
which is the foundation for very practical and successful computation in
solid-state physics, chemistry and materials science.
The connection to the present work is briefly described to illustrate ``real-world'' relevance.
One considers the ground-state energy $E(v)$ of an $N$-electron system as a function of an
``arbitrary'' external one-body potential $v$.
Alternatively (and preferentially in DFT) one focuses on the intrinsic energy $F(\rho)$
which is the minimum kinetic-plus-Coulomb-interaction energy consistent with charge
density $\rho$.
$E(v)$ and $F(\rho)$ stand in a relation of Legendre duality to each other, and
their arguments range over certain infinite-dimensional spaces.
$F(\rho)$ is {\em everywhere discontinuous}, and one may not expect much better
of $E(v)$ due to the duality relation.
Surprisingly, that is very far from true.
For instance (see Section~\ref{sec:eigenstate-cc-density}),
for an energetically isolated nondegenerate eigenstate (not only ground states),
charge density is {\em analytic} in $L^3(\Real^3)\cap L^1(\Real^3)$
as function of scalar potential $v$ in $L^{3/2}(\Real^3)\cap L^\infty(\Real^3)$.
Thus, as a function of $v$, $\rho$ is so smooth that it has a convergent Taylor series.
This fact has significant implications for computational practice,
which, however, are outside the scope of this paper and will be taken up elsewhere.
Other results have implications for less-common flavors of DFT such as
current-density functional theory\cite{Vignale+Rasolt-88,Grayce+Harris-94}
and non-zero-temperature DFT\cite{Mermin-65,Dornheim+18}.

More generally, 
suppose we have a family of quantum Hamiltonians parameterized in a natural way by
parameter $x$ ranging over an open subset of a Banach space.
Under what conditions are physically interesting quantities analytic functions of $x$?
Such quantities pertaining to an eigenstate include: the state itself,
the corresponding energy eigenvalue, expectations of observables and generalized
observables such as charge/current density.
And, for nonzero temperature: 
statistical operator (i.e., the thermal state), free energy, thermal expectations,
susceptibilities, and so on.
The framework developed here can be used to address such analyticity questins with
relative ease, as is demonstrated explicitly.
The framework is flexible, powerful, and general due to treating Hamiltonians initially as
sesquilinear forms, with a relaxation of the physically-grounded requirements of reality and
lower-boundedness to sectoriality so that holomorphy ($\Cmplx$-differentiability) 
can be invoked, and complex analysis methods brought to bear. 

Kato's analytic perturbation theory for type B families\cite{Kato}
is concerned with similar questions, but only for families with parameterization
domains in the complex numbers $\Cmplx$.
The move to infinite-dimensional parameterization domains (Banach spaces, specifically)
not only increases flexibility, but triggers a conceptual rearrangement, leading to
a rephrasing of everything in terms of compositions of holomorphic maps between
Banach spaces.
It therefore becomes imperative to repackage appropriate classes of unbounded sesquilinear
forms as Banach spaces. Section~\ref{sec:families} develops that key part of the apparatus.

\subsection{An operator prototype}

It is a familiar and useful fact that the resolvent
$\Rmap(\zeta,H) = (H-\zeta)^{-1}$ of operator $H$
is a holomorphic function of the spectral parameter $\zeta$.
The extension to a holomorphic dependence on $H$ is worth looking at as a prototype for
the theory to be developed.
\begin{defn}
  \label{def:relative-operator-bound}
  Given: a closed, densely-defined, operator $T$ on Banach space $\sX$
  [denoted $T\in\Lincl(\sX)$].
An operator $A$ is {\it $T$-bounded}
if $\dom A \supseteq \dom T$ and there are $a,b$ such that
\begin{equation}
\forall x \in \dom T, \; \|A x\| \le a \|x\| + b\|Tx\|.
\end{equation}
By increasing $a$, it may be possible to decrease $b$. The infimum of all
$b$'s that work is the {\it $T$-bound} of $A$.
\end{defn}

If $T$ is closed and invertible, $\dom T$ can be turned into a Banach space
with the norm $\|x\|_T = \|Tx\|_\sX$; we understand it as such in the following.
The following Lemma brings the notion of analyticity to the surface.
(A proof is provided at the end of the subsection, which should probably be
skipped until it is called upon.)
\begin{lem}
\label{lem:inverse-of-perturbed-closed-op}
Suppose $T$ is closed with range {$\sX$}, and $A$ is $T$-bounded.
If \hbox{$\rng(T+A) = \sX$}, then
  \begin{equation}
\label{eq:perturbed-inverse}
(T+A)^{-1}= T^{-1}(1 + AT^{-1})^{-1} \in \Lin(\sX).
  \end{equation}
This holds in particular when $\|AT^{-1}\| < 1$, which 
implies convergence of the Neumann series
\begin{equation}
  \label{eq:Neumann-series}
(T+A)^{-1}= T^{-1}\sum_{n=0}^\infty (-AT^{-1})^n.
\end{equation}
\end{lem}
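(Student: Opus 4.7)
The plan is to reduce the (unbounded) invertibility problem for $T+A$ to a bounded one for $1+AT^{-1}$ on $\sX$ via the factorization $T+A=(1+AT^{-1})T$ on $\dom T$, and then treat the Neumann-series case as a standard consequence.

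First I would verify that $T^{-1},\,AT^{-1}\in\Lin(\sX)$. Under the standing invertibility of $T$ (implicit in the paragraph preceding Definition~\ref{def:relative-operator-bound}), $T^{-1}$ is defined on all of $\sX$ with closed graph, so the closed graph theorem gives $T^{-1}\in\Lin(\sX)$. The $T$-bound estimate applied at $x=T^{-1}y$ then yields
\begin{equation*}
\|AT^{-1}y\|\le a\|T^{-1}y\|+b\|y\|\le (a\|T^{-1}\|+b)\|y\|,
\end{equation*}
so $AT^{-1}\in\Lin(\sX)$ as well.

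Next, for $x\in\dom T$ one has $Tx\in\sX=\dom(AT^{-1})$ and $AT^{-1}(Tx)=Ax$, establishing the operator identity $T+A=(1+AT^{-1})T$ on $\dom T$. Because $T$ is a bijection $\dom T\to\sX$, surjectivity (respectively injectivity) of $T+A$ is equivalent to that of $1+AT^{-1}$. The hypothesis $\rng(T+A)=\sX$, together with the injectivity implicit in the notation $(T+A)^{-1}$, therefore furnishes a bounded inverse $(1+AT^{-1})^{-1}\in\Lin(\sX)$ (via the open mapping theorem, or directly from the factorization and boundedness of $T^{-1}$), and composition on the left with $T^{-1}$ delivers
\begin{equation*}
(T+A)^{-1}=T^{-1}(1+AT^{-1})^{-1}\in\Lin(\sX).
\end{equation*}

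For the ``in particular'' clause, when $\|AT^{-1}\|<1$ the geometric series $\sum_{n=0}^{\infty}(-AT^{-1})^n$ converges in operator norm in the Banach algebra $\Lin(\sX)$ to a two-sided bounded inverse of $1+AT^{-1}$. Bijectivity of $1+AT^{-1}$ then feeds back through the factorization to yield $\rng(T+A)=\sX$ automatically, so the first part applies, and multiplying the Neumann expansion on the left by $T^{-1}$ gives~(\ref{eq:Neumann-series}). The only real obstacle here is the bookkeeping in passing between unbounded and bounded operators via the factorization: ensuring that the factorization is a genuine equality of operators with a common domain, and that surjectivity transfers both ways between $T+A$ and $1+AT^{-1}$ without any implicit closure or extension.
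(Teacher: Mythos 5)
Your argument is correct and follows the same route as the paper's: factor $T+A=(1+AT^{-1})T$ on $\dom T$ and transfer surjectivity/invertibility between $T+A$ and $1+AT^{-1}$ through the bijection $T\colon\dom T\to\sX$, then handle the Neumann case by the geometric series in $\Lin(\sX)$. Two small points in your favor: you are explicit that injectivity of $T+A$ is an implicit hypothesis (the paper glosses this), and you get boundedness of $(T+A)^{-1}$ directly from the factorization rather than from closedness of $T+A$, which the paper obtains by invoking Lemma~\ref{lem:stability-of-closedness} --- a lemma whose hypothesis ($T$-bound strictly less than one) is not actually guaranteed by the assumptions of Lemma~\ref{lem:inverse-of-perturbed-closed-op}.
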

One would like to hold up series (\ref{eq:Neumann-series}) as the
demonstration that $(T+A)^{-1}$ is analytic at $A=0$. That the terms are
not simply multiples of powers of $A$, however, shows the need for at least the
rudiments of a more general theory of analyticity, a theory which will be reviewed
in Section \ref{sec:Banach-holomorphy}. Similarly, a claim that the series converges
uniformly on some ball about the origin raises the question of domain and codomain
of the map. The codomain is clearly $\Lin(\sH)$. The domain {\em could} be taken
the same, but that would be far too timid. Instead, consider $\dom T$ as a Banach
space with norm $\|x\|_T = \|Tx\|$, making $T$ an isomorphism. Then,
$A\mapsto (T+A)^{-1}$ can be considered a map from $\Lin(\dom T;\sX)$ to $\Lin(\sX)$.
Indeed, the norm of $A$ as an element of $\Lin(\dom T;\sX)$ is precisely
$\|AT^{-1}\|$, so the series (\ref{eq:Neumann-series}) is uniformly convergent
on any center-zero ball of radius less than 1.

It might seem very difficult to extend this to families of operators having
{\em differing} domains, but we shall find it possible by working with
Hamiltonians in the guise not of operators, but of {\it sesquilinear forms}.
Recall that a sesquilinear form (\sqf) in Hilbert space $\sH$ is
a complex-valued function ${\frm{h}}[\phi,\psi]$, linear in $\psi$
and conjugate-linear in $\phi$ which range over some subspace of $\sH$.
Motivations for working with sesquilinear forms are, first, the increased strength.
That is needed for DFT applications, for example, where potentials in $L^{3/2}(\Real^3)$
are considered. Secondly, there is a corresponding gain in flexibility in applications, as
it becomes easier to verify that a family of {\sqf}s is appropriate for
feeding into the automatic abstract machinery. This is illustrated in Section \ref{sec:QM}.
And finally, there is the argument that {\sqf}s are more physically natural and
meaningful than operators. For, an \sqf\ can be recovered from its
diagonal elements, and as expectation values these have a far clearer operational
meaning than multiplication by a Hamiltonian operator.

\begin{proof}[Proof of Lemma~\ref{lem:inverse-of-perturbed-closed-op}]
$T+A$ is closed on $\dom (T+A) = \dom T$
by Lemma~\ref{lem:stability-of-closedness} below,
and $AT^{-1} \in \Lin(\sX)$ since
$\|AT^{-1} x\| \le \Big( a\|T^{-1}\| + b \Big) \|x\|$.

Now, \hbox{$(T+A) =  (T+A)T^{-1}T = (1+AT^{-1})T$}
gives \hbox{$\rng(T+A) \subseteq \rng(1+AT^{-1})$}.
The reverse inclusion follows from
\hbox{$(T+A)T^{-1} = (1+AT^{-1})$}.
Thus, if either $T+A$ or $1+AT^{-1}$ has a (necessarily bounded) inverse,
so does the other, and (\ref{eq:perturbed-inverse}) holds.
\end{proof}
\begin{lem}
  \label{lem:stability-of-closedness}
If $A$ has $T$-bound strictly less than one,
Then $T+A$ is closed on $\dom T$.
\end{lem}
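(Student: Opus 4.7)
The plan is the standard closed-graph-style argument, exploiting that having $T$-bound strictly less than one lets us control $\|Tx\|$ by $\|(T+A)x\|$ plus a lower-order term.

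First I would fix $b<1$ and a corresponding $a$ such that $\|Ax\|\le a\|x\|+b\|Tx\|$ for every $x\in\dom T$. To verify closedness, I take a sequence $x_n\in\dom T$ with $x_n\to x$ in $\sX$ and $(T+A)x_n\to y$ in $\sX$, and aim to show $x\in\dom T$ and $(T+A)x=y$.

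The key step is to show that $(Tx_n)$ is Cauchy. Applying the relative bound to the difference $x_n-x_m$ gives
\begin{equation}
\|T(x_n-x_m)\|\le\|(T+A)(x_n-x_m)\|+a\|x_n-x_m\|+b\|T(x_n-x_m)\|,
\end{equation}
and rearranging yields
\begin{equation}
(1-b)\|T(x_n-x_m)\|\le\|(T+A)(x_n-x_m)\|+a\|x_n-x_m\|.
\end{equation}
Since $b<1$ and both $(T+A)x_n$ and $x_n$ are Cauchy, so is $(Tx_n)$; let $z$ be its limit. Because $T$ is closed and $x_n\to x$, $Tx_n\to z$, we conclude $x\in\dom T$ and $Tx=z$.

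Finally, $T$-boundedness of $A$ means $A$ is continuous from $(\dom T,\|\cdot\|_T+\|\cdot\|)$ to $\sX$, so $Ax_n\to Ax$; combining with $Tx_n\to Tx$ gives $(T+A)x_n\to(T+A)x$, which by uniqueness of limits equals $y$. No single step is really an obstacle here — the only subtlety is ensuring $1-b>0$ so that the inequality above can actually be solved for $\|T(x_n-x_m)\|$, which is exactly where the hypothesis of strict $T$-bound less than one is consumed.
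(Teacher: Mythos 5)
Your proof is correct and uses essentially the same key inequality as the paper: relative boundedness of $A$ plus the triangle inequality yields $(1-b)\|T\psi\|\le a\|\psi\|+\|(T+A)\psi\|$, and strictness of $b<1$ lets you deduce that $(Tx_n)$ is Cauchy whenever $(x_n)$ and $((T+A)x_n)$ are. Where the paper only explicitly checks the sequence $\psi_n\to 0$ (which, as stated, demonstrates closability rather than closedness), you carry the argument through for a general limit point $x$, invoking closedness of $T$ to place $x$ in $\dom T$ and then continuity of $A$ in the $T$-graph norm to pass to the limit in $(T+A)x_n$; this is the more complete version of the same argument, and nothing is missing.
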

\begin{proof}
For any $\psi\in\dom T$,
\begin{equation}
\label{eq:eq-lem-stability}
| \|T\psi\| - \|(T+A)\psi\| | \le \|A\psi\| \le a\| \psi \| + b \|T\psi\|,
\end{equation}
which yields
$(1-b) \|T\psi\| \le a\|\psi\| + \|(T+A)\psi\|$,
after rearrangement. 
Suppose that sequence $(\psi_n)$ in $\dom T$ converges to zero and
\hbox{$((T+A)\psi_n)$} is Cauchy.
$(T\psi_n)$ is also Cauchy by the preceding inequality, with
(because $T$ is closed) limit zero.
But, then (\ref{eq:eq-lem-stability}) shows that 
\hbox{$(T+A)\psi_n \to 0$}, as well.
\end{proof}

\subsection{Sketch of the theory}

The main ideas are sketched in this subsection, made more concrete with the
aid of the example of a nonrelativistic ``spinless electron''
subjected to a variable external vector potential field ${\bm A}(x)$ ($x\in\Real^3$).
This application will be treated in depth in Section~\ref{sec:QM};
here we are mostly using it simply as something concrete to fix attention on.
The energy of the state with wavefunction $\psi$ is
\begin{equation}
  \label{eq:hA}
\frm{h}_{\bm A}[\psi] = \int |(\nabla - i{\bm A})\psi|^2\, dx,
\end{equation}
well-defined as a real-valued and lower-bounded
quadratic form defined on a dense subspace of $L^2(\Real^3)$.
Under a technical condition,
$\frm{h}_{\bm A}$ is naturally associated to a corresponding self-adjoint
operator $\frm{H}_{\bm A}$. This well-known theory is recovered as part of
the development in Section~\ref{sec:families}.
The resolvent $\Rmap(\zeta,H_{\bm A}) = (H_{\bm A}-\zeta)^{-1}$
is $\Cmplx$-analytic in $\zeta$. Is it also analytic in ${\bm A}$?
Merely posing the question shows that we should take ${\bm A}$ in
some {\em complex} space, and therefore allow the field ${\bm A}(x)$ to
be complex-valued.
Thus, we generalize (\ref{eq:hA}) to the sesquilinear form (\sqf)
\begin{equation}
  \label{eq:hA-c}
{\frm{h}}_{\bm A}[\phi,\psi]
=
\int ({\nabla} + i{\bm A})\overline{\phi} \cdot ({\nabla} - i{\bm A})\psi   \, dx,
\end{equation}
the diagonal part \hbox{$\frm{h}_{\bm A}[\psi] \defeq \frm{h}_{\bm A}[\psi,\psi]$}
being the associated quadratic form. We were careful to not have the complex
conjugate of ${\bm A}$ appear in the form (\ref{eq:hA-c}).
Now, we might ask whether
\hbox{$(\zeta,{\bm A}) \mapsto \Rmap(\zeta,{\bm A})$}
is a \hbox{$\Cmplx$-differentiable}, or {\it holomorphic}, map from some open subset of
\hbox{$\Cmplx \times \vec{L}^3(\Real^3)$} into $\Lin(\sH)$
(The arrow on $\vec{L}$ merely indicates a vector, rather than scalar, field).
Just as in elementary complex analysis, this is enough to guarantee
\hbox{$\Cmplx$-analyticity}, and then $\Real$-analyticity for real ${\bm A}$
by restriction. This theory of holomorphy in Banach spaces is reviewed in
Section~\ref{sec:Banach-holomorphy}.
A resolvent operator holomorphic in this sense is not an end in itself.
With its aid, however, one can show that isolated eigenvalues and properties
of the associated eigenvectors are analytic functions of the Hamiltonian
paramater, i.e., ${\bm A}$ in this case. This kind of application is
considered in detail in Section~\ref{sec:eigenvalues}.

There are certainly limits to how ${\bm A}$ can be allowed to vary
and have all this work. One might imagine that ${\bm A}$ should represent a
``not too big'' perturbation of $\frm{h}_{\bm 0}$.
A good idea, which we shall follow, is that all the {\sqf}s
$\frm{h}_{\bm A}$ should be mutually relatively bounded.
This suggests an abstract study of complete equivalence classes of mutually relatively
bounded {\sqf}s on a dense subspace $\sK$ of $\sH$, independently of
any concrete paramaterization (such as provided here by ${\bm A}$),
an idea which turns out to be quite fruitful.
As described below, the fundamental $\Rmap$-map $(\zeta,\frm{h}_x)\mapsto (H_x-\zeta)^{-1}$
and $\Emap$-map $(\beta,\frm{h}_x)\mapsto e^{-\beta H_x}$, which are the basis of applications,
are shown holomorphic at this abstract level.
We therefore know that a concrete parameterization enjoys these holomorphy properties as
soon as it is shown to parameterize part of such an equivalence class $\calC$
in the right way, and that turns out to be surprisingly easy.

The relation $\relbd$ of relative boundedness induces equivalence classes
of {\sqf}s on $\sK$. Suppose $\calC$ is one such (technically: containing
some {\it closable sectorial} form). 
When ${\bm A}$ is allowed to be complex, $\frm{h}_{\bm A}[\psi]$ is
no longer real, but appropriate restrictions
ensure that it takes values in a right-facing wedge in $\Cmplx$ as
$\psi$ varies among unit vectors in its domain.
This property is {\it sectoriality}, and the useful generalization of
lower-bounded self-adjointness
which allows complex analytic methods to be brought into play.
The class $\calC_\relbd$ of {\em all} {\sqf}s bounded relative
to $\calC$ has a natural Banach space structure, up to norm equivalence.
In fact, it can be identified with $\Lin(\sHp;\sHm)$, where $\sHp\subset\sH\subset\sHm$
is a Hilbert rigging of the ambient Hilbert space $\sH$.
Such Hilbert riggings play a very important role in our methodology, and
are reviewed in Section~\ref{sec:Hilbert-rigging-abstract}.

The sectorial forms in $\calC$, denoted $\sct{\calC}$, comprise an open subset
of $\calC_\relbd$, and are the ones of real interest.
They induce closed sectorial operators, $H$ corresponding to $\frm{h}$.
A central result is that $\frm{h} \mapsto H^{-1}$ is holomorphic 
from an open subset of $\sct{\calC}$ into $\Lin(\sH)$.
Note that the same cannot be said of $\frm{h} \mapsto H$, since the operators have
differing domains, hence it is not even clear in what Banach space we can locate them all. 
This result can be unfolded to display the resolvent explicitly since
$\sct{\calC}$ is invariant under translation by multiples of the identity:
$(\zeta,\frm{h}) \mapsto \Rmap(\zeta,H)$ is holomorphic on its natural domain in
$\Cmplx\times\sct{\calC}$.
The other central abstract result is holomorphy of 
$\frm{h} \mapsto e^{-H}$, as a map into $\Lin(\sH)$.
Again, we can unfold this to the map $(\beta,\frm{h})\mapsto e^{-\beta H}$
from a natural domain in $\Cmplx_{\text{rt}}\times \sct{\calC}$, where
$\Cmplx_{\text{rt}}$ is the open right half-plane. 

Returning to the vector potential ${\bm A}$, what needs to be done once
these abstract results are in place is very simple, as the theory summarized in
Section \ref{sec:Banach-holomorphy} shows.
Indeed, once an appropriate equivalence class of {\sqf}s is identified --- 
for instance, those equivalent to $\frm{h}_{\bm 0}$ on $C^\infty_c(\Real^3)$ ---
one only needs to check local boundedness and holomorphy on one-complex-dimensional
affine slices, and the latter really does reduce simply to noting that the
expression contains two ${\bm A}$'s.
The conditions seem easy to check in general. We can even anticipate that for a reasonable
family of closable sectorial {\sqf}s parameterized over an open subset $\calU$
of a Banach space, if they are all mutually relatively bounded, hence lie in some $\sct{\calC}$,
then this map $\calU\to\sct{\calC}$ will be holomorphic, and therefore so will be 
the $\Rmap$-map and $\Emap$-map.
Some uses of these are treated in Sections \ref{sec:eigenvalues} and
\ref{sec:free-energy}, respectively. The former can be used to show
analyticity of isolated eigenvalues, as well as various properties
of the associated eigenstates, such as charge and current density.
The negative exponential $e^{-\beta H}$ shows up in quantum physics in
two major contexts. One is analytic continuation of time, a topic which will not
be addressed here. The other is quantum statistical mechanics, where it represents
the thermal statistical operator, if it is trace-class.
This is dealt with in Section \ref{sec:free-energy}, where we show that,
if the free energy is (well-defined and) locally bounded, it is analytic,
and give conditions for that to be the case. 

\subsection{Organization of the paper}

Section~\ref{sec:Banach-holomorphy} provides needed background on
analyticity and holomorphy in Banach spaces.
Section~\ref{sec:Hilbert-riggings} provides background on Hilbert rigging.
Prop.~\ref{prop:iso-to-closed} is a nonstandard result there which
plays an important r\^ole in the later development.
Section~\ref{sec:families} is the technical core of the paper.
It develops the Banach space structure associated
with equivalence classes of closable {\sqf}s and
the general idea of {\RSF}s, and proves holomorphy of the $\Rmap$-map
in Thm.~\ref{thm:resolvent-holo}.
Section~\ref{sec:QM} is concerned with identifying specific holomorphic families of
nonrelativistic Hamiltonians --- magnetic Schr\"odinger forms parameterized by both
scalar and vector potentials. 
Sections~\ref{sec:eigenvalues} and \ref{sec:free-energy} on the other hand, are
concerned with what can be done if one has such a family, that is, what other
quatities inherit the holomorphy.
Section~\ref{sec:eigenvalues} studies low-energy Hamiltonians and
eigenstate perturbation. Holomorphy of the energy and charge/current densities
for isolated nondegenerate eigenstates is derived here, among other things.
Section~\ref{sec:free-energy} is concerned with holomorphy of the $\Emap$-map
and its consequences.
Under appropriate conditions, this yields holomorphy of the nonzero-temperature
statistical operator in {\em trace-norm}, as well of free energy and thermal expectations.
Special attention is again given to charge/current density.
Section~\ref{sec:summary} gives a selective summary.

\subsection{Conventions and notations}

For convenient reference, some conventions will be listed here.
$\sX$, $\sY$ and $\sZ$ denote generic Banach spaces, $\calU$
an open subset of a Banach space, $\sH$ denotes a Hilbert space.
$\Lin(\sX;\sY)$ is the space of
bounded linear operators from $\sX$ to $\sY$, with the usual operator norm,
$\Lin(\sX) = \Lin(\sX;\sX)$, $\Lin^1(\sH)$ and $\Lin^2(\sH)$ denote the spaces of
trace-class and Hilbert-Schmidt operators, respectively, and
$\Lincl(\sX)$ the set of densely-defined closed operators in $\sX$,
and $\Linv(\sX;\sY)$ that of invertible bounded operators (Banach isomorphisms).
Product spaces, e.g., $\sX\times\sY$ are usually denoted that way
rather than as $\sX\oplus\sY$ because the product notion matches the informal
interpretation better.
$\Rmap(z,A) = (A-z)^{-1}$ is the resolvent operator
(the notation $\Rmap$ will be overloaded later), $\res A$ the resolvent set,
and $\spec A$ the spectrum of $A$.
Topological closure is generally denoted by $\cl$, instead of an overbar.
Barred arrows specify functions, while plain arrows display the domain and
codomain, e.g., $A\mapsto e^{A}\colon \Lin(\sH) \to \Lin(\sH)$ is exponentiation
on bounded operators. $(x_n)_{n\in\Nat}$ or $(x_n)_n$, or even just $(x_n)$ if it
is unambiguous, denotes a sequence.
Additional notations will be defined as need arises.
Definitions and all theorem-like environments share a common counter;
the numbering is merely a navigational aid.

\section{Analyticity in the Banach space setting}
\label{sec:Banach-holomorphy}

This section reviews the necessary theory of differential calculus and
holomorphy in Banach spaces.
The material on differential calculus reviewed in Section~\ref{sec:calculus}
is quite standard and can be found in many
places\cite{Lang-Real_analysis,AMR,AMP,Chae}.
The theory of holomorphy in Banach spaces discussed in Section~\ref{sec:holomorphy},
much less so. In depth treatments are in monographs of Mujica\cite{Mujica}
and Chae\cite{Chae}. Thm.~\ref{thm:holomorphy-summary} is the reason this section
is here at all, and the other results are mostly concerned with making the
demonstration of holomorphy as easy as possible.

\subsection{Differential calculus}
\label{sec:calculus}

In this subsection, the base space of Banach spaces ($\sX$, $\sY$, \dots)
may be either $\Real$ or $\Cmplx$. $\calU$ is an open subset of
a Banach space (usually $\sX$).

\subsubsection{Derivatives}

If $\Arr{\calU}{f}{\sY}$ admits a linear approximation near $a\in \calU$ as
\begin{equation}
f(a+x) = f(a) + Df(a) x + o(\|x\|),
\end{equation}
for some continuous linear map
$\Arr{\sX}{Df(a)}{\sY}$, i.e. $Df(a) \in \Lin(\sX;\sY)$,
then $Df(a)$ is said to be the {\it Fr\'{e}chet differential} (or {\it derivative})
of $f$ at $a$.

We are not interested in differentiability at isolated points only, but 
throughout $\calU$. $f$ is $C^1$ on $\calU$ if $Df$ is everywhere defined on $\calU$
and continuous.
In that case, $\Arr{\calU}{Df}{\Lin(\sX;\sY)}$ is itself a continuous map into
a Banach space (with the usual operator norm) and we may ask about differentiability of $Df$.

If the differential of $Df$ at $a$, denoted $D^2f(a)$, exists it belongs to
$\Lin(\sX,\Lin(\sX;\sY))$, by definition.
Thus, for $x,x'\in \sX$, $D^2f(a)(x) \in \Lin(\sX;\sY)$ 
(dropping some parentheses and writing simply `$D^2f(a)\, x$' is a good idea)
and $D^2f(a)\, x\, x' \in \sY$.
Elements of $\Lin(\sX;\Lin(\sX;\sY))$ are actually
{\em bilinear}, that is, linear in each argument with the other held fixed.
Moreover, $D^2f(a)$ is symmetric, that is, $D^2f(a)\, x\, x' = D^2f(a)\, x'\, x$.
This symmetry continues to higher orders, as long as differentiability holds,
and provides good motivation to think primarily in terms of multilinear mappings
rather than nested linear mappings.
Eliding the distinction between a nested operator in
$\Lin(\sX;\cdots \Lin(\sX;\sY)\cdots )$,
the $n$-th differential $D^nf(a)\in\Lin(\sX,\ldots,\sX;\sY)$ is a continuous,
symmetric, $n$-linear map from $\sX\times\cdots \times \sX$ into $\sY$.

\subsubsection{Taylor series and analyticity}

If $f$ is continuously differentiable, then whenever the line segment from
$a$ to $a+x$ is in $\calU$,
\hbox{$f(a+x) = f(a) + (\int_0^1 Df(a+tx) \, dt)\, x$}.
Suspending the question of convergence, one deduces that the Taylor series
expansion should be $\sum_{n=0}^\infty \frac{1}{n!} D^nf(a)\, x\cdots x$.
If, for every point $a\in \calU$, the Taylor series expansion of $f$ converges to $f$
uniformly and absolutely on a ball of some nonzero ($a$-dependent) radius,
$f$ is said to be {\it analytic} on $\calU$. This is the favorable situation in
which we are interested. The notion of analyticity, and the actual use of a convergent
series expansion, is independent of the base field, but it follows from a {\it prima facie}
much weaker condition when the base field is $\Cmplx$, as discussed next.

\subsection{Holomorphy}
\label{sec:holomorphy}

This subsection is concerned with the equivalence between holomorphy and
$\Cmplx$-analyticity (Thm.~\ref{thm:holomorphy-summary}) and ways to make
the demonstration of holomorphy easy (nearly everything else).

\subsubsection{Complex linearity and conjugate linearity}
\label{sec:C-linearity}

A function of type ${\Real}\rightarrow{\Real}$
can be differentiable to all orders without being analytic, whereas the situation
is remarkably otherwise for those of type $\Cmplx \rightarrow \Cmplx$.
What is much less well-appreciated is that this contrast persists even in infinite-dimensional
Banach spaces. 
Now we assume that the base field for $\sX$ and $\sY$ is $\Cmplx$.
They can still be regarded as real vector spaces $\sX_{\Real}$, $\sY_{\Real}$ by restriction
of scalars; in that case $ix$ is considered not a scalar multiple of $x$, but a vector
in an entirely different ``direction''. Suppose
$\Arr{\sX_{\Real}}{f}{\sY_{\Real}}$ is $\Real$-differentiable at $a$,
temporarily denote the differential as $D_\Real f(a)$, and define
$Df(a)(x) = \frac{1}{2}[D_\Real f(a)(x) - i D_\Real f(a)(ix)]$,
$\overline{D}f(a)(x) = \frac{1}{2}[D_\Real f(a)(x) + i D_\Real f(a)(ix)]$.
The condition for $\Cmplx$-differentiability is then $\overline{D}f(a) = 0$.
This is the analog of the Cauchy-Riemann equation.
The function $f$ is said to be {\it holomorphic} on $\calU$ if it is
\hbox{$\Cmplx$-differentiable}
there.
Sometimes (e.g., Chae\cite{Chae})
{\it holomorphic} is instead taken synonymous with $\Cmplx$-analyticity by definition,
but it does not really matter as the following remarkable theorem shows.
\begin{thm}
\label{thm:holomorphy-summary}
For {\em complex} Banach spaces $\sX$ and $\sY$,
$\calU$ open in $\sX$, the following properties of
$\Arr{\calU}{f}{ \sY}$ are equivalent:
  \newline
\textnormal{(a)}
holomorphy \textnormal{(}$\Cmplx$-differentiability\textnormal{)}
  \newline
\textnormal{(b)}
infinte $\Cmplx$-differentiability
  \newline
\textnormal{(c)}
$\Cmplx$-analyticity
\end{thm}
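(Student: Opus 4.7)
The plan is to prove the cycle (c)$\Rightarrow$(b)$\Rightarrow$(a)$\Rightarrow$(c); the first two implications are routine and (a)$\Rightarrow$(c) carries all the weight.

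For (c)$\Rightarrow$(b), analyticity provides near each $a\in\calU$ an absolutely and uniformly convergent representation $f(a+x)=\sum_{n\ge 0}\frac{1}{n!}D^n f(a)\,x\cdots x$ on some ball about $a$. Such power series may be differentiated termwise within their ball of convergence in the $\Cmplx$-differential sense, producing continuous differentials of all orders. The implication (b)$\Rightarrow$(a) is immediate.

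The substance is (a)$\Rightarrow$(c). Fix $a\in\calU$ and $r>0$ with $\cl B(a,2r)\subset\calU$. Since $\Cmplx$-differentiability at each point entails continuity there, $f$ is bounded by some $M$ on $\cl B(a,r)$. For $\|x\|<r$ define $g_x\colon\lambda\mapsto f(a+\lambda x)$ on an open disk in $\Cmplx$ of radius strictly greater than $1$; by the chain rule for $\Cmplx$-differentials $g_x$ is a $\sY$-valued holomorphic function of one complex variable. For every $\phi\in\Lin(\sY;\Cmplx)$ the scalar function $\phi\circ g_x$ is holomorphic in the classical sense, hence obeys Cauchy's formula and the standard Taylor expansion. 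The Hahn--Banach theorem lifts these scalar identities to a vector-valued Cauchy representation
\[
g_x(\lambda)=\frac{1}{2\pi i}\oint_{|\zeta|=\rho}\frac{g_x(\zeta)}{\zeta-\lambda}\,d\zeta,
\]
understood as a Bochner integral in $\sY$. Expanding $(\zeta-\lambda)^{-1}$ in a geometric series yields
\[
f(a+x)=\sum_{n=0}^\infty P_n(x),\qquad P_n(x)=\frac{1}{2\pi i}\oint_{|\zeta|=\rho}\frac{g_x(\zeta)}{\zeta^{n+1}}\,d\zeta,
\]
with the bound $\|P_n(x)\|\le M\rho^{-n}$ valid whenever $\rho\|x\|<r$. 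The rescaling $\lambda\mapsto t\lambda$ inside $g_x$ gives $P_n(tx)=t^n P_n(x)$, so each $P_n$ is $n$-homogeneous, and the bound yields absolute, uniform convergence on any ball about $a$ of radius strictly less than $r$.

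It remains to exhibit each $P_n$ as the diagonal of a bounded symmetric $n$-linear form and to identify it with $\frac{1}{n!}D^nf(a)$. Both follow from polarization: the scalar composites $\phi\circ P_n$ are classical homogeneous polynomials, hence arise from bounded symmetric multilinear forms whose norms are controlled by $\|P_n\|$, and Hahn--Banach transfers the conclusion back to $\sY$. Induction on $n$, using uniqueness of one-variable Taylor coefficients applied to the slices $g_x$, matches $n!\,P_n(x)$ with $D^n f(a)\,x\cdots x$. The principal obstacle throughout is the passage from the classical scalar theory to the Banach-valued setting, handled uniformly by combining Hahn--Banach with the local boundedness that $\Cmplx$-differentiability forces via continuity; the detailed verifications parallel those in the monographs of Mujica and Chae cited above.
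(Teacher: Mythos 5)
The paper itself does not prove Theorem~\ref{thm:holomorphy-summary}: its ``proof'' is a pointer to \S\S8 and~14 of Mujica and Thm.~14.13 of Chae. Your sketch supplies the genuine argument those references give --- trivialize (c)$\Rightarrow$(b)$\Rightarrow$(a), then for (a)$\Rightarrow$(c) slice $f$ along complex lines, use Hahn--Banach to pass between scalar and vector-valued Cauchy integrals, extract the homogeneous pieces $P_n$ with a Cauchy estimate, and recover multilinearity by polarization. That is the standard route and the overall structure is sound.

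One step is, as written, false in the infinite-dimensional setting and should be repaired. You fix $r>0$ subject only to $\cl B(a,2r)\subset\calU$ and then claim that continuity of $f$ (which $\Cmplx$-differentiability certainly gives) forces $f$ to be bounded by some $M$ on $\cl B(a,r)$. But when $\dim\sX=\infty$ the closed ball $\cl B(a,r)$ is not compact, so continuity at each of its points does not imply boundedness on it; a continuous function on a noncompact metric space can be unbounded. What continuity at the single point $a$ does give is some $\delta>0$ and $M<\infty$ with $\|f\|\le M$ on $B(a,\delta)$, and you must then \emph{choose} $r$ with $\cl B(a,2r)\subset B(a,\delta)$ before setting up the Cauchy integral. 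This is exactly the distinction that makes \emph{local} boundedness (bounded on \emph{some} neighborhood of each point) the right hypothesis in the GTHZ theorem (Thm.~\ref{thm:GTHZ}), rather than boundedness on a preassigned ball. With $r$ chosen this way the bound $\|P_n(x)\|\le M(\|x\|/r)^n$ and the rest of your argument go through unchanged; the ball of convergence is merely the (possibly smaller) one that continuity at $a$ hands you, which is all that analyticity at $a$ requires.

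Two smaller remarks: the paper's definition of analyticity refers explicitly to $D^nf(a)$, so the final identification $n!\,P_n=D^nf(a)$ is not cosmetic --- one must establish along the way that $f$ is $n$-times Fr\'echet differentiable, which your appeal to termwise differentiation of the convergent series (as in your (c)$\Rightarrow$(b) step) does handle, but it is worth being explicit that the induction is going around that loop. And the polarization constant grows like $n^n/n!$, so the domain of absolute convergence after polarization shrinks by a factor of $e$; this does not threaten analyticity at $a$, but one should not claim convergence on the full ball of radius $r$ for the multilinear expansion.
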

\begin{proof}
  See \S\S 8 and 14 of Mujica\cite{Mujica}; or
  Chae\cite{Chae}, Thm.~14.13.
\end{proof}

Thus, even if we are ultimately interested only in $\Real$-analyticity in some real subspace
$\tilde{\sX}$ of a complex space $\sX$, it can be advantageous to work in $\sX$, establish
holomorphy (a comparatively simple property) to get $\Cmplx$-analyticity in $\sX$ and
thence $\Real$-analyticity in $\tilde{\sX}$ by restriction.
This is in the spirit of Jacques Hadamard's famous dictum,
``{Le plus court chemin entre deux v\'{e}rit\'{e}s dans le domaine r\'{e}el
  passe par le domaine complexe}''.

The following mostly simple permanence properties of holomorphy are important:
\begin{itemize}
\item Composition:
Whenever \hbox{$\Arr{\sX\supset\calU}{f}{\sY}$}
  and \hbox{$\Arr{\sY\supset {\mathcal V}}{g}{\sZ}$} are holomorphic, so is
\hbox{$\Arr{\calU \cap f^{-1}({\mathcal V})}{g\circ f}{\sZ}$}. 
\item Inversion:
 \hbox{$\Lin_{\text{iso}}(\sX;\sY) \overset{\mathrm{inv}}{\to}  \Lin_{\text{iso}}(\sY;\sX)$}
 is holomorphic, where $\Lin_{\text{iso}}(\sX;\sY)$
 denotes the open set of invertible operators in $\Lin(\sX;\sY)$.
\item Products:
  If the domain of $f$ is in a product space $\sX_1\times \sX_2$, then
$f$ is holomorphic iff it is jointly continuous and separately holomorphic.
\item Equivalent norms:
  Holomorphy is stable under equivalent renorming of the domain or codomain space.
\item Differentiation: If $f$ is holomorphic, so is $Df$.
\item Sequential limits: Sequential convergence uniformly on compact sets
  preserves holomorphy. (See Prop. \ref{prop:convergent-sequences})
\end{itemize}

\subsubsection{Reduction to 1D domain or range}

The preceding shows that convergence of series expansions can be deduced
from the mere existence of a differential. However, the latter is still
complicated by the infinite-dimensional setting. Fortunately, a remarkable
reduction is possible here as well --- to consideration of one-$\Cmplx$-dimensional
subspaces both in the domain and codomain (together with local boundedness).
\begin{defn}
  \label{def:G-holo}
$\Arr{\calU}{f}{\sY}$ is {\it G-holomorphic} if
for all $x\in \calU$, $y\in\sX$,
\hbox{$\zeta \mapsto f(x+\zeta y)$} 
is an ordinary holomorphic function of $\zeta$ on some
neighborhood of zero in $\Cmplx$.
\end{defn}

The following fundamental theorem is named after Graves, Taylor,
Hille and Zorn\cite{Chae}.
\begin{thm}[GTHZ]
  \label{thm:GTHZ}
  For a map $\Arr{\calU}{f}{\sY}$, the following property equivalence holds:
  \newline
  holomorphy $\Leftrightarrow$ G-holomorphy and locally boundedness.  
\end{thm}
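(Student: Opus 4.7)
The plan is to handle the two implications separately, since the forward direction is essentially formal while the converse carries all the analytic content.

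For the forward direction, assume $f$ is holomorphic on $\calU$. For any $x\in\calU$ and $y\in\sX$, the affine map $\zeta\mapsto x+\zeta y$ is a $\Cmplx$-analytic map from a neighborhood of $0\in\Cmplx$ into $\calU$, so by the composition permanence property listed above, $\zeta\mapsto f(x+\zeta y)$ is an ordinary holomorphic function near $0$; this is exactly G-holomorphy. Local boundedness is immediate from Theorem~\ref{thm:holomorphy-summary}: holomorphy implies $\Cmplx$-analyticity, and a convergent power series is bounded on some ball about each point of $\calU$.

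The converse is the real work. Fix $a\in\calU$ and choose $r>0$ and $M<\infty$ such that the closed ball $\overline{B(a,r)}\subset\calU$ and $\|f(x)\|\le M$ on that ball. For each $y\in\sX$ with $\|y\|\le r$, G-holomorphy gives an ordinary holomorphic function $g_y(\zeta):=f(a+\zeta y)$ on a neighborhood of the closed unit disk. By the one-variable Cauchy formula and Cauchy estimates applied to $g_y$ on the unit circle, the coefficients
\[
P_n(y) \;:=\; \frac{1}{2\pi i}\oint_{|\zeta|=1} \frac{f(a+\zeta y)}{\zeta^{n+1}}\,d\zeta
\]
are well-defined elements of $\sY$ satisfying $\|P_n(y)\|\le M$ for every $\|y\|\le r$, and $g_y(\zeta)=\sum_n P_n(y)\zeta^n$ on the unit disk.

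Next I would show that each $P_n$ is a continuous homogeneous polynomial, i.e., the restriction to the diagonal of a continuous symmetric $n$-linear form on $\sX^n$. Homogeneity $P_n(\lambda y)=\lambda^n P_n(y)$ is visible from the defining integral. The polarization identity
\[
A_n(y_1,\dots,y_n)=\frac{1}{2^n n!}\sum_{\epsilon_j=\pm 1}\epsilon_1\cdots\epsilon_n\,P_n(\epsilon_1 y_1+\cdots+\epsilon_n y_n)
\]
defines a symmetric $n$-linear candidate, and multilinearity on rational (then, by continuity coming from the bound on $P_n$, on all) scalar combinations follows from G-holomorphy along each coordinate direction combined with uniqueness of the one-variable Taylor coefficients. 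The uniform bound on $P_n$ transports by polarization to a bound $\|A_n\|\le C_n M/r^n$ with a dimension-free polarization constant; this is the key technical estimate.

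Finally, the Cauchy estimate $\|P_n(y)\|\le M(\|y\|/r)^n$ (obtained by rescaling) shows that the series $\sum_n P_n(h)$ converges absolutely and uniformly on any ball $\|h\|\le \rho<r$ and equals $f(a+h)$, because for each fixed $h$ the sum equals $g_h(1)=f(a+h)$ by G-holomorphy. This supplies a convergent power series expansion at $a$, hence (by Theorem~\ref{thm:holomorphy-summary}, or directly from the definition of analyticity given in Section~\ref{sec:calculus}) $\Cmplx$-analyticity on a neighborhood of $a$, in particular Fr\'echet $\Cmplx$-differentiability there. Since $a\in\calU$ was arbitrary, $f$ is holomorphic.

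The main obstacle is the passage from the univariate data (bounds on the one-dimensional Taylor coefficients $P_n(y)$) to a \emph{Fr\'echet}-continuous symmetric multilinear form on $\sX^n$: one must verify multilinearity of the polarization formula using only separate G-holomorphy, and then control the polarization constants so that the resulting series converges uniformly on an open ball of $\sX$, not just along each line. Local boundedness is exactly what feeds the Cauchy estimates needed to close this loop, which is why neither hypothesis in the statement can be dropped.
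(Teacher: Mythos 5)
The paper does not supply a proof of Theorem~\ref{thm:GTHZ}; it simply cites Mujica (Prop.~8.6 and Thm.~8.7) and Chae (Thm.~14.9). Your argument is a correct reconstruction of the classical Graves--Taylor--Hille--Zorn proof that those references present, so it is not a different route but rather the route the paper is pointing to. The forward implication and the Cauchy-estimate machinery ($\|P_n(y)\|\le M$ on $\|y\|\le r$, rescaling to $\|P_n(y)\|\le M(\|y\|/r)^n$, uniform convergence of $\sum P_n(h)$ on balls of radius $<r$) are all in order. Two places deserve tightening. First, to claim $g_y$ is holomorphic on a neighborhood of the \emph{closed} unit disk you must invoke G-holomorphy at every base point $a+\zeta_0 y$ with $|\zeta_0|\le 1$, not merely at $a$; this is available because $\overline{B(a,r)}\subset\calU$, but it should be said. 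Second, and more importantly, the step you flag as ``the key technical estimate'' is not merely an estimate: you must first establish that $P_n$ is a genuine continuous $n$-homogeneous polynomial (i.e.\ the diagonal of a symmetric $n$-linear form) before the polarization formula is even meaningful, and the scalar-homogeneity $P_n(\lambda y)=\lambda^n P_n(y)$ alone does not give this. The standard route is to restrict to an affine two-plane $\zeta\mapsto a+\zeta_1 y_1+\zeta_2 y_2$, use separate holomorphy plus the local bound to get a two-variable Cauchy representation (a small Hartogs-type step), and read off that $P_n(ty_1+sy_2)$ is a polynomial of total degree $n$ in $(t,s)$; only then does polarization produce the symmetric $n$-linear $A_n$, with $\|A_n\|\le n^n M/(n!\,r^n)$ by the polarization bound. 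With that supplied, the rest of your argument closes correctly.
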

\begin{proof}
See Mujica, Prop.~8.6 and Thm.~8.7;
Chae\cite{Chae}, Thm.~14.9.
\end{proof}
\begin{rem}
  By defintion, {\it locally bounded} means bounded on some neighborhood of each
  point of the domain. In a Banach space (or even a metric space), this is equivalent
  to boundedness on compact subsets of the domain.
\end{rem}

Maps into spaces of linear operators will be very important in the following
and are considered now. In fact, since any Banach space is isometrically embedded
in its bidual, this is not really a special case.

\begin{defn}
\label{def:wk-st-wo-holo}
$\Arr{\calU}{f}{\sY}$ is {\em weakly holomorphic} if
\hbox{$x \mapsto \pair{\lambda}{f(x)}\in \Cmplx$} is holomorphic for each $\lambda\in\sY^*$.
It is {\em densely weakly holomorphic} if the condition holds for a set of $\lambda$'s
dense in $\sY^*$.

$\Arr{\calU}{f}{\Lin(\sY;\sZ)}$ is
{\em strongly holomorphic} if
\hbox{$x \mapsto f(x)y \in \sZ$} is holomorphic for each $y\in\sY$;
and {\em weak-operator holomorphic} if
\hbox{$x \mapsto \pair{\lambda}{f(x)y}\in \Cmplx$} is holomorphic for each
$y\in\sY$ and $\lambda\in\sZ^*$.
As for weak holomorphy, these may be modified with {\em dense} to indicate that
the set of $y$'s [resp. pairs $(y,\lambda)$] in question is dense in $\sY$ [resp. $\sY\times \sZ^*$].
\end{defn}

The following Lemma is preparation for
Propositions \ref{prop:wk-holo} and \ref{prop:WO-holo}.
Some obvious abbreviations (`st.' for `strong', `loc. bdd.' for `locally bounded',
`holo' for `holomorphic') are used.
\begin{lem}
  \label{lem:st-holo}
For \hbox{$\Arr{\calU}{f}{\Lin(\sY;\sZ)}$}, the following property implications hold.
  \newline
  \textnormal{(a)}
st. G-holo. $\Rightarrow$ G-holo.
  \newline
\textnormal{(b)}
loc. bdd. \& dense st. holo. $\Rightarrow$ st. holo.
  \newline
\textnormal{(c)}
st. holo. $\Rightarrow$ loc. bdd.
\newline
\textnormal{(d)}
loc. bdd. \& dense st. G-holo. $\Rightarrow$ holo.
\newline
\textnormal{(e)}
st. holo. $\Rightarrow$ holo.
\end{lem}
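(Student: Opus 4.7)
The plan is to prove (a)--(e) in the stated order, since each relies on its predecessors together with a small number of standard tools: the Banach--Steinhaus principle, Baire category, the operator-valued Cauchy integral formula, and the sequential-limit and GTHZ permanence properties already listed.

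For (a), fix $x\in\calU$ and $h\in\sX$, and consider $g(\zeta)=f(x+\zeta h)$ as a $\Lin(\sY;\sZ)$-valued function on $D=\{\zeta\in\Cmplx:x+\zeta h\in\calU\}$. Strong G-holomorphy makes $\zeta\mapsto g(\zeta)y$ a holomorphic $\sZ$-valued function for each $y$, hence continuous, hence bounded on each closed subdisk of $D$. Banach--Steinhaus then gives $\sup_{|\zeta|\le r}\|g(\zeta)\|_{\mathrm{op}}<\infty$, and Cauchy estimates applied coefficient-wise produce Taylor coefficients $a_n\in\Lin(\sY;\sZ)$ with $\|a_n\|_{\mathrm{op}}\le Mr^{-n}$, so $\sum a_n(\zeta-\zeta_0)^n$ converges to $g$ in operator norm near $\zeta_0$.

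For (b), given the dense subset of $\sY$ on which strong holomorphy holds and an arbitrary $y\in\sY$, choose a sequence $y_n$ in this dense subset with $y_n\to y$. On any neighborhood $V$ where $\|f\|_{\mathrm{op}}\le M$, the estimate $\|f(x)y_n-f(x)y\|\le M\|y_n-y\|$ gives uniform convergence $f(\cdot)y_n\to f(\cdot)y$ on $V$, and Prop.~\ref{prop:convergent-sequences} transmits holomorphy to the limit. For (c), fix $x_0\in\calU$ and a ball $B(x_0,r_0)\subset\calU$. By strong holomorphy each $f(\cdot)y$ is continuous at $x_0$, so the sets $B_n=\{y\in\sY:\sup_{\|h\|\le1/n}\|f(x_0+h)y\|\le n\}$ are closed in $\sY$ (as intersections over $h$ of the sublevel sets of the continuous seminorms $y\mapsto\|f(x_0+h)y\|$) and cover $\sY$. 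Baire category yields an open ball $B(y_0,r)\subset B_n$ for some $n$; for $\|z\|<r$, the decomposition $f(x_0+h)z=f(x_0+h)(y_0+z)-f(x_0+h)y_0$ gives $\|f(x_0+h)z\|\le 2n$ for $\|h\|\le1/n$, whence $\|f(x_0+h)\|_{\mathrm{op}}\le 4n/r$ and $f$ is locally bounded at $x_0$.

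For (d), a direction-wise repetition of the argument in (b) (using local operator-norm boundedness to obtain uniform convergence of $\zeta\mapsto f(x+\zeta h)y_n$ along each line) upgrades dense strong G-holomorphy to full strong G-holomorphy; part (a) then yields G-holomorphy of $f$, and GTHZ (Thm.~\ref{thm:GTHZ}) combines G-holomorphy with the assumed local boundedness to produce holomorphy. Statement (e) follows at once: (c) provides local boundedness, and strong holomorphy is a fortiori dense strong G-holomorphy, so (d) applies. I expect (c) to be the main obstacle, since turning pointwise continuity of each scalar section $f(\cdot)y$ into common-neighborhood operator-norm boundedness requires applying Baire category in $\sY$ rather than in $\calU$ and exploiting linearity via a shift by an interior point $y_0$ of the Baire ball.
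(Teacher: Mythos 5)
Your proof is correct, and parts (b), (d), (e) follow the paper's argument essentially verbatim (the paper's (e) goes directly via (a), (c) and GTHZ rather than detouring through (d), but that is a trivial difference). Parts (a) and (c), however, take genuinely different routes. In (a) the paper first establishes continuity of $f$ restricted to a complex line by a contradiction argument that feeds a difference quotient into the uniform boundedness principle, and then invokes Morera's theorem; you bypass Morera entirely by bounding the operator norm on subdisks via Banach--Steinhaus and then building operator-valued Taylor coefficients directly from the Cauchy integral formula. Your route is more constructive, though it leaves unstated the (easy) observation that the norm-convergent power series must actually equal $g(\zeta)$ because the two bounded operators agree on every $y\in\sY$. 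In (c) the paper's argument is shorter: it applies the uniform boundedness principle once to the family $\{f(x):x\in K\}$ for compact $K\subset\calU$ and then invokes the earlier remark, stated after Thm.~\ref{thm:GTHZ}, that bounded-on-compacts is equivalent to locally-bounded in a metric space. Your version carries out a Baire-category argument in $\sY$ from scratch, in effect rederiving the uniform boundedness principle tailored to this situation; this is longer but avoids relying on the compactness remark and makes explicit where the completeness of $\sY$ enters.
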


\begin{rem}
Parts (a), (b), and (c) are really just preparation for (d) and (e).
\end{rem}
  
\begin{proof}
\noindent (a): 
Since G-holomorphy concerns affine planes independently, assume that $\calU \subseteq \Cmplx$ 
without loss.
Assume (to be justified later) that $f$ is also continuous. Then,
for every $y\in \sY$ and simple closed contour $\Gamma$ in $\calU$,
  \begin{equation}
0 = \oint_\Gamma f(\omega) y \frac{d\omega}{2\pi i}
 = \left[ \oint_\Gamma f(\omega) \frac{d\omega}{2\pi i} \right] y.
  \end{equation}
Continuity of $f$ is used here to justify taking $y$ outside the integral.
Since $y$ ranges over $\sY$, which is separating for $\Lin(\sY;\sZ)$,
the integral in square brackets is zero.
Finally, Morera's theorem implies that $f$ is holomorphic, because $\Gamma$ is arbitrary.

To complete the proof of (a), we must show that $f$ is continuous at
$\zeta\in\calU$. Suppose not. Then there is a sequence $\calU\ni \zeta_n \to \zeta$
such that \hbox{$\|f(\zeta') - f(\zeta)\|/(\zeta_n-\zeta) \to \infty$}, and
by the uniform boundedness principle, $y\in\sY$ such that
\hbox{$(f(\zeta')y - f(\zeta)y)/(\zeta_n-\zeta)$} diverges. However, since $f$ is
strongly holomorphic, the limit of the latter is \hbox{$\frac{d}{dz}(f(z)y)|_{z=\zeta}$}.
Contradiction.
  \smallskip
  
  \noindent (b):
We need to show that, for each $y\in \sY$, $x \mapsto f(x) y$ [abbreviated here
$f(\;) y$] is holomorphic near each point of $\calU$.
By the dense strong holomorphy assumption, there
is $D$ dense in $\sY$ such that, for every $u\in D$, $f(\;) u$ is holomorphic.
Also, for any sequence $D \ni y_n \to y$, local boundedness implies that
the sequence $f(\;) y_n$ converges not merely pointwise, but locally uniformly, 
to $f(\;) y$, which is therefore holomorphic by Prop. \ref{prop:convergent-sequences}.
  
  \noindent (c): 
Fix compact $K\subset \calU$. For every $y\in\sY$, 
${f(\;)y}$ is holomorphic by hypothesis, therefore continuous, therefore bounded on $K$.
The uniform boundedness principle secures boundedness $f$ on $K$.

\noindent (d): 
local boundedness \& dense strong G-holomorphy implies strong G-holomorphy by
the G\^ateaux version of (b),
which implies G-holomorphy by (a).
Finally, holomorphy follows by Thm. \ref{thm:GTHZ}.

\noindent (e):
We have G-holomorphy by (a), and local boundedness by (c).
Again, conclude via Thm. \ref{thm:GTHZ}.
\end{proof}

\begin{prop}
  \label{prop:wk*-holo}
  For $\Arr{\calU}{f}{\sY^*}$, the following are equivalent:
  \newline
  \noindent (a) holomorphy
  \newline
  \noindent (b) weak-* holomorphy
  \newline
  \noindent (c) local boundedness \& dense weak-* G-holomorphy
\end{prop}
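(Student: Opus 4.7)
The plan is to recognize that, via the identification $\sY^*=\Lin(\sY;\Cmplx)$, weak-$*$ holomorphy of $f\colon \calU\to\sY^*$ is literally the same thing as strong holomorphy of $f$ viewed as a $\Lin(\sY;\Cmplx)$-valued map, since in both cases the defining requirement is that $x\mapsto f(x)(y)$ be an ordinary scalar holomorphic function for every $y\in\sY$. Similarly ``dense weak-$*$ G-holomorphy'' becomes ``dense strong G-holomorphy''. Under this translation the proposition reduces almost entirely to Lemma~\ref{lem:st-holo} with codomain $\sZ=\Cmplx$.

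For (a)$\Rightarrow$(b), I note that each evaluation functional $\mathrm{ev}_y\colon \sY^*\to\Cmplx$, $\phi\mapsto \phi(y)$, is continuous and $\Cmplx$-linear, hence holomorphic. The composition permanence property of holomorphy then makes $x\mapsto f(x)(y)=\mathrm{ev}_y\circ f(x)$ holomorphic for every $y\in\sY$, which is precisely weak-$*$ holomorphy.

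For (b)$\Rightarrow$(c), under the identification above, weak-$*$ holomorphy is strong holomorphy, and Lemma~\ref{lem:st-holo}(c) then gives local boundedness. Since strong holomorphy trivially implies strong G-holomorphy on all of $\sY$, a fortiori dense strong (= dense weak-$*$) G-holomorphy holds. For (c)$\Rightarrow$(a), local boundedness together with dense strong G-holomorphy gives holomorphy directly by Lemma~\ref{lem:st-holo}(d).

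There is no serious obstacle here; the only thing to be careful about is the identification of the topologies: the weak-$*$ structure on $\sY^*$ is induced by evaluations against elements of the predual $\sY$, which is precisely the same set of linear functionals used to define strong holomorphy into $\Lin(\sY;\Cmplx)$. Once this is observed, the proposition is a one-line corollary of Lemma~\ref{lem:st-holo}, and no new analytical work is required.
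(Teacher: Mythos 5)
Your proof is correct and follows essentially the same route as the paper: identify $\sY^*\simeq\Lin(\sY;\Cmplx)$, observe that "strong" in Lemma~\ref{lem:st-holo} then specializes to "weak-$*$", and invoke parts (c) and (d) of that lemma. The only difference is that you spell out the individual implications (including the easy composition argument for (a)$\Rightarrow$(b)), whereas the paper compresses all of this into one sentence.
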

\begin{proof}
  This follows immediately from Lemma~\ref{lem:st-holo} for the case
  $\sY^* \simeq \Lin(\sY;\Cmplx)$, realizing that the adjective
  ``strong'' there specializes to ``weak-*''.
\end{proof}
\begin{prop}
  \label{prop:wk-holo}
  For $\Arr{\calU}{f}{\sY}$, the following are equivalent:
  \newline
  \noindent (a) holomorphy
  \newline
  \noindent (b) weak holomorphy
  \newline
  \noindent (c) local boundedness \& dense weak G-holomorphy
\end{prop}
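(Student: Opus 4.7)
My plan is to deduce Proposition \ref{prop:wk-holo} from Proposition \ref{prop:wk*-holo} (already in hand) by composing with the canonical isometric embedding $\Arr{\sY}{J}{\sY^{**}}$ and then arguing that holomorphy of $J\circ f$ into $\sY^{**}$ descends to holomorphy of $f$ into $\sY$. The three implications (a)$\Rightarrow$(b)$\Rightarrow$(c)$\Rightarrow$(a) will be established in cyclic order.

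First, (a)$\Rightarrow$(b) is immediate from the composition permanence property: for each $\lambda\in\sY^*$ the map $\lambda\circ f$ is a composition of a holomorphic map with a (holomorphic) bounded linear functional. Next, (b)$\Rightarrow$(c): weak G-holomorphy on all of $\sY^*$ is certainly weak G-holomorphy on a dense subset, so only local boundedness must be checked. For each $\lambda\in\sY^*$, $\lambda\circ f$ is holomorphic, hence continuous, hence bounded on any compact $K\subset\calU$. Viewing $\{f(x) : x\in K\}$ as a pointwise bounded family of functionals on $\sY^*$ via $J$, the uniform boundedness principle gives a uniform bound on $\|f(x)\|_\sY = \|Jf(x)\|_{\sY^{**}}$ over $K$, establishing local boundedness.

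The substantive step is (c)$\Rightarrow$(a). Form $J\circ f\colon \calU\to\sY^{**}$. Because $J$ is an isometry, $J\circ f$ is locally bounded. Testing $J\circ f$ against $\lambda\in\sY^*$ yields $\pair{\lambda}{Jf(x)} = \pair{\lambda}{f(x)}$, so dense weak-$*$ G-holomorphy of $J\circ f$ is exactly dense weak G-holomorphy of $f$. By Proposition \ref{prop:wk*-holo}(c)$\Rightarrow$(a), $J\circ f$ is holomorphic into $\sY^{**}$. It remains to descend to $\sY$: for each $a\in\calU$ the Fr\'echet differential $D(J\circ f)(a)$ satisfies
\begin{equation}
J(f(a+x)-f(a)) = D(J\circ f)(a)\,x + o(\|x\|)
\end{equation}
in $\sY^{**}$. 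The left-hand side lies in the norm-closed subspace $J(\sY)\subseteq\sY^{**}$ for every $x$, so each difference quotient $x^{-1}J(f(a+tx)-f(a))$ (for $t\to 0$ along scalars) lies in $J(\sY)$; taking the limit shows $D(J\circ f)(a)\,x\in J(\sY)$ for every $x\in\sX$. Since $J$ is an isometric linear isomorphism onto $J(\sY)$, we may set $Df(a) := J^{-1}\circ D(J\circ f)(a)\in \Lin(\sX;\sY)$, and the $o(\|x\|)$ estimate passes from $\sY^{**}$ to $\sY$ because $\|\cdot\|_{\sY^{**}}\circ J = \|\cdot\|_\sY$. Hence $f$ is Fr\'echet $\Cmplx$-differentiable at every $a\in\calU$, i.e., holomorphic.

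The one place that requires a little care is the descent step; the key observation making it go through is that $J(\sY)$ is closed in $\sY^{**}$, so the limit defining the differential cannot escape it. Everything else is a direct transcription of the weak-$*$ case via the canonical embedding.
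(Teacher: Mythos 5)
Your proof is correct and follows the same route as the paper: the paper's entire argument is the single sentence ``$\sY$ is isometrically imbedded in its bidual $\sY^{**}\cong\Lin(\sY^*;\Cmplx)$; now apply Prop.~\ref{prop:wk*-holo}.'' You have done exactly that, but have spelled out the one nontrivial point the paper leaves implicit, namely the descent of holomorphy from $\sY^{**}$ back to $\sY$; your observation that the range of $D(J\circ f)(a)$ must lie in the norm-closed subspace $J(\sY)$ because it is a limit of difference quotients valued there is precisely the needed argument. One trivial slip: in the descent step you wrote ``$x^{-1}J(f(a+tx)-f(a))$'' where you mean $t^{-1}J(f(a+tx)-f(a))$.
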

\begin{proof}
  ${\sY}$ is isometrically imbedded in its bidual 
  \hbox{${\sY}^{**} \cong {\Lin(\sY^*;\Cmplx)}$.}
Now apply Prop.~\ref{prop:wk*-holo}.  
\end{proof}
\begin{prop}
  \label{prop:WO-holo}
  For $\Arr{\calU}{f}{\Lin(\sY;\sZ)}$, the following are equivalent:
  \newline
  \noindent (a) holomorphy
  \newline
  \noindent (b) weak-operator holomorphy
  \newline
  \noindent (c) loc. bdd. \& dense weak-operator G-holomorphy
\end{prop}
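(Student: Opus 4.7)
The plan is to mirror Propositions~\ref{prop:wk*-holo} and~\ref{prop:wk-holo} by establishing the cycle $(a) \Rightarrow (b) \Rightarrow (c) \Rightarrow (a)$. The main obstacle will lie in $(c) \Rightarrow (a)$, because the density hypothesis in (c) concerns \emph{pairs} in $\sY \times \sZ^*$ rather than a single factor; consequently a diagonal approximation argument is needed before the one-variable machinery of Lemma~\ref{lem:st-holo} and Prop.~\ref{prop:wk-holo} can be brought to bear.

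For $(a) \Rightarrow (b)$, note that for each $(y,\lambda)$ the evaluation functional $\ell_{y,\lambda} \colon A \mapsto \pair{\lambda}{A y}$ belongs to $\Lin(\Lin(\sY;\sZ);\Cmplx)$, so the composition permanence property gives holomorphy of $\ell_{y,\lambda} \circ f = \pair{\lambda}{f(\cdot)y}$, which is weak-operator holomorphy. For $(b) \Rightarrow (c)$, the dense weak-operator G-holomorphy is automatic (the ``dense set'' may be taken to be all of $\sY \times \sZ^*$). To obtain local boundedness I would fix $y \in \sY$, observe that $x \mapsto f(x) y$ is weakly holomorphic and hence holomorphic by Prop.~\ref{prop:wk-holo}, so $f$ is strongly holomorphic, and then apply Lemma~\ref{lem:st-holo}(c).

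The heart of the argument is $(c) \Rightarrow (a)$. First I would upgrade the dense weak-operator G-holomorphy to the \emph{full} version: given arbitrary $(y_0,\lambda_0) \in \sY\times\sZ^*$ and a 1D affine slice $\zeta \mapsto x + \zeta u$ in $\calU$, pick $(y_n,\lambda_n)$ in the dense subset with $(y_n,\lambda_n) \to (y_0,\lambda_0)$. Using the triangle estimate
\begin{equation*}
|\pair{\lambda_n}{f(x+\zeta u) y_n} - \pair{\lambda_0}{f(x+\zeta u) y_0}| \le \|f(x+\zeta u)\|\bigl(\|\lambda_n-\lambda_0\|\|y_n\| + \|\lambda_0\|\|y_n-y_0\|\bigr)
\end{equation*}
together with local boundedness of $f$ on any compact subdisc, the holomorphic scalar functions on the left converge uniformly to $\zeta \mapsto \pair{\lambda_0}{f(x+\zeta u)y_0}$; Prop.~\ref{prop:convergent-sequences} then renders this limit holomorphic. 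This is the step I expect to be the main obstacle, since it is where the product form of the density condition must be exploited and where local boundedness of $f$ (rather than just pointwise behavior) is genuinely used.

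With full weak-operator G-holomorphy and local boundedness in hand, for each $y \in \sY$ and each 1D affine slice the $\sZ$-valued map $\zeta \mapsto f(x+\zeta u) y$ is locally bounded and weakly holomorphic, so Prop.~\ref{prop:wk-holo} promotes it to ordinary holomorphy; thus $f$ is strongly G-holomorphic. A final application of Lemma~\ref{lem:st-holo}(d), with the dense subset of $\sY$ taken to be all of $\sY$, yields holomorphy of $f$ and closes the cycle.
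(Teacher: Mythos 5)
Your proof is correct and uses the same essential tools as the paper's one-line sketch (the bidual embedding $\sZ \hookrightarrow \Lin(\sZ^*;\Cmplx)$, Lemma~\ref{lem:st-holo}, and Prop.~\ref{prop:wk-holo}), but it is organized as an explicit cycle $(a)\Rightarrow(b)\Rightarrow(c)\Rightarrow(a)$ and, more importantly, it addresses a genuine subtlety that the paper's sketch glosses over. The density hypothesis in (c) concerns pairs $(y,\lambda)$ dense in $\sY\times\sZ^*$, whereas Lemma~\ref{lem:st-holo}(d) needs a dense set of $y$'s in $\sY$ alone for which the entire map $x\mapsto f(x)y$ is (G-)holomorphic. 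Projecting a dense subset of the product onto the first factor does not give you, for each such $y$, a dense set of $\lambda$'s. Your uniform-convergence-on-compact-subdiscs argument, using the bilinear estimate and local boundedness of $f$ together with Prop.~\ref{prop:convergent-sequences}, correctly upgrades dense weak-operator G-holomorphy to the full version, after which the rest of the argument runs exactly as in Prop.~\ref{prop:wk-holo}. So the additional content here is that your proof makes explicit (and closes) a step that the paper's ``apply Lemma~\ref{lem:st-holo} directly'' leaves implicit.
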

\begin{proof}
  Use the same trick as in Prop.~\ref{prop:wk-holo} to write
\hbox{$\Arr{\calU}{f}{ \Lin(\sY;\Lin(\sZ^*;\Cmplx)) }$},
apply Lemma~\ref{lem:st-holo} directly, and then Prop.~\ref{prop:wk-holo}.
\end{proof}

Although holomorphy for the case $\sX\equiv \sY\equiv \Cmplx$ is not usually discussed in
terms of linear operators as here, we may note that it fits in perfectly.
The operator $Df(a)$ in that case can be construed simply as multiplication
by a complex number, $\partial f(a)$, so that $a\mapsto Df(a)$ is identified
with the complex function $a \mapsto \partial f(a)$.
Differentiation does not generate objects of a fundamentally different
type in that case. For higher-dimensional Banach spaces, however, it does so,
and part (b) of Thm.~\ref{thm:holomorphy-summary} thereby gains in importance.
The $D^nf$, as $n$ varies, all have distinct codomains, yet they are
all holomorphic if $f$ is so.

We close this Section with a proof of the sequential permanence property
mentioned earlier, which is also found as Prop. 9.13 of Mujica.
\begin{prop}
  \label{prop:convergent-sequences}
  If $\Arr{\calU}{f_n}{\sY}$ is a sequence of holomorphic mappings converging
  to $f$ uniformly on compact subsets of $\calU$, then $f$ is holomorphic.
\end{prop}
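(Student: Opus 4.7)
The natural strategy is to invoke Thm.~\ref{thm:GTHZ} (GTHZ), which reduces the task to verifying two properties of $f$: local boundedness, and G-holomorphy. Both will follow cleanly from uniform convergence on compact sets, by reduction to facts already established or classical.

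For local boundedness, recall from the remark following Thm.~\ref{thm:GTHZ} that in a Banach space this is equivalent to boundedness on compact subsets. Given a compact $K\subset\calU$, pick $N$ large enough that $\sup_{x\in K}\|f_N(x)-f(x)\|\le 1$; since $f_N$ is holomorphic, hence locally bounded, hence bounded on $K$, the triangle inequality gives $\sup_{x\in K}\|f(x)\|\le 1+\sup_{x\in K}\|f_N(x)\|<\infty$.

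For G-holomorphy, fix $x\in\calU$ and $y\in\sX$, and choose $r>0$ so small that $\{x+\zeta y:|\zeta|\le r\}\subset\calU$. The maps $g_n(\zeta)\defeq f_n(x+\zeta y)$ are ordinary $\sY$-valued holomorphic functions on the open disc $|\zeta|<r$, and because the continuous map $\zeta\mapsto x+\zeta y$ sends compacta in the disc to compacta in $\calU$, the sequence $g_n$ converges to $g(\zeta)\defeq f(x+\zeta y)$ uniformly on compacta of the disc. To conclude that the vector-valued limit $g$ is holomorphic, pair with any $\lambda\in\sY^*$: then $\lambda\circ g_n$ is a sequence of scalar holomorphic functions on the disc converging uniformly on compacta to $\lambda\circ g$, and the classical Weierstrass theorem makes $\lambda\circ g$ scalar holomorphic. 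Hence $g$ is weakly holomorphic, and since it is locally bounded by the previous paragraph, Prop.~\ref{prop:wk-holo} promotes it to holomorphy. Thus $f$ is G-holomorphic, and Thm.~\ref{thm:GTHZ} finishes the proof.

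The only step with any subtlety is the passage from the scalar Weierstrass theorem to the vector-valued limit; invoking Prop.~\ref{prop:wk-holo} keeps the argument inside the framework already assembled, rather than requiring a separate Bochner-integral Morera argument. Everything else is a short triangle-inequality manipulation plus the observation that affine slices of compacta are compact.
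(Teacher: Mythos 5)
Your proof is correct and follows essentially the same route as the paper's: reduce via Thm.~\ref{thm:GTHZ} to local boundedness (bounded on compacta, via uniform convergence plus boundedness of a single $f_N$ on $K$) plus G-holomorphy (reduce the codomain to $\Cmplx$ via Prop.~\ref{prop:wk-holo}, then invoke the classical Weierstrass theorem on the disc). The only cosmetic excess is the invocation of local boundedness when promoting the slice $g$ from weakly holomorphic to holomorphic --- Prop.~\ref{prop:wk-holo}(a)$\Leftrightarrow$(b) gives that implication unconditionally, so no boundedness hypothesis is needed at that particular step.
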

\begin{proof}
Use Thm.~\ref{thm:GTHZ}
(\hbox{G-holomorphic} and locally bounded $\Leftrightarrow$ holomorphic).
For any compact subset $K$ of $\calU$,
the $f_n$'s are bounded, and converge uniformly to $f$, hence $f$ is bounded.
By Prop.~\ref{prop:wk-holo} G-holomorphy of $f$ reduces to the
case \hbox{$\calU\subseteq \sY = \Cmplx$}, which is a well-known result 
of classical complex analysis.
\end{proof}

\section{Hilbert riggings}
\label{sec:Hilbert-riggings}

This section is also primarily background, although Prop.~\ref{prop:iso-to-closed}
is not standard and will play an important r\^ole.
Section~\ref{sec:kinetic-energy} is a concrete illustation of Hilbert rigging intended
primarily for those unfamiliar with the idea. 
A Hilbert rigging of a Hilbert space $\sH$ is a sandwiching
$\sHp\subset \sH \subset\sHm$ by two other Hilbert spaces such that
$\sHm$ is the dual space of $\sHp$ with respect to the original inner product on $\sH$.
They will be used through the identification
of a family $\calC_\relbd$ of {\sqf}s in $\sH$ with $\Lin(\sHp;\sHm)$ for an appropriate $\sHp$. 
Prop.~\ref{prop:iso-to-closed} concerns the identification of isomorphisms from
$\sHp$ to $\sHm$ with closed operators on $\sH$. 

\subsection{Example: kinetic energy}
\label{sec:kinetic-energy}

Before presenting the abstract construction of Hilbert rigging,
we illustrate briefly with the concrete and pertinent example of kinetic energy.
The reader unfamiliar with Hilbert riggings may find it helpful to keep this example
in mind in Section~\ref{sec:Hilbert-rigging-abstract}.

Thus, take $\sHz$ to be $L^2(\Real^n)$; the inner product is
\begin{equation}
  \label{eq:H0}
\inpr{u}{v}
 = \int u(x)^* v(x) \,  d^nx
 = \int \widetilde{u}(p)^* \widetilde{v}(p) \, d^np.
\end{equation}
Fourier transform will be indicated (in this subsection only) by an over-tilde, as above.

Now, a sesquilinear form corresponding to kinetic energy is
\begin{equation}
  \label{eq:KE-form}
  \inpr{\phi}{\psi}_+
 \defeq
    \inpr{\phi}{\psi} +
    \sum_{i=1}^{n}\inpr{\partial_i \phi}{\partial_i \psi}.
  \end{equation}
  To be precise,
  $\|{\psi}\|_+^2 = \inpr{\psi}{\psi}_+$ is the kinetic energy of vector
  state $\psi$, up to the addition of $\|{\psi}\|^2$.
  The notation suggests, as indeed is the case, that this sesquilinear form is
  a legitimate inner product. Moreover, it corresponds to a Hilbert space $\sHp$
  based on a dense subspace of $\sH$. That this is so is best seen in momentum
  space, a move which also alleviates the technical compication that
  we must be careful to {\it a priori} interpret
  the derivatives in (\ref{eq:KE-form}) in a weak or distributional sense.
  The momentum space expression is
\begin{equation}
\label{eq:KE-form-momentum}
\inpr{\phi}{\psi}_+
 = \int \widetilde{\phi}(p)^* \widetilde{\psi}(p) \,  (1+|p|^2) d^np.
\end{equation}
This clarifies both that there really is a subspace of $\sH$ which is complete for  
the new inner product $\inpr{\phantom{\phi}}{\phantom{\psi}}_+$ and why we included
the term $\inpr{{\phi}}{{\psi}}$ in (\ref{eq:KE-form}).

Authorized by the Riesz-Fr\'echet theorem, we could identify $\sHp$ with its dual
space as usual, associating $\phi\in\sH_+$ with the functional
$\psi \mapsto \inpr{\phi}{\psi}_+$.
However, we want to identify the dual with respect not to 
$\inpr{\phantom{\phi}}{\phantom{\psi}}_+$, but with respect to
$\inpr{\phantom{\phi}}{\phantom{\psi}}$.
The momentum-space expression (\ref{eq:KE-form-momentum}) makes clear 
how to do this:
Define $J$ by $\widetilde{J\phi}(p) = (1+|p|^2)\widetilde{\phi}(p)$, so
that $\inpr{\phi}{\psi}_+ = \inpr{J\phi}{\psi}$, where the last represents
some extension of the inner product on $\sH$. With the inner product
\begin{equation}
\label{eq:H-minus}
\inpr{\phi}{\psi}_{-}
 = \int \widetilde{\phi}(p)^* \widetilde{\psi}(p) \,  (1+|p|^2)^{-1} d^np,
\end{equation}
we get another Hilbert space $\sHm$ such that \hbox{$\sHp\subset\sH\subset\sHm$},
and $\Arr{\sHp}{J}{\sHm}$ is unitary.
All three of these spaces consist of functions in momentum space,
but elements of $\sHm$ are actually tempered distributions, in general.
For instance, if $n=1$, $\sHm$ contains delta-functions.
Now we can clarify the meaning of $\inpr{J\phi}{\psi}$:
The map $\sHp\times\sHp \ni (\phi,\psi) \mapsto \inpr{\phi}{\psi}$
admits an extension by continuity to either $\sH$ in both factors
(yielding the ordinary inner product), or to $\sHm$ in one factor.

\subsection{General construction}
\label{sec:Hilbert-rigging-abstract}

We now review the abstract idea of a {\it Hilbert rigging}
as summarized in the (not commutative!) diagram
\begin{equation}
  \label{eq:scale-of-spaces}
  \begin{tikzcd}
    \sHp \arrow[r,hookrightarrow,"\iota\subsub{+}"]
    \arrow[rr,bend right=35, "J"]  
    &
    \sHz \arrow[r,hookrightarrow,"\iota\subsub{0}"]
    & \sHm
    \arrow[ll,bend right=35, "J^{-1}"']  
 \end{tikzcd}
 \end{equation}
Expositions of this technology can be found in
\S II.2 of Simon\cite{Simon-Forms},
\S VIII.6 of Reed~\&~Simon\cite{Reed+Simon},
Ch.~4 of de~Oliveira\cite{deOliveira},
or \S 14.1 of Berezansky\cite{Berezansky-II}.

Start with a Hilbert space $\sHz$ with inner produce $\inpr{\phantom{u}}{\phantom{v}}$,
and a dense subspace equipped with stronger inner product $\inpr{\phantom{u}}{\phantom{v}}_+$,
which makes it into a Hilbert space $\sHp$, so that the inclusion
of one underlying vector space $\{\sHp\}$ into the other $\{\sH\}$
induces a continuous injection
$\iota_{\scriptscriptstyle{+}} \colon \sHp {\hookrightarrow} \sHz$.

The adjoint of $\iota_{_+}$, defined by
\begin{equation}
  \label{eq:i*}
  \ilinpr{\iota_{_+}^* u}{\psi}_{+} =  \inpr{u}{\iota_{_+}\psi}
\end{equation}
is also injective with dense image, since taking adjoints swaps those properties.
Use $\iota_{_+}^*$ to define a new inner product on $\{\sHz\}$ via
\begin{equation}
\inpr{u}{v}_{-} \defeq \ilinpr{\iota_{_+}^* u}{\iota_{_+}^* v}_{_+},  
\end{equation}
equipped with which it becomes the preHilbert space $\{\sHz\}_{-}$,
with a completion denoted $\sHm$. The inclusion of $\{\sHz\}$ into
$\sHm$ is $\iota_0$. By construction, $\iota_{_+}^*$ extends by continuity to
a unitary mapping
\begin{equation}
J^{-1}\colon \sHm\overset{\sim}{\to} \sHp.
\end{equation}
Thus, suppressing the injection $\iota_+$ of $\sHp$ into $\sH$, we may
rewrite (\ref{eq:i*}) as
\begin{equation}
  \label{eq:dual-with-respect-to-H}
  \inpr{u}{\psi} 
  = \ilinpr{J^{-1} u}{\psi}_{+}.
\end{equation}
Furthermore, according to the preceding, the right-hand side extends by continuity to a
continuous sesquilinear map on $\sHm\times \sHp$ with $J^{-1}\sHm = \sHp$.
Using (\ref{eq:dual-with-respect-to-H}) then to define an extension of the $\sH$ inner product
$\inpr{\;}{\;}$ to $\sHm\times\sHp$, we say that $\sHm$ realizes the dual
space of $\sHp$ relative to the original inner product.

%

The maps in (\ref{eq:scale-of-spaces}) naturally induce
two {\em bounded} linear mappings
\begin{align}
\nonumber
  &
    { T \mapsto \iota_0 T \iota_+ }\;\colon
    {\Lin(\sH)}\to {\Lin(\sHp;\sHm)},
            \nonumber  \\
&  {T \mapsto \iota_+ T \iota_0}\;\colon {\Lin(\sHm;\sHp)}\to {\Lin(\sH) }.
            \nonumber
\end{align}
These will be useful below. More interesting, though, is
a map that takes arbitrary $\hat{T}\in\Lin(\sHp;\sHm)$ into 
a (generally unbounded) linear operator $T$ on $\sH$ according to the following
notational convention.
\begin{cnvntn}
  \label{cnvntn:hats}
For \hbox{$\hat{T}\in\Lin(\sHp;\sHm)$},
${T}$ denotes the restriction of $T$ to 
\hbox{$\dom {T} = \setof{\psi\in\sHp}{\hat{T}\psi\in\sH}$},
considered simply as an operator {\em in} $\sH$.
\end{cnvntn}
Not every linear operator in $\sH$ comes from an operator in $\Lin(\sHp;\sHm)$
in this way, so one should not think of the hat as a map or transform of some sort;
the map actually goes the other way.

The following Proposition can be viewed as an analog of
Lemma~\ref{lem:inverse-of-perturbed-closed-op}.
It plays an important r\^ole in the theory.
\begin{prop}
  \label{prop:iso-to-closed}
  Given \hbox{$\hat{T}\in\Linv(\sHp;\sHm)$}.
\newline\noindent\textnormal{(a)}
${T}\in\Lincl(\sH)$, i.e., it is closed with dense domain.
\newline\noindent\textnormal{(b)}
\hbox{$\Arr{\Linv(\sHp;\sHm)}{\hat{T}\mapsto {T}^{-1} }{\Lin(\sH)}$} is holomorphic.
\end{prop}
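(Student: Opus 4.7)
The plan is to establish (a) directly from the boundedness of $\hat T^{-1}$ together with the density properties built into the rigging, and to reduce (b) to the already-stated holomorphy of operator inversion.

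For (a), I would argue closedness as follows. Suppose $(\psi_n)\subset \dom T$ with $\psi_n\to\psi$ and $T\psi_n\to\eta$, both in $\sH$. By Convention~\ref{cnvntn:hats}, $T\psi_n = \hat T\psi_n$, and the continuous inclusion $\iota_0\colon\sH\hookrightarrow\sHm$ gives $\hat T\psi_n\to\eta$ in $\sHm$. Applying the bounded operator $\hat T^{-1}\colon\sHm\to\sHp$ yields $\psi_n\to\hat T^{-1}\eta$ in $\sHp$, and hence in $\sH$ via $\iota_+$. Uniqueness of limits in $\sH$ forces $\psi=\hat T^{-1}\eta\in\sHp$, so $\hat T\psi=\eta\in\sH$, which means $\psi\in\dom T$ and $T\psi=\eta$. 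For density of the domain, observe that $\dom T = \hat T^{-1}(\iota_0\sH)$; since $\iota_0(\sH)$ is dense in $\sHm$ (by construction of $\sHm$ as a completion) and $\hat T^{-1}$ is an isomorphism, $\dom T$ is dense in $\sHp$, and therefore in $\sH$ because $\iota_+$ has dense image.

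For (b), the key is the identity $T^{-1}=\iota_+\hat T^{-1}\iota_0$ as an element of $\Lin(\sH)$. To check it, for $u\in\sH$ set $\psi\defeq\hat T^{-1}(\iota_0 u)\in\sHp$; then $\hat T\psi=\iota_0 u\in\iota_0(\sH)$, so $\psi\in\dom T$ with $T\psi=u$, while injectivity of $\hat T$ gives uniqueness. This lets me factor the map of interest as
\[ \Linv(\sHp;\sHm)\xrightarrow{\mathrm{inv}}\Linv(\sHm;\sHp)\xrightarrow{\;S\,\mapsto\,\iota_+ S\iota_0\;}\Lin(\sH). \]
The first arrow is holomorphic by the Inversion permanence property listed just after Thm.~\ref{thm:holomorphy-summary}; the second is bounded linear (its norm is at most $\|\iota_+\|\,\|\iota_0\|$) and therefore trivially holomorphic. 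Composition of holomorphic maps then delivers the claim.

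The main obstacle is conceptual rather than technical: one must keep clear track of which operator lives in which space. In particular, while $\hat T\mapsto T$ has no natural Banach-space codomain because the domains $\dom T$ vary with $\hat T$, passage to $T^{-1}$ collapses everything into the fixed space $\Lin(\sH)$, where the factorization above takes over and the holomorphy machinery of Section~\ref{sec:Banach-holomorphy} applies verbatim.
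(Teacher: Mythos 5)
Your proof is correct and takes essentially the same approach as the paper, with the same key identity $T^{-1}=\iota_+\hat T^{-1}\iota_0$ and the same factorization through inversion and composition for part (b). The only cosmetic difference is in (a): you verify closedness of $T$ directly via a sequence argument, while the paper observes that $T^{-1}$ is bounded on all of $\sH$, hence closed, and concludes $T$ is closed as the inverse of a closed injective operator.
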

\begin{proof}
${T}^{-1} = {\iota_+}{\hat{T}^{-1}}{\iota_0}$
is bounded with domain $\sH$, hence closed, hence so is ${T}$.
Since $\iota_0$ and $\iota_+$ have dense image, $\dom{T}$ is dense in $\sH$.
Finally, $\hat{T}\mapsto {T}^{-1}$ is holomorphic since it is explicitly a composite of
inversion and composition with a linear map, which are holomorphic operations.
\end{proof}
Certainly ${T}^{-1}$ exists for some operators $\hat{T}$ in $\Lin(\sHp;\sHm)$
which are not invertible, and one may ask whether \hbox{$\hat{T}\mapsto{T}^{-1}$}
is holomorphic on a larger domain.
Close examination of this question is postponed to Prop.~\ref{prop:final-piece}
when more motivation will be in place.


\section{Families of forms and operators}
\label{sec:families}

This section is the technical core of the paper, preparing for applications in
Sections \ref{sec:QM}, \ref{sec:eigenvalues}, and \ref{sec:free-energy}.
Section~\ref{sec:sforms-1} recalls some basic ideas and definitions
connected with sesquilinear forms ({\sqf}s).
That is preparation for consideration of families of sectorial forms
parameterized over an open set $\calU$ of some Banach space. We want these
parameterizations to be holomorphic, hence the generalization of the
$\Real$-centered notion of lower-bounded hermitian to sectorial.
However, this can make sense only if relevant classes of {\sqf}s
have a Banach space structure themselves. 
Thm.~\ref{thm:[]+-} solves this problem, showing that the class $\calC_\relbd$ of
{\sqf}s relatively bounded with respect to an equivalence class $\calC$ of closable
forms is naturally identified with $\Lin(\sHp;\sHm)$, where $\sHp\subset \sH\subset \sHm$
is an Hilbert rigging. Attention is then turned to the closed operators associated with
the sectorial forms $\sct{\calC}$ in $\calC$.
Thm.~\ref{thm:resolvent-holo} is the second main result, showing that
the operator $H$ associated with $\frm{h}\in\sct{\calC}$ is invertible iff
$\frm{h}$ viewed as an element of $\Lin(\sHp;\sHm)$ is so.
This gives holomorphy of the $\Rmap$-map $(\zeta,\frm{h}) \mapsto (H-\zeta)^{-1}$
on its natural domain in $\Cmplx\times\sct{\calC}$, which will be a basic tool in
Sections \ref{sec:eigenvalues} and \ref{sec:free-energy}.
Attention then swings back to parameterizations and convenient criteria for
a family $\frm{h}$ of {\sqf}s to be a \RSF, i.e., holomorphically embedded in
some $\sct{\calC}$.

\subsection{Sesquilinear forms}
\label{sec:sforms-1}

This section consists mostly of definitions and notational conventions.
as well as some notational conventions. A standard source for this material
is \S\S VI.1,2 of Kato's treatise\cite{Kato}.

\begin{enumerate}[label={(\arabic*)}]
\label{defn:SQF}
\item 
A {\it sesquilinear form} ({\it \sqf} henceforth) $\frm{h}$ on complex vector space $\sK$
is a map \hbox{$(\phi,\psi) \mapsto \frm{h}[\phi,\psi] \Type{\sK\times \sK}{\Cmplx}$}
linear in the second variable and conjugate-linear in the first.
(Conjugate-linearity distinguishes these from bilinear forms.)
Dirac-style notation will also be used:
$\Dbraket{\phi}{\frm{h}}{\psi} \equiv \frm{h}[\phi,\psi]$.

To a sesquilinear form is associated a {\it quadratic form}
$\frm{h}[\psi] \defeq \frm{h}[\psi,\psi]$.
The sesquilinear form can be recovered by polarization, so we will
always use the term \sqf\ for economy.

We write $|\frm{t}|$ for the map $\psi\mapsto |\frm{t}[\psi]|$.
This {\em is not} an \sqf, unless $|\frm{t}| = \frm{t}$.

\item
  The {\it adjoint} of the {\sqf} $\frm{h}$ is
\hbox{$\frm{h}^*[\phi,\psi] \defeq \overline{\frm{h}[\psi,\phi]}$}.
If \hbox{$\frm{h} = \frm{h}^*$}, $\frm{h}$ is {\it hermitian}.
$\frm{h}$ is split into {\it real} and {\it imaginary} hermitian parts as
$\frm{h} = \frm{h}^r + i \frm{h}^i$ with
$\frm{h}^r = \frac{1}{2}(\frm{h} + \frm{h}^*)$, $\frm{h}^i = \frac{1}{2i}(\frm{h} - \frm{h}^*)$.
Hermitian quadratic forms are partially ordered similarly to self-adjoint operators:
$\frm{h} \le \frm{h}'$ means
$\forall\psi\in\sK, \; \frm{h}[\psi] \le \frm{h}'[\psi]$.
The inner product of the ambient Hilbert space provides the special
\sqf\ \hbox{${\bm 1}[\phi,\psi] \defeq \inpr{\phi}{\psi}$}.

\item 
The {\it numerical range} of $\frm{h}$ is the set 
\begin{equation}
\label{eq:numerical-range}
\Num \frm{h} \defeq \setof{\frm{h}[\psi]}{\psi\in\dom \frm{h}, \|\psi\|=1}.
\end{equation}
The role of numerical range for {\sqf}s somewhat analogous to that 
of {\it spectrum} for operators.
\begin{lem}
\label{lem:Num-cvx}
$\Num \frm{h}$ is a convex set.
\end{lem}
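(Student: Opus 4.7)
This is the form-analog of the Toeplitz--Hausdorff theorem on convexity of the numerical range. Given $\alpha_1,\alpha_2\in\Num\frm{h}$, realized as $\alpha_k = \frm{h}[\phi_k]$ for unit vectors $\phi_k\in\dom\frm{h}\subset\sH$, the goal is to show that every convex combination lies in $\Num\frm{h}$. If $\phi_1,\phi_2$ are linearly dependent, then $\phi_2 = e^{i\vartheta}\phi_1$ and $\alpha_1 = \alpha_2$, so the ``segment'' is a point and the claim is trivial. Otherwise, take the two-dimensional subspace $V = \linspan(\phi_1,\phi_2)\subset \sK$, inheriting the inner product of $\sH$, and restrict attention to $\frm{h}\big|_V$, whose numerical range is clearly a subset of $\Num\frm{h}$.

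The next step is affine normalization. From the definition, $\Num(c\frm{h} + d\bm{1}) = c\,\Num\frm{h} + d$ for any $c\in\Cmplx\setminus\{0\}$ and $d\in\Cmplx$, so (using $\alpha_1\ne\alpha_2$) I would replace $\frm{h}$ by $(\frm{h} - \alpha_1\bm{1})/(\alpha_2-\alpha_1)$ and assume $\alpha_1 = 0$, $\alpha_2 = 1$. The task is then reduced to showing $[0,1]\subset\Num\frm{h}\big|_V$.

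I would then choose an orthonormal basis $(e_1,e_2)$ of $V$ and encode $\frm{h}\big|_V$ as a $2\times 2$ matrix $M_{ij} = \frm{h}[e_i,e_j]$, so that $\Num\frm{h}\big|_V$ coincides with the classical numerical range $\Num(M)\subset\Cmplx$ of a matrix acting on $\Cmplx^2$. Convexity of $\Num(M)$ is the content of the classical Toeplitz--Hausdorff theorem. A self-contained proof parameterizes the unit sphere of $\Cmplx^2$, modulo an irrelevant global phase, by the Bloch sphere $S^2$:
\begin{equation*}
\psi(\theta,\chi) = \cos(\theta/2)\,e_1 + e^{i\chi}\sin(\theta/2)\,e_2,
\end{equation*}
and computes
\begin{equation*}
\Dbraket{\psi}{M}{\psi} = \tfrac{M_{11}+M_{22}}{2} + \tfrac{M_{11}-M_{22}}{2}\cos\theta + \tfrac{\sin\theta}{2}\bigl(M_{12}e^{-i\chi} + M_{21}e^{i\chi}\bigr),
\end{equation*}
whose image is easily identified as a (possibly degenerate) closed filled ellipse in $\Cmplx$. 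Since $0,1\in\Num(M)$, the segment joining them lies inside this ellipse, giving the required convex combinations.

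The only real obstacle is bookkeeping: one must verify that the affine recentering really moves $\alpha_1,\alpha_2$ to $0,1$ and that reducing to $V$ loses nothing, and must execute the parameterization carefully enough to see the ellipse. No genuinely new idea beyond the standard finite-dimensional Toeplitz--Hausdorff argument is needed, since $\frm{h}\big|_V$ is automatically continuous on the finite-dimensional space $V$ and the numerical range only involves unit vectors in $V$.
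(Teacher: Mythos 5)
Your proof is correct but follows a different route than the paper's. You reduce to the two-dimensional restriction $\frm{h}\big|_V$ and invoke the classical Toeplitz--Hausdorff / elliptical-range theorem for $2\times2$ matrices, which you sketch via the Bloch-sphere parameterization. The paper instead gives a self-contained one-parameter argument: after the same affine normalization to $\frm{h}[\psi]=0$, $\frm{h}[\phi]=1$, it sets $\varphi(s)=(1-s)\psi+se^{i\theta}\phi$ for $s\in[0,1]$, observes that
\begin{equation*}
\frm{h}[\varphi(s)] = s^2 + s(1-s)\bigl(e^{i\theta}\frm{h}[\psi,\phi]+e^{-i\theta}\frm{h}[\phi,\psi]\bigr),
\end{equation*}
chooses $\theta$ so that the brace is real, and then concludes by the intermediate value theorem that the (normalized) values sweep out $[0,1]$. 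That argument avoids any matrix representation, any basis choice, and any appeal to the elliptical-range result; it only uses continuity of a real-valued path. Your approach buys a cleaner conceptual picture (the full $2\times2$ numerical range is a filled ellipse) at the cost of relying on a classical theorem that is itself somewhat fiddly to establish --- the step ``whose image is easily identified as a closed filled ellipse'' is the hard part and really wants a Schur triangularization or equivalent. If you wished to keep your proof self-contained you would either have to fill that in, or notice that you only need one real chord of the image, which is exactly what the paper's intermediate-value argument extracts directly.
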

\begin{proof}
  We need to show that the line segment in $\Cmplx$ from
  $\frm{h}[\psi]$ to $\frm{h}[\phi]$ is in $\Num \frm{h}$, for
  unit vectors $\psi,\phi \in\dom \frm{h}$.
  By suitable scaling and translation (replace $\frm{h}$ by $a\frm{h}+b{\bm 1}$), we may assume
  that $\frm{h}[\psi]=0$ and $\frm{h}[\phi]=1$.

Define
\hbox{$\varphi(s) = (1-s)\psi + s e^{i\theta}\phi$}
for $0\le s\le 1$,
with $\theta$ to be chosen.
Then,
\begin{equation}
  \nonumber
 \frm{h}[\varphi(s)]
  = s^2 + s(1-s)\Big\{ e^{i\theta} \frm{h}[\psi,\phi] + e^{-i\theta} \frm{h}[\phi,\psi]\Big\}.
  \end{equation}
  For suitable choice of $\theta$, the quantity in braces, thus
  $\frm{h}[\varphi(s)]$ is real.
  $\frm{h}[\varphi(s)]$ goes continuously from
  $0$ to $1$ as $s$ increases from $0$ to $1$, and therefore covers at
  least the segment $[0,1]$.
  Since $\varphi(0)$ and $\varphi(1)$ are already normalized,
  normalizing $\varphi(s)$ will not alter this conclusion.
\end{proof}

\item \label{item:sector}
 An {\em open sector} is a right-facing wedge,
\begin{equation}
  \nonumber
\oSec{c}{\theta} \defeq \setof{c + r e^{i\varphi}}{r > 0,\, |\varphi| < \theta},
\end{equation}
in $\Cmplx$ for some {\em vertex} $c\in\Cmplx$ and {\em half-angle} $\theta < \pi/2$,
and the {\em closed sector} $\cSec{c}{\theta}$ is its closure.
If sector $\Sigma$ is contained in the interior of $\Sigma'$ and 
$\Sigma'$ has a strictly larger half-angle than does $\Sigma$,
then $\Sigma'$ is a {\em dilation} of $\Sigma$.

\item
\label{item:sectorial}
$\frm{h}$ is {\it sectorial} if its numerical range is contained in some sector,
and any such will be said to be {\it a sector for $\frm{h}$}.
$\Sigma$ is an {\em ample sector} for $\frm{h}$ if it is a dilation of
some sector for $\frm{h}$.

For any sectorial form $\frm{h}$, $\frm{h}^+$ will denote an arbitrary translate
$m{\bm 1} + \frm{h}^r$ such that ${\bm 1} \le \frm{h}^+$.
(Of course, the choice of $m$ can be standardized, but for our purposes there is no need.)

%
\item
  Any operator $T$ in $\sH$ naturally induces an \sqf\ on $\dom T$
  by $(\phi,\psi) \mapsto \inpr{\phi}{T\psi}$. The numerical range of
  $T$ is simply the numerical range of this \sqf.
  Caution: a closed operator is called {\it sectorial} if its spectrum lies
  in a sector. This is not the same thing as the associated \sqf\ being sectorial;
  the latter is a stronger condition. The relation between numerical range and
  spectrum is taken up in Section~\ref{sec:Num-and-spec}.
\item
The vector space of {\sqf}s on
a dense subspace $\sK$ of $\sH$ will be denoted $\SF(\sK)$.
The set of sectorial {\sqf}s on $\sK$, denoted $\sct{\SF}(\sK)$,
is a cone in $\SF(\sK)$. Generally, superscript `$\triangleleft$'
indicates the sectorial members of any class of {\sqf}s.

\item 
{\sqf} $\frm{t}$ is {\it bounded relative to} {\sqf} $\frm{h}$,
denoted $\frm{t} \relbd \frm{h}$, if
$\dom \frm{t} \supseteq \dom \frm{h}$ and there exist
$a,b > 0$ such that $|\frm{t}[\psi]| \le a {\bm 1}[\psi] + b |\frm{h}[\psi]|$
for every $\psi\in\dom\frm{h}$. The relation $\relbd$ is reflexive and transitive.

If $\frm{t}\relbd\frm{h}$ and $\frm{h}\relbd\frm{t}$, then
$\frm{t}$ and $\frm{h}$ are {\it equivalent}, denoted $\frm{t}\sim\frm{h}$.
Equivalent {\sqf}s have the same domain.

Sectoriality of $\frm{h}$ can be expressed as:
$\frm{h}^r$ is bounded below and \hbox{$\frm{h}^i \relbd \frm{h}^r$}.

$\relbd$ has a modest but useful calculus. For instance,
\begin{alignat}{3}
&    B\in\Lin(\sH)
& \;\Rightarrow\; &
B \relbd \frm{h},
    \nonumber \\
& c\in\Cmplx\setminus\{0\}
& \;\Rightarrow\; &
\frm{h} \sim c\frm{h},
    \nonumber \\
&  \frm{t} \relbd \frm{h}
& \;\Rightarrow\; &
\frm{h}+\frm{t} \relbd \frm{h},
    \nonumber \\
&  \frm{t}, \frm{h} \text{ sectorial }
& \;\Rightarrow\; &
\frm{h} \relbd \frm{h}+\frm{t}.
                   \nonumber
\end{alignat}

\item 
  \label{item:Cauchy}
A sequence 
$(\psi_n)$ in $\dom \frm{h}$ is {\it $\frm{h}$-Cauchy} if
\hbox{$(|\frm{h}| + {\bm 1})[\psi_n-\psi_m]\to 0$}.
It \hbox{\it $\frm{h}$-converges} to $\psi$ if
\hbox{$(|\frm{h}| + {\bm 1})[\psi_n-\psi]\to 0$}.
$\frm{h}$ is {\it closed} if all {$\frm{h}$-Cauchy} sequences 
$\frm{h}$-converge, {\it closable} if it has a closed extension. 

Note that $\frm{t}\relbd\frm{h}$ is equivalent to
every $\frm{h}$-Cauchy sequence is $\frm{t}$-Cauchy.

\end{enumerate}
\subsection{Completion and closure}
\label{sec:closure}

The notion of Cauchy-ness in item \ref{item:Cauchy} above is
common across an equivalence ($\sim$) class of {\sqf}s.
This is an important fact, as it points the way to a
``completion'' of an entire equivalence class on a common domain.
Therefore, we consider an equivalence class $\calC$ of {\sqf}s defined
on a dense subspace $\sK\subseteq \sH$, containing
a sectorical {\sqf} $\frm{h}$, and therefore a hermitian {\sqf} $\frm{h}^+ \ge {\bm 1}$.
The class of all forms on $\sK$ which are bounded relative to those in $\calC$
is denoted $\calC_\relbd$. 
The various sets of {\sqf}s involved here are related as
\begin{equation}
\sct{\calC} = \calC\cap \sct{\SF}(\sK)  \subset \calC \subset \calC_\relbd \subset \SF(\sK).
\end{equation}
The set $\sct{\calC}$, the sectorial forms among $\calC$, is a cone, while
$\calC_\relbd$ is a vector space. It will emerge that it has a natural Banach
space structure, up to norm-equivalence.

Two ${\mathcal C}$-Cauchy sequences $(x_n)$ and $(y_n)$ are equivalent
if \hbox{$(x_n-y_n)$} is $\calC$-Cauchy. This is written as $x \sim y$,
and the equivalence class of $(x_n)$ is denoted $x^\sim$.
Vectors in $\sK$ are identified with the classes of constant sequences.
The completion of ${\mathcal C}$ is constructed on the vector space
\begin{equation}
  \nonumber
\sKc \defeq\{\sim\text{-classes of }
{\mathcal C}\text{-Cauchy sequences in }\sK\},
\end{equation}
and $s$-forms in $\calC$ are extended to $\sKc$ according to
\begin{equation}
\label{eq:K-bar-forms}
\Dbraket{{x^\sim}}{\frm{t}}{{y^\sim}}  
\defeq \lim_{n\to\infty} \Dbraket{x_n}{\frm{t}}{y_n},
\end{equation}
as we now discuss.

The term {\it completion} suggests that we are dealing with the ordinary
completion of a relevant preHilbert space structure on $\sK$.
That is correct, and the inner product represented by any
$\frm{h}^+ \ge {\bm 1}$ in $\calC$ will do.
Let $\frm{h} \in \calC$ be sectorial, $\frm{h}^+$ as in
item \ref{item:sectorial} above, and
$(\sK,\frm{h}^+)$ be the preHilbert space structure consisting of the
space $\sK$ with inner product
$\inpr{{\phi}}{{\psi}}_{\frm{h}} \defeq \Dbraket{{\phi}}{\frm{h}^+}{{\psi}}$.
$\calC$-Cauchy is the same thing as $(\sK,\frm{h}^+)$-Cauchy in the usual sense,
and the usual Hilbert space completion of $(\sK,\frm{h}^+)$ can be viewed
as being carried on $\sKc$.
In order to see that {\sqf}s in $\calC$ can be extended to $\sKc$, we need
to know that they satisfy a Cauchy-Schwarz-like inequality.
\begin{lem}
  \label{lem:quasi-Cauchy-Schwarz}
Suppose ${\bm 1} \le \frm{h}^+$ and $\frm{t} \relbd \frm{h}$.
Then, there is some $M > 0$ such that for every $x,y\in\dom \frm{h}^+$,
\begin{equation}
\label{eq:pseudo-CS}
|\Dbraket{x}{\frm{t}}{y}|^2 \le M \frm{h}^+[x] \frm{h}^+[y]    
\end{equation}
\end{lem}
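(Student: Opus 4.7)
The plan is to first reduce the two-term bound coming from $\frm{t} \relbd \frm{h}$ to a single clean quadratic-form estimate $|\frm{t}[\psi]| \le M' \frm{h}^+[\psi]$, then promote this to a bilinear estimate by polarization, and finally sharpen that estimate to the product form via a scaling trick. Since $\frm{h}^+$ is defined only for sectorial $\frm{h}$, I take sectoriality of $\frm{h}$ as part of the hypothesis. Relative boundedness gives $|\frm{t}[\psi]| \le a{\bm 1}[\psi] + b|\frm{h}[\psi]|$ on $\dom\frm{h}$, while sectoriality yields $|\frm{h}^i[\psi]| \le C_1\frm{h}^r[\psi] + C_2{\bm 1}[\psi]$ together with $\frm{h}^r$ bounded below, hence $|\frm{h}[\psi]| \le C_3 \frm{h}^+[\psi]$ after using $\frm{h}^+ = m{\bm 1} + \frm{h}^r$ and ${\bm 1}\le\frm{h}^+$ to absorb constants. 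Combined with one more application of ${\bm 1}\le \frm{h}^+$ this produces $|\frm{t}[\psi]| \le M'\frm{h}^+[\psi]$ for a suitable $M'$.

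Next, polarize: $\frm{t}[x,y]$ is a linear combination of the four quadratic values $\frm{t}[x+i^k y]$, $k=0,1,2,3$, with unit-modulus coefficients, so
\[
4|\frm{t}[x,y]| \le \sum_{k=0}^{3} |\frm{t}[x + i^k y]| \le M'\sum_{k=0}^{3} \frm{h}^+[x+i^k y].
\]
The parallelogram law, valid for the hermitian form $\frm{h}^+$, evaluates the last sum to $4(\frm{h}^+[x]+\frm{h}^+[y])$, giving the bilinear estimate $|\frm{t}[x,y]|\le M'(\frm{h}^+[x]+\frm{h}^+[y])$.

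Finally, for any real $r>0$ one has $\frm{t}[rx, r^{-1}y] = \frm{t}[x,y]$, since antilinearity in the first slot cancels linearity in the second. Substituting $x\mapsto rx$, $y\mapsto r^{-1}y$ in the bilinear estimate yields $|\frm{t}[x,y]| \le M'(r^2\frm{h}^+[x] + r^{-2}\frm{h}^+[y])$, and optimizing over $r>0$ (take $r^4 = \frm{h}^+[y]/\frm{h}^+[x]$ when both are nonzero) produces $|\frm{t}[x,y]|\le 2M'\sqrt{\frm{h}^+[x]\frm{h}^+[y]}$, whose square is the asserted inequality with $M = 4{M'}^2$. The degenerate case $\frm{h}^+[x]=0$ (which forces $x=0$ since $\frm{h}^+\ge{\bm 1}$) is handled trivially. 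The main obstacle is step 1 --- carefully converting the sectoriality data and the two-term relative bound into a single-term bound by $\frm{h}^+$; the polarization and scaling steps are entirely routine once that bound is in hand.
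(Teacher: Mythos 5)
Your proof is correct and follows essentially the same polarization-plus-parallelogram route as the paper's, with two cosmetic differences: the paper first reduces (via the split $\frm{t}=\frm{t}^r+i\frm{t}^i$ and a phase rotation on $x$) to the two-term identity $4\frm{t}[x,y]=\frm{t}[x+y]-\frm{t}[x-y]$ rather than invoking the full four-term polarization, and it normalizes to $\frm{h}^+[x]=\frm{h}^+[y]=1$ up front rather than optimizing over a scaling parameter $r$ at the end. Your step 1 usefully spells out in full the reduction to a single-term bound $|\frm{t}|\le M'\frm{h}^+$ that the paper compresses into the phrase ``the general case follows by rescaling.''
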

\begin{proof}
Only the case $\frm{t}$ hermitian, $|\frm{t}| \le \frm{h}^+$,
$\frm{t}[x,y]$ real, $\frm{h}^+[x] = \frm{h}^+[y] = 1$ need be checked, since
the general case follows by rescaling, multiplying $x$ by a phase $e^{i\theta}$,
and $|\frm{t}[x,y]| \le |\frm{t}^r[x,y]| + |\frm{t}^i[x,y]|$.
Here is the verification of the special case:
\begin{align}
  4|\frm{t}[x,y]|
  &=
  \frm{t}[x+y] - \frm{t}[x-y]  
    \nonumber \\
& \le  |\frm{t}[x+y]| + |\frm{t}[x-y]| 
    \nonumber \\
& \le  \frm{h}^+[x+y] + \frm{h}^+[x-y] = 4
    \nonumber 
  \end{align}
\end{proof}
This lemma asserts that every $\frm{t}\in\calC_\relbd$ is a bounded sesquilinear form on
the dense subspace $\sK$ of $(\sKc,\frm{h}^+)$, hence extends
by continuity to the full space so as to satisfy (\ref{eq:K-bar-forms}).
Each such extended {\sqf} is represented by a bounded operator on 
$(\sKc,\frm{h}^+)$; for instance, $\frm{h}^+$ itself is represented
by the identity. 

However, we also desire to identify $\sKc$ with a subspace of the ambient Hilbert
space $\sH$.
Certainly, the inclusion $\iota\colon {\sK} \hookrightarrow {\sH}$ extends by continuity
to a bounded
operator $\Arr{(\sKc,{\frm{h}^+})}{\tilde{\iota}}{\sH}$. The only question is whether it
is injective. It fails to be so only if there are two inequivalent $\calC$-Cauchy sequences
in $\sK$, which converge as sequences in $\sH$ to the same vector.
By linearity, only the case $x_n \to 0$ in $\sH$ need be considered:
$x \sim 0$ fails if and only if $\frm{t}[x_n] \not\to 0$, for any $\frm{t}\in\calC$.
The test may therefore be performed for any member of the class $\calC$.
If $\tilde{\iota}$ is injective, we simply identify $\sKc$ with
its image, and thereby obtain a closed {\sqf} in $\sH$ for every $\frm{t}$ in $\calC$.
In that case, $\calC$ is said to be closable.
Although it must be checked, only closable classes are of interest to us, so
closability is assumed henceforth.

\subsection{From {\sqf}s to operators}
\label{sec:ops-from-forms}

With $(\sKc,{\frm{h}^+})$ in the role of $\sHp$,
we obtain a Hilbert rigging as in Section~\ref{sec:Hilbert-rigging-abstract},
from which we now take over various notations.

Any bounded sesquilinear form $\frm{t}$ on $\sHp$,
(in particular, one in $\calC$) is represented by a unique operator
$\Opp{+}{+}{\frm{t}}\in\Lin(\sHp)$ satisfying
\begin{equation}
\inpr{\phi}{\Opp{+}{+}{\frm{t}}\psi}_{+}
= \Dbraket{\phi}{\frm{t}}{\psi}
\end{equation}
for all $\phi,\psi\in\sK$.
Using the unitary isomorphism $\Arr{\sHp}{J}{\sHm}$, we get another ``representation''
\hbox{$\Opp{+}{-}{\frm{t}} \defeq J \Opp{+}{+}{\frm{t}} \in\Lin(\sHp;\sHm)$}
of $\frm{t}$ satisfying
\begin{equation}
\inpr{\phi}{\Opp{+}{-}{\frm{t}}\psi}
= \Dbraket{\phi}{\frm{t}}{\psi}.
\end{equation}
[Recall that the inner product on $\sH$ extends to \hbox{$\sHm\times\sHp \cup \sHp\times\sHm$} as
in (\ref{eq:dual-with-respect-to-H}).]
The notation $\Opp{a}{b}{\frm{t}}$ indicates the domain and
range spaces in the subscript and superscript, respectively. It is unambiguous, but cumbersome.
Fortunately, it will not be needed much.
Restricting $\Opp{+}{-}{\frm{t}}$ to those $\psi$ such that 
$\Opp{+}{-}{\frm{t}}\psi$ is in $\sH$ yields yet a third operator,
$\Opp{0}{0}{\frm{t}} \in\Lin_0(\sH)$. The domain and range of this operator are
subspaces of $\sH$.
\subsection{Holomorphy of the $\Rmap$-map}
\label{sec:R-map}

The operator guise of $\frm{t}$ which is ultimately of most interest
is $\Opp{0}{0}{\phantom{\frm{t}}}$.
However, the $\Opp{+}{-}{\phantom{\frm{t}}}$ and $\Opp{+}{+}{\phantom{\frm{t}}}$ forms
have some especially nice properties, collectively:
\begin{thm}
  \label{thm:[]+-}
  the map $\frm{t}\mapsto \Opp{+}{-}{\frm{t}}$ is a bijection between
$\calC_{\relbd}$ and $\Lin(\sHp;\sHm)$.
The image $\Opp{+}{-}{\sct{\calC}}$ of $\sct{\calC}$ under this map
is an open subset of \hbox{$\Linv(\sHp;\sHm) - \Real_+$}.
\end{thm}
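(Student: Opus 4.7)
The plan is to split the statement into two assertions: (i) $\frm{t}\mapsto \Opp{+}{-}{\frm{t}}$ is a bijection $\calC_\relbd\to\Lin(\sHp;\sHm)$, and (ii) the image $\Opp{+}{-}{\sct{\calC}}$ lies in $\Linv(\sHp;\sHm)-\Real_+$ and is open there.

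For (i), the forward direction rests on Lemma~\ref{lem:quasi-Cauchy-Schwarz}: any $\frm{t}\in\calC_\relbd$ satisfies $|\Dbraket{\phi}{\frm{t}}{\psi}|^{2}\le M\,\frm{h}^{+}[\phi]\,\frm{h}^{+}[\psi]$ on $\sK\times\sK$, so $\frm{t}$ is bounded in the $(\sK,\frm{h}^{+})$ inner product and extends by continuity to a bounded sesquilinear form on $\sHp\times\sHp$. The Hilbert-rigging machinery of Section~\ref{sec:ops-from-forms} then supplies the unique operator $\Opp{+}{-}{\frm{t}}\in\Lin(\sHp;\sHm)$ satisfying $\inpr{\phi}{\Opp{+}{-}{\frm{t}}\psi}=\Dbraket{\phi}{\frm{t}}{\psi}$. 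Injectivity follows from density of $\sK$ in $\sHp$ together with the fact that an element of $\SF(\sK)$ is determined by its values on $\sK$. For surjectivity I would start with $T\in\Lin(\sHp;\sHm)$, set $\frm{t}[\phi,\psi]\defeq \inpr{\phi}{T\psi}$ on $\sK$, and check $|\frm{t}[\psi]|\le\|T\|\,\frm{h}^{+}[\psi]$, which exhibits $\frm{t}\relbd\frm{h}^{+}$ and hence, since $\frm{h}^{+}\in\calC$, places $\frm{t}\in\calC_\relbd$.

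For (ii), containment comes from a shift argument. Given $\frm{h}\in\sct{\calC}$, the translate $\frm{h}+s{\bm 1}$ is still sectorial and, for $s\ge m$ (the shift implicit in $\frm{h}^{+}$), satisfies $\re(\frm{h}+s{\bm 1})[\psi]\ge\frm{h}^{+}[\psi]=\|\psi\|_{+}^{2}$. Since it is also bounded on $\sHp\times\sHp$ by (i), the Lax--Milgram theorem (in its rigging form: a bounded coercive form on $\sHp$ represents an isomorphism $\sHp\to\sHm$) gives $\Opp{+}{-}{\frm{h}+s{\bm 1}}\in\Linv(\sHp;\sHm)$, so $\Opp{+}{-}{\frm{h}}=\Opp{+}{-}{\frm{h}+s{\bm 1}}-s\,\Opp{+}{-}{{\bm 1}}$ lies in $\Linv(\sHp;\sHm)-\Real_+$. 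Openness is equivalent via (i) to stability of sectoriality under small $\Lin(\sHp;\sHm)$-norm perturbations: if $\frm{h}\in\sct{\calC}$ has ample sector $\oSec{c}{\theta}$ and $\|\Opp{+}{-}{\frm{t}}\|<\epsilon$, then $|\frm{t}[\psi]|\le\epsilon\,\frm{h}^{+}[\psi]$ on $\sHp$, and a direct estimate of real and imaginary parts against the wedge geometry of $\oSec{c}{\theta}$ places $\Num(\frm{h}+\frm{t})$ in a dilation of that sector once $\epsilon$ is small enough.

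The main obstacle I anticipate is executing the stability estimate cleanly: the perturbation bound $|\frm{t}[\psi]|\le\epsilon\,\frm{h}^{+}[\psi]$ couples $\|\psi\|^{2}$ with $\re\frm{h}[\psi]$, so one must separately control $\re\frm{t}[\psi]$ and $\im\frm{t}[\psi]$ against the geometry of the ample sector for $\frm{h}$ and identify a concrete containing sector for $\frm{h}+\frm{t}$. The remaining ingredients---Lemma~\ref{lem:quasi-Cauchy-Schwarz}, the rigging correspondence of Section~\ref{sec:ops-from-forms}, and Lax--Milgram---are essentially bookkeeping once assembled.
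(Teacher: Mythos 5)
Your proposal is correct and its decomposition matches the paper's. Part (i) is the paper's Part~1 (Section~\ref{sec:closure} together with Lemma~\ref{lem:quasi-Cauchy-Schwarz}); you usefully spell out the injectivity and surjectivity that the paper leaves implicit. For part (ii), you invoke Lax--Milgram where the paper re-proves it inline as Lemma~\ref{lem:sectorial-to-iso}: there one gets injectivity and the closed-range bound $\|\psi\|_+\le\|\Opp{+}{-}{\frm{t}}\psi\|_-$ directly from coercivity, then dense range by passing to the adjoint via Lemma~\ref{lem:adjoint-ops}. These are equivalent routes, the paper's version merely making visible the adjoint calculus it reuses elsewhere. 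Your openness sketch coincides with the paper's Lemma~\ref{lem:sectorial-usc}: after normalizing so that $\frm{h}^+=\frm{h}^r$, the paper derives precisely the perturbation bound $|\frm{h}[\psi]-\frm{s}[\psi]|\le\epsilon\,\frm{h}^+[\psi]$ that you describe, and the final geometric step of reading off a dilated containing sector --- the one you flag as the main obstacle --- is left as implicit in the paper as in your outline, so you have correctly located the only place where real estimation remains to be done.
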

\begin{proof}[Proof of Thm. \ref{thm:[]+-}, part 1]
We already know from Section \ref{sec:closure} that
there is a natural bijection between $\calC_{\relbd}$ and $\Lin(\sHp)$.
By means of the unitary $J$, this is mapped into $\Lin(\sHp;\sHm)$.
\end{proof}
\begin{cnvntn}
  \label{cnvntn:C<-as-Banach-space}
From now on, we consider $\calC_\relbd$ to be equipped with this
Banach space structure --- up to norm-equivalence.
This structure is independent of the choice of $\frm{h}^+$ used to
construct $\sHp$, and therefore intrinsic.
\end{cnvntn}
The proof of the second part of Thm.~\ref{thm:[]+-} relies on the
following three Lemmas.
\begin{lem}
  \label{lem:adjoint-ops}
  If $\frm{t}\in\sct{\calC}$, then
  $\Opp{a}{b}{\frm{t}^*} = ( \Opp{a}{b}{\frm{t}} )^*$ 
for all choices of $a$ and $b$.
(N.B., the two $*$'s mean slightly different things.)
\end{lem}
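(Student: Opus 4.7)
The plan is to handle the three index choices $(a,b)\in\{(+,+),(+,-),(0,0)\}$ separately. For $(+,+)$, the chain of equalities
\[
\langle\phi,\Opp{+}{+}{\frm{t}^*}\psi\rangle_+ = \frm{t}^*[\phi,\psi] = \overline{\frm{t}[\psi,\phi]} = \overline{\langle\psi,\Opp{+}{+}{\frm{t}}\phi\rangle_+}
\]
holds for $\phi,\psi\in\sK$, and since both $\Opp{+}{+}{\frm{t}}$ and $\Opp{+}{+}{\frm{t}^*}$ lie in $\Lin(\sHp)$ the identity extends by continuity to all of $\sHp$, which is exactly the statement that $\Opp{+}{+}{\frm{t}^*}$ is the Hilbert-space adjoint of $\Opp{+}{+}{\frm{t}}$ in $\Lin(\sHp)$.

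For $(+,-)$, the same chain of equalities works once one reads the pairing via the sesquilinear $\sHm$--$\sHp$ pairing extending the ambient $\sH$ inner product (cf.\ \eqref{eq:dual-with-respect-to-H}); using the relation $\Opp{+}{-}{\frm{t}} = J\Opp{+}{+}{\frm{t}}$ and the unitarity of $J\colon\sHp\to\sHm$, this case is simply the $(+,+)$ case transported through $J$, with $*$ interpreted as the adjoint under the $\sHm$--$\sHp$ pairing as the ``N.B.'' in the lemma statement warns.

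The main obstacle is $(0,0)$, where the operators are generically unbounded and equality must be verified at the level of closed operators (both domains and actions). The plan is to reduce to the situation where $\Opp{+}{-}{\frm{t}}\in\Linv(\sHp;\sHm)$ by replacing $\frm{t}$ with $\frm{t}+c\bm{1}$ for real $c>0$ large enough; the required invertibility-after-shift can be obtained from a Lax--Milgram-style argument using sectoriality together with Lemma~\ref{lem:quasi-Cauchy-Schwarz}, independently of the openness statement in the second part of Thm.~\ref{thm:[]+-} (the lemma being logically upstream of that statement). Since $c$ is real, the shift commutes with taking Hilbert adjoints, so proving the lemma for $\frm{t}+c\bm{1}$ yields it for $\frm{t}$.

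In the reduced setting, write $T=\Opp{0}{0}{\frm{t}+c\bm{1}}$ and $T'=\Opp{0}{0}{\frm{t}^*+c\bm{1}}$; both are closed, densely defined with bounded everywhere-defined inverses, and $T$ is surjective onto $\sH$ (Prop.~\ref{prop:iso-to-closed}). The inclusion $T'\subseteq T^*$ is immediate from the $(+,-)$ case: for $\psi\in\dom T$ and $\phi\in\dom T'$,
\[
\langle T\psi,\phi\rangle = \overline{\frm{t}[\phi,\psi]} = \frm{t}^*[\psi,\phi] = \langle\psi,T'\phi\rangle.
\]
For the reverse $T^*\subseteq T'$, given $\phi\in\dom T^*$, form $\phi_0=(T')^{-1}T^*\phi\in\dom T'$; the first inclusion gives $T^*\phi_0=T'\phi_0=T^*\phi$, so $\phi-\phi_0\in\ker T^* = (\ran T)^\perp = \{0\}$ by surjectivity of $T$, whence $\phi=\phi_0\in\dom T'$. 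The subtle point in this plan is to ensure the reduction does not inadvertently invoke any not-yet-proven part of Thm.~\ref{thm:[]+-}.
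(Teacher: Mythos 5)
Your argument is correct, and for the $(0,0)$ case it fills in precisely the step the paper leaves implicit ("without loss we may suppose that they are both surjective"): you realize the shift $\frm{t}\mapsto\frm{t}+c{\bm 1}$, show the reverse inclusion via $\ker T^* = (\rng T)^\perp = \{0\}$, and your $(+,+)$/$(+,-)$ cases match the paper's "simple matter of checking definitions." So the overall strategy is the same, but your write-up is substantially more explicit.

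The one place you diverge — the Lax--Milgram route to invertibility-after-shift — is a legitimate self-contained alternative, but the circularity concern that motivates it is actually only apparent. The dependency chain in the paper is: the $(+,-)$ case of this lemma is self-contained; Lemma~\ref{lem:sectorial-to-iso} uses only that $(+,-)$ case to get $\ker T^* = \{0\}$; and the $(0,0)$ case of this lemma (the only place surjectivity is needed) can then draw on Lemma~\ref{lem:sectorial-to-iso} together with Prop.~\ref{prop:iso-to-closed}. So the three sub-cases of the lemma are not on equal logical footing — $(+,-)$ is genuinely upstream of Lemma~\ref{lem:sectorial-to-iso}, but $(0,0)$ is downstream of it — which the linear ordering of statements obscures. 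Your Lax--Milgram detour buys you independence from that untangling at the (small) cost of invoking a tool the paper otherwise avoids; the paper's intended route is slightly more economical but unstated. Either way, the proof is sound.

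One cosmetic point: in your display establishing $T'\subseteq T^*$, the shift $c{\bm 1}$ silently disappears from the middle terms; since $c$ is real this is harmless, but writing $(\frm{t}+c{\bm 1})$ and $(\frm{t}^*+c{\bm 1})$ throughout would make the chain airtight.
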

\begin{proof}
  For $\Opp{+}{+}{\frm{t}}$ and $\Opp{+}{-}{\frm{t}}$,
  this is a simple matter of checking defintions.
  $\Opp{0}{0}{\frm{t}}$ involves some consideration of domains.
  $\psi\in\sHp$ is in $\dom \Opp{0}{0}{\frm{t}^*}$ iff
  \hbox{$\phi \mapsto \Dbraket{\psi}{\frm{t}}{\phi}$}
  extends to a bounded functional on $\sH$,
  whereas $\psi\in\sH$ is in $\dom (\Opp{0}{0}{\frm{t}})^*$ iff
  \hbox{$\phi \mapsto \inpr{\psi}{T \phi}$} does so.
  Hence
  $\Opp{0}{0}{\frm{t}^*} \subseteq ( \Opp{0}{0}{\frm{t}} )^*$
  is clear. To see the opposite, recognize that these are both
  closed sectorial operators, and without loss we may suppose that
  they are both {\em surjective}. 
\end{proof}

\begin{lem}
  \label{lem:sectorial-to-iso}
If $\frm{t}\in\sct{\calC}$ satisfies ${\bm 1}\le \frm{t}^r$, then
\hbox{$\Opp{+}{-}{\frm{t}}\in \Linv(\sHp;\sHm)$}.
\end{lem}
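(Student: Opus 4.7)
My plan is to show that $\Opp{+}{-}{\frm{t}}$ is bounded below and has dense range, which combined with boundedness (from part~1 of the theorem) gives a Banach isomorphism.

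The first step is to set up the rigging judiciously. I will verify that $\frm{t}^r = \frac{1}{2}(\frm{t}+\frm{t}^*)$ lies in the equivalence class $\calC$: the inequalities $|\frm{t}^r[\psi]|\le|\frm{t}[\psi]|$ and $|\frm{t}^*[\psi]|=|\frm{t}[\psi]|$ give $\frm{t}^r\relbd\frm{h}$, while the sectoriality bullet of item~\ref{item:sectorial} yields $\frm{t}^i\relbd\frm{t}^r$, hence $\frm{t}\relbd\frm{t}^r$, and composing with $\frm{h}\relbd\frm{t}$ gives the reverse. Since $\frm{t}^r$ is hermitian with $\frm{t}^r\ge{\bm 1}$ by hypothesis, it is a legitimate choice of $\frm{h}^+$ in the rigging construction of Section~\ref{sec:closure} (taking $m=0$); by Convention~\ref{cnvntn:C<-as-Banach-space} the Banach space $\calC_\relbd$ is unaffected up to norm equivalence, so I may assume $\|\psi\|_+^2 = \frm{t}^r[\psi]$.

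With this rigging the lower bound is immediate: for $\psi\in\sHp$,
\begin{equation*}
\|\Opp{+}{-}{\frm{t}}\psi\|_-\,\|\psi\|_+
\ge |\inpr{\psi}{\Opp{+}{-}{\frm{t}}\psi}|
= |\Dbraket{\psi}{\frm{t}}{\psi}|
\ge \re\Dbraket{\psi}{\frm{t}}{\psi}
= \frm{t}^r[\psi]
= \|\psi\|_+^2,
\end{equation*}
so $\|\Opp{+}{-}{\frm{t}}\psi\|_-\ge\|\psi\|_+$. This gives injectivity and closed range. For dense range I will run the same argument on $\frm{t}^*$: the hypothesis is unchanged because $(\frm{t}^*)^r=\frm{t}^r$, so $\Opp{+}{-}{\frm{t}^*}$ is also bounded below, hence injective. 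By Lemma~\ref{lem:adjoint-ops}, $\Opp{+}{-}{\frm{t}^*}=(\Opp{+}{-}{\frm{t}})^*$, so its injectivity means $\Opp{+}{-}{\frm{t}}$ has dense range. Closed plus dense gives surjectivity, and the lower bound provides a bounded inverse on the range, hence on all of $\sHm$, so $\Opp{+}{-}{\frm{t}}\in\Linv(\sHp;\sHm)$.

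The main obstacle I anticipate is the rigging substitution: one must check that $\frm{t}^r$ actually lies in $\calC$ (not merely in $\calC_\relbd$) so that it can play the role of $\frm{h}^+$. Without that reset the calculation still goes through, but the final line reads $\re\Dbraket{\psi}{\frm{t}}{\psi}=\frm{t}^r[\psi]\ge c\|\psi\|_+^2$ for some $c>0$ coming from the equivalence of norms between the original $\frm{h}^+$ and $\frm{t}^r$, yielding the bound $\|\Opp{+}{-}{\frm{t}}\psi\|_-\ge c\|\psi\|_+$; the conclusion is identical but the constant is less transparent.
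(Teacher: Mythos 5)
Your proof is correct and follows essentially the same route as the paper: establish the lower bound $\|\psi\|_+\le\|T\psi\|_-$ (giving injectivity and closed range), invoke Lemma~\ref{lem:adjoint-ops} to run the identical argument on $\frm{t}^*$ (giving dense range), and conclude surjectivity. The only difference is cosmetic: you verify explicitly that $\frm{t}^r\in\calC$ and renorm accordingly, while the paper just observes ``we may assume $\frm{t}^r$ dominates $\|\cdot\|_+^2$ without loss,'' which is the same substitution made more briefly.
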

\begin{proof}
For notational simplicity,
set $T\defeq \Opp{+}{-}{\frm{t}}$.
Also, we may assume that $\frm{t}^r$ dominates $\|\cdot\|_+^2$ without loss, since some
multiple does so. 
\smallskip\newline
$\ker T=\{0\}$ and $\rng T$ closed:
$\|\psi\|_{+} \le \|{T}\psi\|_{-}$ follows from
\hbox{$\|\psi\|_+^2 \le | \Dbraket{\psi}{\frm{t}}{\psi} | 
= | \ilinpr{\psi}{{T}\psi}_{0}|
\le \|{T}\psi\|_{-}\|\psi\|_{+}$}.
\smallskip\newline
$\rng T$ dense:
$(\rng {T})^\perp = \ker {T}^*$ and $|\frm{t}^*| = |\frm{t}|$.
By Lemma~\ref{lem:adjoint-ops}, $\ker {T}^* = \{0\}$ follows just as
$\ker {T} = \{0\}$ above.
\smallskip\newline
$\rng T = \sHm$: $\rng {T}$ is both closed and dense in $\sHm$.
\end{proof}
\begin{lem}
  \label{lem:sectorial-usc}
  Suppose $\Sigma$ is an ample sector for $\frm{t}$.
  Then, $\Sigma$ is an ample sector for all $\frm{s}$ in some
  neighborhood of $\frm{t}$.
\end{lem}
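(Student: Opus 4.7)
The strategy is to exploit the ``room'' that ampleness provides. Since $\Sigma$ is ample for $\frm{t}$, there exists a sector $\Sigma_0$ containing $\Num\frm{t}$, with $\Sigma_0\subset\intr(\Sigma)$ and half-angle strictly smaller than that of $\Sigma$. Insert an intermediate sector $\Sigma'$ so that
\[
\Sigma_0\subset\intr(\Sigma'),\qquad \Sigma'\subset\intr(\Sigma),
\]
with strictly increasing half-angles. Then $\Sigma$ is a dilation of $\Sigma'$, so it will suffice to show $\Num\frm{s}\subseteq\Sigma'$ for all $\frm{s}$ in some neighborhood of $\frm{t}$ in $\calC_\relbd$.

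Two estimates combine. From the identification $\calC_\relbd\cong\Lin(\sHp;\sHm)$ of Thm.~\ref{thm:[]+-}, one gets the pointwise bound
\[
\bigl|(\frm{s}-\frm{t})[\psi]\bigr|\;\le\;\|\frm{s}-\frm{t}\|\,\|\psi\|_+^{2}\qquad(\psi\in\sHp).
\]
On the geometric side, because $\Sigma_0$ is closed and sits strictly inside $\Sigma'$ both near its vertex (by compactness) and at infinity (by the strict gap in half-angles), there is $\kappa>0$ with
\[
\dist\bigl(z,\,\Cmplx\setminus\Sigma'\bigr)\;\ge\;\kappa(1+|z|)\qquad(z\in\Sigma_0).
\]

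The main obstacle is a scale mismatch: for $\|\psi\|_{\sH}=1$, $\|\psi\|_+$ is unbounded, so the pointwise bound above is not uniformly small in $\psi$ even when $\|\frm{s}-\frm{t}\|$ is small. The remedy is to tie the rigging to $\frm{t}$ itself. After a preliminary translation by a real multiple of $\bm{1}$ (which merely shifts every sector and affects neither hypothesis nor conclusion), one may assume $\frm{t}^r\ge\bm{1}$; then take $\frm{h}^+\defeq\frm{t}^r$ in the construction of $\sHp$. For $\|\psi\|=1$ one has $r\defeq\|\psi\|_+^{2}=\frm{t}^r[\psi]\ge 1$ and $|\frm{t}[\psi]|\ge\frm{t}^r[\psi]=r$, so the geometric estimate delivers $\dist(\frm{t}[\psi],\Cmplx\setminus\Sigma')\ge\kappa r$. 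Consequently, if $\|\frm{s}-\frm{t}\|<\kappa$, the pointwise bound forces $\frm{s}[\psi]=\frm{t}[\psi]+(\frm{s}-\frm{t})[\psi]\in\Sigma'$ for every unit $\psi$; that is, $\Num\frm{s}\subseteq\Sigma'$. Convention~\ref{cnvntn:C<-as-Banach-space} ensures the resulting neighborhood is genuine in the intrinsic topology of $\calC_\relbd$.
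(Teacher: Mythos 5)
Your proposal is correct and follows essentially the same strategy as the paper's proof: translate so that $\frm{t}^r\ge\bm{1}$, choose the rigging so that $\|\psi\|_+^2=\frm{t}^r[\psi]$, and bound $|(\frm{s}-\frm{t})[\psi]|$ by $\|\frm{s}-\frm{t}\|\,\|\psi\|_+^2$, which is small relative to $\re\frm{t}[\psi]$. You supply explicitly the geometric ``room-in-the-sector'' estimate that the paper leaves implicit; the only nit is that the claimed bound $\dist(z,\Cmplx\setminus\Sigma')\ge\kappa(1+|z|)$ for \emph{all} $z\in\Sigma_0$ can fail near a shared vertex, but since you only apply it to $z=\frm{t}[\psi]$ with $\re z\ge 1$ (hence bounded away from the vertex), the argument goes through.
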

\begin{proof}
  Without loss of generality, we may add a constant to $\frm{t}$ so that
  ${\bm 1}\le \frm{t}$, and choose the form used to turn $\sKc$ into
  a Hilbert space such that $\Opp{+}{+}{\frm{t}} = 1+iK$, with $K$ hermitian
  operator in $\Lin(\sHp)$.
  Then, with
  \hbox{$\Opp{+}{+}{\frm{s}} = (1+A) + i (K+B)$},
\begin{align}
  \left|  \frm{t}[\psi]  -  \frm{s}[\psi]    \right|
  &
  = \left| \inpr{\psi}{(A+iB)\psi}_+ \right|
  \nonumber \\ &
  \le (\|A\|+\|B\|)\|\psi\|_+^2
  \nonumber \\ &
  \le (\|A\|+\|B\|) \frm{t}[\psi]
  \nonumber 
\end{align}
\end{proof}

\begin{proof}[Proof of Thm.~\ref{thm:[]+-}, part 2]
If $\frm{t}\in\sct{\calC}$, then for some $m > 0$,
$\frm{t}+m{\bm 1}$ satisfies the hypotheses of
Lemma~\ref{lem:sectorial-to-iso}.
It follows that 
$\Opp{+}{-}{\frm{t}} \in \Linv(\sHp;\sHm) - \Real_+$.
It only remains to show that some neighborhood of 
$\Opp{+}{-}{\frm{t}}$ in $\Linv(\sHp;\sHm)$ corresponds to
sectorial forms. This, follows from Lemma~\ref{lem:sectorial-usc}.
\end{proof}
The pieces are now in place for a holomorphy-of-resolvent type result.
\begin{cnvntn}
  \label{cnvntn:Rmap}
  If $H\in\Lincl(\sH)$, then $\Rmap(\zeta,H)$ is the resolvent
  of $H$ at $\zeta$. This is thought of as a function of $\zeta$, in a context
  specified by $H$.
  We will overload this notation, writing
  $\Rmap(\zeta,{\frm{h}})$ for $\Rmap(\zeta,\Opp{0}{0}{\frm{h}})$,
  or in the case of an explicit parameterization,
  $\Rmap(\zeta,x)$ for $\Rmap(\zeta,H_x)$.
    In the latter two cases, we use the name $\Rmap$-map for $\Rmap$
    (even though that's redundant), rather than {\it resolvent}.
    The $\Rmap$-map has two arguments; the context is specified by a \RSF.
\end{cnvntn}
We show now that the $\Rmap$-map is holomorphic on 
\begin{equation}
\label{eq:Omega-def}  
\Omega \defeq
\setof{ (\zeta,\frm{h})\in\Cmplx\times\sct{\calC} }{ \zeta\in\res\Opp{0}{0}{\frm{h}} }.
\end{equation}
Since \hbox{$(\zeta,\frm{h}) \mapsto \Opp{+}{-}{\frm{h}}-\zeta \in \Lin(\sHp;\sHm)$}
is linear, this reduces to the question
(recall Convention~\ref{cnvntn:hats}) whether $\hat{T}\mapsto {T}^{-1}$ is holomorphic on
the subset of $\Linv(\sHp;\sHm)-\Cmplx$ where it is well-defined.
Prop.~\ref{prop:iso-to-closed} addressed the case of $\Linv(\sHp;\sHm)$, and
it is now a simple matter to extend it:
\begin{prop}
  \label{prop:final-piece}
  Given $\hat{T}\in\Linv(\sHp;\sHm) + \Lin(\sH)$.
  \newline\noindent\textnormal{(a)}
  ${T}\in\Lincl(\sH)$.
  \newline\noindent\textnormal{(b)}
  ${T}$ is injective iff $\hat{T}$ is injective.
  \newline\noindent\textnormal{(c)}
If \hbox{$\Arr{ \dom{T} }{{T}}{\sH}$} is bijective,
\hbox{$\hat{T}\in\Linv(\sHp;\sHm)$}.
\end{prop}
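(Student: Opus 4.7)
The plan is to exploit the decomposition $\hat{T} = \hat{S} + \iota_0 B \iota_+$, with $\hat{S}\in\Linv(\sHp;\sHm)$ and $B\in\Lin(\sH)$, letting $S$ denote the closed operator in $\sH$ attached to $\hat{S}$ via Convention~\ref{cnvntn:hats}. The recurring structural observation is that $\iota_0 B \iota_+ \psi$ always lies in $\iota_0\sH$, so the condition $\hat{T}\psi\in\iota_0\sH$ reduces to $\hat{S}\psi\in\iota_0\sH$; in particular $\dom T = \dom S$, and on this common domain $T = S + B|_{\dom S}$.

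Part (a) then follows quickly: Prop.~\ref{prop:iso-to-closed} gives $S\in\Lincl(\sH)$ with dense domain, and the globally bounded $B$ has $S$-bound $0$, so Lemma~\ref{lem:stability-of-closedness} yields $T\in\Lincl(\sH)$. Part (b) is essentially a tautology: $\hat{T}\psi = \iota_0(T\psi)$ on $\dom T$, and $\iota_0$ is injective, so either kernel forces the element into $\dom T$ and coincides with the kernel of the other operator.

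The substantive content is in (c). I would factor $\hat{T} = \hat{S}(1+K)$ with $K \defeq \hat{S}^{-1}\iota_0 B \iota_+ \in \Lin(\sHp)$, the crucial structural fact being $\rng K \subseteq \dom S$. Since $\hat{S}$ is an isomorphism, showing $\hat{T}\in\Linv(\sHp;\sHm)$ reduces to showing $1+K\in\Linv(\sHp)$. Bijectivity of $T\colon\dom S\to\sH$ rephrases cleanly as bijectivity of the restriction $(1+K)|_{\dom S}\colon\dom S\to\dom S$: the equation $T\psi = y$ for $\psi\in\dom S$ is equivalent to $(1+K)\psi = \hat{S}^{-1}\iota_0 y\in\dom S$, and $\hat{S}^{-1}\iota_0$ is a bijection $\sH\to\dom S$.

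The main obstacle is lifting this bijectivity from $\dom S$ to the larger space $\sHp$. For surjectivity, given $\varphi\in\sHp$, any solution of $(1+K)\psi = \varphi$ must satisfy $\psi - \varphi = -K\psi\in\dom S$; parameterizing $\psi = \varphi + \delta$ reduces the problem to $(1+K)\delta = -K\varphi$ with $\delta\in\dom S$, which is solvable by hypothesis because $K\varphi\in\dom S$. For injectivity, any $\psi\in\sHp$ with $(1+K)\psi = 0$ satisfies $\psi = -K\psi\in\dom S$, and so vanishes by injectivity on $\dom S$. Both arguments hinge essentially on $\rng K\subseteq\dom S$. Once $1+K$ is a bounded bijection of $\sHp$, so is $\hat{T}\colon\sHp\to\sHm$, and the open mapping theorem promotes this to membership in $\Linv(\sHp;\sHm)$.
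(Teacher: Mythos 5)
Your proof is correct. Parts (a) and (b) coincide with the paper's (which likewise invokes Prop.~\ref{prop:iso-to-closed}, Lemma~\ref{lem:stability-of-closedness}, and the observation that $\hat T\phi=0$ forces $\phi\in\dom T$), so the only place where you genuinely diverge is part (c). There, the paper works directly with $\hat T$ rather than passing through a factorization: writing $\hat S := \hat T + B \in \Linv(\sHp;\sHm)$, given $\xi\in\sHm$ one sets $\phi := \hat S^{-1}\xi$, notes that $\hat T\phi = \xi - B\phi$ with the residual $B\phi\in\sH$, then solves $T\psi = B\phi$ by the assumed surjectivity of $T$, so $\hat T(\phi+\psi)=\xi$; injectivity of $\hat T$ is then inherited from part (b). Your route replaces that residual-correction step with the multiplicative decomposition $\hat T = \hat S(1+K)$, identifies the structural fact $\rng K\subseteq\dom S$, observes that bijectivity of $T\colon\dom S\to\sH$ transports to bijectivity of $(1+K)|_{\dom S}$ via the bijection $\hat S^{-1}\iota_0\colon\sH\to\dom S$, and then lifts surjectivity and injectivity of $1+K$ from $\dom S$ to all of $\sHp$ --- again the lifting works precisely because $\rng K\subseteq\dom S$. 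Underneath, the two arguments exploit the same fact (the ``perturbation'' lands in $\sH$, equivalently in $\dom S$ after applying $\hat S^{-1}\iota_0$), but your version is more systematic and makes the role of $\dom S$ explicit, at the cost of being longer; the paper's is more compact and economical, outsourcing the injectivity to part~(b).

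Two small remarks. In your lifting argument for part (c) you should note (as you implicitly use) that $(1+K)$ does map $\dom S$ into $\dom S$, since both $\psi$ and $K\psi$ land there; this is immediate from $\rng K\subseteq\dom S$ but worth flagging. Also, your explicit injectivity step on $\sHp$ is somewhat redundant given that part (b) already identifies $\ker T$ with $\ker\hat T$, though it does no harm and keeps part (c) self-contained.
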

\begin{proof}
\noindent\textnormal{(a)}
This follows immediately from
Prop.~\ref{prop:iso-to-closed} and Lemma~\ref{lem:stability-of-closedness}.

\noindent\textnormal{(b)}
If $\hat{T}\phi=0$, then $\phi\in\dom{T}$.

\noindent\textnormal{(c)}
By assumption,
\hbox{$\hat{T}+B\in\Linv(\sHp;\sHm)$} for some \hbox{$B\in\Lin(\sH)$}.
Hence, given $\xi\in\sHm$, there is \hbox{$\phi\in\sHp$} such that
\hbox{$\xi = (\hat{T}+B)\phi = \hat{T}\phi + B\phi$}. But $B\phi\in\sH$, so
the equation $B\phi = \hat{T}\psi$ can be solved for $\psi\in\sHp$,
yielding $\xi = \hat{T}(\phi+\psi)$. That is, $\hat{T}$ is not only bounded,
but bijective as well, so $\hat{T}\in\Linv(\sHp;\sHm)$ (Open Mapping Theorem).
\end{proof}
Therefore, the supposed extension from
$\Linv(\sHp;\sHm)$ to $\Linv(\sHp;\sHm) - \Cmplx$ 
is illusory; all the operators we are interested in
here are {\em actually} already in the former set.
The following main result now follows immediately from the preceding work.
\begin{thm}
  \label{thm:resolvent-holo}
For $\frm{h}\in\sct{\calC}$, the closed operator $H = \Opp{0}{0}{\frm{h}}$ has an inverse
in $\Lin(\sH)$ iff  
$\hat{H} = \Opp{+}{-}{\frm{h}}$ has an inverse in $\Lin(\sHm;\sHp)$,
and $\Rmap$ is holomorphic on $\Omega$ \textnormal{[see (\ref{eq:Omega-def})]}.
\end{thm}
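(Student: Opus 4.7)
The plan is to reduce the theorem to three results that are already in hand: Thm.~\ref{thm:[]+-}, Prop.~\ref{prop:iso-to-closed}, and Prop.~\ref{prop:final-piece}, combined with the Inversion and Composition permanence properties of holomorphy from Section~\ref{sec:holomorphy}.

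The first step is to align the operator-theoretic translation $\zeta\mapsto H-\zeta$ with the rigged-space translation $\hat{H}\mapsto \hat{H}-\zeta\iota_0\iota_+$. From the definition of $\Opp{+}{-}{\phantom{\frm{t}}}$, the identity form $\bm{1}$ is represented by $\iota_0\iota_+\in\Lin(\sHp;\sHm)$; combined with $\Cmplx$-linearity of $\frm{t}\mapsto\Opp{+}{-}{\frm{t}}$, this gives $\Opp{+}{-}{\frm{h}-\zeta\bm{1}}=\hat{H}-\zeta\iota_0\iota_+$ and, on the other side, $\Opp{0}{0}{\frm{h}-\zeta\bm{1}}=H-\zeta$. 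Next, since $\frm{h}\in\sct{\calC}$, Thm.~\ref{thm:[]+-} supplies some $m\in\Real_+$ with $\hat{H}+m\iota_0\iota_+\in\Linv(\sHp;\sHm)$, so that for every $\zeta\in\Cmplx$,
\[
\hat{H}-\zeta\iota_0\iota_+ \;\in\; \Linv(\sHp;\sHm) + \Lin(\sH),
\]
where $\Lin(\sH)$ is identified with its image under $T\mapsto\iota_0 T\iota_+$. Thus the hypotheses of Prop.~\ref{prop:final-piece} hold uniformly over $\Cmplx\times\sct{\calC}$, not only at $\zeta=0$.

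The iff is then immediate. If $(H-\zeta)^{-1}\in\Lin(\sH)$, then $H-\zeta$ is bijective onto $\sH$, so Prop.~\ref{prop:final-piece}(c) forces $\hat{H}-\zeta\iota_0\iota_+\in\Linv(\sHp;\sHm)$. Conversely, Prop.~\ref{prop:iso-to-closed}(b) yields the explicit bounded inverse $(H-\zeta)^{-1}=\iota_+(\hat{H}-\zeta\iota_0\iota_+)^{-1}\iota_0\in\Lin(\sH)$. Specialized to $\zeta=0$, this is the first half of the theorem, and the general $\zeta$ version identifies $\Omega$ with the preimage of $\Linv(\sHp;\sHm)$ under an affine map.

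For the holomorphy statement, I would display $\Rmap$ explicitly as the composition
\[
(\zeta,\frm{h})\;\mapsto\;\hat{H}-\zeta\iota_0\iota_+\;\mapsto\;(\hat{H}-\zeta\iota_0\iota_+)^{-1}\;\mapsto\;\iota_+(\hat{H}-\zeta\iota_0\iota_+)^{-1}\iota_0
\]
of maps $\Cmplx\times\calC_\relbd\to\Lin(\sHp;\sHm)\supset\Linv(\sHp;\sHm)\to\Lin(\sHm;\sHp)\to\Lin(\sH)$. The first arrow is affine $\Cmplx$-linear; on $\Omega$ it lands in the open set $\Linv(\sHp;\sHm)$ by the iff established above. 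The middle arrow is the Inversion operation, holomorphic by the permanence property. The final arrow is linear. Composition of holomorphic maps being holomorphic, $\Rmap$ is holomorphic on $\Omega$. There is no genuine obstacle here — the entire theorem is the payoff from the Banach-space packaging developed in Sections~\ref{sec:Hilbert-riggings} and \ref{sec:families}; the only mild subtlety is keeping track of how the scalar $\zeta\in\Cmplx$ enters the rigged picture as the bounded operator $\zeta\iota_0\iota_+$, so that Prop.~\ref{prop:final-piece} is available across all of $\Omega$.
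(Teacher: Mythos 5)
Your proof is correct and follows essentially the same route the paper intends when it says the theorem "follows immediately from the preceding work": using Thm.~\ref{thm:[]+-} to place $\hat{H}-\zeta\iota_0\iota_+$ in $\Linv(\sHp;\sHm)+\Lin(\sH)$, then Prop.~\ref{prop:final-piece} for the iff, and Prop.~\ref{prop:iso-to-closed} together with the composition and inversion permanence properties for holomorphy. You have merely made explicit the composition chain and the bookkeeping of $\zeta$ as the bounded operator $\zeta\iota_0\iota_+$, which the paper leaves implicit.
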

\subsection{Series expansion}
\label{sec:series}

We can reframe some of the main result in terms of the simplest ideas about series expansions.
Suppose $\hat{H}$ is in $Linv(\sHp;\sHm)$ and
$\hat{T}$ is in $\Lin(\sHp;\sHm)$.
Then, $\hat{H}^{-1}$ exists in $\Linv(\sHp;\sHm)$.
In case \hbox{$\|\hat{T}\|_{\Lin(\sHp;\sHm)} < (\|\hat{H}^{-1}\|_{\Lin(\sHm;\sHp)})^{-1}$},
both \hbox{$\hat{T} \hat{H}^{-1}\in\Lin(\sHm)$}
and \hbox{$ \hat{H}^{-1}\hat{T}\in\Lin(\sHp)$} are operators of norm less than one, and
\begin{align}
  (\hat{H}+\hat{T})^{-1}
&  = \sum_{n=0}^\infty (\hat{H}^{-1} \hat{T})^n \hat{H}^{-1}
                           \nonumber \\
  &  =  \hat{H}^{-1} \sum_{n=0}^\infty (\hat{T} \hat{H}^{-1})^n.
  \label{eq:quasi-resolvent-series}
\end{align}
Therefore, if $\hat{H} = \Opp{+}{-}{\frm{h}}$ and $\hat{T} = \Opp{+}{-}{\frm{t}}$,
we have a more or less explicit formula for $(\Opp{0}{0}{\frm{h}+\frm{t}})^{-1}$,
which we write $(H+T)^{-1}$ (recognizing that `$+$' here must be interpreted indirectly):
Merely sandwich the expansions in (\ref{eq:quasi-resolvent-series}) between $\iota_0$
and $\iota_+$.
This exhibits holomorphy in a very direct way. However, it does not itself show that
$H+T$ (or even $H$) is closed, nor does it show that invertibility of $H$ implies
invertibility of $\hat{H}$.

\subsection{Holomorphic families}
\label{sec:holomorphic-families}

We now return to the idea of parameterizing families of sectorial forms by
an open set in a Banach space.

\begin{defn}
  \label{def:holofam}
  Let $\sK$ be a dense subset of Hilbert space $\sH$, 
and $\calU$ a {\em connected} open subset of a Banach space.
The map $\Arr{\calU}{\frm{h}}{\SF(\sK)}$ is a
\newline
\noindent\textnormal{(a)} {\it [G-]holomorphic family}
in $\SF(\sK)$ parameterized over $\calU$ iff
\hbox{$x \mapsto \frm{h}_x[\psi] \Type{\calU}{\Cmplx}$}
is \hbox{[G-]holomorphic} for each $\psi\in \sK$.
\newline
\noindent\textnormal{(b)} {\it regular sectorial family}
in $\sct{\calC}$ parameterized over $\calU$ iff
$\Arr{\calU}{\frm{h}}{{\calC}_\relbd}$ is holomorphic
with range in $\sct{\calC}$, where $\calC$
is an equivalence class of closable {\sqf}s on $\sK$
and $\calC_\relbd$ has the Banach space structure of Convention~\ref{cnvntn:C<-as-Banach-space}.
\end{defn}
Various adjectives (``in $\SF(\sK)$/$\sct{\calC}$'', ``parameterized over $\calU$'')
may be omitted when context disambiguates.
\begin{rem}
By polarization, holomorphy of $\frm{h}$ immediately
implies that $\frm{h}_x[\phi,\psi]$ is holomorphic in $x$ for
every $\phi,\psi \in \sK$.
\end{rem}

If $\Arr{\calU}{\frm{h}}{\sct{\calC}}$ is a regular sectorial family, then
the composition of $\frm{h}$ with any holomorphic function on $\sct{\calC}$,
such as $\Rmap$ or (as shown in Section \ref{sec:free-energy}) $\Emap$,
is automatically holomorphic:
\begin{cor}
  \label{cor:Rmap-holo}
  If $\frm{h}$ defined on $\calU$ is a \RSF, then $\Rmap$ is
  holomorphic from its open domain in $\Cmplx\times \calU$ into $\Lin(\sH)$.
\end{cor}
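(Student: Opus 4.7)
The plan is to recognize this as a pure composition-of-holomorphic-maps statement, assembling two facts already on the table: holomorphy of $\Rmap$ at the abstract level (Thm.~\ref{thm:resolvent-holo}) and holomorphy of the parameterization $x \mapsto \frm{h}_x$ (from the definition of a regular sectorial family).

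First I would define the auxiliary map $\Arr{\Cmplx\times\calU}{\Phi}{\Cmplx\times \sct{\calC}}$ by $\Phi(\zeta,x) = (\zeta,\frm{h}_x)$, taking codomain in $\Cmplx\times\calC_\relbd$ with its natural Banach space structure (Convention~\ref{cnvntn:C<-as-Banach-space}). Each coordinate map is holomorphic --- the first trivially, the second by the hypothesis that $\frm{h}$ is a regular sectorial family --- and $\Phi$ is evidently jointly continuous, so $\Phi$ is holomorphic by the product permanence property listed in Section~\ref{sec:holomorphy}. By the hypothesis that the range of $\frm{h}$ lies in $\sct{\calC}$, the image of $\Phi$ is actually contained in $\Cmplx\times\sct{\calC}$, which is where $\Omega$ sits.

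Next, the natural domain of the $\Rmap$-map as a function of $(\zeta,x)$ is precisely $\Phi^{-1}(\Omega)$, where $\Omega$ is defined in (\ref{eq:Omega-def}). Since $\Omega$ is open in $\Cmplx\times\sct{\calC}$ (it is the domain of holomorphy of $\Rmap$ by Thm.~\ref{thm:resolvent-holo}) and $\Phi$ is continuous, $\Phi^{-1}(\Omega)$ is open in $\Cmplx\times\calU$. Applying the composition permanence property, $\Rmap \circ \Phi$ is holomorphic on this open set as a map into $\Lin(\sH)$, which is exactly the claim.

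There is no serious obstacle here; the entire argument is a bookkeeping exercise built on Thm.~\ref{thm:resolvent-holo}. The only point requiring a moment of care is the mild type-checking that the $\Rmap$-map on the abstract side takes values in $\Lin(\sH)$ via the $\Opp{0}{0}{\phantom{\frm{t}}}$ realization of $\frm{h}$ (per Convention~\ref{cnvntn:Rmap}), which matches the codomain asserted in the corollary.
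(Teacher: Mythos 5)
Your proof is correct and follows exactly the route the paper has in mind: the corollary is stated as an immediate consequence of Thm.~\ref{thm:resolvent-holo} via the composition permanence property, and the paper offers no separate proof beyond the preceding sentence explaining that composing the holomorphic parameterization $x\mapsto\frm{h}_x$ (from the RSF definition) with the holomorphic $\Rmap$ on $\Omega$ gives the result. Your bookkeeping with the auxiliary map $\Phi$, the openness of $\Phi^{-1}(\Omega)$, and the type-check through $\Opp{0}{0}{\phantom{\frm{t}}}$ makes explicit precisely what the paper leaves implicit.
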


On the other hand, the requirement to be merely a holomorphic family is weak
and easily checkable in application. Hence to get the abstract machinery
appropriately hooked up to specific parameterized families, the only real question
is when a holomorphic or G-holomorphic family is actually regular sectorial.
\begin{prop}
\label{prop:regular-sectorial}
A G-holomorphic family \hbox{$\Arr{\calU}{\frm{h}}{\sct{\SF}(\sK)}$}
is regular sectorial if any of the following criteria holds.
\begin{enumerate}
\item[\textnormal{(a)}]
$\frm{h}(\calU)$ consists of equivalent, closed {\sqf}s.
\item[\textnormal{(b)}]
{$\Arr{\calU}{\frm{h}}{\sct{\calC}\subset\calC_\relbd}$} is locally bounded
  for some closable class \hbox{$\calC\subseteq \SF(\sK)$}.
\item[\textnormal{(c)}]
$\frm{h}(\calU)$ consists of equivalent, closable {\sqf}s,
and for every $x$,
$\frm{h}_y$ is uniformly bounded with respect to $\frm{h}_x$
for $y$ in some neighborhood of $x$.
\end{enumerate}
\end{prop}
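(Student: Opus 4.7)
The plan is to verify, in each case, the two requirements of Def.~\ref{def:holofam}(b): that $\frm{h}(\calU)\subseteq\sct{\calC}$ for some closable equivalence class $\calC\subseteq\SF(\sK)$, and that $\Arr{\calU}{\frm{h}}{\calC_\relbd}$ is holomorphic in the Banach-space sense of Convention~\ref{cnvntn:C<-as-Banach-space}. In (b) the class $\calC$ is supplied; in (a) and (c) I would take $\calC$ to be the common equivalence class of all the $\frm{h}_x$, which contains a closed form (in (a)) or a closable one (in (c)), hence is closable. Sectoriality of the range is inherited directly from $\frm{h}(\calU)\subseteq\sct{\SF}(\sK)$.

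Under the identification $\calC_\relbd\cong\Lin(\sHp;\sHm)$ of Thm.~\ref{thm:[]+-}, the task is to show $x\mapsto\Opp{+}{-}{\frm{h}_x}\in\Lin(\sHp;\sHm)$ is holomorphic. By Prop.~\ref{prop:WO-holo}(c), this reduces to local boundedness together with dense weak-operator G-holomorphy. The latter is free from the hypothesis: polarizing the assumed G-holomorphy of $x\mapsto\frm{h}_x[\psi]$ yields G-holomorphy of $x\mapsto\frm{h}_x[\phi,\psi]=\pair{\phi}{\Opp{+}{-}{\frm{h}_x}\psi}$ for all $\phi,\psi\in\sK$, and $\sK\times\sK$ is dense in $\sHp\times\sHm^*$ via the rigging pairing $\sHm^*\cong\sHp$. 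Only local boundedness in the operator norm on $\Lin(\sHp;\sHm)$ then remains to be established.

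In case (b) this is precisely the hypothesis. In case (c), the uniform relative bound $|\frm{h}_y[\psi]|\le a_x\|\psi\|^2+b_x|\frm{h}_x[\psi]|$ available for $y$ in a neighborhood of $x$ translates at once into $\|\Opp{+}{-}{\frm{h}_y}\|_{\Lin(\sHp;\sHm)}\le a_x+b_x\|\Opp{+}{-}{\frm{h}_x}\|$ there, delivering the required uniform bound; Prop.~\ref{prop:WO-holo}(c) then concludes.

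The chief obstacle is (a), where no uniformity is posited. The plan here is to bootstrap from the line-wise consequences of G-holomorphy, using that each $\frm{h}_x$ is already an element of the Banach space $\calC_\relbd$ so that $\|\Opp{+}{-}{\frm{h}_x}\|_{\Lin(\sHp;\sHm)}$ is pointwise finite. On each complex affine line through a point $x_0\in\calU$ the scalar pairings $\zeta\mapsto\frm{h}_{x_0+\zeta y}[\phi,\psi]$ for $\phi,\psi\in\sK$ are ordinary holomorphic and hence locally bounded; a uniform boundedness argument on these 1D slices then yields line-wise operator-norm local boundedness. Bridging from line-wise to joint local boundedness on a neighborhood in $\calU$ is the delicate step: I would apply the Baire category theorem to the exhaustion $\calU=\bigcup_n\{x:\|\Opp{+}{-}{\frm{h}_x}\|\le n\}$, using line-wise continuity to extract a point where joint local boundedness holds, and then invoke Zorn's theorem for Banach-valued G-holomorphic maps on a connected open set (see Mujica~\cite{Mujica}~\S12) to propagate local boundedness throughout $\calU$. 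Once local boundedness is secured, Prop.~\ref{prop:WO-holo}(c) completes the proof uniformly in all three cases.
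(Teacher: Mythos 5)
Your treatment of cases (b) and (c) matches the paper in substance. In both cases the strategy is to polarize the G-holomorphy of $x\mapsto\frm{h}_x[\psi]$ to get dense weak-operator G-holomorphy of $x\mapsto\Opp{+}{-}{\frm{h}_x}$, then pair this with local boundedness and invoke Prop.~\ref{prop:WO-holo}(c). For (b) the local boundedness is the hypothesis; for (c) it follows from the uniform relative bound (via Lemma~\ref{lem:quasi-Cauchy-Schwarz} to control the off-diagonal by the diagonal). The paper itself proves (b) this way and then notes (c) is a rephrasing of (b), which is exactly what your argument carries out.

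For (a), however, your route diverges from the paper's, and as written it has a gap. The paper's argument uses the closedness hypothesis differently: closedness of each $\frm{h}_x$ on $\sK$ forces $\sK=\sHp$ as Hilbert spaces, so the polarized G-holomorphy of $x\mapsto\frm{h}_x[\phi,\psi]$ holds for \emph{all} $\phi,\psi\in\sHp$ (not merely a dense subset), and one concludes via the equivalence in Prop.~\ref{prop:WO-holo}(b), sidestepping a separate local-boundedness argument entirely. You instead try to manufacture local boundedness from scratch via Baire category on the exhaustion $\calU=\bigcup_n\{x:\|\Opp{+}{-}{\frm{h}_x}\|\le n\}$. This cannot work as stated. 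For Baire to produce a nonempty interior, the sublevel sets must be closed in $\calU$, i.e., $x\mapsto\|\Opp{+}{-}{\frm{h}_x}\|$ must be lower semicontinuous, which would follow from joint continuity of each $x\mapsto\frm{h}_x[\phi,\psi]$ on $\calU$. But G-holomorphy supplies continuity only along individual complex affine lines, which in an infinite-dimensional parameter space is strictly weaker than joint continuity; indeed joint continuity (hence local boundedness) is precisely the thing one is trying to establish. So the sets $\{x:\|\Opp{+}{-}{\frm{h}_x}\|\le n\}$ are not known to be closed, the Baire step never engages, and the Zorn-style propagation has no seed point. The paper's shortcut via Prop.~\ref{prop:WO-holo}(b) is the intended escape from this circularity, and is what your argument needs to adopt.
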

\begin{proof}
For (a), note that the assumption is that the forms $\frm{h}_x$ are already closed on $\sK$.
Hence, holomorphy amounts to weak-operator holomorphy on $\sHp$. Conclude with
Prop.~\ref{prop:WO-holo}(b).
For (b), appeal to Prop.~\ref{prop:WO-holo}(c).
Criterion (c) is a rephrasing of criterion (b) in light of the preceding theory.
\end{proof}


\subsection{Operator bounded families}
\label{sec:self-adjoint}

This subsection discusses an important kind of holomorphic family
constructed on the basis of a given lower-bounded self-adjoint operator $H$.
It is not used until Section \ref{sec:free-energy} and can safely be
skipped until then.

Take $H$ to be a lower-bounded self-adjoint operator in $\sH$,
and assume $1 \le H$, which can be arranged without loss by adding a constant.
Choose $\sK = \dom H$. 
On $\sK$, $H$ defines an \sqf\ $\frm{h}$ by
\begin{equation}
  \label{eq:h0}
\frm{h}[\psi] \defeq \inpr{\psi}{H\psi}.
\end{equation}
We denote the equivalence class of {\sqf}s to which $\frm{h}$ belongs
by $\calC(H)$, or simply by $\calC$ in this subsection, when there is no ambiguity.
Once it is known that $\calC$ is closable, the theory developed in this section
shows that \hbox{$\calC_\relbd \simeq \Lin(\sHp;\sHm)$}, where $\sHp$ is the completion
of $\dom H$ under the inner product
$\inpr{\phi}{\psi}_+ \defeq \inpr{\phi}{H\psi}$.
\begin{lem}
$\calC(H)$ is closable.  
\end{lem}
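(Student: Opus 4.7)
The plan is to verify the closability criterion of Section~\ref{sec:closure}: with $\frm{h}^+$ chosen inside $\calC = \calC(H)$, we must show that the canonical extension $\tilde{\iota}\colon(\sKc,\frm{h}^+)\to\sH$ is injective; equivalently, that every $\frm{h}^+$-Cauchy sequence $(x_n)\subset\sK=\dom H$ with $x_n\to 0$ in $\sH$ also satisfies $\frm{h}[x_n]\to 0$. Because $H\ge 1$, the form $\frm{h}$ itself is hermitian with ${\bm 1}\le\frm{h}$, so we may take $\frm{h}^+ = \frm{h}$ and the $\frm{h}^+$-norm is simply $\|\psi\|_+^2 = \inpr{\psi}{H\psi}$.

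The key tool is the self-adjoint square root. Since $H\ge 1$ is self-adjoint, the Borel functional calculus produces $H^{1/2}$, a self-adjoint (hence closed) operator with $\dom H^{1/2}\supseteq\dom H$ and $\|H^{1/2}\psi\|^2 = \inpr{\psi}{H\psi} = \frm{h}[\psi]$ for every $\psi\in\dom H$. Thus for $\phi,\psi\in\sK$,
\begin{equation}
\frm{h}[\phi,\psi] = \inpr{H^{1/2}\phi}{H^{1/2}\psi}.
\end{equation}

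Now suppose $(x_n)\subset\sK$ is $\frm{h}$-Cauchy with $x_n\to 0$ in $\sH$. Being $\frm{h}$-Cauchy means $\|H^{1/2}(x_n-x_m)\|\to 0$, so the sequence $(H^{1/2}x_n)$ is Cauchy in $\sH$ and converges to some $y\in\sH$. Since $H^{1/2}$ is closed and $x_n\to 0$ while $H^{1/2}x_n\to y$, we conclude $0\in\dom H^{1/2}$ (trivially) and $y = H^{1/2}\cdot 0 = 0$. Hence $\frm{h}[x_n] = \|H^{1/2}x_n\|^2\to 0$, which is what closability of $\calC$ (via the test at the end of Section~\ref{sec:closure}, applied to the member $\frm{h}\in\calC$) demands.

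There is no real obstacle; the argument is a packaging of the standard fact that the form of a lower-bounded self-adjoint operator is closable, exactly in the language just set up. The only thing to verify carefully is that closability of the \emph{class} $\calC(H)$ reduces to checking the convergence condition for the single form $\frm{h}$, which is precisely the remark in Section~\ref{sec:closure} that the test may be performed for any member of the class. Self-adjointness of $H$ is used in an essential way to obtain a \emph{closed} square root $H^{1/2}$; mere symmetry would not suffice.
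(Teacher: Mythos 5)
Your proof is correct, but it takes a genuinely different route from the paper's. The paper argues directly from the polarization-type identity
$\|\psi_n-\psi_m\|_+^2 = \|\psi_n\|_+^2+\|\psi_m\|_+^2-2\re\inpr{\psi_n}{H\psi_m}$,
letting $n\to\infty$ (so that $\inpr{\psi_n}{H\psi_m}\to 0$ because $\psi_n\to 0$ in $\sH$ and $H\psi_m$ is a fixed vector of $\sH$) and then $m\to\infty$, which forces $\|\psi\|_+ = 0$. You instead invoke the spectral theorem to produce the self-adjoint closed square root $H^{1/2}$, rewrite the $\sHp$-norm as $\|H^{1/2}\cdot\|$, and appeal to closedness of $H^{1/2}$. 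Both are valid; your version is conceptually cleaner in that it makes visible that ``$\frm{h}$-Cauchy plus $x_n\to 0$'' is exactly graph-convergence to $(0,y)$ for the closed operator $H^{1/2}$, while the paper's computation is lighter machinery and does not need the functional calculus at all. One small overstatement: your closing remark that ``mere symmetry would not suffice'' conflates the failure of your particular route with failure of the lemma. The polarization argument in the paper works verbatim for a merely symmetric $H\ge 1$ (it only uses that $\frm{h}$ is a hermitian form with $\frm{h}\ge{\bm 1}$ and that $H\psi_m\in\sH$), so closability of $\calC(H)$ holds in that greater generality --- this is the standard Friedrichs-extension fact. What requires self-adjointness is only the existence of a closed square root, i.e.\ your chosen tool, not the conclusion.
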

\begin{proof}
  $\sHp$ exists at least as the abstract completion of $\dom H$.
  Let $(\psi_n) \subset \dom H$ be an $\sHp$-Cauchy sequence,
  such that $\|\psi_n\|\to 0$, $\|\psi_n - \psi\|_+ \to 0$.
  It needs to be shown that $\psi= 0$.
  Taking the limit of
  \hbox{$\|\psi_n-\psi_m\|_+ =  \|\psi_n\|_+^2 + \|\psi_m\|_+^2 - 2 \re \inpr{\psi_n}{H\psi_m}$}
  as $n\to\infty$, yields \hbox{$0 = \|\psi\|_+^2 + \lim_{m\to\infty} \|\psi_m\|_+^2$},
  showing that $\psi=0$, as required.
\end{proof}

Define the {\em real} subspace $\sX^r(H)$ of $\SF(\dom H)$
to consist of hermitian {\sqf}s such that the norm
\begin{equation}
  \|\frm{t}\|_{H}
  = \sup
  \setof{  \frac{ |\Dbraket{\phi}{\frm{t}}{\psi}| }{\|{\phi}\| \|H{\psi}\|} }
  { 0 \neq \phi,\psi\in\dom H}.
\end{equation}
$\sX^r(H)$ corresponds precisely to the set of symmetric operators
on $\dom H$ which are operator bounded with respect to $H$.
Now, let
\begin{equation}
  \label{eq:OF}
\OF{H} \defeq \sX^r(H) \oplus i \sX^r(H)   
\end{equation}
be the complexification of $\sX^r(H)$, with the norm extended according to
\begin{equation}
  \|\frm{t}\|_H \defeq  \|\frm{t}^r\|_H +  \|\frm{t}^i\|_H.
\end{equation}
We aim to show that $\OF{H}$ is continuously embedded in $\calC(H)_\relbd$.
The following Lemma is the key step.
\begin{lem}
\label{lem:XH}
For $\frm{t}\in\sX^r(H)$,
\hbox{$|\frm{t}| < \|\frm{t}\|_{H}\,  \frm{h}$}.
\end{lem}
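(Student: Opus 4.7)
My plan is to upgrade the operator-bound definition of $\|\frm{t}\|_H$ to a quadratic-form bound on $\frm{h}$ via the Heinz--L\"owner inequality (operator monotonicity of $x\mapsto\sqrt{x}$ on positive self-adjoint operators). The argument factors into three steps.

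First I would represent $\frm{t}$ by a symmetric operator. Setting $c\defeq\|\frm{t}\|_H$, for each fixed $\psi\in\dom H$ the map $\phi\mapsto\frm{t}[\phi,\psi]$ is a bounded conjugate-linear functional on $\sH$ of norm at most $c\|H\psi\|$; Riesz gives a unique $T\psi\in\sH$ with $\frm{t}[\phi,\psi]=\inpr{\phi}{T\psi}$ and $\|T\psi\|\le c\|H\psi\|$. Hermitianness of $\frm{t}$ makes $T$ symmetric on $\dom H$, and the bound is exactly $\|TH^{-1}\|_{\Lin(\sH)}\le c$.

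Second, the operator inequality $\|T\psi\|^2\le c^2\|H\psi\|^2$ is, in form language, $T^*T\le c^2H^2$ on $\dom H$. Applying Heinz--L\"owner to the positive self-adjoint closures yields $|T|\le cH$ as quadratic forms on $\dom H$, where $|T|$ is the modulus appearing in the polar decomposition $\overline{T}=U|T|$ of the closure. Third, Cauchy--Schwarz with this polar decomposition gives, for $\psi\in\dom H\subset\dom|T|$,
\begin{equation*}
|\inpr{\psi}{T\psi}|=|\inpr{|T|^{1/2}U^*\psi}{|T|^{1/2}\psi}|\le\inpr{\psi}{|T|\psi}\le c\inpr{\psi}{H\psi},
\end{equation*}
using $\|U^*\xi\|\le\|\xi\|$ together with commutativity of $U$ with the spectral calculus of $|T|$. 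This is exactly $|\frm{t}[\psi]|\le\|\frm{t}\|_H\,\frm{h}[\psi]$.

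The main technical obstacle sits in the third step when $T$ is merely symmetric rather than essentially self-adjoint: the partial-isometry factor $U$ need not preserve $\dom|T|^{1/2}$, so the Cauchy--Schwarz manipulation needs justification. My preferred workaround is to cut off with the spectral projections $P_n\defeq E_{[1,n]}(H)$ and work with the bounded symmetric operator $T_n\defeq P_nTP_n$, for which the form inequality follows by direct spectral calculus in $P_n\sH$, and then pass to $n\to\infty$ using closability of $\frm{t}$ and $\frm{h}$ on $\dom H$.
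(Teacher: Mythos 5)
Your route is genuinely different from the paper's. After constructing the symmetric operator $T$ on $\dom H$ with $\|TH^{-1}\|=\|\frm{t}\|_H$ (which the paper also does implicitly), the paper argues spectrally: it observes that $|\frm{t}[\psi]|\ge\frm{h}[\psi]$ somewhere would put a nonpositive number into $\Num(\frm{h}+\frm{t})$ or $\Num(\frm{h}-\frm{t})$, places $(-\infty,0)$ in $\res(H\pm T)$ using Lemma~\ref{lem:inverse-of-perturbed-closed-op} together with $\|T\Rmap(x,H)\|\le\|H\Rmap(x,H)\|\le 1$, and then appeals to Prop.~\ref{prop:inf-Num-in-spec} ($\inf\Num\in\spec$ for symmetric operators) to conclude $\Num(\frm{h}\pm\frm{t})\subseteq[0,\infty)$. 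You instead upgrade the operator bound $\|T\psi\|\le c\|H\psi\|$ to the quadratic-form bound $|T|\le cH$ via L\"{o}wner--Heinz and close with Cauchy--Schwarz. The paper's argument stays entirely inside the framework it has already built (Neumann series plus the numerical-range/spectrum dictionary); yours imports a classical external theorem but makes it more visible where the constant $\|\frm{t}\|_H$ enters.

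Two points of care. On strictness: both arguments actually yield the non-strict $|\frm{t}|\le\|\frm{t}\|_H\,\frm{h}$, and that is the correct statement. Taking $\frm{t}=\frm{h}$ gives $\|\frm{t}\|_H=1$ with $|\frm{t}[\psi]|=\frm{h}[\psi]$, so the Lemma's ``$<$'' is an imprecision, as is the chain $\|T\Rmap(x,H)\|<\|H\Rmap(x,H)\|$ in the paper's own proof (at $x=0$ one only gets $\le 1$, where Lemma~\ref{lem:inverse-of-perturbed-closed-op} is not applicable). Your $\le$ is therefore not a shortfall but the right conclusion, and it suffices for Prop.~\ref{prop:X(H)} and everything downstream. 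On the domain obstruction you flag: the cutoff $P_n=E_{[1,n]}(H)$, $T_n=P_nTP_n$ does repair it, but note that even in $P_n\sH$ the self-adjoint operators $T_n$ and $HP_n$ need not commute, so passing from $T_n^2\le c^2(HP_n)^2$ to $|T_n|\le cHP_n$ still invokes L\"{o}wner--Heinz (merely in its bounded, cleaner form) rather than ``direct spectral calculus''; the elementary part is $|\inpr{\psi}{T_n\psi}|\le\inpr{\psi}{|T_n|\psi}$. The limit $n\to\infty$ then goes through as you say, because $\frm{t}$ and $\frm{h}$ are continuous in the graph norm of $H$, so $\frm{t}[P_n\psi]\to\frm{t}[\psi]$ and $\frm{h}[P_n\psi]\to\frm{h}[\psi]$ for $\psi\in\dom H$.
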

\begin{proof}
Assume \hbox{$\| \frm{t}\|_H = 1$}; the general case follows by homogeneity.
  
If for some $\psi$,
$|\frm{t}[\psi]| \ge \frm{h}[\psi]$, the numerical range of at least one of
$\frm{h} + \frm{t}$ and $\frm{h} - \frm{t}$ contains a negative number.
Therefore, it suffices to show that
\hbox{$0 \not\in \Num (\frm{h} + \frm{t})$}, and even,
by Prop.~\ref{prop:inf-Num-in-spec} below,
that $(-\infty,0]\in\res (H+T)$, where $T$ is the operator
on $\dom H$ induced by $\frm{t}$.
That will be the case if
$\|T\Rmap(x,H)\| < 1$ for $x \le 0$ (Lemma~\ref{lem:inverse-of-perturbed-closed-op}).
But this follows immediately from the definition of the norm $\|\cdot \|_H$:
\begin{equation}
  \nonumber
\|T\Rmap(x,H)\| < \|H\Rmap(x,H)\| \le 1.
\end{equation}
\end{proof}
The desired result follow immediately.
\begin{prop}
  \label{prop:X(H)}
Given lower-bounded self-adjoint operator $H$,
$\OF{H}$ is a Banach space continuously embedded in $\calC(H)_\relbd$.
Moreover, if $\|\frm{t}-\frm{h}\|_{H} < {1}$, then
$\oSec{0}{\frac{\pi}{4}}$ is a sector for $\frm{t}$.
\end{prop}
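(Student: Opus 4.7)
The statement bundles three claims---that $\OF{H}$ is complete under $\|\cdot\|_H$, that the inclusion $\OF{H}\hookrightarrow\calC(H)_\relbd$ is bounded, and that small $\|\cdot\|_H$-perturbations of $\frm{h}$ are sectorial with half-angle $\pi/4$---and I would handle them in this order, since the first two are short and the third is essentially a direct consequence of Lemma~\ref{lem:XH}.

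For completeness, I would identify $\frm{t}\in\sX^r(H)$ with the symmetric operator $T$ on $\dom H$ determined by $\Dbraket{\phi}{\frm{t}}{\psi}=\inpr{\phi}{T\psi}$, so that $\|\frm{t}\|_H$ coincides with the operator norm $\|TH^{-1}\|_{\Lin(\sH)}$. A $\|\cdot\|_H$-Cauchy sequence will then induce a pointwise Cauchy sequence of form values; its limit $\frm{t}$ is sesquilinear and hermitian, and a Fatou-type estimate will show that $\frm{t}$ inherits the uniform $\|\cdot\|_H$-bound and that convergence holds in $\|\cdot\|_H$. The complexification $\OF{H}$ then inherits completeness under the sum norm. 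For continuity of the embedding into $\calC(H)_\relbd$, I would apply Lemma~\ref{lem:XH} separately to $\frm{t}^r$ and $\frm{t}^i$, obtaining $|\frm{t}[\psi]|\le(\|\frm{t}^r\|_H+\|\frm{t}^i\|_H)\,\frm{h}[\psi]=\|\frm{t}\|_H\,\frm{h}[\psi]$, which exhibits $\frm{t}\relbd\frm{h}$ with a constant explicitly controlled by $\|\frm{t}\|_H$.

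For the sector claim, I would write $\frm{t}=\frm{h}+\frm{s}$ with $\|\frm{s}\|_H<1$ and decompose $\frm{s}=\frm{s}^r+i\frm{s}^i$, so that $\frm{t}^r=\frm{h}+\frm{s}^r$ and $\frm{t}^i=\frm{s}^i$. Applying Lemma~\ref{lem:XH} to each piece yields
\begin{equation*}
\frm{t}^r[\psi]\ge(1-\|\frm{s}^r\|_H)\,\frm{h}[\psi],\qquad |\frm{t}^i[\psi]|\le\|\frm{s}^i\|_H\,\frm{h}[\psi],
\end{equation*}
and the identity $\|\frm{s}^r\|_H+\|\frm{s}^i\|_H=\|\frm{s}\|_H<1$ then forces $|\frm{t}^i[\psi]|<\frm{t}^r[\psi]$ for every unit $\psi$. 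Since $\frm{h}\ge{\bm 1}$ additionally guarantees $\frm{t}^r[\psi]>0$, the numerical range lies strictly inside the open quarter-plane $\oSec{0}{\pi/4}$.

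The only real bookkeeping subtlety is keeping the real and imaginary $\|\cdot\|_H$-contributions separate: the sum-norm convention on $\OF{H}$ is precisely what supplies the budget $\|\frm{s}^r\|_H+\|\frm{s}^i\|_H<1$ needed to stay inside the quarter sector, so the choice of norm on the complexification is doing genuine work. Apart from that, every step is either a direct invocation of Lemma~\ref{lem:XH} or a standard completion argument.
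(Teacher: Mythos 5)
Your proposal is correct and matches the paper's intent: the paper simply states that the proposition ``follows immediately'' from Lemma~\ref{lem:XH}, and your argument fills in exactly the details that make this immediate---completeness via the isometric identification with a norm-closed class of bounded operators, the embedding bound by applying Lemma~\ref{lem:XH} to the real and imaginary parts, and sectoriality by decomposing $\frm{t}=\frm{h}+\frm{s}$ and using the sum-norm budget $\|\frm{s}^r\|_H+\|\frm{s}^i\|_H<1$. Your closing remark that the $\ell^1$-style norm on the complexification is precisely what makes the quarter-sector bound work is a correct and worthwhile observation.
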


\subsection{Numerical range and spectrum}
\label{sec:Num-and-spec}

This subsection collections somewhat auxiliary results relating
the numerical ranges and spectra of operators.
In general, the relationship is subtle. 
Prop.~\ref{prop:resolvent-outside-Num} shows that the spectrum of a closed sectorial operator
is contained in the closure of its numerical range, but in general, the spectrum could
be much smaller: consider the matrix
$
\begin{pmatrix}
  0 & 1 \\
  0 & 0
\end{pmatrix}
$, with spectrum $\{0\}$ and numerical range a disk of radius $1/2$.

\begin{lem}
\label{lem:Num-and-spec-1}
  Let $S$ be a symmetric operator with numerical range in $[0,c]$ with $c < \infty$,
  and suppose that $(\phi_n)$ is a sequence of vectors such that
  $\inpr{\phi_n}{S\phi_n} \to 0$. Then, $S\phi_n\to 0$.
\end{lem}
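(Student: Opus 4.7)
The plan is to deduce the norm convergence $S\phi_n\to 0$ from the energy convergence $\inpr{\phi_n}{S\phi_n}\to 0$ via the Cauchy--Schwarz inequality for a positive semi-definite hermitian sesquilinear form. The hypothesis that $\Num S \subseteq [0,c]$ is precisely what makes $\frm{s}[\phi,\psi] := \inpr{\phi}{S\psi}$ a hermitian (by symmetry of $S$) positive semi-definite sesquilinear form on $\dom S$, with $\frm{s}[\eta]\le c\|\eta\|^2$. So the classical Cauchy--Schwarz for such forms applies:
\begin{equation}
|\frm{s}[\phi,\psi]|^2 \le \frm{s}[\phi]\,\frm{s}[\psi].
\nonumber
\end{equation}

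Next I would use symmetry of $S$ to move $S$ from one side to the other: for any $\eta\in\dom S$,
\begin{equation}
|\inpr{S\phi_n}{\eta}|^2 = |\inpr{\phi_n}{S\eta}|^2 \le \inpr{\phi_n}{S\phi_n}\,\inpr{\eta}{S\eta} \le c\|\eta\|^2\,\inpr{\phi_n}{S\phi_n}.
\nonumber
\end{equation}
Since $S$ is (implicitly) densely defined, the supremum of $|\inpr{S\phi_n}{\eta}|$ over unit vectors $\eta\in\dom S$ equals $\|S\phi_n\|$. Taking that supremum yields $\|S\phi_n\|^2 \le c\,\inpr{\phi_n}{S\phi_n}\to 0$, which is the desired conclusion.

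There is no real obstacle here beyond recognizing that Cauchy--Schwarz does not require $S\phi_n$ itself to lie in $\dom S$ (which we have no reason to expect); the trick is to pair $\phi_n$ with an arbitrary test vector $\eta\in\dom S$ and use symmetry to relocate $S$. The only mildly delicate point is the density of $\dom S$, which is the standing convention for symmetric operators in this paper.
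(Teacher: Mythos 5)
Your proof is correct and rests on the same key idea as the paper's: the Cauchy--Schwarz inequality for the positive semi-definite hermitian form $\frm{s}[\phi,\psi]=\inpr{\phi}{S\psi}$, together with the bound $\frm{s}[\eta]\le c\|\eta\|^2$ from $\Num S\subseteq[0,c]$. The only cosmetic difference is that you argue directly, producing the clean quantitative estimate $\|S\phi_n\|^2\le c\,\inpr{\phi_n}{S\phi_n}$, whereas the paper phrases it as a proof by contradiction after passing to a subsequence; both reduce to exactly the same pairing with a test vector $\eta\in\dom S$ and the same two inequalities.
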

\begin{proof}
  Assume for a contradiction that there is a subsequence $n(k)$
  such that $\|S\phi_{n(k)}\| > \epsilon > 0$.
  Without any loss, we may assume that the subsequence is the entire sequence.
  Hence, there exists a sequence of {\em unit} vectors $(\eta_n)$ such that
  \begin{align}
    \epsilon & \le |\inpr{\eta_n}{S\phi_n}|^2
\le \inpr{\eta_n}{S\eta_n} \inpr{\phi_n}{S\phi_n} 
        \nonumber \\
& \le c \inpr{\phi_n}{S\phi_n}  \to 0.
\nonumber
  \end{align}
  The second inequality here is Cauchy-Schwarz, and the contradiction
finishes the proof.
\end{proof}

\begin{prop}
\label{prop:inf-Num-in-spec}
For $T$ a symmetric operator, $\inf\Num T \in \spec T$.
\end{prop}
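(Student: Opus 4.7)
The plan is to translate so that the infimum of the numerical range sits at $0$, and then derive a contradiction from assuming $0\in\res T$ by plugging well-chosen vectors into the Cauchy--Schwarz inequality for the nonnegative hermitian form $s[\phi,\psi]\defeq\inpr{\phi}{T\psi}$.

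Let $m\defeq \inf\Num T$, which I assume finite (else the statement is vacuous). Replacing $T$ by $T-m$ shifts both $\Num T$ and $\spec T$ by $-m$ while preserving symmetry and the domain, so without loss of generality $m=0$, i.e., $\Num T\subseteq[0,\infty)$. Because $T$ is symmetric, the form $s[\phi,\psi]=\inpr{\phi}{T\psi}$ on $\dom T$ is conjugate-symmetric and nonnegative, hence satisfies the Cauchy--Schwarz inequality
\[
|s[\phi,\psi]|^2 \le s[\phi]\,s[\psi] \qquad (\phi,\psi\in\dom T).
\]
By definition of the infimum of the numerical range, there is a sequence of unit vectors $(\phi_n)\subset\dom T$ with $s[\phi_n]\to 0$.

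Suppose, for contradiction, that $0\in\res T$, so that $T^{-1}\in\Lin(\sH)$, and in particular $T^{-1}\phi_n\in\dom T$. Apply Cauchy--Schwarz with $\phi=T^{-1}\phi_n$ and $\psi=\phi_n$. Using symmetry of $T$, the left-hand side is
\[
|\inpr{T^{-1}\phi_n}{T\phi_n}|^2 = |\inpr{TT^{-1}\phi_n}{\phi_n}|^2 = \|\phi_n\|^4 = 1,
\]
while the right-hand side is
\[
\inpr{T^{-1}\phi_n}{\phi_n}\cdot \inpr{\phi_n}{T\phi_n} \le \|T^{-1}\|\, s[\phi_n] \to 0,
\]
yielding $1\le 0$, a contradiction.

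The main obstacle is that Lemma~\ref{lem:Num-and-spec-1} only yields $T\phi_n\to 0$ when $\Num T\subseteq[0,c]$ with $c<\infty$, so it does not directly transfer to the unbounded symmetric case. The key observation is that one need not establish strong convergence $T\phi_n\to 0$ at all: pairing Cauchy--Schwarz against the specific test vector $T^{-1}\phi_n$ produces the normalization $\|\phi_n\|^2=1$ on the left via $TT^{-1}=I$, while the right-hand side still inherits the vanishing factor $s[\phi_n]$ from the numerical-range hypothesis, which closes the argument immediately.
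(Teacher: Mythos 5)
Your proof is correct, and it takes a somewhat more direct route than the paper's. The paper first proves Lemma~\ref{lem:Num-and-spec-1} (if a symmetric operator $S$ has numerical range in $[0,c]$ and $\inpr{\phi_n}{S\phi_n}\to 0$, then $S\phi_n\to 0$), then applies it to the \emph{bounded} operator $S=T^{-1}$ with $\phi_n = T\psi_n$, obtaining $\psi_n = T^{-1}\phi_n \to 0$ and a contradiction with $\|\psi_n\|=1$. You instead invoke the Cauchy--Schwarz inequality for the nonnegative hermitian form $s[\phi,\psi]=\inpr{\phi}{T\psi}$ directly on $\dom T$, with the tailored test vector $T^{-1}\phi_n$; the identity $TT^{-1}=I$ pins the left side at $1$, and the bounded factor $\inpr{T^{-1}\phi_n}{\phi_n}\le\|T^{-1}\|$ lets the $s[\phi_n]\to 0$ factor drive the right side to zero. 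The two inequality chains are near-duals of one another (one is essentially the complex conjugate of the other, with the roles of $T$ and $T^{-1}$ interchanged), so the underlying idea is the same; what your version buys is brevity and self-containment, bypassing Lemma~\ref{lem:Num-and-spec-1} entirely and avoiding the side observation that $\Num T^{-1}\subseteq[0,\|T^{-1}\|]$. The paper's packaging has the mild advantage of stating the boundedness/strong-convergence observation as a stand-alone Lemma, though it is not reused elsewhere. One small stylistic note: your remark that Lemma~\ref{lem:Num-and-spec-1} ``does not directly transfer to the unbounded symmetric case'' slightly misstates the paper's strategy, which precisely circumvents that by applying the Lemma to the bounded operator $T^{-1}$ rather than $T$; but this does not affect the validity of your argument.
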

\begin{proof}
Assume that $\inf \Num T = 0$, since that can be arranged by adding a constant,
unless $\Num T$ is unbounded below, in which case the Lemma is vacuous anyway.
Then, there is a sequence $(\psi_n)$ of unit vectors in $\dom T$ such that
\begin{equation}
\label{eq:numerical-range-tending-to-zero}
\inpr{\psi_n}{T\psi_n} \to 0.
\end{equation}

Assume, for a contradiction, that \hbox{$0\in\res T$}, i.e.,
\hbox{$T^{-1}\in\Lin(\sH)$}. We will show this implies $\psi_n\to 0$.
Multiplying $T$ by a constant if necessary, we may assume $\|T^{-1}\|=1$.
Since
\begin{equation}
\inpr{\psi}{T\psi} = \inpr{T^{-1} T\psi}{T\psi} \in \|T\psi\| (\Num T^{-1}),
\end{equation}
non-negativity of $\Num T$ implies the same for $\Num T^{-1}$,
so that Lemma \ref{lem:Num-and-spec-1} will apply to $T^{-1}$.

Define \hbox{$\phi_n = T\psi_n$}.
Then,
\hbox{$\|\phi_n\| \ge \|\psi_n\| = 1$}
because $\|T^{-1}\| = 1$,
and (\ref{eq:numerical-range-tending-to-zero}) is rewritten as
\begin{equation}
  \nonumber
\inpr{\phi_n}{T^{-1}\phi_n} \to 0.  
\end{equation}
By Lemma \ref{lem:Num-and-spec-1} 
\hbox{$\psi_n = {T^{-1}\phi_n} \to 0$}. Contradiction.
\end{proof}

\begin{prop}
  \label{prop:resolvent-outside-Num}
  For an operator $T$,
  \newline
\noindent \textnormal{(a)}
  Each connected component of the open set
\hbox{$\Cmplx \setminus \cl{\Num T}$}
is either disjoint from $\res T$, or contained in it.
\newline
\noindent \textnormal{(b)}
In components contained in $\res T$, the resolvent is bounded as
\begin{equation}
  \label{eq:resolvent-bound}
\| \Rmap(\zeta,T) \| \le \frac{1}{\dist(\zeta,\Num T)}.
\end{equation}
\newline
\noindent \textnormal{(c)} If $T$ is {\em closed} sectorial, $\spec T \subseteq \cl\Num T$.
\end{prop}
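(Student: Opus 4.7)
The plan is to dispatch part (b) first directly from Cauchy--Schwarz, use its bound to drive a clopen argument for part (a), and then deduce part (c) from (a) together with convexity of $\cl\Num T$ and the sectorial hypothesis.

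For (b), given $\zeta\notin\cl\Num T$ and any unit $\psi\in\dom T$, I have $\|(T-\zeta)\psi\|\ge|\inpr{\psi}{(T-\zeta)\psi}|=|\inpr{\psi}{T\psi}-\zeta|\ge\dist(\zeta,\Num T)$, so $T-\zeta$ is bounded below by $\dist(\zeta,\Num T)$ on $\dom T$. Whenever additionally $\zeta\in\res T$, inverting on the range yields the stated bound $(\ref{eq:resolvent-bound})$; as a by-product, $T-\zeta$ is injective with closed range for every such $\zeta$.

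For (a), I fix a connected component $U$ of $\Cmplx\setminus\cl\Num T$ and set $W\defeq U\cap\res T$. Openness of $W$ in $U$ is immediate from openness of $\res T$. For closedness, suppose $\zeta_n\to\zeta$ in $U$ with $\zeta_n\in W$. By (b), $\|\Rmap(\zeta_n,T)\|\le 1/\dist(\zeta_n,\Num T)$, and since $\dist(\zeta_n,\Num T)\to\dist(\zeta,\Num T)>0$, the resolvents are uniformly operator-norm bounded. The first resolvent identity $\Rmap(\zeta_n,T)-\Rmap(\zeta_m,T)=(\zeta_n-\zeta_m)\Rmap(\zeta_n,T)\Rmap(\zeta_m,T)$ then shows, for each $\phi\in\sH$, that $\psi_n\defeq\Rmap(\zeta_n,T)\phi$ is Cauchy with some limit $\psi$. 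Since $T\psi_n=\phi+\zeta_n\psi_n\to\phi+\zeta\psi$ and $T$ is closed, $\psi\in\dom T$ and $(T-\zeta)\psi=\phi$; combined with injectivity from (b), $T-\zeta$ is bijective, so $\zeta\in W$. Connectedness of $U$ then forces $W\in\{\emptyset,U\}$.

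For (c), $\Num T$ is convex by Lemma~\ref{lem:Num-cvx}, hence so is $\cl\Num T$, and therefore $V\defeq\Cmplx\setminus\cl\Num T$ is connected (the case $V=\emptyset$ being vacuous). Sectoriality furnishes a closed sector $\Sigma$ of half-angle $<\pi/2$ with $\spec T\subseteq\Sigma$, so $\res T\supseteq\Cmplx\setminus\Sigma$. If $V$ meets $\Cmplx\setminus\Sigma$, then (a) applied to the single connected component $V$ upgrades this to $V\subseteq\res T$, i.e.\ $\spec T\subseteq\cl\Num T$. The only alternative is $\Cmplx\setminus\Sigma\subseteq\cl\Num T$; but a short geometric check — combine the open ``over-half-plane'' angular region opposite $\Sigma$ with midpoints of pairs of rays just outside $\Sigma$ — shows the convex hull of $\Cmplx\setminus\Sigma$ is all of $\Cmplx$, forcing $\cl\Num T=\Cmplx$ and $V=\emptyset$, contradicting the standing case. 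Thus (c) follows.

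The main obstacle is the closed-in-$U$ half of the clopen argument in (a): it requires the bound from (b) to produce uniformly bounded resolvents, the first resolvent identity to upgrade this to strong Cauchyness, and closedness of $T$ to place the limit in $\dom T$ with the correct action — each ingredient is needed. The geometric one-liner in (c) — that the complement of a sector with half-angle strictly less than $\pi/2$ is ``too wide'' to fit inside any proper closed convex subset of $\Cmplx$ — is easy but should be spelled out explicitly to justify the final contradiction.
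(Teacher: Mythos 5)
Your part (b) coincides exactly with the paper's estimate. Your part (a) takes a genuinely different route: the paper shows directly that \emph{both} $G\cap\res T$ and $G\cap\spec T$ are open (the latter by a short distance argument: if $\omega\in G\cap\spec T$ and $|\omega-\zeta|<\tfrac12\dist(\omega,\Num T)$ with $\zeta\in\res T$, then $|\omega-\zeta|<\dist(\zeta,\Num T)\le\|\Rmap(\zeta,T)\|^{-1}$, forcing $\omega\in\res T$), whereas you show $U\cap\res T$ is clopen in $U$. Your closed-in-$U$ step is correct but heavier than necessary: the same disk observation that the paper uses (the ball of radius $\ge\dist(\zeta_n,\Num T)$ about $\zeta_n$ lies in $\res T$ and swallows $\zeta$ for large $n$) would replace the first-resolvent-identity-plus-closedness argument. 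Also, you invoke ``$T$ is closed'' in that step without noting why you are entitled to it; it is in fact automatic once $W\ne\emptyset$, since $\res T\ne\emptyset$ forces $T$ to be closed, but you should say so.

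The real divergence is in (c). You read ``closed sectorial'' as ``$T$ closed with $\spec T$ contained in a sector,'' and with that reading your argument is sound (the convex-hull-of-the-complement observation does what you want). But the paper's item (6) in Section~\ref{sec:sforms-1} distinguishes precisely these two senses of ``sectorial,'' and everywhere else the paper means \emph{form} sectorial, i.e.\ $\Num T$ contained in a sector. The paper's own proof of (c) also reads it this way: the step ``$\cl\Num T$ cannot be bounded by two parallel lines since $T$ is sectorial'' only parses for the numerical-range notion. Under that intended reading, your key inclusion $\res T\supseteq\Cmplx\setminus\Sigma$ has no justification --- form-sectoriality of a closed operator does \emph{not} by itself guarantee any resolvent points outside $\cl\Num T$ (a closed, lower-bounded, symmetric but non-self-adjoint operator has $\Num T\subseteq\Real$ yet $\spec T=\Cmplx$). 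What is really needed, and what the paper tacitly uses because all its closed sectorial operators arise as $\Opp{0}{0}{\frm{h}}$ for $\frm{h}\in\sct{\calC}$ and hence (Lemma~\ref{lem:sectorial-to-iso}) have a resolvent point to the left of $\Num T$, is the m-sectoriality assumption that $\res T$ meets $G$; once that is in hand, part (a) together with connectedness of $G$ gives $G\subseteq\res T$ immediately. So your proof of (c) is internally consistent but proves a different statement than the paper intends, and it does not touch the point that actually carries the weight.
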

\begin{proof}
For brevity, write $G$ for the {\em open} set \hbox{$\Cmplx\setminus\cl\Num T$}.

We first demonstrate the bound (\ref{eq:resolvent-bound})
for arbitrary \hbox{$\zeta\in G\cap\res T$},
and use that to show that both $G\cap \res T$ and $G\cap \spec T$ are open. 
That is equivalent to (a), and proves the remaining part of (b).

Thus, check for any unit vector $\psi\in\dom T$:
\begin{align}
  \|(T-\zeta)\psi\|
  & \ge |\inpr{\psi}{(T-\zeta)\psi}|
  \ge |\inpr{\psi}{T\psi} - \zeta|
    \nonumber \\
  & \ge \dist(\zeta,\Num T).
    \label{eq:T-zeta-closed-rng}
\end{align}
This establishes (\ref{eq:resolvent-bound}).

\noindent $G\cap\res T$ is open:
The open disk with center $\zeta$ and 
radius $\|\Rmap(\zeta,T)\|^{-1} \ge {\dist(\zeta,\Num T)}$
is contained in $\res T$.
(See Lemma~\ref{lem:inverse-of-perturbed-closed-op}.)

\noindent $G\cap \spec T$ is open:
For $\omega$ in $G\cap\spec T$, if there is $\zeta$ in $\res T$
with \hbox{$|\omega - \zeta| < \frac{1}{2}\dist(\omega,\Num T)$}, then  
\hbox{$ |\omega - \zeta| <  \dist(\zeta,\Num T) $}, contradicting the previous paragraph.

For part (c),
Since $\cl\Num T$ is convex,
it is geometrically more-or-less obvious that it is
either bounded, a closed sector, or bounded by two parallel lines.
The last is impossible since $T$ is sectorial, and $G$ has
exactly one component in either of the other two cases.
The conclusion follows from $\res T\neq\varnothing$ ($T$ is closed).
\end{proof}

\section{Magnetic Schr\"odinger forms}
\label{sec:QM}

The core theory of the previous section is inert on its own.
To use it, we need some interesting {\RSF}s, and some associated quantities
and objects which are holomorphic. The following two sections will take up
the latter issue.
This section is concerned with {\RSF}s of nonrelativistic Hamiltonians
which are parameterized by scalar and vector potential fields and a two-body
interaction. 
Though inteded to be more illustrative than exhaustive,
the results are nevertheless nontrivial.
See Section~\ref{sec:put-together} for the summary conclusion.

\subsection{Nonrelativistic $N$-particle systems}

We consider a system of $N$ identical particles moving in
three-dimensional euclidean space.
Hence, the ambient Hilbert space is $\sH \equiv L^2((\Real^3)^N)$.
As {\sqf}s, the Hamiltonians we wish to consider are sums
\begin{equation}
  \label{eq:total-hamiltonian-form}
\frm{h}_{{\bm A},u,v} = \kfrm{{\bm A}_0+{\bm A}}+\ufrm{u_0+u}+\vfrm{v_0+v},
\end{equation}
where 
\begin{equation}
  \label{eq:kA}
  {\frm{k}}_{\bm A}[\psi] =
\int_{\Real^{3N}} 
\sum_{\alpha=1}^{N} \left|[\nabla_\alpha - i{\bm A}(x_\alpha)] \psi\right|^2
\, dx,
\end{equation}
is kinetic energy with magnetic vector potential ${\bm A}$;
\begin{equation}
\label{eq:h1}
\ufrm{u}[\psi] = \int_{\Real^{3N}} \left[\sum_\alpha u(x_\alpha)\right] |\psi(x)|^2 \, dx
\end{equation}
is a one-body potential energy for scalar potential $u$; and
\begin{equation}
\label{eq:h2}
\vfrm{v}[\psi] = \int_{\Real^{3N}}
        \left[\frac{1}{2}\sum_{\alpha\neq\beta} v(x_\alpha-x_\beta)\right] |\psi(x)|^2 \, dx
\end{equation}
is a two-body interaction.
These are taken to be defined on the space $\sK \equiv C_c^\infty(\Real^{3N})$ of
compactly supported, infinitely differentiable functions.
In (\ref{eq:total-hamiltonian-form}), 
${\bm A}_0$, $u_0$ and $v_0$ are fixed background or unperturbed fields,
while ${\bm A}$, $u$ and $v$ are variable, drawn from appropriate Banach
spaces (to be determined) so that $\frm{h}_{{\bm A},u,v}$ is a regular sectorial family.

We do not say anything here about statistics because all the {\sqf}s/operators
to be considered are invariant under particle permutations; thus, one
can simply restrict attention to the subspace carrying the desired representation
of the permutation group. Taking spin explicitly into account is similarly unnecessary
since we are concerned with spin-independent Hamiltonians.

The uperturbed scalar and interaction potentials are taken to be locally integrable,
non-negative functions:
\begin{equation}
  \label{eq:u0-v0}
u_0, v_0 \in L_{\mathrm{loc}}(\Real^3)_+.
\end{equation}
An interesting and natural choice for the unperturbed scalar potential
$u_0$ is some kind of confining potential, e.g., $|x|^2$ or $|x|^4$.
It is actually somewhat artificial to consider an interaction potential which
could not be treated as a perturbation, since the Coulomb interaction $v(x) = |x|^{-1}$
can be.
Since we are not aiming for an exhaustive treatment, ${\bm A}_0$ is
dropped (or taken identically zero). Other choices complicate the analysis
considerably.

Local integrability of $u_0$ and $v_0$ ensures that $\sK$ is in the domains of
$\ufrm{u_0}$ and $\vfrm{v_0}$, while positivity then implies closability on $\sK$.
Indeed, $\ufrm{u_0}$ is closed on the space of functions square integrable
with respect to $[1 + \sum_\alpha u(x_\alpha)]\, dx$, and similarly for $\frm{v}_0$.
Closability of $\kfrm{0}$ was already considered in Section \ref{sec:kinetic-energy}.
Closability of the sum $\kfrm{0} + \ufrm{u_0} + \vfrm{v_0}$ is a new problem
(the pieces are not equivalent), which, however, is easily solved as follows
(see paragraph VI.1.6 of Kato\cite{Kato}):
if $\frm{t},\frm{s}$ are sectorial {\sqf}s closable on a common domain
$\sK$ (with closures $\overline{\frm{t}}$, $\overline{\frm{s}}$),
then $\frm{t}+\frm{s}$ is also closable on $\sK$.
This is true because a Cauchy sequence with respect to
$\frm{t}^+ + \frm{s}^+$ is Cauchy with respect to each of
$\frm{t}^+$ and $\frm{s}^+$ separately, hence has a limit in 
$\dom \overline{\frm{t}} \cap \dom \overline{\frm{s}}$.
By induction, this extends to any finite number of closable {\sqf}s.

In summary, the unperturbed form $\frm{h}_{{\bm 0},0,0}$ is sectorial and
closable on $\sK \equiv C_c^\infty((\Real^3)^n)$, being a member of an equivalence class
$\calC$ of forms on $\sK$. We are interested in pertubations 
$\frm{h}_{{\bm A},u,v}$ lying in $\sct{\calC}$, and which, moreover, vary holomorphically
with the parameters $({\bm A},u,v)$. Prop.~\ref{prop:regular-sectorial} will be the
main tool for demonstrating this.
Note that the identification of an unperturbed form is arbitrary. Any form in
$\sct{\calC}$ would do. Even on a practical level, this can be so to some extent.
For example, one may prefer $u_0(x) = R^4$ if $|x| \le R$, otherwise $|x|^4$ to the
``simpler'' $|x|^4$, and if there is a background magnetic field, different gauge choices
may recommend themselves.

\subsection{Scalar potential}
\label{sec:scalar_potential}

We begin the search for suitable perturbations with scalar potentials: bounded potentials, $L^{3/2}$
potentials, and proportional modifications $fu_0$ of the background potential.
Note that $\ufrm{u_0+u} = \ufrm{u_0} + \ufrm{u}$, so we work with 
$\ufrm{u}$ by itself, although $\ufrm{u_0}$ plays a role in determining which
$u$'s are acceptable.

The following new concept, an extreme sort of relative boundedness, is now going to
be very important.
\begin{defn}
For {\sqf}s $\frm{t}$ and $\frm{h}$:
  $\frm{t}$ is ({\it Kato}) {\it tiny} with respect to
  $\frm{h}$ iff, for any $b > 0$, 
  $|\frm{t}| \le a {\bm 1} + b |\frm{h}|$, for some $a\in\Real_+$.
  This is denoted $\frm{t} \ktiny \frm{h}$.
\end{defn}
Here are some simple yet useful properties of $\ktiny$.
\begin{enumerate}
\item $\frm{t}\sim {\bm 0} \;\Rightarrow\; \frm{t}\ktiny\frm{h}$ for any $\frm{h}$
\item
$\setof{\frm{t}}{\frm{t}\ktiny \frm{h}}$ is a vector space.
\item Given $\frm{t} \ktiny \frm{h}$:
  \begin{enumerate}
  \item $\frm{h}+\frm{t} \sim \frm{h}$
  \item $\frm{h}'\sim\frm{h}\;\Rightarrow\; \frm{t}\ktiny\frm{h}'$
  \item $\frm{h}$ sectorial $\Rightarrow\; \frm{h}+\frm{t}$ sectorial
  \item
    $\frm{h}$, $\frm{h}'$ sectorial $\Rightarrow \frm{t} \ktiny \frm{h}+\frm{h}'$
  \item $\frm{t}'\relbd \frm{t},\; \frm{h}\relbd \frm{h}'
    \;\Rightarrow \; \frm{t}' \ktiny \frm{h}'$
\end{enumerate}
\end{enumerate}

The main point here is that we can accumulate tiny perturbations indefinitely
without danger of moving out of $\sct{\calC}$. Because of item 3(b), it makes
sense to write $\frm{t}\ktiny \calC$. But, beware: a {\em set} of forms tiny with respect
to $\calC$ might still be unbounded in $\calC_\relbd$.

\subsubsection{Bounded potentials}

These are {\em complex} functions, $u\in L^\infty(\Real^3)$, even though ultimately
we are (probably) only interested in the real subspace $L^\infty(\Real^3;\Real)$.
This expansion is for the sake of holomorphy, as usual;
we need to work in the complex space to use the theory of Section \ref{sec:families}.

This simple case is a good illustration of the basic method:
check that the perturbation does not move $\frm{h}_{{\bm A},u,v}$ out of $\sct{\calC}$;
check G-holomorphy; check local boundedness.

Now,
\begin{equation}
  \label{eq:bounded-potl-trivial}
  |\ufrm{u}[\psi]| \le \|u\|_{L^\infty(\Real^3)} {\bm 1}[\psi].
\end{equation}
This immediately demonstrates local boundedness of $\frm{h}_{{\bm 0},u,0}$
due to the factor $\|u\|_{L^\infty(\Real^3)}$, as well as that
\hbox{$\ufrm{u} \sim 0 \ktiny \frm{h}_{{\bm 0},0,0}$}. 
G-holomorphy of $\ufrm{u}[\psi]$ in $u$ is trivial because it is {\em linear}.
In complete detail:
\begin{equation}
  \nonumber
  \ufrm{u+zu'}[\psi] = \ufrm{u}[\psi] +  z \ufrm{u'}[\psi],
\end{equation}
so the issue reduces to holomorphy of the right-hand side in $z$,
which only requires that $\ufrm{u}[\psi]$ and $\ufrm{u'}[\psi]$ be well-defined.
N.B. This argument has nothing to do with the topology of $L^\infty$ and will
hold for any vector space for which $\ufrm{u}\in\SF(\sK)$.
Conclusion: $L^\infty(\Real^3) \ni u \mapsto \frm{h}_{{\bm 0},u,0}$ is
a \RSF.

\subsubsection{Unbounded potentials}

Now we move on to a space of unbounded potentials, namely,
$u\in L^{3/2}(\Real^3)$.
The following Lemma provides a bound playing the same r\^ole as
(\ref{eq:bounded-potl-trivial}).
The Sobolev inequality used can be found in books on
Sobolev spaces\cite{Adams}, partial differential equations\cite{Taylor1}
and general analysis\cite{Lieb+Loss}.
\begin{lem}
  \label{lem:L3/2-bound}
  If $u\in L^{3/2}(\Real^3)$, then
  \begin{equation}
    \label{eq:L3/2}
|\ufrm{u}| \le c'' \|u\|_{L^{3/2}(\Real^3)} \frm{k}_0.  
\end{equation}
\end{lem}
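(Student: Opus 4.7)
The plan is to reduce the $N$-particle bound to a one-particle bound on $\Real^3$, and then apply Hölder's inequality together with the Sobolev embedding $H^1(\Real^3)\hookrightarrow L^6(\Real^3)$.

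First, I would fix $\alpha\in\{1,\ldots,N\}$ and work on the single summand $\int u(x_\alpha)|\psi(x)|^2\,dx$. Write $x=(x_\alpha,\hat x_\alpha)$, where $\hat x_\alpha\in\Real^{3(N-1)}$ denotes the remaining coordinates. For $\psi\in C_c^\infty(\Real^{3N})$ and almost every $\hat x_\alpha$, the slice $f(y)\defeq\psi(\ldots,y,\ldots)$ lies in $C_c^\infty(\Real^3)$. By Hölder in the three conjugate exponents $(3/2,3)$ and the Sobolev inequality $\|f\|_{L^6(\Real^3)}\le c\,\|\nabla f\|_{L^2(\Real^3)}$,
\begin{equation}
\Bigl|\int u(y)\,|f(y)|^2\,dy\Bigr|
\le \|u\|_{L^{3/2}}\,\bigl\||f|^2\bigr\|_{L^3}
= \|u\|_{L^{3/2}}\,\|f\|_{L^6}^2
\le c^2\|u\|_{L^{3/2}}\int |\nabla f|^2\,dy.
\end{equation}

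Next, I would integrate this pointwise estimate over $\hat x_\alpha\in\Real^{3(N-1)}$ (using Fubini, which applies because $\psi\in C_c^\infty$):
\begin{equation}
\Bigl|\int_{\Real^{3N}} u(x_\alpha)\,|\psi(x)|^2\,dx\Bigr|
\le c^2\|u\|_{L^{3/2}}\int_{\Real^{3N}} |\nabla_\alpha\psi|^2\,dx.
\end{equation}
Summing this bound over $\alpha=1,\ldots,N$ and comparing with the definitions of $\ufrm{u}$ and $\kfrm{0}$ yields $|\ufrm{u}[\psi]|\le c''\|u\|_{L^{3/2}}\,\kfrm{0}[\psi]$ on $\sK$, with $c''=c^2$ the square of the Sobolev constant, independent of $N$. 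By polarization (or, equivalently, the remark that $|\ufrm{u}|$ refers to the map $\psi\mapsto|\ufrm{u}[\psi]|$), this suffices to establish (\ref{eq:L3/2}).

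I do not anticipate a real obstacle here: the slicing/Fubini step and the Sobolev--Hölder chain are standard. The only point requiring minor care is that $u$ is complex-valued, so the single-particle inequality is applied to $|u|$ on the left and $u$ on the right via $|\int u|f|^2| \le \int|u|\,|f|^2$. Also, the estimate is first proved on the form core $\sK=C_c^\infty(\Real^{3N})$ and then extends by density to the closure in the usual way, as will be needed when identifying $\ufrm{u}$ as an element of $\calC_\relbd$.
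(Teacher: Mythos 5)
Your proof is correct and follows essentially the same route as the paper's: slice along one particle coordinate, apply H\"older with exponents $(3/2,3)$ to separate $\|u\|_{L^{3/2}}$ from $\|\psi(\cdot,\hat x_\alpha)\|_{L^6}^2$, invoke the Sobolev inequality $\|f\|_{L^6(\Real^3)}\le c\|\nabla f\|_{L^2(\Real^3)}$, integrate out the remaining coordinates, and sum over $\alpha$. Your use of the homogeneous Sobolev inequality (gradient only) is in fact slightly cleaner than the paper's appeal to the $W_1^2$-norm version, since it directly yields the stated bound by $\kfrm{0}$ rather than $\kfrm{0}+{\bm 1}$.
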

\begin{proof}
For fixed $y \equiv (x_2,\ldots,x_N)$,  the H\"{o}lder inequality gives
\begin{align}
    \nonumber
 \int u(x_1) |\psi(x_1,y)|^2 \, dx_1
& \le c \|u\|_{L^{3/2}} \left\{ \int |\psi(x_1,y)|^6 \, dx \right\}^{1/3}
   \nonumber \\
& = c \|u\|_{L^{3/2}} \| \psi(\cdot ,y) \|_{ L^{3/2}(\Real^3) }^2
   \nonumber 
  \end{align}
For the integral here, use the Sobolev inequality 
\begin{equation}
  \|f\|_{L^q(\Real^d)} \le c' \|f\|_{W_k^p(\Real^d)},
  \quad p\le q \le \frac{pd}{d-kp}
\end{equation}
with the values $d=3$, $k=1$, $p=2$, $q=6$ to obtain
\begin{equation}
    \nonumber
    \int \| \psi(\cdot ,y) \|_{ L^{3/2}(\Real^3) }^2 \, dy
    \le c' 
    \int |\nabla_1\psi(x_1,y) |^2 \, dy.
\end{equation}
Adding up the inequalities with each of $x_2,\ldots,x_N$ in place of $x_1$ yields
\begin{equation}
\nonumber
\ufrm{u}[\psi] \le c c' \|u\|_{L^{3/2}} \kfrm{0}[\psi].
\end{equation}
\end{proof}
This demonstrates local boundedness of 
\hbox{$L^{3/2}(\Real^3)\ni u \mapsto \frm{h}_{{\bm 0},u,0}$}, but would allow us to
conclude that \hbox{$\frm{h}_{{\bm 0},u,0}\in\sct{\calC}$} only for $\|u\|_{L^{3/2}}$
sufficiently small (depending on $c''$). Fortunately, it can be improved
by using density of $L^\infty$ in $L^{3/2}$.
\begin{lem}
  \label{lem:L3/2-tiny}
For $u\in L^{3/2}(\Real^3)$, \hbox{$\ufrm{u} \ktiny \frm{k}_0$}.
\end{lem}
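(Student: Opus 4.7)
The plan is to upgrade Lemma~\ref{lem:L3/2-bound} to the tinyness conclusion by the standard truncation trick: decompose $u$ into a bounded piece (which contributes only to the ${\bm 1}$-coefficient) and an $L^{3/2}$-small piece (whose $\frm{k}_0$-coefficient can then be made smaller than any prescribed $b$).

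Concretely, for each $n\in\Nat$ I would set $u_n(x) \defeq u(x)\chi_{|u(x)|\le n}$ and $r_n \defeq u - u_n$. Then $u_n \in L^\infty(\Real^3)$ with $\|u_n\|_\infty \le n$, while dominated convergence (with integrable dominator $|u|^{3/2}$) yields $\|r_n\|_{L^{3/2}} \to 0$ as $n\to\infty$. The trivial pointwise estimate gives
\begin{equation}
|\ufrm{u_n}[\psi]| \le \int \sum_\alpha |u_n(x_\alpha)|\, |\psi(x)|^2\, dx \le Nn\, {\bm 1}[\psi],
\nonumber
\end{equation}
while Lemma~\ref{lem:L3/2-bound} applied to $r_n$ provides $|\ufrm{r_n}[\psi]| \le c''\|r_n\|_{L^{3/2}} \frm{k}_0[\psi]$. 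Given any prescribed $b>0$, I would choose $n$ large enough that $c''\|r_n\|_{L^{3/2}} \le b$; the triangle inequality $|\ufrm{u}| \le |\ufrm{u_n}| + |\ufrm{r_n}|$ then produces the tinyness bound with $a = Nn$.

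No genuine obstacle is expected. The one mild point to attend to is that $u$ is complex-valued, so the triangle inequality is applied to the decomposition $u = u_n + r_n$ before invoking the two positive-form estimates (alternatively, observe $|\ufrm{u}[\psi]| \le \ufrm{|u|}[\psi]$ and split $|u|$ instead). All ingredients are already assembled; this is the familiar $\epsilon$-argument that upgrades relative boundedness to Kato-tinyness, and in particular it shows $\ufrm{u} \ktiny \frm{k}_0$, hence $\ufrm{u} \ktiny \calC$ since $\frm{k}_0 \in \sct{\calC}$.
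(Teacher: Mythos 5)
Your proof is correct and uses exactly the same truncation decomposition $u' = u\,1[|u|\le M]$ as the paper, merely supplying the details (dominated convergence for $\|r_n\|_{L^{3/2}}\to 0$, the trivial bound for the bounded piece, and the triangle inequality) that the paper leaves implicit. No differences of substance.
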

\begin{proof}
Split $u$ as $u = u' + u''$, with
$u'\in L^\infty(\Real^3)$ and $u'' \in L^{3/2}(\Real^3)$.
$\|u''\|_{L^{3/2}(\Real^3)}$ can be made as small as desired by
choosing $u'$ appropriately (e.g. $u' = u \, 1[|u|\le M]$ for large $M$).
\end{proof}
Conclusion: $L^{3/2}(\Real^3) \ni u \mapsto \frm{h}_{{\bm 0},u,0}$ is
a \RSF.

\begin{rem}
This result is very important to Lieb's framework\cite{Lieb83} for DFT.
\end{rem}

\subsubsection{Modulating the confining potential}

The final kind of scalar potential to be considered is modulation of
the background (confining) potential: $L^\infty(\Real^3) \ni f \mapsto \ufrm{fu_0}$.
Evidently,
\begin{equation}
|\ufrm{fu_0}|  \le \|f\|_{L^\infty(\Real^3)} \ufrm{u_0}.
\end{equation}
Local boundedness is thus secure, but $\frm{h}_{{\bm 0},fu_0,0}$ will
generally fail to be sectorial if $u_0$ is anything like what we have in mind.
Thus, we need to restrict $f$ to the open unit ball $B(L^\infty(\Real^3))$.
With that restriction, another \RSF\ is obtained.

\subsubsection{Removing redundancy}

Combine the preceding three kinds of scalar potential perturbation yields
a holomorphic map
\begin{equation}
  \nonumber
L^\infty(\Real^3) \oplus  L^{3/2}(\Real^3) \oplus  L^\infty(\Real^3) \to \calC_\relbd
\end{equation}
given by
\begin{equation}
\label{eq:total-scalar-potl-map}
  (u',u'',f) \mapsto
  \ufrm{u'}+ \ufrm{u''}+  \ufrm{fu_0} =  \ufrm{u'+u''+fu_0}.
\end{equation}
However, this should be restricted to the open set
\begin{equation}
 \label{eq:U-for-u}
\calU \defeq \setof{(u',u'',f) \in L^\infty \oplus  L^{3/2} \oplus  L^\infty }{\|f\| < 1}
\end{equation}
to ensure that $\frm{h}_{{\bm 0},u'+u''+fu_0,0}$ is in $\sct{\calC}$.
Thus, we have a \RSF\ in $\sct{\calC}$ parameterized over $\calU$ above. 

However, this is not entirely satisfactory because there is redundancy:
many distinct triples $(u',u'',f)$ may give the same total potential $u' + u'' + fu_0$.
To cure this infelicity, we pass to a quotient.
Recall that the quotient $\sX/{\mathscr M}$ of a Banach space $\sX$ by a
closed subspace ${\mathscr M}$ is a Banach space with norm
\begin{equation}
\nonumber  
\|\pi x\|_{\sX/{\mathscr M}} \defeq \inf \setof{\|x+m\|_{\sX}}{m\in{\mathscr M}},
\end{equation}
where $\Arr{\sX}{\pi}{\sX/{\mathscr M}}$ is the canonical projection.
A continuous linear map $\Arr{\sX}{f}{\sY}$ naturally induces a linear map on the
quotient $\sX/\ker f$, eliminating directions along which $f$ is constant.

This simple picture is complicated in situations which interest us for two
reasons. $f$ is not sure to be either linear or defined on the entire
space $\sX$, hence a slightly generalized notion of kernel is
needed, and taking a quotient by a subspace is not an immediately sensible
thing to do. 
\begin{lem}
Given ${\calU \subseteq \sX}$ open and {\em convex}, and
\hbox{$\Arrtop{\calU}{f}{\sY}$}, holomorphic, let
\begin{equation}
  \nonumber
{\mathscr M} = \cap_{x\in\calU} \ker Df(x).  
\end{equation}
Then, $f$ has a unique holomorphic extension to \hbox{$\calU + {\mathscr M}$},
given by \hbox{$f(x+m) = f(x)$} for \hbox{$m\in{\mathscr M}$}.
In turn, a holomorphic map
\hbox{$\Arr{(\calU+{\mathscr M})/{\mathscr M}}{\tilde{f}}{\sY}$ }
is induced on the quotient,
given by $\tilde{f}(\pi x) = f(x)$.
\end{lem}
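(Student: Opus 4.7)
The plan is to proceed in three steps: first show that $f$ is invariant under $\mathscr M$-displacements that stay inside $\calU$, then extend by translation to the open set $\calU + \mathscr M$, and finally descend to the quotient using Theorem~\ref{thm:GTHZ}.

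For the first step, I would take $x, x' \in \calU$ with $x' - x \in \mathscr M$. Convexity of $\calU$ puts the segment $\gamma(t) = (1-t)x + tx'$, $t\in[0,1]$, entirely in $\calU$, and the chain rule gives $\frac{d}{dt} f(\gamma(t)) = Df(\gamma(t))(x'-x) = 0$, because by definition $\mathscr M \subseteq \ker Df(\gamma(t))$. Hence $f(x) = f(x')$. This is the one place where convexity is essential.

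Next, $\mathscr M$ is an intersection of closed subspaces $\ker Df(x)$ (each closed because $Df$ is continuous, $f$ being holomorphic), so $\mathscr M$ is closed and $\calU + \mathscr M = \bigcup_{m\in \mathscr M}(\calU + m)$ is open. I define $F(x+m) := f(x)$ for $x\in\calU$, $m\in\mathscr M$; if $x+m = x'+m'$ with $x,x'\in\calU$, then $x - x' \in \mathscr M$ and the first step forces $f(x) = f(x')$, so $F$ is well-defined. On the translate $\calU + m_0$, $F$ is the composition of the holomorphic affine map $y \mapsto y - m_0$ with $f$, and hence holomorphic; uniqueness of the extension is forced because any holomorphic extension must itself be $\mathscr M$-invariant by the same derivative argument.

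For the quotient, closedness of $\mathscr M$ makes $\sX/\mathscr M$ a Banach space with continuous open projection $\pi$, so $(\calU+\mathscr M)/\mathscr M = \pi(\calU+\mathscr M)$ is open. Since $F$ is constant on $\mathscr M$-cosets, the formula $\tilde f(\pi x) := F(x)$ is unambiguous. To get holomorphy I appeal to Theorem~\ref{thm:GTHZ} and verify G-holomorphy plus local boundedness. G-holomorphy: given $\tilde x_0 = \pi x_0$ and $\tilde v = \pi v$ in $\sX/\mathscr M$, the slice $\zeta \mapsto \tilde f(\tilde x_0 + \zeta \tilde v) = F(x_0 + \zeta v)$ is a classical holomorphic function near $\zeta = 0$. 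Local boundedness: choose an open ball $B(x_0,\delta) \subseteq \calU+\mathscr M$ on which $F$ is bounded; openness of $\pi$ makes $\pi(B(x_0,\delta))$ an open neighborhood of $\tilde x_0$ on which $\tilde f$ inherits the same bound.

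The conceptual crux is the first step: convexity of $\calU$ is exactly what converts the pointwise kernel condition defining $\mathscr M$ into actual $\mathscr M$-invariance of $f$. Everything afterward --- openness of $\calU + \mathscr M$, closedness of $\mathscr M$, the open mapping theorem for $\pi$, and GTHZ to certify holomorphy of $\tilde f$ --- is routine.
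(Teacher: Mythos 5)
Your proof is correct and follows essentially the same route as the paper's. The one substantive difference in the well-definedness step is cosmetic: you integrate along the real segment from $x$ to $x'$ and note that the derivative $Df(\gamma(t))(x'-x)$ vanishes, whereas the paper restricts $f$ to the two-complex-dimensional affine slice through $x$, $x'$, $y$ and observes $Df$ vanishes on its tangent directions. Both are valid and turn on exactly the same point: convexity ensures the relevant slice stays inside $\calU$, where the kernel condition applies. The quotient part of your argument (G-holomorphy via representatives, local boundedness via openness of $\pi$) matches the paper's almost verbatim.

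One minor imprecision worth tightening: your uniqueness claim, that ``any holomorphic extension must itself be $\mathscr M$-invariant by the same derivative argument,'' is not quite immediate. The derivative argument used $\mathscr M \subseteq \ker Df(x)$ for $x \in \calU$, but for a candidate extension $G$ on $\calU + \mathscr M$ one would need $\mathscr M \subseteq \ker DG(y)$ for $y$ outside $\calU$, which does not follow merely from $G|_\calU = f$. The clean fix is what the paper does: observe that $\calU + \mathscr M$ is connected (indeed convex, being the image of $\calU \times \mathscr M$ under addition) and invoke the identity theorem, since any two holomorphic extensions agree on the open set $\calU$. Alternatively one can apply the identity theorem to $y \mapsto DG(y)(w)$ for fixed $w \in \mathscr M$ to recover the kernel condition off $\calU$, but either way the connectedness argument is needed and your sketch as written elides it.
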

\begin{proof}
  First, note that $\ker Df(x)$ is a closed subspace of $\sX$ for each $x\in\calU$,
  so ${\mathscr M}$ is indeed a closed subspace. To see that the asserted extension is well-defined,
  suppose that $y = x+m = x'+m'$, for $x,x'\in\calU$, $m,m'\in{\mathscr M}$.
  Denote the affine (two-$\Cmplx$-dimensional) subspace containing $x,x',y$ by $A$,
  and consider the restriction of $f$ to $A\cap\calU$, which is convex. The restriction of
  $Df$ is everywhere zero, hence $f$ is constant on $A\cap\calU$, i.e., $f(x)=f(x')$ and the
  extended $f$ is well-defined. That the extension is holomorphic follows immediately
  from $Df(x+m) = Df(x)$, and unicity from $\calU+{\mathscr M}$ being connected and $f$ given
  on an open set, namely $\calU$.

  Therefore, $\tilde{f}$ is well-defined on $\calU/{\mathscr M}$ according to the given formula,
  and it remains only to show that it is holomorphic. As usual, we use the equivalence
  with G-holomorphy plus local boundedness (Thm. \ref{thm:GTHZ}).
  For G-holomorphy, note that $\tilde{f}(\pi x+\zeta \pi y) = f(x+\zeta y)$,
  so the question reduces to G-holomorphy of $f$ itself.
  For local boundedness, note that $\|\pi x - \tilde{y}\| < \epsilon$
  implies that $x$ is within distance $\epsilon$ of $\pi^{-1} \tilde{y}$.
\end{proof}

To apply this, one only need check that $\calU$ in (\ref{eq:U-for-u}) is convex,
which is immediate.
So, define \hbox{$L^\infty(\Real^3)+L^{3/2}(\Real^3)+u_0 L^\infty(\Real^3)$}
to be the space of functions (equivalence classes under a.e. equality)
$u$ such that
\begin{equation}
  \nonumber
\inf\setof{\|u'\|_{L^\infty} +\|u''\|_{L^{3/2}} +\|f\|_{L^{\infty}}}
{u=u'+u''+fu_0}
\end{equation}
is finite. This is a norm $\|u\|$ making $L^\infty+L^{3/2}+u_0 L^\infty$ a Banach space,
and the subset $\calU_u$ consisting of $u$ with
{\em some} decomposition obeying the constraint $\|f\|_{L^\infty} < 1$ is open.
Conclusion: the map $u \mapsto \frm{h}_{{\bm 0},u,0}$
is a \RSF\ parameterized over $\calU_u$.

\subsection{Interaction}
\label{sec:interaction}

The message of this subsection is that two-body interactions can be
treated in much the same way as one-body potentials, an observation that
goes back centuries. Indeed, instead of coordinatizing configuration space
$(\Real^3)^N$ with $x_1,x_2,\ldots,x_N$, we may use
$\tfrac{x_2-x_1}{\sqrt{2}},\tfrac{x_2+x_1}{\sqrt{2}},x_3,\ldots,x_N$,
and thereby control an interaction between particles $1$ and $2$ by the
kinetic energy just as an external potential for particle $1$.
As long as we use only Kato tiny perturbations, as was done in
Section~\ref{sec:scalar_potential}, then, owing to property 2 of
Section~\ref{sec:scalar_potential} it is not possible that each
perturbation alone is controllable, while the combination is not.
We have, for example, an \RSF\ of pair interactions 
parameterized over $\calU_{v} = L^\infty(\Real^3)+L^{3/2}(\Real^3)$.

\subsection{Vector potential}
\label{sec:vector_potential}

For out purposes, the form of ${\frm{k}}_{\bm A}$ given in (\ref{eq:kA}) is not good
for complex vector potentials.
In order that ${\frm{k}}_{\bm A}$ be holomorphic in ${\bm A}$, it should {\em not}
appear complex-conjugated.
The correct definition is
\begin{align}
\label{eq:complex-A}  
{\frm{k}}_{\bm A}[\psi]
&=
\sum_{\alpha=1}^{N}
\int_{\Real^{3N}} 
(\nabla_\alpha + i{\bm A}(x_\alpha)) \overline{\psi}\cdot
(\nabla_\alpha - i{\bm A}(x_\alpha)) \psi
            \, dx
\nonumber \\                          
&=
\inpr{ ({\nabla} - i{\overline{\bm A}})\psi }
{ ({\nabla} - i{\bm A})\psi }
\end{align}
We take a somewhat different approach with this than for scalar potentials.
$\nabla_\alpha$
is a bounded operator from $W_1^2(\Real^{3N})$ into $\vec{L}^2(\Real^{3N})$
(we use an over-arrow to indicate ordinary, complex, three-dimensional vectors).
The integral in (\ref{eq:complex-A}) will be a legitimate $L^2$ inner product if
multiplication by ${\bm A}$ (or $\overline{\bm A}$) has the same property.
This is very natural train of thought, but before pursuing it, we consider
bounded vector potentials.

\subsubsection{bounded \texorpdfstring{${\bm A}$}{vector potential}}
\label{sec:bounded-A}

\begin{lem}
If ${\bm A}$ is bounded, then $\kfrm{\bm A}$ is a tiny perturbation of $\frm{k}_0$.  
\end{lem}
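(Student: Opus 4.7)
The plan is to expand $\kfrm{\bm A}-\kfrm{0}$ into a piece linear in ${\bm A}$ and a piece quadratic in ${\bm A}$, bound each separately, and combine them to obtain the defining inequality of Kato tinyness with arbitrarily small coefficient in front of $\kfrm{0}[\psi]$. The identity
\begin{align*}
\kfrm{\bm A}[\psi] - \kfrm{0}[\psi]
&= \sum_\alpha \Big\{ i\inpr{{\bm A}\psi}{\nabla_\alpha \psi} \\
&\quad -i\inpr{\nabla_\alpha\psi}{{\bm A}\psi} \\
&\quad +\inpr{\overline{\bm A}\psi}{{\bm A}\psi} \Big\},
\end{align*}
where each ${\bm A}$ is evaluated at $x_\alpha$, is the starting point; note that we do \emph{not} conjugate ${\bm A}$ in the inner products except in the very last quadratic term, so that the whole expression is a holomorphic function of ${\bm A}$.

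The quadratic term is immediate: since $|{\bm A}(x_\alpha)\cdot{\bm A}(x_\alpha)|\le\|{\bm A}\|_\infty^2$ pointwise, its absolute value is bounded by $N\|{\bm A}\|_\infty^2\,{\bm 1}[\psi]$, i.e., an $L^2$ bound with no $\kfrm{0}$ needed. For the linear cross term I would use the pointwise bound $|{\bm A}(x_\alpha)|\le \|{\bm A}\|_\infty$, Cauchy--Schwarz in $L^2((\Real^3)^N)$, and then the weighted inequality $2st\le \epsilon^{-1} s^2 + \epsilon t^2$ (with any $\epsilon>0$ to be chosen later) to get
\[
\Big|\sum_\alpha 2\,\mathrm{Im}\inpr{{\bm A}\psi}{\nabla_\alpha\psi}\Big|
\le \epsilon\,\kfrm{0}[\psi] + C(\epsilon,N)\|{\bm A}\|_\infty^2\,{\bm 1}[\psi].
\]

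Adding the two bounds gives, for every $\epsilon>0$,
\[
|\kfrm{\bm A}[\psi]-\kfrm{0}[\psi]| \le a(\epsilon,{\bm A})\,{\bm 1}[\psi] + \epsilon\,\kfrm{0}[\psi],
\]
which is exactly the defining inequality $\kfrm{\bm A}-\kfrm{0}\ktiny\kfrm{0}$. I expect no real obstacle: the only subtle point is to keep track that the cross term, although not real for complex ${\bm A}$, is dominated in modulus by the manifestly real quantity $2\sum_\alpha\!\int |{\bm A}||\psi|\,|\nabla_\alpha\psi|\,dx$, so the usual scalar argument applies verbatim. Everything else is algebra on a form defined on the manifestly safe dense domain $C^\infty_c((\Real^3)^N)$.
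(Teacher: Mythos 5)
Your proof takes essentially the same route as the paper's: expand $\kfrm{\bm A}[\psi]-\kfrm{0}[\psi]$ into a piece quadratic in ${\bm A}$ bounded by $\|{\bm A}\|_\infty^2\,{\bm 1}[\psi]$ and a cross piece bounded by $\|{\bm A}\|_\infty\|\nabla\psi\|\,\|\psi\|$, then absorb the cross piece into $\epsilon\,\kfrm{0}$ with the elementary inequality $2st\le\epsilon t^2+\epsilon^{-1}s^2$. One caution on the bookkeeping: since $\inpr{\cdot}{\cdot}$ is conjugate-linear in its first slot, writing $\inpr{{\bm A}\psi}{\nabla_\alpha\psi}$ silently introduces $\overline{\bm A}$, and for complex ${\bm A}$ the two cross terms are not complex conjugates of one another, so they do not literally combine into $2\,\mathrm{Im}\inpr{{\bm A}\psi}{\nabla_\alpha\psi}$ — but as you observe at the end, bounding each term by its modulus makes the estimate go through unchanged, so the conclusion is unaffected.
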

\begin{proof}
For an arbitrary $\psi\in W_1^2(\Real^{3N})$,
\begin{align}
\Big|\kfrm{\bm A}[\psi] - \kfrm{0}[\psi]\Big| 
 = & \Big|\inpr{ ({\nabla} - i{\overline{\bm A}})\psi }
  { ({\nabla} - i{\bm A})\psi } - \|\nabla\psi\|^2\Big|
      \nonumber
 \\
\le
& 
      \|{\bm A}\|_{L^\infty}^2   \|\psi\|^2
      +2 \|{\bm A}\|_{L^\infty}  \|\nabla\psi\| \|\psi\|.
  \label{eq:k-A-tiny-pert}
\end{align}
Control the final term with the inequality
\begin{equation}
  \nonumber
  2 \|\nabla\psi\| \|\psi\| \le
\epsilon \|\nabla\psi\|^2
+\frac{1}{\epsilon}\|\psi\|^2, \quad \epsilon > 0.
\end{equation}
Since $\epsilon$ can be taken as small as desired here,
\begin{equation}
\kfrm{\bm A} - \kfrm{0} \ktiny \kfrm{0}.
\end{equation}
\end{proof}

Just as G-holomorphy of $\ufrm{u}$ followed from holomorphy of
$\Cmplx\ni z \mapsto z$, G-holomorphy of $\kfrm{\bm A}$ follows from
holomorphy of $z \mapsto z^2$.
Local boundedness of $\kfrm{\bm A}[\psi]$ as a function of
\hbox{${\bm A} \in \vec{L}^\infty(\Real^3)$} follows from an
estimate like that in (\ref{eq:k-A-tiny-pert}).
Thus,
$\vec{L}^\infty(\Real^3)\ni {\bm A} \mapsto \kfrm{\bm A}$ is a \RSF.

\subsubsection{Sobolev multipliers}

Now we return to the idea mentioned at the beginning of this section.
Multiplication of elements of $W_1^2(\Real^d)$
by a fixed function $f$ is a linear operation.
If it is actually a {\em bounded} linear operator into $L^2(\Real^d)$,
then $f$ is a member of the space
\hbox{$M(W_1^2(\Real^{d})\to L^2(\Real^{d}))$} of
Sobolev multipliers\cite{Mazya-85-book,Mazya-09-book}.
This space is nontrivial (it contains $L^\infty$) and is a Banach space
with the norm it inherits from $\Lin(W_1^2(\Real^{d});L^2(\Real^{d}))$:
\begin{equation}
\|f\|_{M(W_1^2\to L^2)} \defeq \sup\setof{\|f\psi\|_{L^2}}{\|\psi\|_{W_1^2} = 1}
\end{equation}

Therefore, we consider
\hbox{${\bm A} \in \vec{M}(W_1^2(\Real^{3});L^2(\Real^{3}))$}.
One needs to check that this lifts from 3-dimensional to $3N$-dimensional
space properly, but that is simple: abbreviating $y\equiv (x_2,\ldots,x_N)$,
\begin{align}
  \int |{\bm A}(x_1)\psi(x_1,y)|^2 & \, dx_1
\nonumber
\\ &  \le \|{\bm A}\|_{\vec{M}(W_1^2\to L^2)}
 \int |\nabla_1 \psi(x_1,y)|^2 \, dx_1.
  \nonumber
\end{align}
Integration over $y$ shows that the norm is independent of $N$.

G-holomorphy has nothing to do with the topology of the space over which
${\bm A}$ ranges, so it follows for
$\vec{M}(W_1^2(\Real^{3N});L^2(\Real^{3N}))$
just as for bounded vector potentials.
Local boundedness follows from a calculation much like
(\ref{eq:k-A-tiny-pert}):
\begin{align}
\Big|\kfrm{{\bm A}+{\bm a}}[\psi] - \kfrm{\bm A}[\psi]\Big| 
\le
& \| ({\nabla} - i{\bm A})\psi \|\| {\bm a}\psi \|
\nonumber \\
& + \| ({\nabla} - i\overline{\bm A})\psi \| \| {\bm a}\psi \|
+ \| {\bm a}\psi \|^2.
\nonumber 
\end{align}

This establishes that,
for \hbox{${\bm A}\in\vec{M}(W_1^2(\Real^{3N});L^2(\Real^{3N}))$},
$\kfrm{\bm A} \relbd \kfrm{0}$.
However, the opposite, $\kfrm{0} \relbd \kfrm{\bm A}$,
is problematic in general, although it does hold if
$\|{\bm A}\|_{M(W_1^2 \to L^2)} < 1$. 
 The situation looks at first like what we faced
with \hbox{$u\in L^{3/2}$} for $\ufrm{u}$. However, $L^\infty$ is not dense in
$M(W_1^2 \to L^2)$. The norm is an operator norm and we face the
familiar problem that strong convergence does not imply norm convergence.
Thus, we settle for what is clear,
$\kfrm{0} \sim \kfrm{\bm A}$ for ${\bm A}$ in the unit ball
$B(M(W_1^2 \to L^2))$. 

On one level the preceding is entirely satisfactory. The Sobolev-multiplier
norm is natural. However, one might prefer something more familiar and
easier to work with, such as given in the following Lemma.
\begin{lem}
\label{lem:vec-L3-tiny}
For ${\bm A} \in \vec{L}^{3}(\Real^3)$, \hbox{$\frm{k}_{\bm A} \sim \frm{k}_0$}.
\end{lem}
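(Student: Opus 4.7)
The plan is to show the stronger relation $\kfrm{\bm A} - \kfrm{0} \ktiny \kfrm{0}$, whereupon property~3(a) of Kato-tinyness (stated in Section~\ref{sec:scalar_potential}) immediately gives $\kfrm{\bm A} \sim \kfrm{0}$. The observation that unlocks the argument is elementary but decisive: if ${\bm A}\in\vec{L}^3(\Real^3)$ then $|{\bm A}|^2\in L^{3/2}(\Real^3)$ with $\||{\bm A}|^2\|_{L^{3/2}} = \|{\bm A}\|_{L^3}^2$, so Lemma~\ref{lem:L3/2-tiny} is available to control the quadratic (``diamagnetic'') piece by $\kfrm{0}$.

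First I would expand (\ref{eq:complex-A}), taking care that no complex conjugate falls on ${\bm A}$, to write per-particle
\begin{equation*}
\kfrm{\bm A}[\psi] - \kfrm{0}[\psi] = \sum_\alpha \int \left\{ i{\bm A}(x_\alpha)\cdot(\overline\psi\nabla_\alpha\psi - \psi\nabla_\alpha\overline\psi) + {\bm A}(x_\alpha)\cdot{\bm A}(x_\alpha)\,|\psi|^2 \right\} dx.
\end{equation*}
Then, using $|{\bm A}\cdot{\bm A}|\le|{\bm A}|^2$ together with the AM--GM estimate $2|{\bm A}||\psi||\nabla_\alpha\psi|\le\epsilon|\nabla_\alpha\psi|^2 + \epsilon^{-1}|{\bm A}|^2|\psi|^2$ (valid for any $\epsilon>0$), summing over $\alpha$, and recognizing the resulting sum $\sum_\alpha\int|{\bm A}(x_\alpha)|^2|\psi|^2\,dx$ as the one-body form $\ufrm{|{\bm A}|^2}[\psi]$, I obtain the bound
\begin{equation*}
\bigl|\kfrm{\bm A}[\psi] - \kfrm{0}[\psi]\bigr| \le \epsilon\,\kfrm{0}[\psi] + (1+\epsilon^{-1})\,\ufrm{|{\bm A}|^2}[\psi].
\end{equation*}

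To finish, apply Lemma~\ref{lem:L3/2-tiny} to $|{\bm A}|^2\in L^{3/2}(\Real^3)$ to conclude $\ufrm{|{\bm A}|^2}\ktiny\kfrm{0}$; so for any $\delta>0$ there is $a_\delta$ with $\ufrm{|{\bm A}|^2}\le a_\delta{\bm 1} + \delta\,\kfrm{0}$. Given arbitrary $b>0$, choose $\epsilon=b/2$ and then $\delta=b/(2(1+\epsilon^{-1}))$; substituting into the previous display yields $|\kfrm{\bm A}-\kfrm{0}|\le b\,\kfrm{0} + C_b\,{\bm 1}$ for some constant $C_b$. Since $b$ was arbitrary, this is exactly $\kfrm{\bm A}-\kfrm{0}\ktiny\kfrm{0}$, and property~3(a) delivers $\kfrm{\bm A}\sim\kfrm{0}$.

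There is no serious obstacle here: the argument is essentially mechanical once one notices the chain $\vec{L}^3\to L^{3/2}\to$ Kato-tiny w.r.t.\ $\kfrm{0}$. The only mild technical point is that the first-order cross term is linear (not quadratic) in $\nabla\psi$ and so cannot be fed directly into Lemma~\ref{lem:L3/2-tiny}; it must first be split by the AM--GM trick, which transfers a controllable $\epsilon$-fraction into $\kfrm{0}[\psi]$ and leaves the remainder in the form $\ufrm{|{\bm A}|^2}[\psi]$ that the lemma handles.
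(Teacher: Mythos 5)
Your proof is correct, but it takes a genuinely different route from the paper's. The paper proves the lemma by showing that an $\vec{L}^3$ field is a Sobolev multiplier, chaining H\"older with the Sobolev inequality to get $\|\bm A\psi\|_{L^2}\le c\|\bm A\|_{L^3}\|\psi\|_{W_1^2}$, and then invoking density of bounded vector fields in $\vec{L}^3$ (exactly as in Lemma~\ref{lem:L3/2-tiny}) to split off a bounded piece and drive the $L^3$-norm of the residual, hence the relative bound, below one. You instead expand $\kfrm{\bm A}-\kfrm{0}$ into the paramagnetic (linear in $\bm A$) and diamagnetic (quadratic) pieces, absorb the linear cross term via AM--GM into $\epsilon\,\kfrm{0}+\epsilon^{-1}\ufrm{|\bm A|^2}$, notice that $|\bm A|^2\in L^{3/2}$ with $\||\bm A|^2\|_{L^{3/2}}=\|\bm A\|_{L^3}^2$, and then simply cite Lemma~\ref{lem:L3/2-tiny}. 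Both approaches rely on the same underlying density fact (bounded functions are dense in the relevant $L^p$), but you route through the already-established scalar-potential lemma rather than re-running the multiplier argument, which makes the proof more self-contained and makes the stronger tinyness relation $\kfrm{\bm A}-\kfrm{0}\ktiny\kfrm{0}$ explicit rather than implicit. Minor stylistic note: the estimate $|\bm A\cdot\bm A|\le|\bm A|^2$ for complex vector fields is correct and worth the explicit mention you give it, since $\bm A\cdot\bm A$ (no conjugate) appears precisely because the form is taken holomorphic in $\bm A$.
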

\begin{proof}
Use a H\"older inequality and the Sobolev inequality
cited in Lemma~\ref{lem:L3/2-bound} to obtain
\begin{equation}
  \label{eq:Holder-Sobolev}
  \|{\bm A}\psi\|_{L^2} \le \|{\bm A}\|_{L^3} \|\psi\|_{L^6}
  \le c \|{\bm A}\|_{L^3} \|\psi\|_{W_{1}^2}.  
\end{equation}
Again, just as in Lemma~\ref{lem:L3/2-tiny}, a bounded vector field can be
subtracted from ${\bm A}$ so that the $L^3$ norm of the residual is as small
as desired.
\end{proof}

\subsubsection{Removing redundancy again}

As for scalar potentials, there is also redundancy here, since
$\vec{L}^\infty$ intersects $\vec{L}^3$, and is contained in 
${\vec{M}(W_1^2\to L^2)}$. It can be solved in exactly the same way to
obtain a \RSF\ parameterized over an open set $\calU_{\bm A}$ in 
$\vec{L}^\infty + {\vec{M}(W_1^2\to L^2)}$ or all of $\vec{L}^\infty + \vec{L}^3$.

\subsection{Putting it all together}
\label{sec:put-together}

Here is the summary of preceding investigation.
With lower-bounded locally integrable background potential and interaction (\ref{eq:u0-v0})
and no background vector potential, $\frm{h}_{\bm A,u,v}$ (\ref{eq:total-hamiltonian-form})
is a \RSF\ on {\em all} of
\hbox{$(\vec{L^3} + \vec{L}^\infty)\times (L^{3/2} + L^\infty) \times  (L^{3/2} + L^\infty)$}.
Alternatively, the $L^3$ summand for ${\bm A}$ can be replaced by $M(W_1^2(\Real^{3}))$
and summands $u_0L^\infty(\Real^3)$ and $v_0L^\infty(\Real^3)$ added to the
potential and interaction factors with restriction to an open neighborhood $\calU$
of the origin. The condition to be in $\calU$ does not factorize.


\section{Low-energy Hamiltonians \& eigenstate properties}
\label{sec:eigenvalues}

The previous section was concerned with one component of application,
namely the construction of {\RSF}s useful for nonrelativistic quantum
mechanics. This section and the next tackle the question:
given an \RSF\ $\frm{h}$ defined on $\calU$, what interesting
functions/quantities are holomorphic? To a considerable extent, this
can be fruitfully discussed without reference to any concrete \RSF.
This section uses Riesz-Dunford-Taylor integral methods to discuss
``low-energy Hamiltonians'' in case there is a gap in the spectrum,
i.e., a curve $\Gamma$ in the resolvent set of $H_x$ running top-to-bottom in $\Cmplx$
(recall, we deal in ``Hamiltonians'' which are sectorial but not necessarily
self-adjoint). The part of the spectrum to the left of $\Gamma$ then
corresponds to a bounded Hamiltonian which is holomorphic on some neighborhood
of $x$. 
Properties of nondegenerate eigenstates associated with isolated
eigenvalues are considered in section \ref{sec:rank-1}.
The eigenvalue itself and expectations of all ordinary observables, as well
as of generalized observables such as charge-density and current-density
(when they make sense) are holomorphic.
Some of the material here, primarily Section \ref{sec:Riesz-Dunford-Taylor}
and Prop.~\ref{prop:automatic-holo-Lp} are appealed to in section \ref{sec:free-energy}.

\subsection{Riesz-Dunford-Taylor integrals}
\label{sec:Riesz-Dunford-Taylor}

Recall that one of the main conclusions of Section \ref{sec:families}
was holomorphy of the map $(\zeta,x) \mapsto \Rmap(\zeta,H_x)$.
As a function of the single complex variable $\zeta$, it is natural to
integrate this around contours. The Riesz-Dunford-Taylor
calculus constructs a holomorphic function $f(A)$ of an arbitrary {\em bounded}
operator $A$ by integrating $f(\zeta)\Rmap(\zeta,A)$ around a contour
encircling the entire spectrum $\spec A$, where $f$ is an ordinary holomorphic
function.
Some basic references for this technology are \S III.6 of Kato\cite{Kato}, 
Chap. 6 of Hislop \& Sigal\cite{Hislop+Sigal},
or \S 3.3 of Kadison \& Ringrose\cite{Kadison+Ringrose-I}.
Since we deal with unbounded operators, we cannot do that, but
the idea can be modified for some interesting purposes. 

The first basic idea is that,
if $H$ is a closed operator, $E$ is an isolated eigenvalue,
and ${\Gamma}$ is a simple anticlockwise closed
contour in $\res H$, surrounding $E$ but no other part of $\spec H$,
then
\begin{equation}
\label{eq:P(u)}
P(\Gamma) = -\oint_{\Gamma} \Rmap(\zeta,H) \frac{d\zeta}{2\pi i}.
\end{equation}
is a projection onto the corresponding eigenspace.
For normal (in particular, self-adjoint) operators this is straighforward
as the relevant part of the resolvent looks like $(E-\zeta)^{-1} P$, where
$P$ is an {\em orthogonal} projector onto the eigenspace, so that $P(\Gamma) = P$.
The restriction of a non-normal operator to an eigenspace is generally not
simply a multiple of the identity if the algebraic multiplicity exceeds one.
Consequently, the resolvent generally has higher-order poles.
If the eigenvalue is {\em nondegenerate} however, that cannot happen and
its value can be extracted as
\begin{equation}
\label{eq:ground-energy-integral}
E = -\Tr \oint_{\Gamma} \zeta \Rmap(\zeta, H) \frac{d\zeta}{2\pi i}.
\end{equation}

We can profitably generalize somewhat.
First, we have the basic result
\begin{prop}
\label{prop:Riesz-Dunford-Taylor}
Given: $A\in\Lincl(\sX)$ and $\Gamma$ a simple anticlockwise contour
    in $\res A$, surrounding the part $\sigma$ of $\spec A$.
Then,
\newline\noindent \textnormal{(a)}
\begin{equation}
  \label{eq:P(G)}
P(\Gamma) \defeq -\oint_\Gamma \Rmap(\zeta,A) \frac{d\zeta}{2\pi i}      
\end{equation}
is a projection with $\rng P(\Gamma) \subseteq \dom A$.
\newline\noindent \textnormal{(b)}
\begin{equation}
  \label{eq:A(G)}
A(\Gamma) \defeq -\oint_\Gamma \zeta \Rmap(\zeta,A) \frac{d\zeta}{2\pi i}      
\end{equation}
satisfies $A(\Gamma) = A P(\Gamma) = P(\Gamma) A P(\Gamma)$ (hence maps $\rng P(\Gamma)$ into
itself), annihilates $\ker P(\Gamma)$, 
and its spectrum as an operator on $\rng P(\Gamma)$ is $\sigma$.
\newline\noindent \textnormal{(c)}
More generally, for open $U$ containing $\Gamma$ and the region it surrounds,
and $f$ an ordinary holomorphic function on $U$,
\begin{equation}
  \label{eq:f(A|G)}
f(A|\Gamma) \defeq -\oint_\Gamma f(\zeta) \Rmap(\zeta,A) \frac{d\zeta}{2\pi i}      
\end{equation}
maps $\rng P(\Gamma)$ into itself and annihilates $\ker P(\Gamma)$.
$P(\Gamma) = 1(A|\Gamma)$ and $A(\Gamma) = \Id(A|\Gamma)$ are special cases.
$f\mapsto f(A|\Gamma)$ is a Banach algebra morphism from the space of holomorphic
functions on $U$ (with uniform norm) into $\Lin(\rng P(\Gamma))$.
\end{prop}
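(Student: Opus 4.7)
The plan is to adapt the classical Riesz-Dunford-Taylor functional calculus to the closed (and generally unbounded) operator $A$. All contour integrals converge absolutely in operator norm because $\Gamma$ is compact and $\zeta \mapsto \Rmap(\zeta,A)$ is continuous on the compact set $\Gamma \subset \res A$; moreover, since $\Rmap(\zeta,A)$ takes values in $\dom A$ continuously in the graph norm $\|\cdot\|_A$, the integrals actually converge in $\Lin(\sX;(\dom A,\|\cdot\|_A))$, and this is the key point that allows $A$ to be pushed inside the integral.

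For part (a), idempotency is the standard two-contour trick: pick a second simple closed anticlockwise contour $\Gamma'$ lying in $\res A$ and surrounding $\Gamma$, noting $P(\Gamma)=P(\Gamma')$ by contour deformation in the annulus. Compute $P(\Gamma)P(\Gamma')$ as a double integral, apply the first resolvent identity $\Rmap(\zeta,A)\Rmap(\zeta',A) = [\Rmap(\zeta,A)-\Rmap(\zeta',A)]/(\zeta-\zeta')$, and evaluate by Cauchy's formula: the integral over $\Gamma'$ of $(\zeta-\zeta')^{-1}$ is $-1$ because $\zeta\in\Gamma$ lies inside $\Gamma'$, while the integral over $\Gamma$ of $(\zeta-\zeta')^{-1}$ vanishes because $\zeta'\in\Gamma'$ lies outside $\Gamma$. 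This yields $P(\Gamma)P(\Gamma')=P(\Gamma)$, hence $P(\Gamma)^2=P(\Gamma)$. The range inclusion $\rng P(\Gamma)\subseteq\dom A$ is the graph-norm convergence noted above.

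For part (b), use the algebraic identity $A\Rmap(\zeta,A)=1+\zeta\Rmap(\zeta,A)$ (from $A=(A-\zeta)+\zeta$). Integrating against $-d\zeta/(2\pi i)$ over $\Gamma$, the constant term vanishes on a closed contour, giving $AP(\Gamma)=A(\Gamma)$; pulling $A$ inside is legitimized by the graph-norm convergence. Since $A$ commutes with its resolvent on $\dom A$, one has $P(\Gamma)A\subseteq AP(\Gamma)$, and applying $P(\Gamma)$ on the left together with $P(\Gamma)^2=P(\Gamma)$ yields $P(\Gamma)AP(\Gamma)=A(\Gamma)$. The kernel statement is immediate from $A(\Gamma)=AP(\Gamma)$. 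For the spectrum of $A(\Gamma)$ on $\rng P(\Gamma)$: for $\omega\notin\sigma$, one constructs an explicit inverse by $R_\omega=-\oint_\Gamma(\zeta-\omega)^{-1}\Rmap(\zeta,A)\,d\zeta/(2\pi i)$ and verifies $(A(\Gamma)-\omega)R_\omega=R_\omega(A(\Gamma)-\omega)=P(\Gamma)$ by the same inner/outer two-contour manipulation. Conversely, $A$ restricts compatibly to the topological splitting $\sX=\rng P(\Gamma)\oplus\ker P(\Gamma)$ into two closed pieces whose spectra partition $\spec A$ into $\sigma$ and $\spec A\setminus\sigma$; thus every $\omega\in\sigma$ lies in the spectrum of the $\rng P(\Gamma)$-piece.

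Part (c) is the canonical Riesz-Dunford construction: the relations $f(A|\Gamma)P(\Gamma)=f(A|\Gamma)=P(\Gamma)f(A|\Gamma)$ follow from inner/outer contour computations analogous to part (a), placing $f(A|\Gamma)$ inside $\Lin(\rng P(\Gamma))$; multiplicativity $(fg)(A|\Gamma)=f(A|\Gamma)g(A|\Gamma)$ is the textbook two-contour calculation based on the first resolvent identity; linearity and continuity with respect to the uniform norm on $U$ are immediate from the $ML$-estimate. The main obstacle throughout is handling the unboundedness of $A$: essentially every step depends on the observation that the resolvent integrals converge in the graph norm, so that $A$ passes through them as a continuous operator from $(\dom A,\|\cdot\|_A)$ to $\sX$, and on the careful bookkeeping of the two invariant subspaces and their restricted spectra in part (b).
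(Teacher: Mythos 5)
The paper itself simply cites the literature (Hislop \& Sigal, Prop.~6.9, for the projection and $A(\Gamma)$; Kadison \& Ringrose for the Banach algebra morphism), whereas you reconstruct the classical Riesz--Dunford--Taylor argument from scratch. That is a legitimate alternative; it is longer but self-contained, and your central technical observation is exactly the right one: since $A\Rmap(\zeta,A) = 1 + \zeta\Rmap(\zeta,A)$ is norm-continuous in $\zeta$, the contour integral converges not merely in $\Lin(\sX)$ but in $\Lin(\sX;(\dom A,\|\cdot\|_A))$, which is what licenses both $\rng P(\Gamma)\subseteq\dom A$ and pulling the closed operator $A$ through the Bochner integral. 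The two-contour computation for idempotency (inner integral gives $-1$, outer gives $0$) is correct as stated.

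Two places in part (b), however, are thinner than they should be. First, the proposed inverse $R_\omega = -\oint_\Gamma (\zeta-\omega)^{-1}\Rmap(\zeta,A)\,d\zeta/(2\pi i)$ satisfies $(A(\Gamma)-\omega)R_\omega = P(\Gamma)$ only when $\omega$ lies \emph{outside} $\Gamma$: the residue calculation gives
$(A-\omega)R_\omega = -\oint_\Gamma (\zeta-\omega)^{-1}\tfrac{d\zeta}{2\pi i} + P(\Gamma)$,
and the first integral is $0$ for $\omega$ exterior but $1$ for $\omega$ interior, in which case you get $P(\Gamma)-\Id$ instead. For $\omega\notin\sigma$ but enclosed by $\Gamma$ you must either first deform $\Gamma$ to a smaller contour around $\sigma$ excluding $\omega$ (which is possible since $\omega\in\res A$), or exploit the sign flip and interpret $-R_\omega$ as an inverse on $\ker P(\Gamma)$ rather than on $\rng P(\Gamma)$. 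Second, your converse direction asserts as given that the spectra of the restrictions to $\rng P(\Gamma)$ and $\ker P(\Gamma)$ \emph{partition} $\spec A$ into $\sigma$ and $\spec A\setminus\sigma$. What the forward direction actually delivers is only $\spec(A|_{\rng P(\Gamma)})\subseteq\sigma$; to close the argument you also need $\sigma\cap\spec(A|_{\ker P(\Gamma)}) = \varnothing$, which is precisely the sign-flipped $R_\omega$ computation just noted, combined with the fact that $\spec A$ is the union of the two restricted spectra. Both repairs are routine, but as written the spectrum claim is asserted rather than proved.
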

\begin{proof}
  For the parts concerning $P(\Gamma)$ and $A(\Gamma)$,
  see Hislop \& Sigal\cite{Hislop+Sigal}, Prop.~6.9. For the
  Banach algebra aspects, see Kadison and Ringrose.
\end{proof}

We are not nearly so interested in varying $f$ in (\ref{eq:f(A|G)}), however,
as in varying $A$ for a few simple cases of $f$, principally $1$ and $\Id$.
\begin{thm}
  \label{prop:f(y|G)-holo}
Given \RSF\ $\frm{h}$ and
simple closed contour $\Gamma \subset \res H_x$,
there is a neighborhood $\calW$ of $x$ such that
\hbox{$y\in\calW\;\Rightarrow\; \Gamma\subset \res H_y$} and
for each $f$ holomorphic on and inside $\Gamma$,
\hbox{$y \mapsto f(H_y|\Gamma) \colon \calW \to\Lin(\sH)$} is holomorphic.
\end{thm}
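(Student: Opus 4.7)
The plan has three steps: produce the neighborhood $\calW$, write $f(H_y|\Gamma)$ as a Bochner integral on $\calW$, and verify its holomorphy via Thm.~\ref{thm:GTHZ} combined with the weak-operator criterion Prop.~\ref{prop:WO-holo}.

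First, by Cor.~\ref{cor:Rmap-holo} the $\Rmap$-map is holomorphic on the open set $\Omega\subset\Cmplx\times\calU$ of (\ref{eq:Omega-def}). By hypothesis $\Gamma\times\{x\}\subset\Omega$, and since $\Gamma$ is compact, a standard tube-neighborhood argument produces an open $\calW\ni x$ with $\Gamma\times\calW\subset\Omega$. For each $y\in\calW$ the entire contour then lies in $\res H_y$, and
\[
f(H_y|\Gamma)=-\oint_\Gamma f(\zeta)\,\Rmap(\zeta,H_y)\,\frac{d\zeta}{2\pi i}
\]
is well-defined as a Bochner integral in $\Lin(\sH)$, as the integrand is jointly continuous (indeed holomorphic) on $\Gamma\times\calW$. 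On any compact $K\subset\calW$, this joint continuity on the compact set $\Gamma\times K$ yields a uniform norm bound on the integrand, whence $\|f(H_y|\Gamma)\|$ is uniformly bounded on $K$; thus $y\mapsto f(H_y|\Gamma)$ is locally bounded.

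For G-holomorphy, I appeal to Prop.~\ref{prop:WO-holo}(c). Fix $\phi,\psi\in\sH$ and an affine line $\lambda\mapsto y_0+\lambda w$ passing through a base point of $\calW$, and consider
\[
F(\lambda)=-\oint_\Gamma f(\zeta)\,\inpr{\phi}{\Rmap(\zeta,H_{y_0+\lambda w})\psi}\,\frac{d\zeta}{2\pi i}
\]
for $\lambda$ in a small disk about $0$. The integrand is jointly continuous in $(\zeta,\lambda)$ and, by Cor.~\ref{cor:Rmap-holo}, holomorphic in $\lambda$ for each fixed $\zeta\in\Gamma$. For any triangular contour $T$ in the $\lambda$-plane, Fubini's theorem (two compact integration domains, continuous integrand) justifies
\[
\oint_T F(\lambda)\,d\lambda=-\oint_\Gamma f(\zeta)\Bigl[\oint_T\inpr{\phi}{\Rmap(\zeta,H_{y_0+\lambda w})\psi}\,d\lambda\Bigr]\frac{d\zeta}{2\pi i}=0,
\]
the inner integral vanishing by Cauchy's theorem. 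Morera's theorem then yields holomorphy of $F$. This establishes weak-operator G-holomorphy of $y\mapsto f(H_y|\Gamma)$; combined with the local boundedness, Prop.~\ref{prop:WO-holo}(c) delivers holomorphy into $\Lin(\sH)$.

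The only real obstacle is the Fubini/Morera interchange, and it is essentially routine here because both $\Gamma$ and any Morera triangle $T$ are compact and the integrand is continuous on their product.
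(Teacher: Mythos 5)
Your proposal is correct and fleshes out exactly the one-line argument the paper gives (``By compactness of $\Gamma$ and holomorphy of $(\zeta,y)\mapsto \Rmap(\zeta,H_y)$''): a tube neighborhood from compactness of $\Gamma$, a Bochner-integral representation, local boundedness from joint continuity on a compact product, and differentiation under the integral sign via Fubini/Morera plus Prop.~\ref{prop:WO-holo}. A marginally shorter finish would be to approximate the contour integral by Riemann sums --- each partial sum is a finite linear combination of the holomorphic maps $y\mapsto\Rmap(\zeta_i,H_y)$, and the convergence is uniform on compact subsets of $\calW$ by the same joint-continuity bound --- and then invoke Prop.~\ref{prop:convergent-sequences}, which sidesteps the Fubini/Morera interchange altogether; but both routes rest on the same two facts and are essentially equivalent.
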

\begin{proof}
By compactness of $\Gamma$ and holomorphy of
  \hbox{$(\zeta,y)\mapsto \Rmap(\zeta,H_y)$}.  
\end{proof}

\subsection{Low-energy Hamiltonians}

No contour can be drawn around the entire spectrum of an unbounded operator $H_x$.
However, since $H_x$ is bounded below, it might be possible to surround the part of $\spec H_x$
in some left-half-plane, if there is a gap.
That such a contour will continue to surround the ``low energy'' part of the spectrum
when $x$ is perturbed is not immediately evident: Each $H_y$ is bounded below,
but is it possible that $\spec H_y$ has a part that drifts off to $-\infty$ as $y\to x$?
Fortunately, such pathology is ruled out by Lemma~\ref{lem:sectorial-usc}, which says that
a slight enlargement of a sector for one member of a \RSF\ is a sector for all sufficiently
close members.

\begin{figure}
  \centering
\includegraphics[width=40mm]{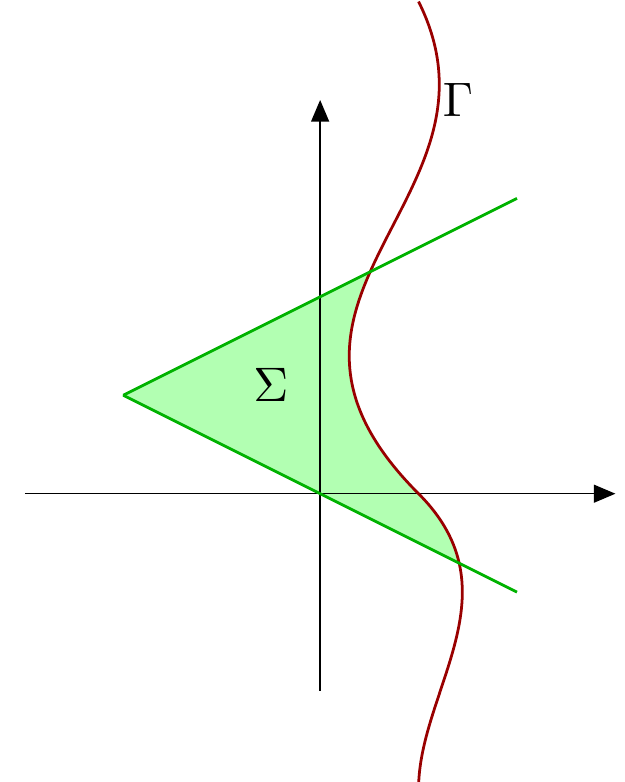}  
\caption{The concept of right-boundary.
  $\Sigma$ is a sector for $\Num A$, and the cross-hatched regions represent $\spec A$.
  The green region, containing all of $\spec A$ to the left of $\Gamma$,
  is surrounded by a contour bordered by parts of $\Gamma$ and the edges
  of $\Sigma$, and the precise choice of $\Sigma$ is irrelevant for Riesz-Dunford-Taylor
  integrals as in (\ref{eq:f(A|G)}). 
}
  \label{fig:right-boundary}
\end{figure}

More generally than a vertical line, we may start with a 
continuous curve $\Gamma \subset \Cmplx$ such that each horizontal line
\hbox{$\mathrm{Im}\ z = $ constant} intersects $\Gamma$ in exactly one point.
In other words, $\Gamma$ goes from bottom to top of the plane without overhangs,
as illustrated in Fig.~\ref{fig:right-boundary}.
Such a curve, with upward orientation, will be called a {\it right-boundary}.
Suppose $\Gamma$ is a right-boundary contained in $\res H_x$, and
let $\Sigma$ be a sector for $\frm{h}_x$ with vertex to the left of $\Gamma$
(Fig.~\ref{fig:right-boundary}).
Then we may form a closed contour by running along the lower edge of
$\Sigma$ away from the vertex until meeting $\Gamma$, then running
upward along $\Gamma$ until meeting the upper edge of $\Sigma$, and
then back to the vertex.
This contour, called $\tilde{\Gamma}$, encircles all the numerical range of
$\frm{h}_x$ to the left of $\Gamma$, hence the part of $\spec H_x$ in that region.
And, therefore, according to the preceding paragraph, $\tilde{\Gamma}$ also
encloses all of $\spec H_y$ lying to the left of $\Gamma$, for $y$ in some neighborhood
of $x$. Now we extend the notation in
(\ref{eq:P(G)}), (\ref{eq:A(G)}), and (\ref{eq:f(A|G)})
(as long as $f$ is holomorphic on the region to the left of $\Gamma$),
writing for instance $f(H_y|\Gamma)$ for the integral taken around $\tilde{\Gamma}$.
The point is that it does not matter how $\Gamma$ is completed to a closed contour as long
as all the spectrum to the left of $\Gamma$ is enclosed.
Since that can always be done (assuming $\Gamma \subset \res H_x$), the notation is
justified.

\subsection{Schatten classes}
\label{sec:Schatten}

The preceding part of this Section showed how we get a variety of holomorphic
maps $\Arr{\calU}{f}{\Lin(\sH)}$. What if the image of $f$ happens to be
in some restricted class of operators which has its own Banach space structure,
for instance, the trace-class operators $\Lin^1(\sH)$, the Hilbert-Schmidt
operators $\Lin^2(\sH)$, or more generally a Schatten $p$-class $\Lin^p(\sH)$?
Nearly automatic holomorphy in these situations is shown in 
Prop.~\ref{prop:automatic-holo-Lp} below.
Only the trace-class $\Lin^1(\sH)$ is used in this Section,
but other Schatten classes $\Lin^p(\sH)$ will be put to work in Section \ref{sec:free-energy}.

First, we recall some basic facts about the Schatten
$p$-classes\cite{Schatten,Gohberg+Krein,Simon-trace-ideals} that we will use.
\begin{defn}
For $1 \le p < \infty$, $\Lin^p(\sH)$ is the set of compact operators $T$ such that
$|T|^p\in\Lin^1(\sH)$, where \hbox{$|T| = (T^*T)^{1/2}$}.
\end{defn}
\begin{prop}
  \label{prop:Schatten}
The classes $\Lin^p(\sH)$ have the following properties.  
\begin{enumerate}
\item 
Equipped with the norm
\hbox{$\|T\|_p = (\Tr |T|^p)^{1/p}$}, $\Lin^p(\sH)$ is a Banach space.
\item
$\Lin^p(\sH)$ is also a two-sided $*$-ideal: \hbox{$\|ACB\|_p \le \|A\| \|C\|_p \|B\|$}
and it contains $C^*$ whenever it contains $C$.
\item
$\Lin^1(\sH)$ is the dual space of the compact operators $\Lin_0(\sH)$
with the usual operator norm,
while for \hbox{$1 < p < \infty$}, $\Lin^p(\sH)$ realizes the dual of $\Lin^{q}(\sH)$,
where \hbox{$p^{-1} + q^{-1} = 1$}, via the pairing \hbox{$(S,T) \mapsto \Tr ST$}.
On the other hand, the finite-rank operators are dense in $\Lin_0(\sH)$
as well as $\Lin^p(\sH)$ for $1 < p < \infty$.
Thus, {\em every} $\Lin^p(\sH)$ ($1\le p < \infty$) is the dual space of
a Banach space in which the finite-rank operators are dense.
\end{enumerate}
\end{prop}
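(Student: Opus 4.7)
The plan is to treat these as classical facts and defer to the standard references (Schatten, Gohberg--Krein, Simon) cited immediately before the proposition, sketching only the structural ideas. Everything rests on the singular value decomposition of a compact operator $T = \sum_n s_n(T)\, |f_n\rangle\langle e_n|$ and the min-max characterization
\[
s_n(T) = \inf_{\dim M = n-1} \sup_{\psi \perp M,\,\|\psi\|=1} \|T\psi\|,
\]
which give $\|T\|_p^p = \sum_n s_n(T)^p$ and control the behavior of $s_n$ under composition.

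For item (1), homogeneity and definiteness of $\|\cdot\|_p$ are immediate, while the triangle inequality is the nontrivial step: it follows from Ky Fan's partial-sum inequality $\sum_{k\le n} s_k(S+T) \le \sum_{k\le n} s_k(S) + \sum_{k\le n} s_k(T)$ together with a majorization (Hardy--Littlewood--P\'olya) argument to pass to the $\ell^p$ level. Completeness then follows because $\|T\| = s_1(T) \le \|T\|_p$, so any $\|\cdot\|_p$-Cauchy sequence converges in operator norm to some compact $T$, and Fatou applied to the singular values places $T$ in $\Lin^p(\sH)$ with the correct norm.

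For item (2), the key estimate $s_n(ACB) \le \|A\|\,\|B\|\,s_n(C)$ is a direct consequence of the min-max formula; raising to the $p$-th power and summing gives $\|ACB\|_p \le \|A\|\,\|C\|_p\,\|B\|$. The $*$-closure is trivial because $T^*T$ and $TT^*$ share their nonzero spectra, hence $|T|$ and $|T^*|$ have identical singular values.

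For item (3), the trace H\"older inequality $|\Tr(ST)| \le \|S\|_q\|T\|_p$ (again SVD plus the scalar H\"older inequality) embeds $\Lin^p(\sH)$ isometrically into the dual of $\Lin^q(\sH)$ via the pairing $T \mapsto \Tr(\,\cdot\, T)$. Surjectivity is the real obstacle and is handled differently depending on $p$: for the dual of $\Lin_0(\sH)$ one extracts the representing trace-class operator from the sesquilinear form $(\psi,\varphi) \mapsto \phi(|\psi\rangle\langle\varphi|)$ associated with a bounded functional $\phi$; the cases $1 < p < \infty$ then follow by interpolation-style arguments as in Simon's monograph. Density of finite-rank operators in $\Lin_0(\sH)$ and in each $\Lin^p(\sH)$ for $p < \infty$ is a direct consequence of truncating the SVD after $n$ terms, controlled by $\sum_{k>n} s_k(T)^p \to 0$.
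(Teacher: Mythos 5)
The paper offers no proof of this proposition: it is introduced as a recollection of standard facts, with pointers to Schatten, Gohberg--Krein, and Simon's trace-ideals monograph, and then used without further argument. Your sketch is a faithful outline of the arguments one finds in those references --- Ky Fan's partial-sum inequality together with Hardy--Littlewood--P\'olya majorization for the triangle inequality, the singular-value monotonicity $s_n(ACB)\le\|A\|\,\|B\|\,s_n(C)$ from the min-max formula for the ideal estimate, and trace H\"older plus a representation-of-functionals argument for the dualities and density.

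One step in item (1) is elided: Fatou applied to the singular values of the operator-norm limit $T$ only places $T$ in $\Lin^p(\sH)$ with $\|T\|_p \le \liminf_n \|T_n\|_p$; it does not by itself give $\|T_n - T\|_p \to 0$. To close this, apply the same Fatou/$\liminf$ argument to $T_n - T_m$ and send $m\to\infty$, so that the Cauchy bound on $\|T_n - T_m\|_p$ is inherited by $\|T_n - T\|_p$. Since both you and the paper are deferring the details to the cited references, this is a matter of emphasis rather than a substantive gap.
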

\begin{prop}
  \label{prop:automatic-holo-Lp}
  Given: \hbox{$\Arr{\calU}{f}{\Lin(\sH)}$} holomorphic.
  If $f$ is a locally bounded map into $\Lin^p(\sH)$ ($1\le p < \infty$),
  then $f$ is holomorphic into $\Lin^p(\sH)$.
\end{prop}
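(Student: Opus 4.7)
The plan is to invoke Proposition~\ref{prop:wk*-holo} after recognizing $\Lin^p(\sH)$ as a dual space. Proposition~\ref{prop:Schatten}(3) supplies the predual $\sY$: take $\sY = \Lin_0(\sH)$ (compact operators, operator norm) when $p=1$, and $\sY = \Lin^q(\sH)$ with $p^{-1}+q^{-1}=1$ when $1<p<\infty$. In each case $\Lin^p(\sH)\simeq \sY^{*}$ through the pairing $(S,T)\mapsto \Tr(ST)$, and crucially the finite-rank operators form a norm-dense subset of $\sY$. Local boundedness of $f$ into $\Lin^p(\sH)$ is exactly the stated hypothesis, so by Proposition~\ref{prop:wk*-holo} it remains to establish dense weak-* G-holomorphy.

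To verify this, I would test against the dense set of finite-rank elements of $\sY$. Any such $F$ can be written $F=\sum_{i=1}^{n}\outpr{\phi_i}{\psi_i}$, so that
\begin{equation}
\Tr(F f(x)) = \sum_{i=1}^{n}\inpr{\psi_i}{f(x)\phi_i}.
\end{equation}
Each summand is the composition of $f$, which is holomorphic as a map into $\Lin(\sH)$, with the continuous linear functional $T\mapsto \inpr{\psi_i}{T\phi_i}$ on $\Lin(\sH)$; composition of a holomorphic map with a bounded linear map is holomorphic by the permanence properties listed in Section~\ref{sec:holomorphy}. A finite sum of scalar holomorphic functions is holomorphic, hence in particular G-holomorphic, so $x\mapsto \Tr(F f(x))$ is G-holomorphic for every finite-rank $F$. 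Combined with the given local boundedness, Proposition~\ref{prop:wk*-holo} then delivers holomorphy of $f$ as a map into $\Lin^p(\sH)$.

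There is no real obstacle; once the correct predual is identified the argument is essentially bookkeeping, since local boundedness bridges exactly the gap between the weak-* G-holomorphy that comes free from $\Lin(\sH)$-holomorphy of $f$ and norm-holomorphy into $\Lin^p(\sH)$. The one subtlety worth flagging is that a direct appeal to weak holomorphy via Proposition~\ref{prop:wk-holo} is unavailable in the $p=1$ case, since $\Lin^1(\sH)$ is not reflexive; the weak-* formulation of Proposition~\ref{prop:wk*-holo} is what covers all $1\le p<\infty$ uniformly.
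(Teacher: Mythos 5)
Your proposal is correct and follows essentially the same route as the paper: identify $\Lin^p(\sH)$ as the dual of a space in which the finite-rank operators are dense (Prop.~\ref{prop:Schatten}(3)), test weak-$*$ G-holomorphy against finite-rank $F$ by writing $\Tr(Ff(x))$ as a finite sum of matrix elements $\inpr{\psi_i}{f(x)\phi_i}$, and then invoke Prop.~\ref{prop:wk*-holo} together with the local boundedness hypothesis. Your closing remark about why Prop.~\ref{prop:wk-holo} would fail for $p=1$ is a correct and useful observation, though not strictly needed since the weak-$*$ formulation covers all cases.
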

\begin{proof}
For $B$ finite-rank, $\Tr f(x)B$ is
a finite sum of terms of the form $\inpr{\phi_\alpha}{f(x) \psi_\alpha}$, each of
which is holomorphic by hypothesis.
Hence, the result follows from the remark about density of such operators in
the pre-dual which precedes the Proposition together with  
Prop.~\ref{prop:wk*-holo}.
\end{proof}

\subsection{Finite rank}
\label{sec:finite-rank}

Prop.~\ref{prop:automatic-holo-Lp} does not quite give holomorphy
due to the hypothesis of local boundedness.
However, if we specialize to Riesz-Dunford-Taylor integrals and ask
that $P_x(\Gamma)$ have finite rank,
holomorphy into $\Lin^1(\sH)$ follows without an explicit local boundedness
assumption.
The next two well-known Lemmas encapsulate the simple key observations.
\begin{lem}
  \label{lem:rank-stability}
  If $P$ and $Q$ are projections (not necessarily orthogonal),
\hbox{$\|P-Q\| < 1$} implies that $\rank P = \rank Q$.
\end{lem}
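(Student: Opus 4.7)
The plan is to produce direct injections $\rng Q \hookrightarrow \rng P$ and $\rng P \hookrightarrow \rng Q$, from which the equality $\rank P = \rank Q$ follows at once without any reference to spectral calculus or homotopy of projections.

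First I would consider the restriction $P|_{\rng Q}\colon \rng Q \to \rng P$, which is well-defined since $P$ sends everything into $\rng P$. To verify injectivity, suppose $q\in\rng Q$ satisfies $Pq=0$. Because $Q$ is a projection and $q\in\rng Q$, one has $Qq=q$, hence
\begin{equation*}
(P-Q)q \,=\, Pq - Qq \,=\, -q.
\end{equation*}
Therefore $\|q\| = \|(P-Q)q\| \le \|P-Q\|\,\|q\|$, and the hypothesis $\|P-Q\|<1$ forces $q=0$. Thus $P|_{\rng Q}$ is injective.

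Swapping the roles of $P$ and $Q$ gives, by the identical argument, that $Q|_{\rng P}\colon \rng P \to \rng Q$ is injective as well. Hence $\dim \rng Q \le \dim \rng P$ and $\dim \rng P \le \dim \rng Q$. In the finite-rank context in which the lemma is actually invoked (Section~\ref{sec:finite-rank}), this immediately yields $\rank P = \rank Q$; more generally one appeals to the Cantor--Bernstein theorem for cardinals.

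The argument has no real obstacle: the only subtle point is bookkeeping about what ``rank'' means when the ranges are infinite-dimensional, but since the applications require only finite-rank $P(\Gamma)$ this is inconsequential in context. I would therefore state and prove the lemma just as above, without further elaboration.
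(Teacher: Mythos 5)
Your proposal is correct and takes essentially the same route as the paper: both establish injectivity of $P|_{\rng Q}$ via the estimate $\|P-Q\|<1$ (you argue by contradiction from $Pq=0$, the paper directly bounds $\|P\phi\|$ from below, but these are the same calculation) and then invoke symmetry in $P$ and $Q$ to conclude.
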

\begin{proof}
If $\rng Q \ni \phi \mapsto P\phi$ is injective, then
\hbox{$\rank P \ge \rank Q$}, which suffices by symmetry of the situation.
However, for $\phi\in\rng Q$,
\begin{equation}
  \nonumber
\|P\phi\| = \|Q\phi + (P-Q)\phi\| \ge \|\phi\| - \|P-Q\| \|\phi\| > 0. 
\end{equation}
\end{proof}
\begin{lem}
  \label{lem:cts-into-trace-class}
  A continuous function into $\Lin(\sH)$ with range in operators
  of rank $\le N < \infty$ is actually continuous into $\Lin^1(\sH)$.
\end{lem}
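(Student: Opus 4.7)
The plan is to exploit the elementary inequality $\|T\|_1 \le (\rank T)\,\|T\|$ which holds for any finite-rank $T$ (since the trace norm is the sum of singular values, of which there are at most $\rank T$, each bounded by $\|T\|$). Given this, continuity in operator norm will upgrade automatically to continuity in trace norm, provided one can control the rank of differences $f(x) - f(x_0)$.

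Concretely, I would fix $x_0$ in the domain and note that for any $x$, both $f(x)$ and $f(x_0)$ have rank at most $N$, so that
\begin{equation}
\rank\bigl(f(x) - f(x_0)\bigr) \le \rank f(x) + \rank f(x_0) \le 2N,
\nonumber
\end{equation}
because the range of a sum is contained in the sum of the ranges. Applying the basic inequality to the finite-rank operator $f(x) - f(x_0)$ then gives
\begin{equation}
\|f(x) - f(x_0)\|_1 \le 2N \, \|f(x) - f(x_0)\|,
\nonumber
\end{equation}
and continuity of $f$ into $\Lin(\sH)$ makes the right-hand side go to zero as $x \to x_0$.

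There is really no obstacle here; the only thing worth verifying carefully is the inequality $\|T\|_1 \le (\rank T)\,\|T\|$, which is immediate from the singular value decomposition: if $T = \sum_{i=1}^{r} \sigma_i \, |u_i\rangle\langle v_i|$ with $r = \rank T$ and $\sigma_1 \ge \cdots \ge \sigma_r \ge 0$, then $\|T\| = \sigma_1$ and $\|T\|_1 = \sum_i \sigma_i \le r \sigma_1$. Everything else is a one-line consequence of the rank-additivity of sums and the continuity hypothesis, so the lemma falls out immediately.
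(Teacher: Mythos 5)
Your proof is correct and takes essentially the same approach as the paper, which records exactly the inequality $\|A-B\|_1 \le (\rank A + \rank B)\|A-B\|$ as the whole argument; you have simply spelled out the singular-value justification and the rank-additivity step.
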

\begin{proof}
{$\|A - B\|_1 \le (\rank A + \rank B) \|A - B\|$}.
\end{proof}

\begin{prop}
\label{prop:holo-into-L1}
$\rank P_x(\Gamma) = N < \infty$ implies that $x$ has a neighborhood $\calW$ 
such that $\rank P_y(\Gamma) = N$ for every $y\in\calW$, and
\hbox{$y \mapsto f(H_y|\Gamma)\colon \calW \to\Lin^1(\sH)$}
is holomorphic.
\end{prop}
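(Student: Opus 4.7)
The plan is to combine the three Lemmas and Propositions of Sections \ref{sec:Schatten} and \ref{sec:finite-rank} with the already-established holomorphy of the $\Lin(\sH)$-valued map $y \mapsto f(H_y|\Gamma)$ from Thm.~\ref{prop:f(y|G)-holo}. The argument proceeds in three stages: first stabilize the rank of $P_y(\Gamma)$ in a neighborhood of $x$; second, use that bound to upgrade local boundedness from $\Lin(\sH)$ to $\Lin^1(\sH)$; finally, invoke Prop.~\ref{prop:automatic-holo-Lp}.

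First, apply Thm.~\ref{prop:f(y|G)-holo} with $f \equiv 1$ to conclude that $y \mapsto P_y(\Gamma)$ is holomorphic, hence continuous, from a neighborhood of $x$ into $\Lin(\sH)$. Continuity at $x$ provides an open $\calW$ on which $\|P_y(\Gamma) - P_x(\Gamma)\| < 1$, and Lemma~\ref{lem:rank-stability} then forces $\rank P_y(\Gamma) = N$ throughout $\calW$. By shrinking $\calW$ if necessary, we may also assume $\Gamma \subset \res H_y$ on all of $\calW$ (again by Thm.~\ref{prop:f(y|G)-holo}), so that $f(H_y|\Gamma)$ is defined on $\calW$.

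Next, by part~(c) of Prop.~\ref{prop:Riesz-Dunford-Taylor}, $f(H_y|\Gamma)$ maps $\rng P_y(\Gamma)$ into itself and annihilates $\ker P_y(\Gamma)$, so $\rank f(H_y|\Gamma) \le N$ for every $y \in \calW$. Combining this with the elementary inequality $\|A\|_1 \le (\rank A)\,\|A\|$ (the same estimate underlying Lemma~\ref{lem:cts-into-trace-class}), local boundedness of $y \mapsto f(H_y|\Gamma)$ into $\Lin(\sH)$ --- which is automatic from holomorphy into $\Lin(\sH)$ --- upgrades to local boundedness into $\Lin^1(\sH)$.

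Finally, apply Prop.~\ref{prop:automatic-holo-Lp} with $p=1$ to conclude that $y \mapsto f(H_y|\Gamma)$ is holomorphic as a map into $\Lin^1(\sH)$. The only step with any subtlety is the uniform rank bound on $\calW$, and this is handled cleanly by Lemma~\ref{lem:rank-stability} once continuity of $y \mapsto P_y(\Gamma)$ in operator norm has been established; everything else is a direct assembly of pieces already in place.
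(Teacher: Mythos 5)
Your proposal is correct and follows essentially the same route as the paper's proof: Lemma~\ref{lem:rank-stability} stabilizes the rank on a neighborhood, Prop.~\ref{prop:Riesz-Dunford-Taylor}(c) gives the rank bound on $f(H_y|\Gamma)$, and the estimate underlying Lemma~\ref{lem:cts-into-trace-class} upgrades to $\Lin^1(\sH)$. You are a bit more explicit than the paper in two places, both to your credit: you spell out that continuity of $y\mapsto P_y(\Gamma)$ in operator norm (via Thm.~\ref{prop:f(y|G)-holo} with $f\equiv 1$) is needed before Lemma~\ref{lem:rank-stability} can apply, and you explicitly close the loop with Prop.~\ref{prop:automatic-holo-Lp}, which the paper's proof leaves implicit. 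You also correctly state $\rank f(H_y|\Gamma)\le N$ where the paper writes $=N$ (equality can fail, e.g.\ for $f$ vanishing on part of the enclosed spectrum), and $\le N$ is all that is needed.
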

\begin{proof}
Lemma~\ref{lem:rank-stability} ensures existence of $\calW$ such that
$\rank P_y(\Gamma) = N$ for $y\in\calW$.
Therefore $f(H_y|\Gamma)$ also has rank $N$ since it
maps $\rng P_y(\Gamma)$ into itself while annihilating $\ker P_y(\Gamma)$.
Lemma~\ref{lem:cts-into-trace-class} then completes the proof.
\end{proof}

\subsection{Eigenstate perturbation}
\label{sec:rank-1}

The extreme case is $\rank P_x(\Gamma)=1$.
Then we are in the venerable context of eigenstate perturbation.
A general rank-1 projection can be written as
\begin{equation}
\outpr{\phi}{\eta},\;\text{with}\; \inpr{\eta}{\phi} = 1\;\text{and } \|\phi\|=1,
\end{equation}
where $\phi$ and $\eta$ are determined up to a {common} phase factor $e^{i\theta}$.
Suppose, now, that ${\frm{h}}$ is a \RSF\ 
that $H_x$ has an isolated nondegenerate eigenvalue at $E_x$, and let
$\Gamma$ be a contour which separates $E_x$ from the rest of $\spec H_x$.
Then,
Prop.~\ref{prop:holo-into-L1} shows that
as $y$ varies in some neighborhood of $x$,
\begin{equation}
  \label{eq:Py(Gamma)}
  P_y(\Gamma) = \outpr{\phi_y}{\eta_y}, \;
  \inpr{\eta_y}{\phi_y} = 1, \; \|\phi_y\|=1,
\end{equation}
and
\begin{equation}
H_y(\Gamma) = E_y \outpr{\phi_y}{\eta_y},  
\end{equation}
with $P_y(\Gamma)$ and $E_y = \Tr H_y(\Gamma)$ holomorphic.
{\it A fortiori}, $E_y$ moves continuously with $y$ as long as it remains
separated from the rest of $\spec H_y$ ---
the {\it isolation condition}, for short.
As $y$ moves along any continuous curve in $\sX$ beginning at $x$ and respecting
the isolation condition $E_y$ can be continuously tracked, but if the path returns
to $x$, we may not return to $E_x$ unless the path can be contracted to a point
without violating the isolation condition.
Therefore, we consider $\calW$, a maximal {\em simply connected} open set
containing $x$ and with the isolation condition satisfied everywhere in $\calW$.
For $y$ in $\calW$, we can simply write $P_y$ and $E_y$, since the particular
choice of $\Gamma$ is immaterial.

Now, $E_y$ is holomorphic as a $\Cmplx$-valued function and
$P_y$ and $H_y$ as $\Lin^1(\sH)$-valued functions, for $y\in\calW$.
Therefore, for any bounded observable $B\in\Lin(\sH)$, its ``expectation''
\begin{equation}
y \mapsto  \Tr B \outpr{\phi_y}{\eta_y} = \inpr{\eta_y}{B\phi_y}  
\end{equation}
is holomorphic on $\calW$. The quotation marks are because this 
coincides with the usual notion of expectation only when $\eta_y=\phi_y$,
e.g., when $H_y$ is self-adjoint.

There are other interesting holomorphic quantities which do not fall into this category,
however. $E_y$ itself,
\begin{equation}
E_y =  \inpr{\eta_y}{H_y\phi_y},
\end{equation}
is one such.
The charge and current density are others when
our parameter space includes scalar and vector potentials.
This is because these quantities are the derivatives of $E_y$ with respect
to scalar and vector potential, respectively.
At a heuristic level, this claim is straightforward, but 
there are delicate details, which we will now check.

\begin{lem}[Hellmann-Feynman]
  \label{lem:Hellmann-Feynman}
  Suppose finite-rank projections $P_y$ and bounded operators $A_y$ depend
  differentiably on parameter $y$, and that $[P_y,A_y]=0$.
  Then $D_y \Tr P_y A_y = \Tr P_y D_yA_y$.
\end{lem}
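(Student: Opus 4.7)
The plan is to apply the product rule to $P_y A_y$ as a trace-class valued function of $y$, and then show that the ``derivative of the projector'' term vanishes by combining the projection identity $P_y^2 = P_y$ with the commutation hypothesis. First I would note that since $P_y$ is finite-rank, Lemma~\ref{lem:cts-into-trace-class} (adapted from continuity to differentiability, which is immediate because local rank is constant by Lemma~\ref{lem:rank-stability}) makes $y\mapsto P_y$ differentiable into $\Lin^1(\sH)$. Because $\Lin^1(\sH)$ is a two-sided ideal (Prop.~\ref{prop:Schatten}), $P_y A_y$ is differentiable into $\Lin^1(\sH)$ with the expected product rule, and since the trace is a bounded linear functional on $\Lin^1(\sH)$ it commutes with differentiation, yielding
\begin{equation*}
D_y \Tr(P_y A_y) = \Tr\!\big((D_y P_y)\, A_y\big) + \Tr\!\big(P_y\, D_y A_y\big).
\end{equation*}

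The core task is to prove $\Tr\!\big((D_y P_y) A_y\big) = 0$. Differentiating $P_y^2 = P_y$ gives $D_y P_y = (D_y P_y) P_y + P_y (D_y P_y)$, and multiplying on the left by $P_y$ yields the key identity $P_y (D_y P_y) P_y = 0$. Applying the same reasoning to $1 - P_y$, whose derivative is $-D_y P_y$, I would also obtain $(1-P_y)(D_y P_y)(1 - P_y) = 0$. Meanwhile, the hypothesis $[P_y, A_y] = 0$ gives the block decomposition
\begin{equation*}
A_y = P_y A_y P_y + (1-P_y) A_y (1-P_y),
\end{equation*}
since $P_y A_y = A_y P_y$ implies that all cross terms drop out.

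Substituting this decomposition and using cyclicity of the trace, each term becomes $\Tr\!\big(A_y\, P_y (D_y P_y) P_y\big)$ or $\Tr\!\big(A_y\, (1-P_y)(D_y P_y)(1-P_y)\big)$, both of which vanish by the two identities above. Combining with the product-rule formula above then gives the result.

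The only delicate point is to justify treating $P_y$ as differentiable in the trace-norm (rather than merely the operator norm) and, correspondingly, applying cyclicity and the product rule within $\Lin^1(\sH)$; this is handled by the constant-rank remark from Lemma~\ref{lem:rank-stability} together with the norm comparison $\|T\|_1 \le (\rank T)\|T\|$ used in Lemma~\ref{lem:cts-into-trace-class}. Once this is in place, everything else is bookkeeping with $P_y^2 = P_y$ and $[P_y, A_y] = 0$, so I expect no serious obstacle.
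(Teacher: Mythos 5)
Your proof is correct and follows essentially the same route as the paper's: both differentiate the projector relation to show that $D_yP_y$ is purely off-diagonal with respect to $\rng P_y \oplus \rng(1-P_y)$, use $[P_y,A_y]=0$ to write $A_y$ in block-diagonal form, and then kill $\Tr\big((D_yP_y)A_y\big)$ by cyclicity. The paper states this more tersely; your added care about differentiability in trace-norm (via the constant-rank and $\|T\|_1 \le (\rank T)\|T\|$ observations) is a reasonable explicit justification of a step the paper leaves implicit.
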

\begin{proof}
($y$ subscripts will be suppressed for notational simplicity)
  Differentiating $P(1-P)=0$, deduce that $DP$ maps $\rng P$ into $\rng (1-P)$ and
  vice versa. Since both $\rng P$ and $\rng (1-P)$ are invariant under $A$, it
  immediately follows that $\Tr (DP)A = 0$ (put $(P + 1 - P)$ on each side and
  use cyclicity of trace).
\end{proof}

Since $H_y(\Gamma)$ is analytic, the preceding Lemma gives 
\begin{align}
  \label{eq:DE-integral}
  D_y E_y|_{y=x}
  &= \inpr{\eta_x}{DH_y(\Gamma)|_x\, \phi_x}
  \nonumber \\
  &= \oint_\Gamma
\inpr{\eta}{D_y\Rmap(\zeta,H_y)\phi} \zeta\frac{d\zeta}{2\pi i}.
\end{align}
($x$ subscripts are being omitted now, for simplicity.)

Now, may we may write
$D\Rmap(\zeta,H_y) {=} \Rmap(\zeta,H_y)D_yH_y\Rmap(\zeta,H_y)$?
A priori, this makes no sense since $H_y$ here is the full ({\em not} projected) operator.
However, if we understand $\Rmap(\zeta,H_y)$ as being in $\Lin(\sHp;\sHm)$
(see Sections \ref{sec:R-map} and \ref{sec:series}), so that
$\ilinpr{\eta}{\Rmap(\zeta,H_y)\phi} {=} \ilinpr{\eta}{(\hat{H}_y-\zeta)^{-1} \phi}$,
all is well.
Here, $\phi$ is considered as an element of $\sHp$, and $\eta$ of $\sHm$.
Then,
  \begin{equation}
D_y\inpr{\eta}{\Rmap(\zeta,H_y)\phi} 
 = \ilinpr{\eta}{(\hat{H}_y-\zeta)^{-1} D_y\hat{H}_y(\hat{H}_y - \zeta)^{-1} \phi}.
\end{equation}
To continue, we need 
\begin{lem}
    \label{lem:eta-eigenvector}
$H_y^*\eta_y = \overline{E}_y\eta_y$ and $\eta_y\in\sHp$.  
\end{lem}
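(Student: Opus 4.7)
The plan is to combine the Riesz--Dunford--Taylor representation of $P_y$ with the identification $H_y^{\,*}=\Opp{0}{0}{\frm{h}_y^*}$ coming from Lemma~\ref{lem:adjoint-ops}. Concretely, I would take the Hilbert-space adjoint of
\begin{equation}
P_y = -\oint_\Gamma \Rmap(\zeta,H_y)\,\frac{d\zeta}{2\pi i}
\nonumber
\end{equation}
and use $\Rmap(\zeta,H_y)^*=\Rmap(\bar\zeta,H_y^{\,*})$ together with the substitution $\omega=\bar\zeta$ (which reverses orientation and conjugates $d\zeta$, producing two compensating sign flips) to arrive at
\begin{equation}
P_y^{\,*}=-\oint_{\Gamma^\dagger}\Rmap(\omega,H_y^{\,*})\,\frac{d\omega}{2\pi i},
\nonumber
\end{equation}
where $\Gamma^\dagger$ is the complex-conjugate contour, oriented anticlockwise around $\bar E_y$. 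Since $\bar E_y$ is an isolated point of $\spec H_y^{\,*}$, Prop.~\ref{prop:Riesz-Dunford-Taylor} identifies $P_y^{\,*}$ as the Riesz projection of $H_y^{\,*}$ at $\bar E_y$, and yields both $\rng P_y^{\,*}\subseteq\dom H_y^{\,*}$ and $H_y^{\,*}P_y^{\,*}=\bar E_y P_y^{\,*}$.

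Since $P_y=\outpr{\phi_y}{\eta_y}$ is rank one, its adjoint is $P_y^{\,*}=\outpr{\eta_y}{\phi_y}$, whose range is $\Cmplx\eta_y$. Applying the previous conclusions to the generator $\eta_y$ gives $\eta_y\in\dom H_y^{\,*}$ and $H_y^{\,*}\eta_y=\bar E_y\eta_y$, which is the first claim. For the second claim, invoke Lemma~\ref{lem:adjoint-ops}: since $\frm{h}_y\in\sct{\calC}$, the operator-adjoint coincides with the form-adjoint, $H_y^{\,*}=\bigl(\Opp{0}{0}{\frm{h}_y}\bigr)^{*}=\Opp{0}{0}{\frm{h}_y^{\,*}}$. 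By Convention~\ref{cnvntn:hats}, $\dom\Opp{0}{0}{\frm{h}_y^{\,*}}\subseteq\sHp$; hence $\eta_y\in\sHp$.

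The principal obstacle is the careful bookkeeping in taking the Hilbert-space adjoint of an operator-valued contour integral: one must correctly track the conjugation of the scalar measure $d\zeta\mapsto\overline{d\zeta}$ and the induced reversal of orientation in passing to $\Gamma^\dagger$, so that the resulting integral is again in the standard form to which Prop.~\ref{prop:Riesz-Dunford-Taylor} applies. Once the adjoint integral is in standard form, the rank-one structure of $P_y$ combined with Lemma~\ref{lem:adjoint-ops} completes the argument without further analysis.
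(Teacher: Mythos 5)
Your proposal is correct, but it takes a different route to the first claim ($H_y^*\eta_y = \overline{E}_y\eta_y$) than the paper does. You dualize the Riesz--Dunford--Taylor integral defining $P_y$: you take the Hilbert-space adjoint of the contour integral, track the conjugation of $d\zeta/2\pi i$ and the orientation reversal under $\omega = \bar\zeta$, identify $P_y^*$ as the Riesz projection of $H_y^*$ at $\bar E_y$, and read off the eigenvalue equation from Prop.~\ref{prop:Riesz-Dunford-Taylor}(b) applied to $H_y^*$. This is clean and conceptually satisfying, but it requires two auxiliary facts you should state explicitly: $\Rmap(\zeta,H)^* = \Rmap(\bar\zeta,H^*)$ and $\spec H^* = \overline{\spec H}$ (so that $\Gamma^\dagger$ surrounds $\bar E_y$ and nothing else, and lies in $\res H_y^*$). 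The paper instead avoids contour-adjoint bookkeeping entirely: it observes that $P_y^*\eta_y = \eta_y$ (immediate from $P_y = \outpr{\phi_y}{\eta_y}$), invokes the decomposition $H_y = P_y H_y + (1-P_y)H_y(1-P_y)$ from Prop.~\ref{prop:Riesz-Dunford-Taylor}(b), and computes directly that $\inpr{\eta_y}{H_y\psi} = E_y\inpr{\eta_y}{\psi}$ for all $\psi\in\dom H_y$, which by definition of the operator adjoint gives $\eta_y \in \dom H_y^*$ with $H_y^*\eta_y = \overline{E}_y\eta_y$. The paper's route is shorter and more elementary; yours buys a structural statement (that $P_y^*$ is itself a Riesz projection of $H_y^*$) which is nice to know but not needed here. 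For the second claim ($\eta_y\in\sHp$), your argument coincides with the paper's: both invoke Lemma~\ref{lem:adjoint-ops} to get $H_y^* = \Opp{0}{0}{\frm{h}_y^*}$, note $\frm{h}_y^*\in\sct{\calC}$, and conclude via Convention~\ref{cnvntn:hats} that the domain of $\Opp{0}{0}{\frm{h}_y^*}$ is contained in $\sHp$.
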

\begin{proof}
First, note that $P_y^*\eta_y = \eta_y$.
Now (Prop.~\ref{prop:Riesz-Dunford-Taylor}),
$H_y = P_y H_y + (1-P_y)H_y(1-P_y)$, and $P_y$ commutes with $H_y$ on $\dom H_y$.
Therefore, 
\begin{align}
  \psi\in\dom H_y & \;\Rightarrow\;
                    \nonumber \\
&  \inpr{\eta_y}{H_y\psi}    
= \inpr{\eta_y}{P_y H_y \psi}    
 = \inpr{\eta_y}{H_y P_y \psi}    
                    \nonumber \\
= & E_y \inpr{\eta_y}{\phi_y}  \inpr{\eta_y}{\psi}
= E_y     \inpr{\eta_y}{\psi}.
\end{align}
This shows that $H_y^*\eta = \overline{E_y} \eta$.
Also, $\eta_y\in\sHp$,
because (Lemma~\ref{lem:adjoint-ops})
\hbox{$H_y^* = \Opp{0}{0}{\frm{h}_h^*}$} and $\frm{h}_y^*\in\sct{\calC}$
even if not in our parameterization.
\end{proof}

Using this Lemma, the previous display is rewritten as
\hbox{$-(\zeta - E_y)^{-2} \ilinpr{\eta}{D\hat{H}_y \phi}$}, which,
inserted into the contour integral (\ref{eq:DE-integral})
allows an easy evaluation. In conclusion,
\begin{prop}
\begin{equation}
  D_y E_y\Big|_{x} = D_y \Dbraket{\eta_x}{\frm{h}_y}{\phi_x}\Big|_x.
\end{equation}
\end{prop}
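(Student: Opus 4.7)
The plan is to start from the contour-integral representation in~(\ref{eq:DE-integral}), differentiate the resolvent, and identify the surviving scalar with the desired form derivative. The essential technical point is that the \RSF\ hypothesis (via Convention~\ref{cnvntn:C<-as-Banach-space} and Thm.~\ref{thm:[]+-}) places $D_y\hat{H}_y|_x$ naturally in $\Lin(\sHp;\sHm)$, so the differentiation of the resolvent should be carried out at the level of the $\Lin(\sHm;\sHp)$-valued map $\hat{H}\mapsto(\hat{H}-\zeta)^{-1}$, rather than on the unbounded operator $H_y$ itself.

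First I would invoke the holomorphy of inversion $\Linv(\sHp;\sHm)\to\Lin(\sHm;\sHp)$ (Prop.~\ref{prop:iso-to-closed}) together with the chain rule to write
\[
D_y(\hat{H}_y-\zeta)^{-1}\big|_x = -(\hat{H}_x-\zeta)^{-1}\,(D_y\hat{H}_y|_x)\,(\hat{H}_x-\zeta)^{-1}.
\]
Applied to the fixed vectors $\phi_x,\eta_x$: since $\hat{H}_x\phi_x=E_x\phi_x$ as an identity in $\sHm$, the right-hand resolvent multiplies $\phi_x$ by $(E_x-\zeta)^{-1}$; and Lemma~\ref{lem:eta-eigenvector} places $\eta_x$ in $\sHp$ with $\hat{H}_x^*\eta_x=\overline{E_x}\,\eta_x$, so the left-hand resolvent pairs through $\ilinpr{\cdot}{\cdot}$ to give a second factor $(E_x-\zeta)^{-1}$. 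Consequently
\[
\ilinpr{\eta_x}{D_y\Rmap(\zeta,H_y)|_x\,\phi_x} = -\frac{\ilinpr{\eta_x}{D_y\hat{H}_y|_x\,\phi_x}}{(E_x-\zeta)^2}.
\]

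Substituting into~(\ref{eq:DE-integral}), the scalar $\ilinpr{\eta_x}{D_y\hat{H}_y|_x\,\phi_x}$ pulls out of the $\zeta$-integral, and what remains is the routine residue calculation $\oint_\Gamma \zeta(\zeta-E_x)^{-2}\,d\zeta/(2\pi i)=1$ (with the sign convention and orientation fixed by Prop.~\ref{prop:Riesz-Dunford-Taylor}). Finally, the definition of $\hat{H}_y=\Opp{+}{-}{\frm{h}_y}$ in Section~\ref{sec:ops-from-forms} gives $\ilinpr{\eta_x}{\hat{H}_y\phi_x}=\Dbraket{\eta_x}{\frm{h}_y}{\phi_x}$; since $y\mapsto\hat{H}_y$ is holomorphic into $\Lin(\sHp;\sHm)$, differentiating both sides in $y$ at $x$ converts $\ilinpr{\eta_x}{D_y\hat{H}_y|_x\,\phi_x}$ into $D_y\Dbraket{\eta_x}{\frm{h}_y}{\phi_x}\big|_x$, which closes the argument.

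The main obstacle I expect is the rigged-space bookkeeping---making sure each resolvent, each derivative, and each pairing is read in the correct Banach space so that the scalar eigenvalue relations actually apply. Beyond that, the needed inputs (that $\eta_x\in\sHp$, that $D_y\hat{H}_y|_x\in\Lin(\sHp;\sHm)$, and the residue evaluation) are either already recorded above or entirely standard.
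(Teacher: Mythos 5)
Your argument reproduces the paper's own route: Hellmann–Feynman plus the contour integral~(\ref{eq:DE-integral}), differentiation of the resolvent carried out on $\hat H_y\in\Lin(\sHp;\sHm)$ via the chain rule, evaluation of the two resolvent factors on $\phi_x$ and (through Lemma~\ref{lem:eta-eigenvector}) on $\eta_x$ to produce the double pole at $E_x$, and the residue computation. You have simply made explicit the steps the paper leaves compressed (the precise derivative-of-inverse formula, the two scalar eigenvalue reductions, the $\oint\zeta(\zeta-E_x)^{-2}\,d\zeta/2\pi i=1$ evaluation), so the proposal is correct and follows essentially the same approach.
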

To do much with this requires explicit knowledge of $\frm{h}$. 

\subsubsection{charge/current density}
\label{sec:eigenstate-cc-density}

For a concrete case, 
consider a \RSF\ of Schr\"odinger forms as in Section \ref{sec:QM}.
The differentials of $E$ with respect to $u$ and ${\bm A}$ are linear forms
on a perturbation $\delta\!{u}$ or $\delta\!{\bm A}$
(the `$\delta$' doesn't actually have any independent meaning from our perspective),
given by
\begin{align}
D_{u} E \cdot\delta\!{u}
&  = \sum_\alpha \inpr{\eta}{\delta\!{u}(x_\alpha) \phi}
                           \nonumber \\
  \label{eq:rho}
&  \eqdef \int \delta\!{u}\, {\rho}\, d\underline{x},
\end{align}
and
\begin{align}
  D_{\bm A} E \cdot\delta\!{\bm A}
  &  =
 \inpr{ \cc{\delta\!{\bm A}}\, \eta }{ (i{\nabla} + {\bm A})\phi }
    + \inpr{(i\nabla+\cc{\bm A}) \eta }{ \delta\!{\bm A}\, \phi }
    \nonumber \\
  \label{eq:J}
&  \eqdef -\int \delta\!{\bm A}\cdot {\bm J}\, d\underline{x}
\end{align}
using the abbreviated notation of (\ref{eq:complex-A}).
These define the charge density $\rho$ and current density ${\bm J}$ of
the state in question.
In classical notation, one writes
\hbox{$\rho = {\delta E}/{\delta u}$} and
\hbox{${\bm J} = -{\delta E}/{\delta {\bm A}}$}.
More explicitly,
\begin{equation}
  \label{eq:rho-formula}
  \rho(x) = \sum_\alpha \int
  \cc{\eta}\phi|_{(x_{\alpha}=x)}  
  \, d\underline{x}_{-\alpha}
\end{equation}
and
\begin{equation}
  \label{eq:J-formula}
  {\bm J}(x) = 2 {\bm A}(x)\rho(x)
+ \sum_\alpha \int
  i(\cc{\eta}\overleftrightarrow{\nabla}\phi)|_{(x_{\alpha}=x)}  
  \, d\underline{x}_{-\alpha},
\end{equation}
where the notation means that integration is over all positions except those
of particle $\alpha$, which is set equal to $x$.

Of course, when $\frm{h}$ is not hermitian,
the physical interpretation of these as charge/current densities is rather unclear,
but the identifications are natural generalizations, indeed analytic continuations.

Restricted to hermitian $\frm{h}$, $\rho$ and 
$\bm J$ are $\Real$-analytic, but as maps into what Banach spaces?
Simplifying very slightly what we had in Section~\ref{sec:QM},
we take $u$ and ${\bm A}$ in
\hbox{$\sX_u = L^{3/2}(\Real^3)+L^\infty(\Real^3)$} and
\hbox{$\sX_{\bm A} = \vec{L}^{3}(\Real^3)+\vec{L}^\infty(\Real^3)$}, respectively.
As differentials of a scalar function on $\sX_u\times\sX_{\bm A}$, then,
$(\rho,{\bm J})$ is in $\sX_u^*\times\sX_{\bm A}^*$, a priori.
This is highly inconvenient due to the presence of the $L^\infty$ summands.
Fortunately, we can show that
\hbox{$\rho\in \sY_\rho = L^{3}\cap L^{1}$} and
\hbox{${\bm J}\in \sY_{\bm J} = \vec{L}^{3/2}\cap \vec{L}^1$}.
It then follows that $(u,{\bm A}) \mapsto (\rho,{\bm J})$ is analytic into
$\sY_\rho \times \sY_{\bm J}$ because\cite{Liu+Wang-68,Liu+Wang-69}
$\sX_u = \sY_\rho^*$, which implies that $\sY_\rho$ is embedded into
$\sX_{u}^*$ [$ = \sY_\rho^{**}$]
as a closed subspace, and similarly $\sY_{\bm J}$ into $\sX_{\bm A}^*$.
Here, we understand $L^p\cap L^q$ to be equipped with the max norm
\hbox{$\|f\| = \max(\|f\|_p,\|f\|_q)$}.

It suffices to show that $\rho$ and ${\bm J}$ are integrable, since
the integral forms (\ref{eq:rho},\ref{eq:J}), and the fact that they
induce linear functionals on $L^{3/2}$ and $L^3$, respectively, then
shows that $\rho\in L^3$ and ${\bm J}\in \vec{L}^{3/2}$.
Here are the required bounds:
First, from (\ref{eq:rho-formula}),
$\|\rho\|_1 \le N \|\eta\|^2 = N \|P^* P\| = N \|P\|^2$,
$P$ being the state projector [see (\ref{eq:Py(Gamma)})].
Then, from (\ref{eq:J-formula}), what was just shown establishes that
$\rho{\bm A}$ is integrable, and the Cauchy-Bunyakovsky-Schwarz inequality
shows that the second term is also, since $\eta,\phi\in\sHp$.
As discussed in the Introduction, these conclusions are relevant to
density functional theory (DFT),
current-density functional theory (CDFT), and magnetic-field density functional theory.

\section{Semigroups and statistical operators}
\label{sec:free-energy}

Whereas the ideas of the previous section trace their lineage back to the
primitive notion of inversion, the progenitor of this section is exponentiation.
We will study the operator family $e^{-\beta H}$ as $\beta$ ranges over
a vertex-zero sector and $H$ over operators associated with a \RSF.
In quantum statistical mechanics, $e^{-\beta H}$, assuming it is trace-class,
is the unnormalized statistical operator of a system with Hamiltonian $H$ at temperature
$T=\beta^{-1}$.
The trace, $Z_{\beta,H} = \Tr e^{-\beta H}$, is the partition function, and
$F_{\beta,H} = -\beta^{-1} \ln Z_{\beta,H}$ is interpreted as thermodynamic
free energy.
At nonzero temperature, the statistical operator and free energy play
roles analogous to those played by the ground state and ground state
energy at zero temperature.
Temperature, however, is not the only thermodynamic control parameter.
For a system with variable particle number(s), for instance, there are
chemical potentials $\mu_i$ for the various species, $i$. $\beta H$ should be replaced by
$\beta \left( H - \sum \mu_i N_i \right)$, where $N_i$ is the number of particles of
species $i$. This can be treated as a Hamiltonian on a Fock space with variable particle
number. Another thermodynamic parameter, volume can be incorporated in the form of a
confining potential.
In this way, we naturally move in the direction of considering the Hamiltonian
as being a highly variable object and studying the dependence of the statistical
operator and free energy on it.

This statistical interpretation ceases to be viable if the trace-class requirement
is dropped, but this more relaxed setting also has physical interest, especially
in connection with ideas around ``imaginary time'' evolution.
Here, the {\em semigroup} aspects come to the fore.
\hbox{$[0,\infty) \ni \beta \mapsto T(\beta) \equiv e^{-\beta H}$} should be the operator
semigroup generated by $-H$.
As Cor.~\ref{cor:Rmap-holo} showed that
the $\Rmap$-map $(\zeta,x) \mapsto \Rmap(\zeta,x) = (\zeta-H_x)^{-1}$
is holomorphic on its natural domain in $\Cmplx\times \calU$,
Cor.~\ref{cor:exp-holo} shows that the $\Emap$-map
$(\beta,x) \mapsto \Emap(\beta,x) = e^{-\beta H_x}$ is holomorphic,
where $\beta$ in the right half-plane $\CmplxRt$ is restricted only by the requirement
of sectoriality.
Section \ref{sec:free-energy-perturbation} considers a case where the statistical
interpretation is viable. With $H_0$ a lower-bounded self-adjoint operator
with resolvent in some Schatten class, and an \RSF\ in $\OF{H_0}$, $F_{\beta,x}$ is
holomorphic for $\beta$ in some neighborhood of $\Real_+$ and $x$ in some neighborhood
of zero. Similarly to the case of nondegenerate eigenstates considered in section
\ref{sec:rank-1}, this implies analyticity of (generalized) observables.
Charge-density and current-density are again examined in detail.

\subsection{Operator semigroups}
\label{eq:semigroups}

We begin with a recollection of some relevant
definitions\cite{Engel+Nagel,Engel+Nagel-big,Goldstein-semigps,Kato}.
A map $\Arr{[0,\infty)}{U}{\Lin(\sX)}$ is a {\em strongly continuous operator semigroup} if
\newline \noindent \textnormal{(1)} It respects the semigroup structure
of $[0,\infty)$: \hbox{$U(0)=\Id$} and \hbox{$U(s+t) = U(s)U(t)$}.
\newline \noindent \textnormal{(2)}
For each $x\in\sX$, the orbit map $t \mapsto U(t)x$ is continuous.

The {\em generator} $A$ of the semigroup is defined by
\begin{equation}
  A x = \lim_{t\downarrow 0} \frac{Ax - x}{t},
\end{equation}
$\dom A$ being the subspace on which the limit exists.
$A$ is a closed operator with dense domain
and for $x\in\dom A$, $\frac{d}{dt} U(t)x = U(t) Ax$
(e.g., Engel \& Nagel\cite{Engel+Nagel}, Thm II.1.4 and Lemma II.1.1).
The semigroup $U(t)$ is often denoted $e^{tA}$, which can be understood
in a very straightforward (power series) sense when $A$ is bounded.
A strongly continuous semigroup is necessarily locally bounded in operator norm.

If we leave everything above the same, except to expand the domain
from $[0,\infty)$ to $\oSec{0}{\theta}\cup\{0\}$ (also a semigroup),
$U$ is a {\em holomorphic} semigroup.
That the appelation is deserved follows from denseness of $\dom A$ and local
boundedness, which implies that $U$ is strongly holomorphic, and therefore
[Lemma~\ref{lem:st-holo} (e)] holomorphic $\oSec{0}{\theta} \to \Lin(\sX)$.

Now, if $H$ were bounded, $e^{-\beta H}$ could be obtained with a Riesz-Dunford-Taylor
integral of the function $e^{-\beta\zeta}$ along a contour surrounding the entire
spectrum. If $H$ is sectorial, though, its spectrum is unbounded only toward the
right in $\Cmplx$, where $e^{-\beta\zeta}$ is rapidly decreasing, assuming 
$|\arg \beta|$ is not too large.
This suggests that a contour such as $\Gamma$ in Fig.~\ref{fig:free-energy-contour}
might work. That it does so is the content of the following theorem, for the
proof of which we refer to the secondary literature.
\begin{figure}
  \centering
\includegraphics[width=65mm]{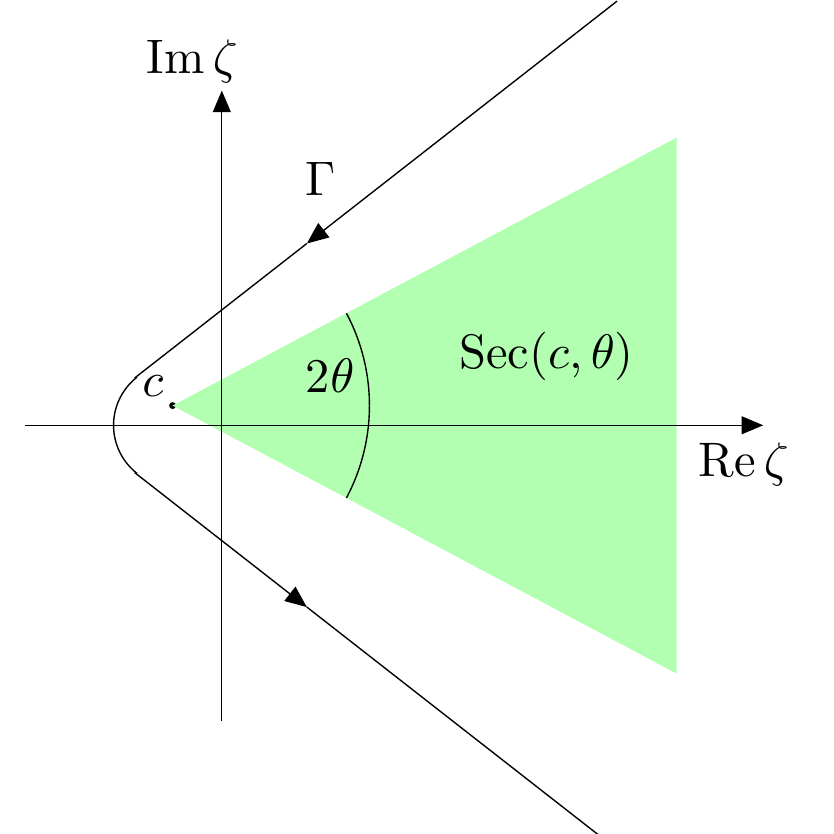}  
\caption{
The contour $\Gamma$ is adapted to the sector $\Sigma$.
The dashed line is the boundary of a dilation of $\Sigma$ and
$\Gamma$ lies exterior to it.
}
  \label{fig:free-energy-contour}
\end{figure}
\begin{defn}
  The contour $\Gamma$ in $\Cmplx$ parameterized by arc-length $s$
  is {\em adapted} to sector $\Sigma$ if
  $\re \Gamma(s) \to + \infty$ as $s \to \pm\infty$,
  and $\Gamma$ is exterior to some dilation
  of $\Sigma$  (item~\ref{item:sector}, Sec.~\ref{sec:sforms-1}).
\end{defn}
\begin{thm}
  \label{thm:holo-semigp}
  Let $A$ be a densely-defined operator with $\spec A$ contained in a sector $\Sigma$
  of half-angle $\theta$,
  such that
\begin{equation}
\zeta \not\in\Sigma' \;\Rightarrow\;
\|\Rmap(\zeta,A)\| \le \frac{M(\Sigma')}{|\zeta|+1}.
\end{equation}
for every dilation $\Sigma'$ of $\Sigma$.
Then, with $\Gamma$ a contour adapted to $\Sigma$,
a holomorphic semigroup
\hbox{$\oSec{0}{\tfrac{\pi}{2}-\theta} \to \Lin(\sH)$}
with generator $A$ is defined by
\begin{equation}
  \label{eq:holomorphic-semigroup-integral}
    \beta \mapsto e^{-\beta A}
= \int_\Gamma \Rmap(\zeta,A) e^{-\beta\zeta} \frac{d\zeta}{2\pi i}.    
\end{equation}
\end{thm}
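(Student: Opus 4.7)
The plan is to verify the three defining properties in turn: absolute convergence of the defining integral (together with holomorphy in $\beta$), the semigroup identity, and strong continuity at the origin with identification of the generator as $A$. I would also arrange at the start a preliminary remark, namely that since $\Gamma$ is adapted to $\Sigma$ and lies outside some dilation $\Sigma'$ of $\Sigma$, Cauchy's theorem together with the resolvent bound $\|\Rmap(\zeta,A)\|\le M(\Sigma')/(|\zeta|+1)$ shows the integral is independent of the choice of $\Gamma$ within this class. This freedom will be used throughout.

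For absolute convergence and holomorphy in $\beta$, fix $\beta\in\oSec{0}{\tfrac{\pi}{2}-\theta}$ and choose a dilation $\Sigma'$ of half-angle $\theta'$ close enough to $\theta$ that $|\arg\beta|+\theta'<\pi/2$; then, taking $\Gamma$ outside $\Sigma'$, elementary geometry gives $\re(\beta\zeta)\ge c_1|\beta||\zeta|$ for $\zeta\in\Gamma$ with $|\zeta|$ large, where $c_1>0$. Combined with the resolvent bound, the integrand is dominated by an $L^1$ function of arclength, uniformly in $\beta$ on compact subsets of the sector. Thus the integral converges absolutely, $\beta \mapsto T(\beta)$ is locally bounded, and differentiation under the integral (justified by the same majorant) yields holomorphy of $T$; in particular
\begin{equation}
T'(\beta) = -\int_\Gamma \zeta\,\Rmap(\zeta,A)\,e^{-\beta\zeta}\,\frac{d\zeta}{2\pi i}.
\end{equation}

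The semigroup identity $T(\beta_1)T(\beta_2)=T(\beta_1+\beta_2)$ is the classical contour-shift argument. Represent $T(\beta_1)$ and $T(\beta_2)$ by integrals along two contours $\Gamma_1,\Gamma_2$, both adapted to $\Sigma$, with $\Gamma_2$ chosen to lie strictly to the right of $\Gamma_1$. Fubini (valid by absolute convergence) gives a double integral, in which the resolvent identity
\begin{equation}
\Rmap(\zeta,A)\Rmap(\zeta',A)=\frac{\Rmap(\zeta,A)-\Rmap(\zeta',A)}{\zeta'-\zeta}
\end{equation}
decouples the two resolvents. For the $\Rmap(\zeta,A)$ piece, the $\zeta'$-integral of $e^{-\beta_2\zeta'}/(\zeta'-\zeta)$ along $\Gamma_2$ is computed by closing to the right: the exponential decay together with $\Gamma_2$ lying right of $\Gamma_1$ (so $\zeta$ is to the left, hence outside the closed contour) makes this integral vanish — wait, rather, since $\zeta$ lies to the left of $\Gamma_2$, closing in the right half-plane captures no residue, giving zero; symmetrically, the $\Rmap(\zeta',A)$ piece, integrated first in $\zeta$ along $\Gamma_1$ (with $\zeta'$ lying to its right, hence enclosed when closing to the right), produces $e^{-\beta_1\zeta'}$. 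The surviving integral is precisely $T(\beta_1+\beta_2)$.

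For strong continuity at $0$ and identification of the generator, the key computation is that, using $\zeta\Rmap(\zeta,A)=\Id+A\Rmap(\zeta,A)$ and the fact that $\int_\Gamma e^{-\beta\zeta}\,d\zeta/2\pi i=0$ (integrand entire with sufficient decay), the derivative becomes $T'(\beta)=-A\,T(\beta)$ on $\dom A$; since $T(\beta)$ commutes with $\Rmap(\zeta_0,A)$ for any $\zeta_0\in\res A$ (by Fubini and a contour shift), this rewrites as $T'(\beta)x=-T(\beta)Ax$ for $x\in\dom A$. Integrating along a ray from $0$ in the sector, $T(\beta)x-T(\beta_0)x=-\int_{\beta_0}^{\beta}T(s)Ax\,ds$, and local boundedness of $T$ (now upgraded to \emph{uniform} boundedness on a subsector $\oSec{0}{\tfrac{\pi}{2}-\theta-\epsilon}\cap\{|\beta|\le 1\}$ by rescaling $\Gamma$ as $\Gamma_\beta=|\beta|^{-1}\Gamma$) lets us take $\beta_0\to 0$, giving $T(\beta)x\to x$ on $\dom A$; density of $\dom A$ and uniform boundedness extend this to all of $\sH$. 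The same identity, divided by $\beta$ and taken along the positive real axis, shows $(T(\beta)x-x)/\beta\to -Ax$ for $x\in\dom A$; a standard argument (e.g., Engel--Nagel, II.1.4) then shows the generator cannot properly extend $A$, and the identification is complete. The main obstacle in the whole argument is this last step: verifying uniform boundedness of $T(\beta)$ as $\beta\to 0$, which forces the contour rescaling $\Gamma_\beta$, and the careful exchange-of-limits in passing from convergence on $\dom A$ to convergence on $\sH$.
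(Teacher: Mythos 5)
The paper does not prove Thm.~\ref{thm:holo-semigp}; it simply cites Engel~\&~Nagel, Kato, and Reed~\&~Simon. Your proposal reconstructs the standard textbook argument that appears in those references: absolute convergence from the resolvent bound combined with the decay of $e^{-\beta\zeta}$ on the adapted contour, contour independence via Cauchy's theorem, the semigroup law via Fubini plus the first resolvent identity and closing contours to the right, the derivative formula $T'(\beta)=-AT(\beta)$ from the identity for $\zeta\Rmap(\zeta,A)$, and the contour rescaling $\Gamma_\beta=|\beta|^{-1}\Gamma$ to get uniform boundedness near the origin. This is exactly the right strategy, and it is the same proof the cited sources give.

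Two points to repair. First, your resolvent algebra is written for the convention $\Rmap(\zeta,A)=(\zeta-A)^{-1}$, whereas the paper fixes $\Rmap(\zeta,A)=(A-\zeta)^{-1}$; under the paper's convention the first resolvent identity is $\Rmap(\zeta,A)\Rmap(\zeta',A)=\bigl(\Rmap(\zeta,A)-\Rmap(\zeta',A)\bigr)/(\zeta-\zeta')$ and one has $\zeta\Rmap(\zeta,A)=A\Rmap(\zeta,A)-\Id$. If you carry your version through the residue calculation, the semigroup law picks up a spurious minus sign, so the convention must be kept consistent with the one fixed in the paper.

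Second, and more substantively: your argument for strong continuity at $0$ establishes only that $T(\beta)x$ is Cauchy as $\beta\to 0$ for $x\in\dom A$ (from the integrated ODE $T(\beta)x-T(\beta_0)x=-\int_{\beta_0}^\beta T(s)Ax\,ds$ together with uniform boundedness), not that the limit is $x$. That identity pins down existence of the limit but says nothing about its value. The standard way to close this gap is a direct computation of $T(\beta)\Rmap(\zeta_0,A)$: bring $\Rmap(\zeta_0,A)$ inside the contour integral, apply the first resolvent identity to decouple it, and evaluate the scalar piece by residues; the result converges in norm to $\Rmap(\zeta_0,A)$ as $\beta\to 0$. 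Since $\rng\Rmap(\zeta_0,A)=\dom A$ is dense and $T(\beta)$ is uniformly bounded near $0$ in a subsector, strong convergence to the identity follows, and then your generator identification via Engel--Nagel's no-proper-extension argument is sound.
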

\begin{proof}
See   
\S II.4 of Engel \& Nagel\cite{Engel+Nagel},
\S IX.1.6 of Kato\cite{Kato},
or \S X.8 of Reed \& Simon\cite{Reed+Simon}.
\end{proof}
Because $e^{-\beta A}$ is holomorphic into bounded operators, it
has a strong regularizing property not enjoyed by the generic operator semigroup:
\begin{cor}
  \label{cor:maps-into-dom-A}
  $\beta \mapsto e^{-\beta A}$ is a continuous linear map of $\sH$ into
  $\dom A$ (with the $A$-norm). 
\end{cor}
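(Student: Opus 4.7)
The plan is to extract from the holomorphy of the semigroup, established in Thm.~\ref{thm:holo-semigp}, a norm-bounded realization of $A\,e^{-\beta A}$, and then close up using the closedness of $A$. Since $\beta \mapsto e^{-\beta A}$ is holomorphic into $\Lin(\sH)$ on the open sector $\oSec{0}{\tfrac{\pi}{2}-\theta}$, the operator-norm derivative
\begin{equation}
  \nonumber
  B_\beta \defeq \frac{d}{d\beta}\, e^{-\beta A}
\end{equation}
exists in $\Lin(\sH)$ for every $\beta$ in this sector. The target identity is $B_\beta = -A\,e^{-\beta A}$ as an equality of bounded operators on $\sH$, which will directly imply the claim.

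For $x \in \dom A$, the standard semigroup calculation yields $\frac{d}{d\beta}\,e^{-\beta A} x = -e^{-\beta A} A x = -A\,e^{-\beta A} x$; this follows from (\ref{eq:holomorphic-semigroup-integral}) together with the resolvent identity $A\,\Rmap(\zeta,A) = 1 + \zeta\,\Rmap(\zeta,A)$, the constant term dropping out because $\int_\Gamma e^{-\beta\zeta}\,d\zeta = 0$ by contour deformation to the right, using entirety of $e^{-\beta\zeta}$ and its rapid decay as $\re\zeta \to +\infty$ for $\beta$ in the relevant sector. Hence $A\,e^{-\beta A} x = -B_\beta x$ on $\dom A$.

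Now I take arbitrary $x \in \sH$ and a sequence $\dom A \ni x_n \to x$; continuity of $B_\beta$ and $e^{-\beta A}$ in $\Lin(\sH)$ gives $e^{-\beta A} x_n \to e^{-\beta A} x$ in $\sH$ and $A\,e^{-\beta A} x_n = -B_\beta x_n \to -B_\beta x$. Since $A$ is closed (as the generator of a strongly continuous semigroup), it follows that $e^{-\beta A} x \in \dom A$ and $A\,e^{-\beta A} x = -B_\beta x$. The estimate $\|A\,e^{-\beta A} x\| \le \|B_\beta\|\,\|x\|$, together with the obvious $\|e^{-\beta A} x\| \le \|e^{-\beta A}\|\,\|x\|$, delivers the asserted continuity of $e^{-\beta A}\colon \sH \to \dom A$ with respect to the $A$-norm.

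The only delicate point is the justification of differentiation under the integral sign in (\ref{eq:holomorphic-semigroup-integral}) and of the contour-deformation argument that zeroes out the constant term; both rest on the decay estimate $\|\zeta\,\Rmap(\zeta,A)\| \le M$ along $\Gamma$ from the hypothesis of Thm.~\ref{thm:holo-semigp}, combined with rapid exponential decay of $|e^{-\beta\zeta}|$ as $|\zeta| \to \infty$ along $\Gamma$, which holds precisely because $\Gamma$ is adapted to a sector whose dilation contains $\spec A$ while $\beta$ lies in the complementary open sector.
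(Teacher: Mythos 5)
The paper gives no explicit proof of this corollary, only the lead-in that holomorphy of $\beta\mapsto e^{-\beta A}$ into $\Lin(\sH)$ delivers a regularizing property; your argument supplies exactly the intended details. You identify the operator-norm derivative $B_\beta=\tfrac{d}{d\beta}e^{-\beta A}\in\Lin(\sH)$, verify $B_\beta x=-Ae^{-\beta A}x$ on $\dom A$, and then use closedness of $A$ together with boundedness of $B_\beta$ and $e^{-\beta A}$ to extend to all of $\sH$ and read off the continuity into $\dom A$ with the $A$-norm. This is correct and is the natural reading of the text; the only cosmetic remark is that your contour-integral computation (via $-\zeta\Rmap(\zeta,A)=1-A\Rmap(\zeta,A)$ and the vanishing of $\int_\Gamma e^{-\beta\zeta}\,d\zeta$) already produces $B_\beta=-Ae^{-\beta A}$ on all of $\sH$ once one pulls the closed operator $A$ out of the norm-convergent integral, so the separate restriction-to-$\dom A$-then-approximate step is a slight redundancy, though entirely sound.
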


\subsection{The exponential map $\Emap$}
\label{sec:inverse-exponential}

Just as we earlier expanded the usual holomorphy of the resolvent $\Rmap(\zeta,H)$
with respect to the spectral parameter to find that it was holomorphic in
a parameterization of $H$ via a \RSF, we will in this subsection (Thm.~\ref{thm:exponential-holo})
expand the holomorphy of $\beta \mapsto e^{-\beta H}$ just discussed to include
$H$. If we imagine varying $A$ in (\ref{eq:holomorphic-semigroup-integral}),
we see that we should restrict to $A$ with spectrum in a sector to which
$\Gamma$ is adapted. Since we deal with operators coming from {\sqf}s,
we want to consider sectors for the numerical ranges, not the spectra.
\begin{notn}
  For a sector $\Sigma$, $\Op(\Sigma)$ denotes the set of closed, densely
  defined operators on $\sH$ with numerical range in $\Sigma$.
\end{notn}
A key ingredient of the theorem is the following lemma, which shows that
the resolvent bound in Thm.~\ref{thm:holo-semigp} is respected.
\begin{lem}
\label{lem:resolvent-bound}
Given sector $\Sigma$, and $\Sigma'$, a dilation of $\Sigma$,
there is a constant $M(\Sigma,\Sigma')$ 
such that
\begin{equation}
  \zeta\not\in\Sigma' \;\Rightarrow\;
  \|\Rmap(\zeta,H)\| < \frac{M(\Sigma,\Sigma')}{|\zeta|+1}.
\end{equation}
for every $H\in\Op(\Sigma)$.
\end{lem}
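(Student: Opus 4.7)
The plan is to reduce the lemma to a purely geometric estimate by means of Prop.~\ref{prop:resolvent-outside-Num}. For any $H\in\Op(\Sigma)$ one has $\Num H\subseteq\Sigma$, so part (c) of that proposition gives $\spec H\subseteq\cl\Sigma$; since $\Cmplx\setminus\cl\Sigma$ is connected and nonempty, part (a) then forces
\begin{equation*}
\Cmplx\setminus\cl\Sigma \;\subseteq\; \res H,
\end{equation*}
and part (b) yields
\begin{equation*}
\|\Rmap(\zeta,H)\| \;\le\; \frac{1}{\dist(\zeta,\Num H)} \;\le\; \frac{1}{\dist(\zeta,\Sigma)}
\end{equation*}
for every $\zeta\notin\cl\Sigma$. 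The essential point is that the right-hand side does not depend on $H$.

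It therefore suffices to produce $C=C(\Sigma,\Sigma')>0$ such that
\begin{equation*}
\zeta\notin\Sigma' \;\Longrightarrow\; \dist(\zeta,\Sigma) \;\ge\; C(|\zeta|+1),
\end{equation*}
after which $M(\Sigma,\Sigma')\defeq 1/C$ does the job. I would obtain this by examining the continuous function $g(\zeta)\defeq\dist(\zeta,\Sigma)/(|\zeta|+1)$ on the closed set $F\defeq\Cmplx\setminus\Sigma'$ and showing $\inf_F g>0$. Because ``dilation'' places $\cl\Sigma$ strictly inside $\Sigma'$ (so, in particular, the vertex of $\Sigma$ is an interior point of $\Sigma'$), $\dist(\zeta,\Sigma)>0$ for every $\zeta\in F$, so $g$ is continuous and strictly positive on $F$. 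Splitting $F$ as $F_R\defeq F\cap\{|\zeta|\le R\}$, on which compactness yields a positive minimum of $g$, together with the tail $F\setminus F_R$, reduces the task to an asymptotic estimate.

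For the tail, writing $\Sigma=\oSec{c}{\theta}$ and $\Sigma'=\oSec{c'}{\theta'}$ with $\theta'>\theta$, the relation $\arg(\zeta-c)=\arg(\zeta-c')+O(|\zeta|^{-1})$ as $|\zeta|\to\infty$ combined with $\zeta\notin\Sigma'$ yields the angular lower bound $|\arg(\zeta-c)|\ge \theta'-O(|\zeta|^{-1})$. Elementary point-to-cone geometry then gives
\begin{equation*}
\dist(\zeta,\Sigma) \;\ge\; |\zeta-c|\,\sin\bigl(\theta'-\theta-O(|\zeta|^{-1})\bigr) \;\ge\; \tfrac{1}{2}\sin(\theta'-\theta)\,|\zeta|
\end{equation*}
once $R$ is chosen large enough. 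The main obstacle is precisely this angular bookkeeping with two possibly distinct vertices $c$ and $c'$: one must keep the $O(|\zeta|^{-1})$ corrections under control to isolate the geometric constant $\sin(\theta'-\theta)$, which is the true source of the uniform bound. Once that estimate is in hand, compactness on $F_R$ together with the operator-theoretic reduction above combine without further subtlety.
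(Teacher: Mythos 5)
Your proposal is correct and follows the same route as the paper: the paper's entire proof of this lemma is ``This is an immediate consequence of Prop.~\ref{prop:resolvent-outside-Num},'' and your argument is precisely the unpacking of that immediacy — using parts (a), (c) to get $\Cmplx\setminus\cl\Sigma\subseteq\res H$, part (b) to reduce to $\dist(\zeta,\Sigma)$, and then the elementary cone geometry to show $\dist(\zeta,\Sigma)\gtrsim |\zeta|+1$ uniformly off $\Sigma'$. Your observation that the definition of dilation must be read as placing the vertex of $\Sigma$ in the interior of $\Sigma'$ is in fact necessary for the lemma to hold (with coinciding vertices the distance can vanish at the vertex), so you identified the right geometric hypothesis even though the paper states it loosely.
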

\begin{proof}
This is an immediate consequence of Prop.~\ref{prop:resolvent-outside-Num}.
\end{proof}
\begin{thm}
  \label{thm:exponential-holo}
  Let $\frm{h}$ be a \RSF.
  With the notation $H_y = \Opp{0}{0}{\frm{h}_y}$ as in Sec. \ref{sec:closure},
\begin{equation}
\Arr{\calU}{ y \mapsto e^{-H_y} }{\Lin(\sH)}
\end{equation}
is holomorphic.
\end{thm}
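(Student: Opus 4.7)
The plan is to establish holomorphy of $y\mapsto e^{-H_y}$ locally by writing it as a contour integral of resolvents, using the already-proven holomorphy of the $\Rmap$-map together with uniform resolvent bounds that let us interchange limits.

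First I would localize. Fix $x\in\calU$ and choose a sector $\Sigma_0$ for $\frm{h}_x$, then let $\Sigma$ be a dilation of $\Sigma_0$, so that $\Sigma$ is an ample sector for $\frm{h}_x$. Lemma~\ref{lem:sectorial-usc} then furnishes a neighborhood $\calV$ of $x$ in $\calU$ on which $\Sigma$ remains an ample sector for every $\frm{h}_y$. In particular, $\Num H_y \subseteq \Sigma$ for all $y\in\calV$, and by Prop.~\ref{prop:resolvent-outside-Num}(c), $\spec H_y \subseteq \cl \Num H_y \subseteq \Sigma$. Pick a contour $\Gamma$ adapted to $\Sigma$. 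Lemma~\ref{lem:resolvent-bound} then gives a constant $M$, independent of $y\in\calV$, such that
\begin{equation}
\zeta\in\Gamma \;\Rightarrow\; \|\Rmap(\zeta,H_y)\| \le \frac{M}{|\zeta|+1}. \nonumber
\end{equation}
Since $\re\zeta\to+\infty$ along $\Gamma$, $|e^{-\zeta}|$ decays exponentially there, so the integrand $\Rmap(\zeta,H_y)\,e^{-\zeta}$ is uniformly Bochner-integrable over $\Gamma$ for $y\in\calV$. Theorem~\ref{thm:holo-semigp} (applied at $\beta=1$, which lies in $\oSec{0}{\pi/2-\theta}$ for the half-angle $\theta<\pi/2$ of $\Sigma$) yields
\begin{equation}
e^{-H_y} \;=\; \int_\Gamma \Rmap(\zeta,H_y)\, e^{-\zeta}\,\frac{d\zeta}{2\pi i} \quad\text{for every } y\in\calV. \nonumber
\end{equation}

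Next I would pass to holomorphy. For each compact truncation $\Gamma_R = \Gamma\cap\{|\zeta|\le R\}$, define
\begin{equation}
I_R(y) \;\defeq\; \int_{\Gamma_R}\Rmap(\zeta,H_y)\,e^{-\zeta}\,\frac{d\zeta}{2\pi i}. \nonumber
\end{equation}
Holomorphy of $I_R\colon\calV\to\Lin(\sH)$ is most easily verified via Prop.~\ref{prop:WO-holo}: for any $\phi\in\sH$, $\lambda\in\sH^*$, the scalar
\begin{equation}
\pair{\lambda}{I_R(y)\phi} \;=\; \int_{\Gamma_R} \pair{\lambda}{\Rmap(\zeta,H_y)\phi}\, e^{-\zeta}\,\frac{d\zeta}{2\pi i} \nonumber
\end{equation}
is holomorphic in $y$ by Fubini plus Morera, since the integrand is jointly continuous in $(\zeta,y)$ (Theorem~\ref{thm:resolvent-holo}) and holomorphic in $y$ for each fixed $\zeta\in\Gamma_R$, while $\Gamma_R$ is compact. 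Local boundedness of $I_R$ is immediate from the uniform resolvent bound. Therefore $I_R$ is holomorphic, and the uniform integrable bound gives $\|I_R(y) - e^{-H_y}\|\to 0$ uniformly for $y\in\calV$ as $R\to\infty$. Prop.~\ref{prop:convergent-sequences} then delivers holomorphy of $y\mapsto e^{-H_y}$ on $\calV$. Since $x\in\calU$ was arbitrary, holomorphy extends to all of $\calU$.

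The principal obstacle I anticipate is technical rather than conceptual: justifying that the operator-valued integral inherits holomorphy from the integrand. Handling this by the truncation-plus-weak-operator route above avoids any direct appeal to holomorphic Bochner calculus, relying only on tools already in place (Theorems~\ref{thm:resolvent-holo} and \ref{thm:holo-semigp}, Lemmas~\ref{lem:sectorial-usc} and \ref{lem:resolvent-bound}, and Props.~\ref{prop:WO-holo} and \ref{prop:convergent-sequences}). The uniformity needed for the interchange of limits is exactly what the common sector $\Sigma$ and the contour-adapted resolvent bound guarantee.
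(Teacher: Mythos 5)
Your proposal is correct and follows essentially the same route as the paper: localize using Lemma~\ref{lem:sectorial-usc} to get a common ample sector over a neighborhood of $x$, represent $e^{-H_y}$ by the adapted-contour integral of Theorem~\ref{thm:holo-semigp}, truncate the contour to get a sequence of holomorphic approximants (the paper's $\mathcal{I}_n$, your $I_R$), and pass to the limit via Prop.~\ref{prop:convergent-sequences} using the uniform resolvent bound of Lemma~\ref{lem:resolvent-bound} to control the tail. The only difference is cosmetic: you spell out the holomorphy of the truncated integrals via Prop.~\ref{prop:WO-holo} and Fubini--Morera, whereas the paper states it directly from compactness of $\Gamma_n$ and Thm.~\ref{thm:resolvent-holo}.
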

\begin{proof}
Let $\Sigma$ be an ample sector for $\frm{h}_x$.
Thus $\cl\Num H_x$ and, {\it a fortiori}, $\spec H_x$ is contained in $\Sigma$.
Furthermore, by Lemma~\ref{lem:sectorial-usc}, there is a neighborhood
$\calV$ of $x$ such that for $y \in \calV$, the same holds for $\spec H_y$.

Now, let $\Gamma$ be a contour adapted to $\Sigma$ (Fig.~\ref{fig:free-energy-contour})
parameterized by arc length $s$, and $\Gamma_n$ the restriction to $-n \le s \le n$, for $n\in\Nat$.
The integrals
\begin{equation}
  {\mathcal I}_n(y)
\defeq \int_{\Gamma_n} \Rmap(\zeta,{H}_y) e^{-\zeta} \frac{d\zeta}{2 \pi i},
\end{equation}
and ${\mathcal I}(y)$, the integral over the entire contour,
are well-defined on $\calV$.
Since $\Gamma_n$ is compact, Thm.~\ref{thm:resolvent-holo} guarantees that 
$y \mapsto {\mathcal I}_n(y)$ is holomorphic.

Finally, holomorphy of $\mathcal I$ will be secured by uniform convergence
\hbox{${\mathcal I}_n\to {\mathcal I}$} on $\calV$,
according to Prop. \ref{prop:convergent-sequences}.
Such convergence holds due to the damping factor $e^{- \re \zeta}$ in the
definition of ${\mathcal I}(y)$ combined with the resolvent bound in
Lemma~\ref{lem:resolvent-bound}, which holds uniformly on $\calV$.
\end{proof}
\begin{defn}
  For a \RSF\ $\frm{h}$, the \hbox{$\Emap$-map} is defined by
  \begin{equation}
\Emap(\beta,x) = e^{-\beta H_x}    
\end{equation}
on the domain
\begin{equation}
\Omega {\!\defeq\!\!\!}
\setof{(\beta,x)\in \Cmplx_{\text{rt}}\times\calU}{\beta H_x \text{ is sectorial}},    
\end{equation}
where $\Cmplx_{\text{rt}}$ is the open right half-plane $\re \beta > 0$.

As with the $\Rmap$-map, we may also write $\Emap(\beta,\frm{t})$ for a
particular \sqf\ $\frm{t}$, thinking of $\sct{\calC}$ as an \RSF\ parameterized over itself.
\end{defn}
\begin{cor}
  \label{cor:exp-holo}
Let $\frm{h}$ be a \RSF\ defined on $\calU$,
with associated family $x\mapsto H_x$ of closed operators.
Then $\Arr{\Omega}{\Emap}{\Lin(\sH)}$ is holomorphic.
\end{cor}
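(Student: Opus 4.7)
The plan is to reduce $\Emap$ to Theorem \ref{thm:exponential-holo} by absorbing the scalar $\beta$ into the sesquilinear form. Define $\tilde{\frm{h}} \colon \Omega \to \calC_\relbd$ by $\tilde{\frm{h}}_{(\beta,x)} \defeq \beta\, \frm{h}_x$. If $\tilde{\frm{h}}$ is a \RSF\ on $\Omega$, then the operator associated with $\tilde{\frm{h}}_{(\beta,x)}$ is $\Opp{0}{0}{\beta\frm{h}_x} = \beta H_x$ (scalar multiples of a form yield the correspondingly scaled operator), and Theorem \ref{thm:exponential-holo} delivers holomorphy of $(\beta,x) \mapsto e^{-\beta H_x} = \Emap(\beta,x)$ directly.

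First I would verify that $\tilde{\frm{h}}$ takes values in $\sct{\calC}$. Every $(\beta,x) \in \Omega$ has $\beta \neq 0$ (since $\re\beta > 0$), so the calculus of relative boundedness in Section~\ref{sec:sforms-1} yields $\beta\frm{h}_x \sim \frm{h}_x \in \calC$. Sectoriality of $\beta\frm{h}_x$ is built into the defining condition of $\Omega$. Hence $\tilde{\frm{h}}(\Omega) \subseteq \sct{\calC}$, and in particular $\tilde{\frm{h}}$ takes values in a single equivalence-class space $\calC_\relbd$ so that the codomain is unambiguous.

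Next I would establish holomorphy of $\tilde{\frm{h}} \colon \Omega \to \calC_\relbd$. Since $\Omega$ is open in the product space $\Cmplx \times \sX$, the product permanence property in Section~\ref{sec:holomorphy} reduces this to joint continuity plus separate holomorphy. For fixed $x$, $\beta \mapsto \beta\frm{h}_x$ is complex-linear, hence holomorphic. For fixed $\beta$, $x \mapsto \beta\frm{h}_x$ is a constant multiple of the given \RSF, hence holomorphic by the equivalent-norm permanence. Joint continuity follows from the continuity of scalar multiplication $\Cmplx \times \calC_\relbd \to \calC_\relbd$ together with the (holomorphy-implied) continuity of $\frm{h}$.

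With $\tilde{\frm{h}}$ confirmed as a \RSF\ parameterized over $\Omega$, Theorem~\ref{thm:exponential-holo} applies and yields holomorphy of the map $(\beta,x) \mapsto e^{-\Opp{0}{0}{\beta\frm{h}_x}} = e^{-\beta H_x} = \Emap(\beta,x)$ into $\Lin(\sH)$. The only genuine obstacle is the bookkeeping to confirm the \RSF\ hypotheses for $\tilde{\frm{h}}$---above all, that rescalings by nonzero scalars keep us inside the common equivalence class $\calC$, which is ensured by the invariance $c\frm{h}\sim\frm{h}$ for $c\neq 0$ recorded in Section~\ref{sec:sforms-1}. Everything else is a direct appeal to permanence properties already established.
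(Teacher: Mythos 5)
Your proposal is correct and follows essentially the same route the paper intends (the paper leaves the corollary unproved, but the Introduction's remark about "unfolding" $\frm{h}\mapsto e^{-H}$ via scalar invariance of $\sct{\calC}$, and the parenthetical in the preceding definition about "thinking of $\sct{\calC}$ as an \RSF\ parameterized over itself," point to exactly your construction). Defining $\tilde{\frm{h}}_{(\beta,x)} = \beta\frm{h}_x$, using $c\frm{h}\sim\frm{h}$ for $c\neq 0$ to stay in $\calC$, reading the defining condition of $\Omega$ as form-sectoriality, and then checking holomorphy of $\tilde{\frm{h}}$ via separate holomorphy plus joint continuity so that Thm.~\ref{thm:exponential-holo} applies, is a clean and complete argument.

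One small bookkeeping point you might mention explicitly: $\Omega$ is open (it is the preimage under the continuous map $(\beta,x)\mapsto\beta\Opp{+}{-}{\frm{h}_x}$ of the open set $\Opp{+}{-}{\sct{\calC}}\subset\Lin(\sHp;\sHm)$), and if $\Omega$ happens to be disconnected, Def.~\ref{def:holofam}'s connectedness hypothesis and Thm.~\ref{thm:exponential-holo} are simply applied componentwise, since holomorphy is a local property. Neither issue affects the conclusion.
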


\subsection{Statistical operator and free energy}
\label{sec:free-energy-perturbation}

In quantum statistical mechanics, $e^{-\beta H_x}$ is used in the following way:
with real $\beta$ interpreted as inverse temperature,
the partition function is $Z_{\beta,x} = \Tr e^{-\beta H_x}$, the free energy is
$F_{\beta,x} = -\beta^{-1} \ln Z_{\beta,x}$, and the {\it statistical operator} is
$\rho_{\beta,x} = Z_{\beta,x}^{-1} e^{-\beta H_x}$. The latter describes the (mixed)
thermal state at inverse temperature $\beta$ under Hamiltonian $H_x$, so
that the thermal expectation of (bounded, at least) observable $B$ in this state is 
\begin{equation}
\langle B \rangle_x = \Tr \rho_{\beta,x} B.
\end{equation}
The basic condition for this to make mathematical sense is that $e^{-\beta H_x}$
be trace-class. When we generalize to allow non-real $\beta$ and non-self-adjoint
$H_x$, the additional condition that $Z_{\beta,x} \neq 0$ is required.

Phase transitions are generally identified with
points of non-analyticity of the free energy density in the thermodynamic limit
(quantity of matter tends to infinity at fixed temperature and pressure, or whatever
parameters are appropriate).
For simple lattice models in particular, it is easy to see that free energy density
is analytic for finite systems, while (not so easy to see) singularities can occur
in the thermodynamic limit. This is strongly connected with the dogma that phase transitions
are phenomena purely of the thermodynamic limit\cite{Kadanoff-09}.
One may well ask, however, to what extent we may rule out non-analyticity
with more realistic Hamiltonians and a greater, possibly infinite, number of parameters,
without any thermodynamic limit. This question is addressed here.
Thm.~\ref{thm:exponential-holo} is an important stepping stone,
but the conclusion of holomorphy into $\Lin(\sH)$ must be strengthened.

A very useful frame in which to think is that of an ``unperturbed'' Hamiltonian
with a polynomial-bounded energy density of states.
This seems to be about the right assumption, since it will allow a good perturbation
theory as we shall see, while being satisfied in the usual models.
For instance, for $N$ distinguishable particles moving in $d$ dimensions,
the density of states for a
harmonic oscillator hamiltonian is ${\mathcal O}(E^{3N-1})$, and for the
usual kinetic energy in a box with periodic boundary conditions, ${\mathcal O}(E^{(3N-2)/2})$.
Nontrivial quantum statistics or repulsive interactions only improve matters, by
the min-max principle.
The main theorem (\ref{thm:holo-statistical-op}) is framed in the context of
the space $\OF{H_0}$ of Sec.~\ref{sec:self-adjoint}, where $H_0$ is a lower-bounded
self-adjoint operator with resolvent in $\Lin^p(\sH)$ for some $p$, and says that
$e^{-\beta H_x}$ is holomorphic for $\beta$ in a nontrivial sector and $x$ in some
neighborhood of $0$. The key ideas involved are
Prop.~\ref{prop:automatic-holo-Lp}, bounding the integral
(\ref{eq:holomorphic-semigroup-integral}) simply by bounding the $\Rmap(\zeta,H_x)$,
and using elementary semigroup properties to get an $L^1(\sH)$ bound from an
$L^p(\sH)$ bound.

Here is the main result of this subsection.
\begin{thm}
  \label{thm:holo-statistical-op}
  If self-adjoint $H_0$ is such that $\Rmap(\zeta,H_0)$ is in $\Lin^p(\sH)$
  for one (hence every) resolvent point and some $1\le p < \infty$,
  then
\begin{equation}
\nonumber    
\Arr{\oSec{0}{\frac{\pi}{4}}\times B_{1}(\OF{H_0})}{\Emap}{\Lin^1(\sH)}
\end{equation}
is holomorphic.
\end{thm}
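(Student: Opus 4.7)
The plan is to reduce to Prop.~\ref{prop:automatic-holo-Lp}. Cor.~\ref{cor:exp-holo} already gives holomorphy of $\Emap$ into $\Lin(\sH)$ on its natural domain $\Omega$; first I verify inclusion of our domain in $\Omega$. By Prop.~\ref{prop:X(H)}, every $\frm{s} = \frm{h}_0 + \frm{t}$ with $\|\frm{t}\|_{H_0} < 1$ admits $\oSec{0}{\pi/4}$ as a sector, and for $\beta \in \oSec{0}{\pi/4}$ the form $\beta \frm{s}$ then sits in a sector of half-angle strictly below $\pi/2$, securing sectoriality of $\beta H_{\frm{t}}$. So $\Emap$ is at least holomorphic into $\Lin(\sH)$ on the stated domain, and it remains to upgrade the codomain to $\Lin^1(\sH)$ via local boundedness.

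Fix a compact subset $K\times V$ of the domain. The essential step is a uniform $\Lin^p$-bound on the resolvent. Write $H_{\frm{t}} = H_0 + T$ with $T$ the operator on $\dom H_0$ representing $\frm{t}$. Since $\delta := \sup_{\frm{t}\in V} \|TH_0^{-1}\| < 1$, Lemma~\ref{lem:inverse-of-perturbed-closed-op} gives the Neumann factorization $H_{\frm{t}}^{-1} = H_0^{-1}(1 + TH_0^{-1})^{-1}$, so by the ideal property of Schatten classes (Prop.~\ref{prop:Schatten}), $\|H_{\frm{t}}^{-1}\|_p \le (1-\delta)^{-1} \|H_0^{-1}\|_p$ uniformly in $\frm{t}\in V$. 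Combining the resolvent identity $\Rmap(\zeta, H_{\frm{t}}) = -H_{\frm{t}}^{-1} + \zeta\,\Rmap(\zeta, H_{\frm{t}}) H_{\frm{t}}^{-1}$ with the pointwise bound $\|\Rmap(\zeta,H_{\frm{t}})\| \le \dist(\zeta,\Num H_{\frm{t}})^{-1}$ from Prop.~\ref{prop:resolvent-outside-Num}, and choosing a contour $\Gamma$ adapted to $\oSec{0}{\pi/4}$ uniformly separated from every $\Num H_{\frm{t}}$ for $\frm{t}\in V$ (possible since all such numerical ranges sit in the common sector $\oSec{0}{\pi/4}$), I obtain $\sup_{\zeta\in\Gamma,\,\frm{t}\in V}\|\Rmap(\zeta,H_{\frm{t}})\|_p < \infty$.

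Fix an integer $n \ge p$. Inserting the above estimate into the semigroup integral (\ref{eq:holomorphic-semigroup-integral}) with $\beta/n$ in place of $\beta$, and exploiting the exponential damping of $|e^{-(\beta/n)\zeta}|$ along $\Gamma$ for $\beta \in K$, yields a uniform bound $\|\Emap(\beta/n,\frm{t})\|_p \le M$ on $K\times V$. The holomorphic semigroup identity $\Emap(\beta,\frm{t}) = \Emap(\beta/n,\frm{t})^n$ from Thm.~\ref{thm:holo-semigp}, combined with iterated Schatten--H\"older (splitting the $n$-fold product, with one factor taken in $\Lin^{p/(p-n+1)}\subseteq\Lin^p$ if $p$ is non-integer), gives $\|\Emap(\beta,\frm{t})\|_1 \le M^n$ uniformly on $K\times V$. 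Prop.~\ref{prop:automatic-holo-Lp} now completes the proof. The principal obstacle is the uniform $\Lin^p$-estimate on $\Rmap(\zeta, H_{\frm{t}})$ over both $\zeta\in\Gamma$ and $\frm{t}\in V$, which requires knitting together the Neumann factorization, the Schatten ideal property, and the universal numerical-range resolvent bound; once that is secured, the semigroup trick converts $\Lin^p$ to $\Lin^1$ essentially for free.
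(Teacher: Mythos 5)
Your proof follows the paper's strategy essentially exactly: reduce to local $\Lin^1$-boundedness via Prop.~\ref{prop:automatic-holo-Lp}, obtain a uniform $\Lin^p$ bound on $\Rmap(\zeta,H_{\frm{t}})$ along an adapted contour via the Neumann factorization $H_{\frm{t}}^{-1} = H_0^{-1}(1+TH_0^{-1})^{-1}$ together with the Schatten ideal property and the numerical-range resolvent bound (the paper packages this as Lemmas~\ref{lem:perturbed-resolvent-Lp-bound} and \ref{lem:resolvent-bounded-uniformly-in-zeta}), insert into the contour integral (\ref{eq:holomorphic-semigroup-integral}) to bound $\Emap$ in $\Lin^p$, and then pass from $\Lin^p$ to $\Lin^1$ using the semigroup law and Schatten--H\"older (the paper's Lemma~\ref{lem-Exp-from-Lp-to-L1}). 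If anything, your choice to split into an integer number $n\ge p$ of factors is a touch cleaner than the paper's $\|e^{-\beta H}\|_1\le\|e^{-(\beta/p)H}\|_p^p$ when $p$ is not an integer. Two trivial slips: the resolvent identity should read $\Rmap(\zeta,H_{\frm{t}}) = H_{\frm{t}}^{-1}+\zeta\,\Rmap(\zeta,H_{\frm{t}})H_{\frm{t}}^{-1}$ (sign), and since $p/(p-n+1)\ge p$ the inclusion of Schatten classes goes the other way, $\Lin^p\subseteq\Lin^{p/(p-n+1)}$; neither affects the argument.
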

Proof of the theorem proceeds through four lemmas.
The first reduces the context from $\Lin^1(\sH)$ to $\Lin^p(\sH)$.
\begin{lem}
  \label{lem-Exp-from-Lp-to-L1}
Suppose 
\begin{equation}
\nonumber    
\Arr{\oSec{0}{\theta}\times\calU}{\Emap}{\Lin^p(\sH)}
\end{equation}
is locally bounded.
Then, ${\Emap}$ is holomorphic into $\Lin^1(\sH)$.
\end{lem}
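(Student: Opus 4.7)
The plan is to reduce to Prop.~\ref{prop:automatic-holo-Lp} applied with exponent $1$. Cor.~\ref{cor:exp-holo} already furnishes holomorphy of $\Emap$ into $\Lin(\sH)$ on $\oSec{0}{\theta}\times\calU$ (this set lies in the domain $\Omega$ of $\Emap$, since the restriction on $\theta$ ensures that $\beta H_x$ remains sectorial for every $(\beta,x)$ in the stated domain). Consequently, it will suffice to upgrade the hypothesized local boundedness of $\Emap$ into $\Lin^p(\sH)$ to local boundedness into $\Lin^1(\sH)$.

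The key device is the semigroup identity $\Emap(\beta,x)=\Emap(\beta/n,x)^n$, valid for any positive integer $n$; this extends from positive real $\beta$ to complex $\beta$ in the sector by holomorphy of the semigroup in $\beta$ (Thm.~\ref{thm:holo-semigp}). Choose $n=\lceil p\rceil$, so that $n\ge p\ge 1$. The H\"older inequality for Schatten classes, applied with $n$ factors each of exponent $n$ (the reciprocals summing to $1$), gives $\|T^n\|_1\le \|T\|_n^n$ for every $T\in\Lin^n(\sH)$; and the inclusion $\Lin^p(\sH)\subseteq\Lin^n(\sH)$ with $\|T\|_n\le\|T\|_p$ (the standard comparison of singular-value $\ell^q$ norms for $p\le n$) gives the combined estimate
\begin{equation}
\|\Emap(\beta,x)\|_1\ \le\ \|\Emap(\beta/n,x)\|_p^n.
\end{equation}

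Since $(\beta,x)\mapsto(\beta/n,x)$ is a continuous self-map of $\oSec{0}{\theta}\times\calU$, the hypothesized local bound on $\|\Emap(\,\cdot\,)\|_p$ pulls back through the scaling to produce a local bound on the right-hand side, hence on $\|\Emap(\beta,x)\|_1$. Invoking Prop.~\ref{prop:automatic-holo-Lp} then delivers holomorphy into $\Lin^1(\sH)$ and closes the argument. The only point requiring any care is the semigroup identity for complex $\beta$, but this is essentially immediate from the holomorphic-semigroup framework of Thm.~\ref{thm:holo-semigp} (both sides are holomorphic in $\beta$ on the sector and coincide on the positive real axis), so I do not anticipate a substantive obstacle.
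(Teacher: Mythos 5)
Your proof is correct and follows essentially the same route as the paper's: reduce via Prop.~\ref{prop:automatic-holo-Lp} to checking local boundedness into $\Lin^1(\sH)$, then obtain that bound from the semigroup property plus H\"older for Schatten norms. The paper states the inequality directly as $\|e^{-\beta H}\|_1\le\|e^{-(\beta/p)H}\|_p^p$, which for non-integer $p$ does not literally follow from the generalized H\"older inequality applied to a $p$-fold product; your version with $n=\lceil p\rceil$, the factorization $\Emap(\beta,x)=\Emap(\beta/n,x)^n$, and the monotonicity $\|\cdot\|_n\le\|\cdot\|_p$ is the cleaner, fully rigorous way to make that step.
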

\begin{proof}
  According to Thm.~\ref{thm:exponential-holo} and Prop.~\ref{prop:automatic-holo-Lp},
  what needs to be shown is
  that ${\Emap}$ is a locally bounded map into $\Lin^1(\sH)$. Given
  the hypotheses, though, that follows from the generalized H\"older inequality
  \begin{equation}
\|e^{-\beta H}\|_1 \le \|e^{-(\beta/p) H}\|_p^p .
  \end{equation}
\end{proof}
To make use of this Lemma, we need conditions which will ensure the hypothesized
local $\Lin^p$-boundedness.
In the next two Lemmas, sector $\Sigma$, and $\Sigma'$ a dilation of $\Sigma$,
and a point $\zeta_0 \not\in\Sigma'$ are understood as given, while
$H$ is arbitrary in $\Op(\Sigma)$.
They reduce the problem to one of bounding $\|\Rmap(\zeta_0,H)\|_p$.
\begin{lem}
  \label{lem:resolvent-bounded-uniformly-in-zeta}
\begin{equation}
  \| \Rmap(\zeta,H)\|_p \le C(\Sigma,\Sigma',\zeta_0)  \| \Rmap(\zeta_0,H) \|_p
\end{equation}
\end{lem}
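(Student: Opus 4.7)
The plan is to apply the first resolvent identity
$$\Rmap(\zeta,H) = \bigl[\,1 + (\zeta-\zeta_0)\Rmap(\zeta,H)\,\bigr]\,\Rmap(\zeta_0,H)$$
and exploit the ideal property of the Schatten $p$-norm: $\|AB\|_p \le \|A\|\,\|B\|_p$. This immediately gives
$$\|\Rmap(\zeta,H)\|_p \le \bigl(1 + |\zeta-\zeta_0|\,\|\Rmap(\zeta,H)\|\bigr)\,\|\Rmap(\zeta_0,H)\|_p,$$
so the problem reduces to showing that the scalar factor is bounded uniformly in $\zeta \notin \Sigma'$ and in $H \in \Op(\Sigma)$.

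For this I would invoke Lemma~\ref{lem:resolvent-bound} (itself an immediate consequence of Prop.~\ref{prop:resolvent-outside-Num}), which supplies the $H$-uniform bound $\|\Rmap(\zeta,H)\| \le M(\Sigma,\Sigma')/(|\zeta|+1)$ for every $\zeta \notin \Sigma'$. Hence
$$|\zeta - \zeta_0|\,\|\Rmap(\zeta,H)\| \le M(\Sigma,\Sigma')\,\frac{|\zeta|+|\zeta_0|}{|\zeta|+1},$$
and the right-hand side is a bounded function of $\zeta$ on $\Cmplx\setminus\Sigma'$: it tends to $M(\Sigma,\Sigma')$ as $|\zeta|\to\infty$ and is continuous on the closed exterior of $\Sigma'$. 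Taking the supremum over that region yields a finite constant depending only on $\Sigma,\Sigma',\zeta_0$, and setting
$$C(\Sigma,\Sigma',\zeta_0) \defeq 1 + \sup_{\zeta\notin\Sigma'} M(\Sigma,\Sigma')\,\frac{|\zeta|+|\zeta_0|}{|\zeta|+1}$$
delivers the claimed inequality.

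There is no real obstacle here; the whole argument is bookkeeping around the resolvent identity together with the uniform pointwise resolvent estimate already established. The one thing to be slightly careful about is that the bound must be uniform in $H\in\Op(\Sigma)$, which is exactly the content of Lemma~\ref{lem:resolvent-bound}, so no additional work is needed on that front.
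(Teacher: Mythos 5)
Your proof is correct and follows essentially the same route as the paper: both invoke the first resolvent identity $\Rmap(\zeta,H) = [1 + (\zeta-\zeta_0)\Rmap(\zeta,H)]\Rmap(\zeta_0,H)$ together with the ideal property of the Schatten norm, and both control the bracketed factor uniformly via Lemma~\ref{lem:resolvent-bound}. You merely spell out the elementary supremum estimate that the paper leaves implicit.
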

\begin{proof}
Lemma~\ref{lem:resolvent-bound} ensures that the factor in square brackets
in the resolvent identity
\begin{equation}
\Rmap(\zeta,H) = [1+\Rmap(\zeta,H)(\zeta-\zeta_0)]\Rmap(\zeta_0,H),
\end{equation}
is bounded uniformly for $\zeta\not\in\Sigma'$.
\end{proof}
\begin{lem}
\label{lem:exponential-bounded-in-Lp}
\begin{equation}
\|e^{-\beta H}\|_p \le M(\Sigma,\Sigma',\zeta_0,\beta) \|\Rmap(\zeta_0,H)\|_p,
\end{equation}
with $M(\Sigma,\Sigma',\zeta_0,\beta)$ locally bounded in 
\hbox{$\beta\in\oSec{0}{\tfrac{\pi}{2} - \theta}$}, where $\theta$ is the
half-angle of $\Sigma'$.
\end{lem}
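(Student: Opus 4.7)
The plan is to estimate the Schatten-$p$ norm of the Riesz--Dunford--Taylor representation
\begin{equation*}
e^{-\beta H} = \int_\Gamma \Rmap(\zeta,H) e^{-\beta\zeta} \frac{d\zeta}{2\pi i}
\end{equation*}
furnished by Thm.~\ref{thm:holo-semigp}, where $\Gamma$ is a contour adapted to $\Sigma$ (running exterior to the dilation $\Sigma'$). Since the integrand takes values in the Banach space $\Lin^p(\sH)$, the triangle inequality for Bochner integrals gives
\begin{equation*}
\|e^{-\beta H}\|_p \le \int_\Gamma \|\Rmap(\zeta,H)\|_p\, |e^{-\beta\zeta}|\, \frac{|d\zeta|}{2\pi}.
\end{equation*}

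Next I would invoke Lemma~\ref{lem:resolvent-bounded-uniformly-in-zeta} to bound the Schatten factor. Crucially, inspection of its proof shows the constant $C(\Sigma,\Sigma',\zeta_0)$ is independent of $\zeta$ for $\zeta\notin\Sigma'$, because the operator-norm factor $\|1 + \Rmap(\zeta,H)(\zeta-\zeta_0)\|$ is controlled uniformly by Lemma~\ref{lem:resolvent-bound}. Thus $\|\Rmap(\zeta,H)\|_p \le C\|\Rmap(\zeta_0,H)\|_p$ along the entire contour, and the operator-theoretic content may be pulled outside the integral:
\begin{equation*}
\|e^{-\beta H}\|_p \le \frac{C\|\Rmap(\zeta_0,H)\|_p}{2\pi}\, I(\beta),\qquad I(\beta) \defeq \int_\Gamma |e^{-\beta\zeta}|\,|d\zeta|.
\end{equation*}
Identifying $M(\Sigma,\Sigma',\zeta_0,\beta) = C\, I(\beta)/(2\pi)$, the theorem reduces to showing $I$ is locally bounded on $\oSec{0}{\tfrac{\pi}{2}-\theta}$.

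For local boundedness of $I$, take any compact \hbox{$K \subset \oSec{0}{\tfrac{\pi}{2}-\theta}$}, with $\max_{\beta\in K}|\arg\beta| \eqdef \alpha < \tfrac{\pi}{2}-\theta$. Choose (once, depending on $K$) a contour $\Gamma$ adapted to $\Sigma$ whose asymptotic arguments satisfy $\vartheta \in (\theta,\tfrac{\pi}{2}-\alpha)$. Then for large $|\zeta|$ along $\Gamma$ and any $\beta\in K$, $|\arg\beta + \arg\zeta|$ stays bounded away from $\pi/2$, so $\re(\beta\zeta) \ge c|\zeta|$ for some $c = c(K)>0$ uniform over $\beta\in K$. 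This produces a uniform exponential decay of $|e^{-\beta\zeta}|$ on the tails of $\Gamma$, while on the bounded part $|e^{-\beta\zeta}|$ is manifestly continuous in $\beta$; together these yield $\sup_{\beta\in K} I(\beta) < \infty$.

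The main obstacle is the $\beta$-dependent geometry: a single contour $\Gamma$ only works for $\beta$ confined to a strict sub-sector of $\oSec{0}{\tfrac{\pi}{2}-\theta}$. This is not a genuine difficulty because the integral representation is contour-independent (the holomorphy from Thm.~\ref{thm:holo-semigp} allows free deformation of $\Gamma$), so the compact-set argument above suffices to produce local boundedness throughout $\oSec{0}{\tfrac{\pi}{2}-\theta}$. The rest is bookkeeping.
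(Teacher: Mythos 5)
Your proof follows the paper's own argument essentially step for step: bound the Schatten norm of the Riesz--Dunford--Taylor integral from Thm.~\ref{thm:holo-semigp} by the Bochner triangle inequality, pull $\sup_{\zeta\in\Gamma}\|\Rmap(\zeta,H)\|_p$ out and control it by $\|\Rmap(\zeta_0,H)\|_p$ via Lemma~\ref{lem:resolvent-bounded-uniformly-in-zeta}, and absorb the remaining scalar integral $\int_\Gamma |e^{-\beta\zeta}|\,|d\zeta|$ into $M$. Your extra paragraph making the local boundedness in $\beta$ explicit (compactness, a $K$-dependent choice of asymptotic contour direction, contour-independence by holomorphy) fills in detail the paper leaves tacit but does not change the argument.
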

\begin{proof}
Let contour $\Gamma$ satisfy $\zeta_0 \in \Gamma \subset \Sigma'$
(hence, $\Gamma$ is adapted to $\Sigma$).
Then, 
\begin{align}
  \| e^{-\beta H} \|_p
  & = \Big\|    \int_\Gamma \Rmap(\zeta,H) e^{-\beta \zeta} \frac{d\zeta}{2\pi i}    \Big\|_p
    \nonumber \\
  & \le   \int_\Gamma \| \Rmap(\zeta,H) \|_p e^{- \re \beta \zeta} \frac{|d\zeta|}{2\pi}
    \nonumber \\
  & \le   C(\Gamma,\beta) \sup_{\zeta\in\Gamma} \| \Rmap(\zeta,H) \|_p 
    \nonumber \\
  & \le   M(\Sigma,\Sigma',\zeta_0,\beta) \| \Rmap(\zeta_0,H) \|_p.
\end{align}
The third line follows since
\hbox{$\int_\Gamma e^{- \re \beta \zeta} {|d\zeta|} < \infty$},
and the fourth line is by Lemma~\ref{lem:resolvent-bounded-uniformly-in-zeta}.
\end{proof}
\begin{lem}
  \label{lem:perturbed-resolvent-Lp-bound}
If $\dom H \subseteq \dom A$  and
\hbox{$\| A\Rmap(\zeta_0,H_0) \| < 1$}, then
\begin{equation}
    \nonumber
    \| \Rmap(\zeta_0,H+A) \|_p
    \le \| ( 1 + A\Rmap(\zeta_0,H))^{-1}\|
    \| \Rmap(\zeta_0,H) \|_p
\end{equation}
\end{lem}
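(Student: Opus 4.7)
The plan is to mimic the argument of Lemma~\ref{lem:inverse-of-perturbed-closed-op}, but track the Schatten $p$-norm through the factorization rather than just boundedness. Specifically, on $\dom H$ (which, by hypothesis, is contained in $\dom A$ and hence in $\dom(H+A)$), one has the operator identity
\begin{equation}
\nonumber
H+A-\zeta_0 = \bigl[ 1 + A\Rmap(\zeta_0,H) \bigr] (H-\zeta_0),
\end{equation}
since $\Rmap(\zeta_0,H)$ maps $\sH$ onto $\dom H \subseteq \dom A$, so that $A\Rmap(\zeta_0,H) \in \Lin(\sH)$.

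Under the smallness condition $\|A\Rmap(\zeta_0,H)\| < 1$, the bracketed factor is invertible in $\Lin(\sH)$ by Neumann series (or directly by Lemma~\ref{lem:inverse-of-perturbed-closed-op} applied to $T=1$, $A \leadsto A\Rmap(\zeta_0,H)$). Solving the displayed identity for the inverse of $H+A-\zeta_0$ yields
\begin{equation}
\nonumber
\Rmap(\zeta_0,H+A) = \Rmap(\zeta_0,H)\,\bigl[ 1 + A\Rmap(\zeta_0,H) \bigr]^{-1},
\end{equation}
with both factors in $\Lin(\sH)$; closedness and invertibility of $H+A-\zeta_0$ come along for free, exactly as in Lemma~\ref{lem:inverse-of-perturbed-closed-op}.

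The desired Schatten bound is then immediate from the two-sided ideal property recorded in Prop.~\ref{prop:Schatten}(2), namely $\|BC\|_p \le \|B\|_p\|C\|$: the first factor on the right lives in $\Lin^p(\sH)$, the second is a bounded operator, and multiplying their norms gives precisely
\begin{equation}
\nonumber
\|\Rmap(\zeta_0,H+A)\|_p \le \|\Rmap(\zeta_0,H)\|_p\,\bigl\|\bigl(1+A\Rmap(\zeta_0,H)\bigr)^{-1}\bigr\|.
\end{equation}
There is no real obstacle here; the only point meriting a brief verification is that the algebraic identity genuinely yields an equality of operators on all of $\sH$ (as opposed to merely on a dense subspace), which is handled by noting that the right-hand side is a bounded operator into $\dom H$ and that both sides of $(H+A-\zeta_0)[\,\cdot\,] = 1$ agree there.
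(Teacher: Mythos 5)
Your proof is correct and is exactly the argument the paper intends when it writes ``Immediate'': factor out the unperturbed resolvent via $(H+A-\zeta_0)=[1+A\Rmap(\zeta_0,H)](H-\zeta_0)$, invert using the Neumann series under the hypothesis $\|A\Rmap(\zeta_0,H)\|<1$, and pass to Schatten norms with the two-sided ideal inequality $\|BC\|_p \le \|B\|_p\|C\|$ from Prop.~\ref{prop:Schatten}. You have also (rightly) read the stray $H_0$ in the stated hypothesis as a typo for $H$, consistent with the conclusion and with how the lemma is actually invoked in the proof of Thm.~\ref{thm:holo-statistical-op}. Nothing is missing.
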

\begin{proof}
Immediate.  
\end{proof}
\begin{proof}[Completion of Proof of Thm.~\ref{thm:holo-statistical-op}]
Now it is merely a matter of stringing the pieces together. 
Prop.~\ref{prop:automatic-holo-Lp} asserts that local boundedess of
$e^{-\beta H_x}$ in $\Lin^1(\sH)$ suffices to establish holomorphy,
Lemma~\ref{lem-Exp-from-Lp-to-L1} shows that $\Lin^1(\sH)$ can be replaced
by $\Lin^p(\sH)$;
Lemma~\ref{lem:exponential-bounded-in-Lp} that we only need
a local bound on $\Rmap(\zeta_0,H_x)$;
and Lemma~\ref{lem:perturbed-resolvent-Lp-bound}
shows how big the perturbation can be. According to the definition of
$\OF{H_0}$ [see Section~\ref{sec:self-adjoint}, especially Prop.~\ref{prop:X(H)}],
it suffices that $\|\frm{t}-\frm{h}\|_{H_0}< 1$.
The restriction on $\beta$ is needed to insure that $\beta H_x$ is sectorial
for all $x$ in $B_1(\OF{H_0})$.
\end{proof}
\subsection{Thermal expectations}
\label{sec:thermal-expectations}

This subsection is concerned with consequences of Thm.~\ref{thm:holo-statistical-op}.
In other words, what do we do with the holomorphic statistical operator?
We Suppose given a \RSF\ in $B_1(\OF{H_0})$, and adopt the notational convention
that $\frm{h}_x$ corresponds to the operator $H_0 + T_x$ on $\dom H_0$
(i.e., this is $\Opp{0}{0}{\frm{h}_x}$).
We can be fairly explicit about the Taylor series expansion of $e^{-\beta(H_0+T_x)}$.
By appeal to Cor.~\ref{cor:maps-into-dom-A}, 
\begin{equation}
  \label{eq:perturbed-semigroup}
e^{-\beta(H_0+T_x)} = \int_0^1 e^{-s\beta(H_0+T_x)} (-\beta T_x) e^{-(1-s)\beta H_0} \,ds
\end{equation}
for any $\frm{h}_x\in\OF{H_0}$. Iteration shows that the $n$-th term of the
Taylor series has the familiar form
\begin{equation}
  (-\beta)^n\int_{\substack{ s\ge 0 \\ \sum s_k = 1}}
  e^{-s_{n+1}\beta H} T_x e^{-s_n \beta H}\ldots T_x e^{-s_1 \beta H} \,d\underline{s}.
\end{equation}
Thm.~\ref{thm:holo-statistical-op} implies that this actually converges for small enough $x$.

In the following, we will be concerned only with the first term, however.
For $(\beta,x)$ in some neighborhood of $\Real_+ \times \{0\}$,
$Z_{\beta,x}$ is nonzero and therefore the free energy $F_{\beta,x}$ is
well-defined and holomorphic. According to (\ref{eq:perturbed-semigroup}),
the derivative (also holomorphic) 
$-\beta D_x F_{\beta,x}$ is the expectation value $\Tr \rho_{\beta,x} D_x T_x$.

\subsubsection{charge/current density}
\label{sec:thermal-cc-density}

Parallel to the treatment of properties of energetically-isolated eigenstates in
Section~\ref{sec:rank-1}, we will consider charge and current-density in the thermal
context for an system of $N$ nonrelativistic particles in a three-dimensional box,
under a periodic boundary condition Hamiltonian consisting of a kinetic energy
operator $K_{\bm A} = \sum_{\alpha=1}^{N} \left|i\nabla_\alpha + {\bm A}(x_\alpha)\right|^2$,
a one-body potential operator $U_u = \sum_\alpha u(x_\alpha)$, and 
a repulsive two-body interaction $V_v = \tfrac{1}{2}\sum_{\alpha\neq\beta} v(x_\alpha-x_\beta)$.
The variables here are $u$ and ${\bm A}$.
$H_0 = K_{\bm A} + V_v$ has a polynomial-bounded density of states, hence
Thm.~\ref{thm:holo-statistical-op} applies and the free energy is holomorphic.
Charge and current-density
are obtained by differentiating the free energy with respect to
$u$ and ${\bm A}$, respecively, hence are also holomorphic
if the perturbed Hamiltonians comprise a \RSF\ in $\OF{H_0}$.

Now, apply the result of Kato (\S~5.5.3 of Kato\cite{Kato},
Thm.~6.2.2 of de~Oliveira\cite{deOliveira} or
Example 13.4 of Hislop \& Sigal\cite{Hislop+Sigal})
that a potential in $L^2(\Real^3) + L^\infty(\Real^3)$
is relatively bounded with respect to the ${\bm A}=0$ kinetic energy operator $-\Delta$
with relative bound zero (See Def.~\ref{def:relative-operator-bound}).
Since the system is confined to a box, a bounded potential is automatically square-integrable.

For the kinetic energy operator
\begin{equation}
|i\nabla + {\bm A}|^2 = -\Delta + 2i{\bm A}\cdot\nabla + i\,{\mathrm{div}}{\bm A} + |{\bm A}|^2,
\end{equation}
${\bm A}$ must be restricted so that each of the last three terms is adequately tame.
It will suffice that ${\bm A}\in \vec{L}^4(\text{Box})$ if we work in Coulomb gauge,
i.e., $\mathrm{div}{\bm A}=0$, or in Fourier components, ${\bm q}\cdot\tilde{\bm A}({\bm q})=0$.
We will denote this subspace of ``transverse'' vector fields by
$\vec{L}^4(\text{Box})_{\text{trans}}$.
That restriction obviously takes care of the divergence term.
\begin{equation}
\| |{\bm A}|^2 \|_{L^2} \le c \| {\bm A} \|_{L^4}^2
\end{equation}
by H\"older's inequality. 
Finally, since the box is bounded, $\vec{L}^3(\text{Box})$ is continously embedded in
$\vec{L}^4(\text{Box})$, so (\ref{eq:Holder-Sobolev}) demonstrates a suitable bound for
${\bm A}\cdot \nabla\psi$ when \hbox{$\psi \in \dom (-\Delta)$}.
For the scalar potential, no trickery is required to
apply the result cited above. Simply assume $u,v\in L^2(\text{Box})$.

Thus, we obtain an \RSF\ in $\OF{H_0}$ defined for $x \equiv (u,{\bm A})$ on
some neighborhood of the origin in
$L^2(\text{Box}) \times\vec{L}^4(\text{Box})_{\text{trans}}$.
The charge/current density $(\rho,{\bm J}) = -\beta D_xF_{\beta,x}$
is then an analytic function of $x$ valued in
$L^2(\text{Box}) \times \vec{L}^{4/3}(\text{Box})$.

\section{Summary}
\label{sec:summary}

Here is a summary of the apparatus developed here, from an application-oriented
perspective.
The starting point is a family \hbox{$\Arr{\sX\supseteq \calU}{\frm{h}}{\sct{\SF}(\sK)}$}
of closable, mutually relatively bounded, sectorial {\sqf}s parameterized over $\calU$.
Thinking of these as generalized Hamiltonians, sectoriality is an appropriate generalization
of lower-bounded and hermitian, which allows use of holomorphy. If quantities related to these
forms $\frm{h}_x$ or their associated operators $H_x$ are holomorphic in the parameter
$x\in\calU$, then real analyticity results for proper Hamiltonians by restriction.
If $x\mapsto\frm{h}$ is a \RSF, then 
holomorphy of $(\zeta,x) \mapsto \Rmap(\zeta,x) = (\zeta-H_x)^{-1}$ and
$(\beta,x) \mapsto \Emap(\beta,x) = e^{-\beta H_x}$, as maps into $\Lin(\sH)$,
is secured on natural domains.
This is the content of Cors.~\ref{cor:exp-holo} and \ref{cor:Rmap-holo}, respectively.
Prop.~\ref{prop:regular-sectorial} provides a few sets of convenient criteria for
$x\mapsto\frm{h}$ to be a \RSF.
One of these is,
(a) G-holomorphy:
for each $x\in\calU$, $w\in \sX$, and $\psi\in\sK$,
\hbox{$\zeta \mapsto \frm{h}_{x+\zeta w}[\psi]$} is holomorphic on
some neighborhood of the origin in $\Cmplx$; and
(b) local boundedness:
each $x\in\calU$ has a neighborhood such that $\frm{h}_y$ is bounded
uniformly relative to $\frm{h}_x$ for $y$ in that neighborhood.
The practicality of these criteria is demonstrated in Section~\ref{sec:QM},
where an \RSF\ of multi-particle Schr\"odinger forms is constructed.

The $\Rmap$-map and $\Emap$-map are themselves mostly means to an end.
An important tool in using them is Prop.~\ref{prop:automatic-holo-Lp},
which says that either is actually holomorphic into
the Schatten class $\Lin^p(\sH)$ (not just into $\Lin(\sH)$) if it
is merely locally bounded into $\Lin^p(\sH)$.
Using this, we can effectively deal with properties of isolated eigenstates,
or of thermal states when the $\Emap$-map is verified to be locally bounded
into trace-class operators (Thm.~\ref{thm:holo-statistical-op}).
Particularly interesting are derivatives of the
energy or free energy with respect to scalar potential $u$ or vector potential ${\bm A}$,
which give (expectation of) charge-density and current-density, respectively.
As differentials of holomorphic functions, these are automatically holomorphic themselves.
In the case of isolated eigenstates (Section \ref{sec:eigenstate-cc-density}),
$(\rho,{\bm J})$ is analytic in
\hbox{$(L^{3}(\Real^3) \cap L^1(\Real^3))\times(\vec{L}^{3/2}(\Real^3)\cap \vec{L}^1(\Real^3))$}.
as function of $(u,{\bm A})$ in 
$(L^{3/2}+L^\infty) \times (\vec{L}^3+\vec{L}^\infty)$.
For thermal states, additional restrictions are required on the potentials to
ensure existence of the free energy. For a system in a box
(Section~\ref{sec:thermal-cc-density}),
$(\rho,{\bm J})$ is analytic in $L^{2}(\text{Box}) \times \vec{L}^{4/3}(\text{Box})$
as function of $(u,{\bm A})$ in $L^{2} \times \vec{L}^4$,
with ${\bm A}$ in Coulomb gauge.


\begin{thebibliography}{36}%
\makeatletter
\providecommand \@ifxundefined [1]{%
 \@ifx{#1\undefined}
}%
\providecommand \@ifnum [1]{%
 \ifnum #1\expandafter \@firstoftwo
 \else \expandafter \@secondoftwo
 \fi
}%
\providecommand \@ifx [1]{%
 \ifx #1\expandafter \@firstoftwo
 \else \expandafter \@secondoftwo
 \fi
}%
\providecommand \natexlab [1]{#1}%
\providecommand \enquote  [1]{``#1''}%
\providecommand \bibnamefont  [1]{#1}%
\providecommand \bibfnamefont [1]{#1}%
\providecommand \citenamefont [1]{#1}%
\providecommand \href@noop [0]{\@secondoftwo}%
\providecommand \href [0]{\begingroup \@sanitize@url \@href}%
\providecommand \@href[1]{\@@startlink{#1}\@@href}%
\providecommand \@@href[1]{\endgroup#1\@@endlink}%
\providecommand \@sanitize@url [0]{\catcode `\\12\catcode `\$12\catcode
  `\&12\catcode `\#12\catcode `\^12\catcode `\_12\catcode `\%12\relax}%
\providecommand \@@startlink[1]{}%
\providecommand \@@endlink[0]{}%
\providecommand \url  [0]{\begingroup\@sanitize@url \@url }%
\providecommand \@url [1]{\endgroup\@href {#1}{\urlprefix }}%
\providecommand \urlprefix  [0]{URL }%
\providecommand \Eprint [0]{\href }%
\providecommand \doibase [0]{http://dx.doi.org/}%
\providecommand \selectlanguage [0]{\@gobble}%
\providecommand \bibinfo  [0]{\@secondoftwo}%
\providecommand \bibfield  [0]{\@secondoftwo}%
\providecommand \translation [1]{[#1]}%
\providecommand \BibitemOpen [0]{}%
\providecommand \bibitemStop [0]{}%
\providecommand \bibitemNoStop [0]{.\EOS\space}%
\providecommand \EOS [0]{\spacefactor3000\relax}%
\providecommand \BibitemShut  [1]{\csname bibitem#1\endcsname}%
\let\auto@bib@innerbib\@empty
\bibitem [{\citenamefont {Koch}\ and\ \citenamefont
  {Holthausen}(2001)}]{Koch+Holthausen}%
  \BibitemOpen
  \bibfield  {author} {\bibinfo {author} {\bibfnamefont {W.}~\bibnamefont
  {Koch}}\ and\ \bibinfo {author} {\bibfnamefont {M.~C.}\ \bibnamefont
  {Holthausen}},\ }\href@noop {} {\emph {\bibinfo {title} {A Chemist's Guide to
  Density Functional Theory}}},\ \bibinfo {edition} {2nd}\ ed.\ (\bibinfo
  {publisher} {Wiley-VCH},\ \bibinfo {address} {Weinheim},\ \bibinfo {year}
  {2001})\BibitemShut {NoStop}%
\bibitem [{\citenamefont {Capelle}(2006)}]{Capelle06}%
  \BibitemOpen
  \bibfield  {author} {\bibinfo {author} {\bibfnamefont {K.}~\bibnamefont
  {Capelle}},\ }\href@noop {} {\bibfield  {journal} {\bibinfo  {journal} {Braz.
  J. Phys.}\ }\textbf {\bibinfo {volume} {36}},\ \bibinfo {pages} {1318}
  (\bibinfo {year} {2006})}\BibitemShut {NoStop}%
\bibitem [{\citenamefont {Dreizler}\ and\ \citenamefont
  {Gross}(1990)}]{Dreizler+Gross}%
  \BibitemOpen
  \bibfield  {author} {\bibinfo {author} {\bibfnamefont {R.~M.}\ \bibnamefont
  {Dreizler}}\ and\ \bibinfo {author} {\bibfnamefont {E.~K.~U.}\ \bibnamefont
  {Gross}},\ }\href@noop {} {\emph {\bibinfo {title} {Density Functional
  Theory: an approach to the quantum many-body problem}}}\ (\bibinfo
  {publisher} {Springer-Verlag},\ \bibinfo {address} {Berlin},\ \bibinfo {year}
  {1990})\BibitemShut {NoStop}%
\bibitem [{\citenamefont {Parr}\ and\ \citenamefont {Yang}(1989)}]{Parr+Yang}%
  \BibitemOpen
  \bibfield  {author} {\bibinfo {author} {\bibfnamefont {R.}~\bibnamefont
  {Parr}}\ and\ \bibinfo {author} {\bibfnamefont {W.}~\bibnamefont {Yang}},\
  }\href@noop {} {\emph {\bibinfo {title} {Density-Functional Theory of Atoms
  and Molecules}}}\ (\bibinfo  {publisher} {Clarendon},\ \bibinfo {address}
  {Cambridge},\ \bibinfo {year} {1989})\BibitemShut {NoStop}%
\bibitem [{\citenamefont {Burke}(2012)}]{Burke-12}%
  \BibitemOpen
  \bibfield  {author} {\bibinfo {author} {\bibfnamefont {K.}~\bibnamefont
  {Burke}},\ }\href {\doibase 10.1063/1.4704546} {\bibfield  {journal}
  {\bibinfo  {journal} {Journal of Chemical Physics}\ }\textbf {\bibinfo
  {volume} {136}},\ \bibinfo {pages} {150901} (\bibinfo {year}
  {2012})}\BibitemShut {NoStop}%
\bibitem [{\citenamefont {Vignale}\ and\ \citenamefont
  {Rasolt}(1988)}]{Vignale+Rasolt-88}%
  \BibitemOpen
  \bibfield  {author} {\bibinfo {author} {\bibfnamefont {G.}~\bibnamefont
  {Vignale}}\ and\ \bibinfo {author} {\bibfnamefont {M.}~\bibnamefont
  {Rasolt}},\ }\href {\doibase 10.1103/PhysRevB.37.10685} {\bibfield  {journal}
  {\bibinfo  {journal} {Physical Review B}\ }\textbf {\bibinfo {volume} {37}},\
  \bibinfo {pages} {10685} (\bibinfo {year} {1988})}\BibitemShut {NoStop}%
\bibitem [{\citenamefont {Grayce}\ and\ \citenamefont
  {Harris}(1994)}]{Grayce+Harris-94}%
  \BibitemOpen
  \bibfield  {author} {\bibinfo {author} {\bibfnamefont {C.}~\bibnamefont
  {Grayce}}\ and\ \bibinfo {author} {\bibfnamefont {R.}~\bibnamefont
  {Harris}},\ }\href {\doibase 10.1103/PhysRevA.50.3089} {\bibfield  {journal}
  {\bibinfo  {journal} {Physical Review A}\ }\textbf {\bibinfo {volume} {50}},\
  \bibinfo {pages} {3089} (\bibinfo {year} {1994})}\BibitemShut {NoStop}%
\bibitem [{\citenamefont {Mermin}(1965)}]{Mermin-65}%
  \BibitemOpen
  \bibfield  {author} {\bibinfo {author} {\bibfnamefont {N.~D.}\ \bibnamefont
  {Mermin}},\ }\href {\doibase 10.1103/PhysRev.137.A1441} {\bibfield  {journal}
  {\bibinfo  {journal} {Phys. Rev.}\ }\textbf {\bibinfo {volume} {137}},\
  \bibinfo {pages} {A1441} (\bibinfo {year} {1965})}\BibitemShut {NoStop}%
\bibitem [{\citenamefont {Dornheim}\ \emph {et~al.}(2018)\citenamefont
  {Dornheim}, \citenamefont {Groth},\ and\ \citenamefont
  {Bonitz}}]{Dornheim+18}%
  \BibitemOpen
  \bibfield  {author} {\bibinfo {author} {\bibfnamefont {T.}~\bibnamefont
  {Dornheim}}, \bibinfo {author} {\bibfnamefont {S.}~\bibnamefont {Groth}}, \
  and\ \bibinfo {author} {\bibfnamefont {M.}~\bibnamefont {Bonitz}},\ }\href
  {\doibase 10.1016/j.physrep.2018.04.001} {\bibfield  {journal} {\bibinfo
  {journal} {Physics Reports-Review Section of Physics Letters}\ }\textbf
  {\bibinfo {volume} {744}},\ \bibinfo {pages} {1} (\bibinfo {year}
  {2018})}\BibitemShut {NoStop}%
\bibitem [{\citenamefont {Kato}(1980)}]{Kato}%
  \BibitemOpen
  \bibfield  {author} {\bibinfo {author} {\bibfnamefont {T.}~\bibnamefont
  {Kato}},\ }\href@noop {} {\emph {\bibinfo {title} {Perturbation Theory for
  Linear Operators}}},\ \bibinfo {edition} {2nd}\ ed.\ (\bibinfo  {publisher}
  {Springer-Verlag},\ \bibinfo {address} {Berlin,New York},\ \bibinfo {year}
  {1980})\BibitemShut {NoStop}%
\bibitem [{\citenamefont {Lang}(1983)}]{Lang-Real_analysis}%
  \BibitemOpen
  \bibfield  {author} {\bibinfo {author} {\bibfnamefont {S.}~\bibnamefont
  {Lang}},\ }\href@noop {} {\emph {\bibinfo {title} {Real analysis}}},\
  \bibinfo {edition} {2nd}\ ed.\ (\bibinfo  {publisher} {Addison-Wesley
  Publishing Company, Reading, MA},\ \bibinfo {year} {1983})\BibitemShut
  {NoStop}%
\bibitem [{\citenamefont {Abraham}\ \emph {et~al.}(1988)\citenamefont
  {Abraham}, \citenamefont {Marsden},\ and\ \citenamefont {Ratiu}}]{AMR}%
  \BibitemOpen
  \bibfield  {author} {\bibinfo {author} {\bibfnamefont {R.}~\bibnamefont
  {Abraham}}, \bibinfo {author} {\bibfnamefont {J.~E.}\ \bibnamefont
  {Marsden}}, \ and\ \bibinfo {author} {\bibfnamefont {T.}~\bibnamefont
  {Ratiu}},\ }\href {\doibase 10.1007/978-1-4612-1029-0} {\emph {\bibinfo
  {title} {Manifolds, tensor analysis, and applications}}},\ \bibinfo {edition}
  {2nd}\ ed.,\ \bibinfo {series} {Applied Mathematical Sciences}, Vol.~\bibinfo
  {volume} {75}\ (\bibinfo  {publisher} {Springer-Verlag, New York},\ \bibinfo
  {year} {1988})\BibitemShut {NoStop}%
\bibitem [{\citenamefont {Choquet-Bruhat}\ \emph {et~al.}(1977)\citenamefont
  {Choquet-Bruhat}, \citenamefont {DeWitt-Morette},\ and\ \citenamefont
  {Dillard-Bleick}}]{AMP}%
  \BibitemOpen
  \bibfield  {author} {\bibinfo {author} {\bibfnamefont {Y.}~\bibnamefont
  {Choquet-Bruhat}}, \bibinfo {author} {\bibfnamefont {C.}~\bibnamefont
  {DeWitt-Morette}}, \ and\ \bibinfo {author} {\bibfnamefont {M.}~\bibnamefont
  {Dillard-Bleick}},\ }\href@noop {} {\emph {\bibinfo {title} {Analysis,
  manifolds and physics}}}\ (\bibinfo  {publisher} {North-Holland Publishing
  Co., Amsterdam-New York-Oxford},\ \bibinfo {year} {1977})\BibitemShut
  {NoStop}%
\bibitem [{\citenamefont {Chae}(1985)}]{Chae}%
  \BibitemOpen
  \bibfield  {author} {\bibinfo {author} {\bibfnamefont {S.~B.}\ \bibnamefont
  {Chae}},\ }\href@noop {} {\emph {\bibinfo {title} {Holomorphy and calculus in
  normed spaces}}},\ \bibinfo {series} {Monographs and Textbooks in Pure and
  Applied Mathematics}, Vol.~\bibinfo {volume} {92}\ (\bibinfo  {publisher}
  {Marcel Dekker, Inc., New York},\ \bibinfo {year} {1985})\BibitemShut
  {NoStop}%
\bibitem [{\citenamefont {Mujica}(1986)}]{Mujica}%
  \BibitemOpen
  \bibfield  {author} {\bibinfo {author} {\bibfnamefont {J.}~\bibnamefont
  {Mujica}},\ }\href@noop {} {\emph {\bibinfo {title} {Complex analysis in
  {B}anach spaces}}},\ \bibinfo {series} {North-Holland Mathematics Studies},
  Vol.\ \bibinfo {volume} {120}\ (\bibinfo  {publisher} {North-Holland
  Publishing Co., Amsterdam},\ \bibinfo {year} {1986})\BibitemShut {NoStop}%
\bibitem [{\citenamefont {Simon}(1971)}]{Simon-Forms}%
  \BibitemOpen
  \bibfield  {author} {\bibinfo {author} {\bibfnamefont {B.}~\bibnamefont
  {Simon}},\ }\href@noop {} {\emph {\bibinfo {title} {Quantum Mechanics for
  Hamiltonians Defined as Quadratic Forms}}}\ (\bibinfo  {publisher} {Princeton
  University Press},\ \bibinfo {address} {Princeton},\ \bibinfo {year}
  {1971})\BibitemShut {NoStop}%
\bibitem [{\citenamefont {Reed}\ and\ \citenamefont
  {Simon}(1980)}]{Reed+Simon}%
  \BibitemOpen
  \bibfield  {author} {\bibinfo {author} {\bibfnamefont {M.}~\bibnamefont
  {Reed}}\ and\ \bibinfo {author} {\bibfnamefont {B.}~\bibnamefont {Simon}},\
  }\href@noop {} {\emph {\bibinfo {title} {Methods of Modern Mathematical
  Physics, Vols. I--IV}}}\ (\bibinfo  {publisher} {Academic Press},\ \bibinfo
  {address} {New York},\ \bibinfo {year} {1972--1980})\BibitemShut {NoStop}%
\bibitem [{\citenamefont {de~Oliveira}(2009)}]{deOliveira}%
  \BibitemOpen
  \bibfield  {author} {\bibinfo {author} {\bibfnamefont {C.~R.}\ \bibnamefont
  {de~Oliveira}},\ }\href@noop {} {\emph {\bibinfo {title} {Intermediate
  spectral theory and quantum dynamics}}},\ \bibinfo {series} {Progress in
  Mathematical Physics}, Vol.~\bibinfo {volume} {54}\ (\bibinfo  {publisher}
  {Birkh\"auser Verlag, Basel},\ \bibinfo {year} {2009})\BibitemShut {NoStop}%
\bibitem [{\citenamefont {Berezansky}\ \emph {et~al.}(1996)\citenamefont
  {Berezansky}, \citenamefont {Sheftel},\ and\ \citenamefont
  {Us}}]{Berezansky-II}%
  \BibitemOpen
  \bibfield  {author} {\bibinfo {author} {\bibfnamefont {Y.~M.}\ \bibnamefont
  {Berezansky}}, \bibinfo {author} {\bibfnamefont {Z.~G.}\ \bibnamefont
  {Sheftel}}, \ and\ \bibinfo {author} {\bibfnamefont {G.~F.}\ \bibnamefont
  {Us}},\ }\href@noop {} {\emph {\bibinfo {title} {Functional analysis. {V}ol.
  {II}}}},\ \bibinfo {series} {Operator Theory: Advances and Applications},
  Vol.~\bibinfo {volume} {86}\ (\bibinfo  {publisher} {Birkh\"{a}user Verlag,
  Basel},\ \bibinfo {year} {1996})\BibitemShut {NoStop}%
\bibitem [{\citenamefont {Adams}(1975)}]{Adams}%
  \BibitemOpen
  \bibfield  {author} {\bibinfo {author} {\bibfnamefont {R.~A.}\ \bibnamefont
  {Adams}},\ }\href@noop {} {\emph {\bibinfo {title} {Sobolev Spaces}}}\
  (\bibinfo  {publisher} {Academic Press},\ \bibinfo {address} {New York},\
  \bibinfo {year} {1975})\BibitemShut {NoStop}%
\bibitem [{\citenamefont {Taylor}(1996)}]{Taylor1}%
  \BibitemOpen
  \bibfield  {author} {\bibinfo {author} {\bibfnamefont {M.}~\bibnamefont
  {Taylor}},\ }\href@noop {} {\emph {\bibinfo {title} {Partial Differential
  Equations: Basic Theory}}}\ (\bibinfo  {publisher} {Springer-Verlag},\
  \bibinfo {address} {New York},\ \bibinfo {year} {1996})\BibitemShut {NoStop}%
\bibitem [{\citenamefont {Lieb}\ and\ \citenamefont {Loss}(1997)}]{Lieb+Loss}%
  \BibitemOpen
  \bibfield  {author} {\bibinfo {author} {\bibfnamefont {E.~H.}\ \bibnamefont
  {Lieb}}\ and\ \bibinfo {author} {\bibfnamefont {M.}~\bibnamefont {Loss}},\
  }\href@noop {} {\emph {\bibinfo {title} {Analysis}}}\ (\bibinfo  {publisher}
  {American Mathematical Society},\ \bibinfo {address} {Providence, R.I.},\
  \bibinfo {year} {1997})\BibitemShut {NoStop}%
\bibitem [{\citenamefont {Lieb}(1983)}]{Lieb83}%
  \BibitemOpen
  \bibfield  {author} {\bibinfo {author} {\bibfnamefont {E.~H.}\ \bibnamefont
  {Lieb}},\ }\href@noop {} {\bibfield  {journal} {\bibinfo  {journal} {Int J
  Quantum Chem}\ }\textbf {\bibinfo {volume} {24}},\ \bibinfo {pages} {243}
  (\bibinfo {year} {1983})}\BibitemShut {NoStop}%
\bibitem [{\citenamefont {Maz'ya}\ and\ \citenamefont
  {Shaposhnikova}(1985)}]{Mazya-85-book}%
  \BibitemOpen
  \bibfield  {author} {\bibinfo {author} {\bibfnamefont {V.~G.}\ \bibnamefont
  {Maz'ya}}\ and\ \bibinfo {author} {\bibfnamefont {T.~O.}\ \bibnamefont
  {Shaposhnikova}},\ }\href@noop {} {\emph {\bibinfo {title} {Theory of
  multipliers in spaces of differentiable functions}}},\ \bibinfo {series}
  {Monographs and Studies in Mathematics}, Vol.~\bibinfo {volume} {23}\
  (\bibinfo  {publisher} {Pitman, Boston, MA},\ \bibinfo {year}
  {1985})\BibitemShut {NoStop}%
\bibitem [{\citenamefont {Maz'ya}\ and\ \citenamefont
  {Shaposhnikova}(2009)}]{Mazya-09-book}%
  \BibitemOpen
  \bibfield  {author} {\bibinfo {author} {\bibfnamefont {V.~G.}\ \bibnamefont
  {Maz'ya}}\ and\ \bibinfo {author} {\bibfnamefont {T.~O.}\ \bibnamefont
  {Shaposhnikova}},\ }\href@noop {} {\emph {\bibinfo {title} {Theory of
  {S}obolev multipliers}}},\ \bibinfo {series} {Grundlehren der Mathematischen
  Wissenschaften}, Vol.\ \bibinfo {volume} {337}\ (\bibinfo  {publisher}
  {Springer-Verlag, Berlin},\ \bibinfo {year} {2009})\BibitemShut {NoStop}%
\bibitem [{\citenamefont {Hislop}\ and\ \citenamefont
  {Sigal}(1996)}]{Hislop+Sigal}%
  \BibitemOpen
  \bibfield  {author} {\bibinfo {author} {\bibfnamefont {P.~D.}\ \bibnamefont
  {Hislop}}\ and\ \bibinfo {author} {\bibfnamefont {I.~E.}\ \bibnamefont
  {Sigal}},\ }\href@noop {} {\emph {\bibinfo {title} {Introduction to Spectral
  Theory: with Applications to Schr\"odinger Operators}}}\ (\bibinfo
  {publisher} {Springer-Verlag},\ \bibinfo {year} {1996})\BibitemShut {NoStop}%
\bibitem [{\citenamefont {Kadison}\ and\ \citenamefont
  {Ringrose}(1983)}]{Kadison+Ringrose-I}%
  \BibitemOpen
  \bibfield  {author} {\bibinfo {author} {\bibfnamefont {R.~V.}\ \bibnamefont
  {Kadison}}\ and\ \bibinfo {author} {\bibfnamefont {J.~R.}\ \bibnamefont
  {Ringrose}},\ }\href@noop {} {\emph {\bibinfo {title} {Fundamentals of the
  theory of operator algebras. {V}ol. {I}}}},\ \bibinfo {series} {Pure and
  Applied Mathematics}, Vol.\ \bibinfo {volume} {100}\ (\bibinfo  {publisher}
  {Academic Press, New York},\ \bibinfo {year} {1983})\BibitemShut {NoStop}%
\bibitem [{\citenamefont {Schatten}(1960)}]{Schatten}%
  \BibitemOpen
  \bibfield  {author} {\bibinfo {author} {\bibfnamefont {R.}~\bibnamefont
  {Schatten}},\ }\href@noop {} {\emph {\bibinfo {title} {Norm ideals of
  completely continuous operators}}},\ Ergebnisse der Mathematik und ihrer
  Grenzgebiete, (N.F.), Heft 27\ (\bibinfo  {publisher} {Springer-Verlag,
  Berlin-G\"{o}ttingen-Heidelberg},\ \bibinfo {year} {1960})\BibitemShut
  {NoStop}%
\bibitem [{\citenamefont {Gohberg}\ and\ \citenamefont
  {Kre\u{\i}n}(1969)}]{Gohberg+Krein}%
  \BibitemOpen
  \bibfield  {author} {\bibinfo {author} {\bibfnamefont {I.~C.}\ \bibnamefont
  {Gohberg}}\ and\ \bibinfo {author} {\bibfnamefont {M.~G.}\ \bibnamefont
  {Kre\u{\i}n}},\ }\href@noop {} {\emph {\bibinfo {title} {Introduction to the
  theory of linear nonselfadjoint operators}}},\ Translations of Mathematical
  Monographs, Vol. 18\ (\bibinfo  {publisher} {American Mathematical Society,
  Providence, R.I.},\ \bibinfo {year} {1969})\BibitemShut {NoStop}%
\bibitem [{\citenamefont {Simon}(2005)}]{Simon-trace-ideals}%
  \BibitemOpen
  \bibfield  {author} {\bibinfo {author} {\bibfnamefont {B.}~\bibnamefont
  {Simon}},\ }\href {\doibase 10.1090/surv/120} {\emph {\bibinfo {title} {Trace
  ideals and their applications}}},\ \bibinfo {edition} {2nd}\ ed.,\ \bibinfo
  {series} {Mathematical Surveys and Monographs}, Vol.\ \bibinfo {volume}
  {120}\ (\bibinfo  {publisher} {American Mathematical Society, Providence,
  RI},\ \bibinfo {year} {2005})\BibitemShut {NoStop}%
\bibitem [{\citenamefont {Liu}\ and\ \citenamefont {Wang}(1968)}]{Liu+Wang-68}%
  \BibitemOpen
  \bibfield  {author} {\bibinfo {author} {\bibfnamefont {T.-S.}\ \bibnamefont
  {Liu}}\ and\ \bibinfo {author} {\bibfnamefont {J.-K.}\ \bibnamefont {Wang}},\
  }\href {\doibase 10.7146/math.scand.a-10916} {\bibfield  {journal} {\bibinfo
  {journal} {Math. Scand.}\ }\textbf {\bibinfo {volume} {23}},\ \bibinfo
  {pages} {241} (\bibinfo {year} {1968})}\BibitemShut {NoStop}%
\bibitem [{\citenamefont {Liu}\ and\ \citenamefont {van
  Rooij}(1969)}]{Liu+Wang-69}%
  \BibitemOpen
  \bibfield  {author} {\bibinfo {author} {\bibfnamefont {T.-S.}\ \bibnamefont
  {Liu}}\ and\ \bibinfo {author} {\bibfnamefont {A.}~\bibnamefont {van
  Rooij}},\ }\href {\doibase 10.1002/mana.19690420103} {\bibfield  {journal}
  {\bibinfo  {journal} {Math. Nachr.}\ }\textbf {\bibinfo {volume} {42}},\
  \bibinfo {pages} {29} (\bibinfo {year} {1969})}\BibitemShut {NoStop}%
\bibitem [{\citenamefont {Engel}\ and\ \citenamefont
  {Nagel}(2006)}]{Engel+Nagel}%
  \BibitemOpen
  \bibfield  {author} {\bibinfo {author} {\bibfnamefont {K.-J.}\ \bibnamefont
  {Engel}}\ and\ \bibinfo {author} {\bibfnamefont {R.}~\bibnamefont {Nagel}},\
  }\href@noop {} {\emph {\bibinfo {title} {A short course on operator
  semigroups}}},\ Universitext\ (\bibinfo  {publisher} {Springer, New York},\
  \bibinfo {year} {2006})\BibitemShut {NoStop}%
\bibitem [{\citenamefont {Engel}\ and\ \citenamefont
  {Nagel}(2000)}]{Engel+Nagel-big}%
  \BibitemOpen
  \bibfield  {author} {\bibinfo {author} {\bibfnamefont {K.-J.}\ \bibnamefont
  {Engel}}\ and\ \bibinfo {author} {\bibfnamefont {R.}~\bibnamefont {Nagel}},\
  }\href@noop {} {\emph {\bibinfo {title} {One-parameter semigroups for linear
  evolution equations}}},\ \bibinfo {series} {Graduate Texts in Mathematics},
  Vol.\ \bibinfo {volume} {194}\ (\bibinfo  {publisher} {Springer-Verlag, New
  York},\ \bibinfo {year} {2000})\BibitemShut {NoStop}%
\bibitem [{\citenamefont {Goldstein}(2017)}]{Goldstein-semigps}%
  \BibitemOpen
  \bibfield  {author} {\bibinfo {author} {\bibfnamefont {J.~A.}\ \bibnamefont
  {Goldstein}},\ }\href@noop {} {\emph {\bibinfo {title} {Semigroups of linear
  operators \& applications}}}\ (\bibinfo  {publisher} {Dover Publications,
  Inc., Mineola, NY},\ \bibinfo {year} {2017})\BibitemShut {NoStop}%
\bibitem [{\citenamefont {Kadanoff}(2009)}]{Kadanoff-09}%
  \BibitemOpen
  \bibfield  {author} {\bibinfo {author} {\bibfnamefont {L.~P.}\ \bibnamefont
  {Kadanoff}},\ }\href {\doibase 10.1007/s10955-009-9814-1} {\bibfield
  {journal} {\bibinfo  {journal} {Journal of Statistical Physics}\ }\textbf
  {\bibinfo {volume} {137}},\ \bibinfo {pages} {777} (\bibinfo {year}
  {2009})}\BibitemShut {NoStop}%
\end{thebibliography}
%


\end{document}